\numberwithin{equation}{section}
\newcommand{\R}{{\mathbb R}}
\newcommand{\C}{{\mathbb C}}
\newcommand{\Z}{{\mathbb Z}}
\DeclareMathOperator{\Tr}{Tr}
\DeclareMathOperator{\sgn}{sgn}
\DeclareMathOperator{\Ln}{Ln}
\DeclareMathOperator{\arcsinh}{arcsinh}
\DeclareMathOperator{\arctanh}{arctanh}
\newcommand{\Li}{\text{\upshape Li}}
\newcommand{\I}{\text{\upshape I}}
\newcommand{\II}{\text{\upshape II}}
\newcommand{\III}{\text{\upshape III}}
\newcommand{\app}{\text{\upshape app}}
\newcommand{\AF}{\mathrm{AF}}
\newcommand{\F}{\mathrm{F}}
\def\XXint#1#2#3{{\setbox0=\hbox{$#1{#2#3}{\int}$}
\vcenter{\hbox{$#2#3$}}\kern-.5\wd0}}
\newtheorem{theorem}{Theorem}[section]
\newtheorem{proposition}[theorem]{Proposition}
\newtheorem{lemma}[theorem]{Lemma}
\newtheorem{remark}[theorem]{Remark}
\newtheorem*{remark*}{Remark}
\newtheorem{figuretext}{Figure}
\title{On the Hartree--Fock phase diagram for the two-dimensional Hubbard model}
\author{C. Charlier$^{1}$, E. Langmann$^{2}$ and J. Lenells$^{3}$}
\address{$^1$Research Institute in Mathematics and Physics, UCLouvain, \\ 1348 Louvain-La-Neuve, Belgium.
	\\
$^2$Department of Physics, KTH Royal Institute of Technology, \\ 106 91 Stockholm, Sweden.
	\\
$^3$Department of Mathematics, KTH Royal Institute of Technology, \\ 100 44 Stockholm, Sweden.}
\email{christophe.charlier@uclouvain.be}
\email{langmann@kth.se}
\email{jlenells@kth.se}
\begin{document}
\begin{abstract}
We propose an analytical method for the construction of Hartree--Fock phase diagrams for the (fermion) Hubbard model and various generalizations thereof.
Such phase diagrams are traditionally constructed numerically, but we argue that, by using asymptotic techniques, it is possible to obtain analytic formulas approximating the curves separating the different phases to very high accuracy. 
%These approximations become exact as any of the four edges of the phase diagram is approached. 
To illustrate the new method, we apply it to the two-dimensional Hubbard model on the square lattice at zero temperature. This yields formulas for the Hartree--Fock phase boundaries that agree with, but also improve on, earlier numerical results. In particular, our results provide the first rigorous proof of the existence of mixed phases in this model.
\end{abstract} 
\maketitle

\noindent
{\small{\sc AMS Subject Classification (2020)}: 81V74, 82D03, 81T25, 45M05.}

\noindent
{\small{\sc Keywords}: Hubbard model, phase diagram, Hartree--Fock theory, mean-field theory.}

%\setcounter{tocdepth}{1}
%\tableofcontents

\section{Introduction}

The Hubbard model describes interacting fermions on a lattice. It provides insight into how insulating, ferromagnetic, antiferromagnetic, and other more exotic phases, can arise in a material as a result of electron interaction. The model was initially conceived in the 1960's to explain the unexpected insulating behavior of certain transition metal compounds which standard band theory predicted to be conductors \cite{G1963, H1963, K1963}.
A proposal in the late 1980's that the 2D Hubbard model  is a prototype model for high-temperature superconductivity \cite{A1987} has led to large amount of work on this model up to this day; see e.g.\ \cite{Simons} for recent work by a large collaboration comparing results about this model obtained by different approximation methods (here and in the following, {\em 2D Hubbard model}  is  short for the {\em two-dimensional Hubbard model on the square lattice $\Z^2$}). The 2D Hubbard model is today considered to be a fundamental model in quantum physics, which is of comparable importance to quantum statistical mechanics as the Ising model is to statistical mechanics, see e.g. \cite{ABKR2022}. 

The Hubbard model in its original form  \cite{G1963, H1963, K1963} describes electrons on a cubic lattice $\Z^3$, hopping between nearest-neighbor sites and interacting with a repulsive on-site density-density interaction; the model depends on two parameters: the hopping parameter $t>0$, and the on-site repulsion $U > 0$.  The definition of this model can be straightforwardly generalized to other bipartite lattices, and there exist many examples of such generalizations of the Hubbard model which are of interest in physics. In particular, the one-dimensional Hubbard model, where the lattice is $\Z$, is famous due to its exact solution by Lieb and Wu  \cite{LW1968} using the Bethe ansatz, and the infinite-dimensional Hubbard model obtained as a limit $n\to \infty$ of the Hubbard model on $\Z^n$ \cite{MV1989} is the basis of a popular approximation method known as dynamical mean field theory \cite{GK1992}. Moreover, in addition to the 2D Hubbard model, which is a prototype model for high-temperature superconductors as already discussed, there is also the two-dimensional Hubbard model on the honeycomb lattice,  which is a well-known model of  graphene \cite{S1984} and which is easier to control mathematically than the 2D Hubbard model \cite{GM2010,GMP2012}; see also \cite{BGM2006,S1998}. Other examples include the Hubbard model on the diamond lattice in three dimensions \cite{SAST1993}, a limit $n\to \infty$ of the Hubbard model on diamond lattice in $n$ dimensions \cite{SAST1993}, and anisotropic variants of the Hubbard model on $\Z^n$ where the hopping constant is different in different directions; see e.g.\ \cite{LL2025}. Furthermore, there are various extensions of these  Hubbard models obtained by adding other hopping and/or interaction terms; for example, three well-known such extensions are obtained by including: (i) hopping of strength $t'\in \R$ between next-nearest neighbor sites \cite{LH1989}, (ii) next-nearest neighbor density-density interactions of strength $V\in\R$ \cite{ZC1989}, (iii) next-nearest neighbor spin-spin interactions of strength $J>0$ \cite{S1988}. These extended Hubbard models were proposed and studied mainly as generalizations of the 2D Hubbard model in the context of high-temperature superconductivity. However, mathematically, the definitions of these extended models straightforwardly generalize to other bipartite lattices.  A common terminology used to distinguish these models is by the non-zero parameters and the lattice; for example, the original Hubbard model  \cite{G1963, H1963, K1963}  can be called the $t$-$U$ model on $\Z^3$, and the extension of the 2D Hubbard model obtained by adding next-nearest neighbor hopping and next-nearest neighbor density-density interactions to the 2D Hubbard model is known as the 2D $t$-$t'$-$U$-$V$ model, etc. We refer to this large zoo of models as {\em Hubbard-like models}.

A problem of central importance in the study of a condensed matter system is the construction of a phase diagram, i.e., a diagram that shows how the phase of the physically realized state changes as the parameters of the model change. 
The construction of the full phase diagram for the 2D Hubbard model is an extraordinarily rich problem, largely beyond the reach of current (analytical and numerical) methods; but see \cite{ABKR2022, QSACG2022} for a review of some partial results.
It is easier to make progress after applying some approximation scheme. One of the most common and most powerful approximations in condensed matter physics is that provided by Hartree--Fock theory. Outstanding features of this method include its versatility  (in principle, it can be used for {\em any}  model), and its large track record of successes in condensed matter theory (see e.g.\  \cite{A2018} for a classic textbook discussion of these successes).
To be precise, Hartree--Fock theory comes in two forms: unrestricted Hartree--Fock theory and restricted Hartree--Fock theory. Mathematically, the unrestricted Hartree--Fock approximation consists of restricting attention to states for which Wick's theorem is valid. This leads to major simplifications, but since the number of variational parameters in unrestricted Hartree--Fock theory still grows linearly with the number of lattice sites, it is common to draw on experience and physics intuition to restrict the class of variational states even further. The method is then referred to as restricted Hartree--Fock theory, or simply as Hartree--Fock theory --- it is this restricted method which has the track record of successes in condensed matter theory mentioned above. In the context of Hartree--Fock theory for the Hubbard model, it is common to restrict attention to paramagnetic (P), ferromagnetic (F), and antiferromagnetic (AF) states. A version of restricted Hartree--Fock theory involving these variational states was first implemented for the 3D Hubbard model in the pioneering work by Penn \cite{P1966}; similar results were later obtained for the 2D Hubbard model by Hirsch \cite{H1985}. However, due to conceptual issues peculiar to Hubbard-like models and clarified only much later in works by Bach, Lieb, and Solovej \cite{BLS1994} and Bach and Poelchau  \cite{BP1996}, the phase diagrams obtained in these pioneering works are qualitatively wrong \cite{LW1997,LW2007,LLphysrevlong} (we explain this in Section~\ref{sec:mixed}). The correct Hartree--Fock phase diagram for the Hubbard model on $\Z^n$ appeared in \cite{LW2007} for $n=2$ and in \cite{LLphysrevlong} for $n=1,3,\infty$ (for the first time, to our knowledge). These corrected Hartree--Fock phase diagram were obtained by numerical methods, based on a mathematically rigorous approach to Hartree--Fock theory \cite{BLS1994,BP1996}.  Our aim in this paper is to develop analytical tools allowing us to promote these numeric results to mathematical theorems.

For simplicity, and since it is an important special case, our focus in this paper is on the 2D Hubbard model at zero temperature; in on-going work, we show that many of our results here can be generalized to finite temperature and other Hubbard-like models \cite{HLLLM2026,LLM2025}.

\begin{figure}
	\vspace{0.4cm}
	\begin{center}
		\hspace{-.6cm}
		\begin{overpic}[width=.49\textwidth]{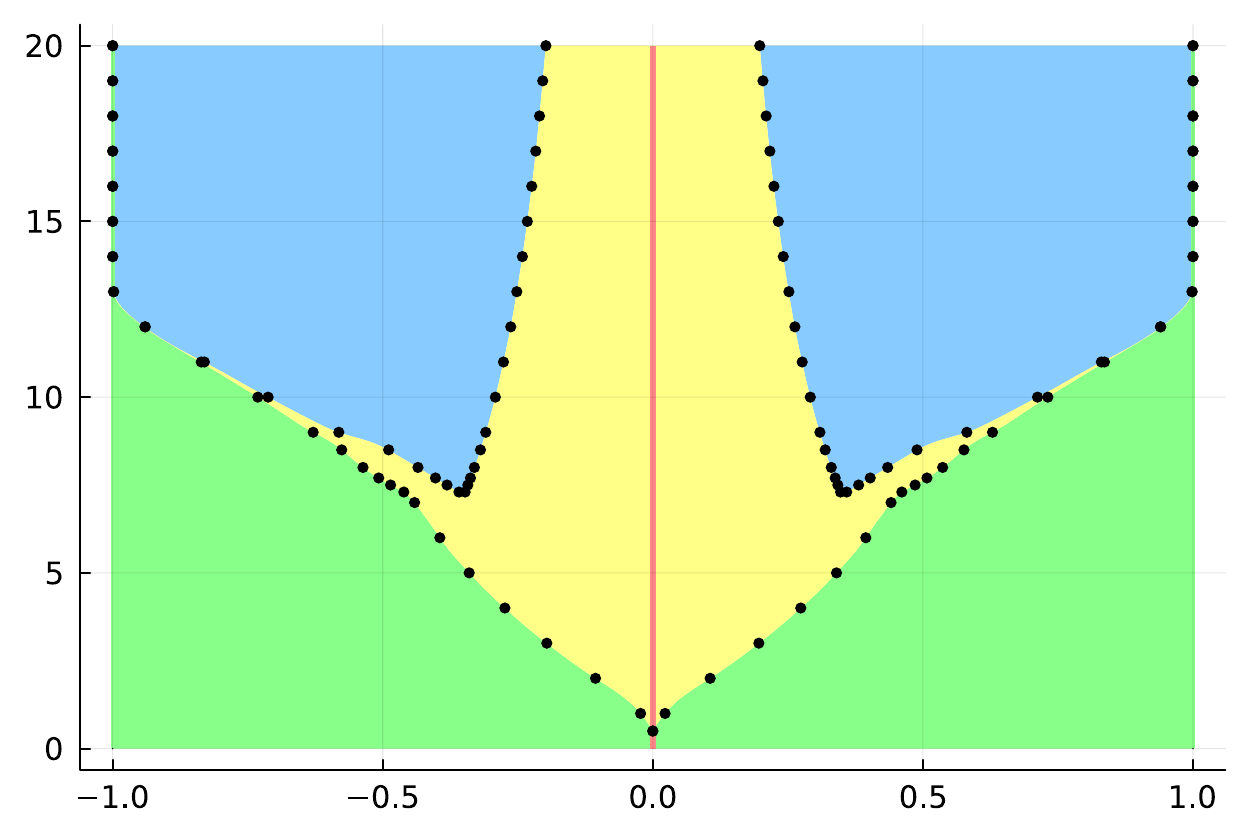 }
		%	\put(40,66.5){\footnotesize Numerical}
			\put(5.5,66.5){\footnotesize $U$}
			\put(99.5,4.5){\footnotesize $\nu$}
            \put(20,15){\footnotesize P}
            \put(80,15){\footnotesize P}
            \put(26,45){\footnotesize F}
            \put(77,45){\footnotesize F}				
			\put(38.5,23){\footnotesize Mixed}
			\put(54,23){\footnotesize Mixed}
			\put(44,39){\footnotesize AF}
			\put(48,38){\vector(2,-3){4}}
		\end{overpic}
		\hspace{0cm}
		\begin{overpic}[width=.49\textwidth]{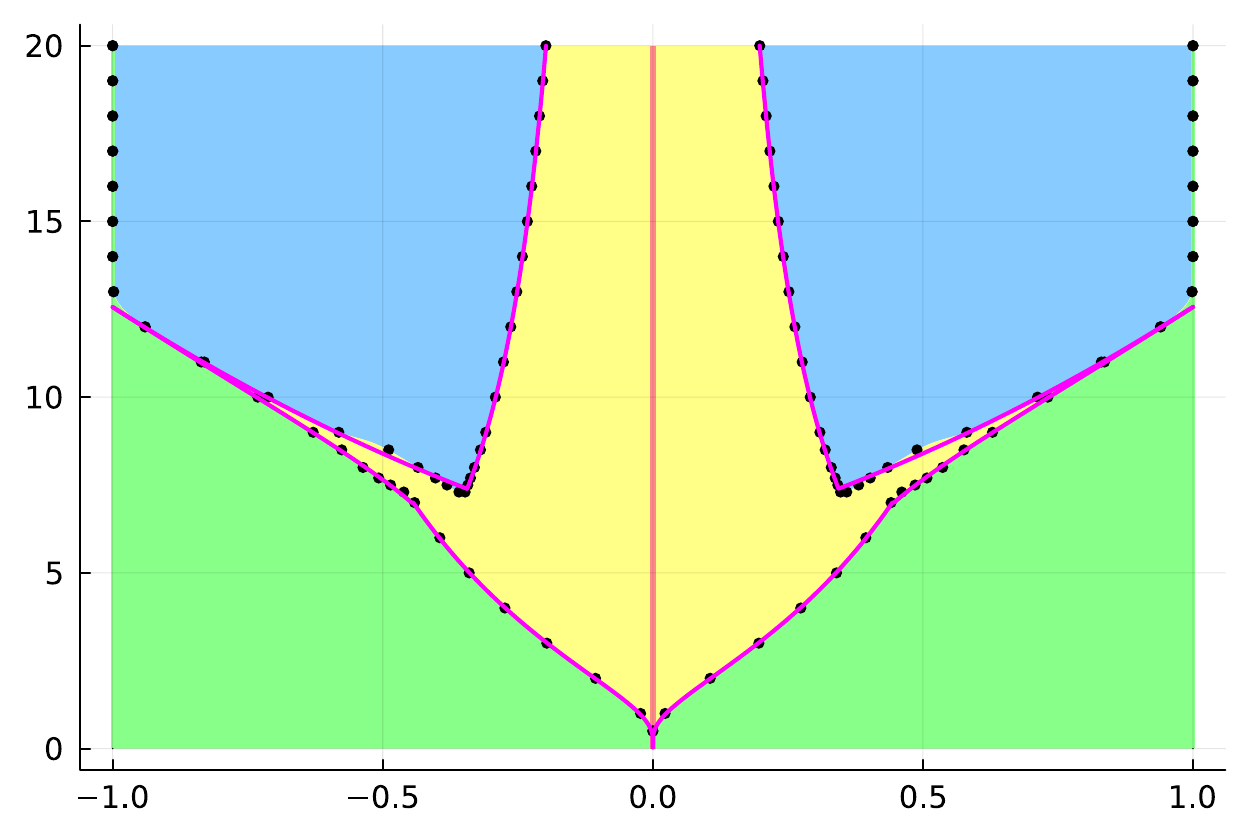 }
			\put(5.5,66.5){\footnotesize $U$}
			\put(99.5,4.5){\footnotesize $\nu$}
            \put(20,15){\footnotesize P}
            \put(80,15){\footnotesize P}
            \put(26,45){\footnotesize F}
            \put(77,45){\footnotesize F}				
            \put(63.5,50){\footnotesize $\nu_{\I}^{\F}$}		
            \put(81,38){\footnotesize $\nu_{\II}^{\F}$}		
            \put(85,32){\footnotesize $\nu_{\II}^{\mathrm{P}}$}		
            \put(62,13){\footnotesize $\nu_{\III}^{\mathrm{P}}$}		
			\put(38.5,23){\footnotesize Mixed}
			\put(54,23){\footnotesize Mixed}
			\put(44,39){\footnotesize AF}
			\put(48,38){\vector(2,-3){4}}
		\end{overpic}
		 \begin{figuretext} \label{phasediagramfig} 
		  Left: Numerically computed phase diagram of the 2D Hubbard model at zero temperature obtained by Hartree--Fock theory restricted to P, F, and AF states. The phases are shown as a function of doping $\nu$ and coupling $U$, where the doping is the average number of electrons per site minus one; the hopping parameter $t$ is set to $1$. 
			P, F, and AF regions are green, blue, and red, respectively, and yellow regions indicate mixed phases. 
			\\
		Right: Same phase diagram as on the left, but with the phase boundaries obtained analytically in this paper, $\nu_{\I}^{\F}$, $\nu_{\II}^{\F}$, $\nu_{\II}^{\mathrm{P}}$, and $\nu_{\III}^{\mathrm{P}}$, superimposed (magenta curves). Formulas for the four curves $\nu_{\I}^{\F}$, $\nu_{\II}^{\F}$, $\nu_{\II}^{\mathrm{P}}$, and $\nu_{\III}^{\mathrm{P}}$ are given in (\ref{nuIFdef})--(\ref{nuIIIPdef}). The curves $\nu_{\II}^{\F}$ and $\nu_{\II}^{\mathrm{P}}$ merge at the common end-point $(\nu, U) = (1, 4\pi)$. 
		\end{figuretext}
	\end{center}
	\vspace{-0.4cm}
\end{figure}

\subsection{Construction of Hartree--Fock phase diagrams}
In this paper, we propose an analytical method for the construction of Hartree--Fock phase diagrams for the Hubbard model in different dimensions and for various other Hubbard-like models.
Such phase diagrams are traditionally constructed numerically, but we argue that, by using asymptotic techniques, it is possible to obtain analytic formulas approximating the curves separating the different phases. These approximations become exact in certain limits.
Our specific results in the present paper are for the 2D Hubbard model at zero temperature, and for the diagram predicting the different phases as a function of the on-site repulsion, $U > 0$, and the doping, $\nu\in[-1,1]$; the doping $\nu$ is defined as the average number of electrons per site minus one (see Appendix~\ref{modelapp} for the precise definition of $\nu$). 
Our analytic expansions become exact as any of the four edges of the phase diagram is approached.
In complementary on-going work, we present similar results for the (repulsive) Hubbard model on other lattices (including $\Z^n$ for $n=1,\infty$) \cite{LLM2025} and for the attractive 1D and 2D Hubbard models with an external magnetic field \cite{HLLLM2026}.
  
Our results for the 2D Hubbard model are summarized in Figure \ref{phasediagramfig}. 
As explained in Appendix \ref{modelapp}, the hopping parameter $t>0$ can be scaled out of the problem---in Figure \ref{phasediagramfig} and in what follows we have therefore set $t = 1$.
The phase diagram on the left is numerical and was obtained using Hartree--Fock theory restricted to P, F, and AF states; the phase boundaries were computed at the points indicated by dots, and the curves between these dots are interpolations. The phase diagram on the right shows the phase boundaries obtained analytically in this paper superimposed on the numerical phase diagram. Our formulas for the phase boundaries agree with, but also improve on, the numerical phase boundaries. For example, the system is in the P state for any $U>0$ if $\nu = \pm 1$ (see green vertical lines at $\nu = \pm 1$ in Figure \ref{phasediagramfig}), because the system is completely full if $\nu = 1$ and completely empty if $\nu = -1$. However, the numerical results leave open the question whether the state of the system for $U \gtrsim 12$ and doping  close to $\pm 1$ is P or F. Our analytic results answer this question and show that in the limit as the doping $\nu$ tends to $\pm 1$ with $|\nu| < 1$, the state of the system is P if $U < 4\pi$ and F if $U > 4\pi$.

\subsection{Mixed phases}\label{sec:mixed} 
The diagrams in Figure \ref{phasediagramfig} show whether the state of lowest energy has P, F, or AF structure for given values of the doping $\nu$ (horizontal axis) and on-site repulsion $U$ (vertical axis).
The yellow regions in Figure \ref{phasediagramfig} are mixed phases and are of particular interest. Indeed, these are regions where we expect to see unconventional physics. In a mixed phase, neither of the P, F, or AF solutions is stable by itself; instead the only possibility to obtain a stable solution within our Hartree--Fock ansatz is to let two conventional phases coexist and form a mixed phase. Although such a state consisting of two coexisting conventional phases is theoretically possible, it is more likely that the restriction to P, F, or AF states is too limited in a mixed phase. In other words, the physical state of the system in a mixed phase is expected to be a more exotic state where translation invariance is broken in complicated ways. (The interested reader can find a more detailed discussion of mixed states in \cite{LL2025}.)

A by-product of our analytic construction of the phase boundaries is that we can say with certainty that there are mixed regions in the phase diagram of Figure \ref{phasediagramfig}.
This provides, as far as we know, the first mathematical proof of existence of mixed regions in a Hartree--Fock phase diagram for a Hubbard-like model.

\begin{figure}
	\vspace{0.4cm}
	\begin{center}
		\hspace{-.6cm}
		\begin{overpic}[width=.49\textwidth]{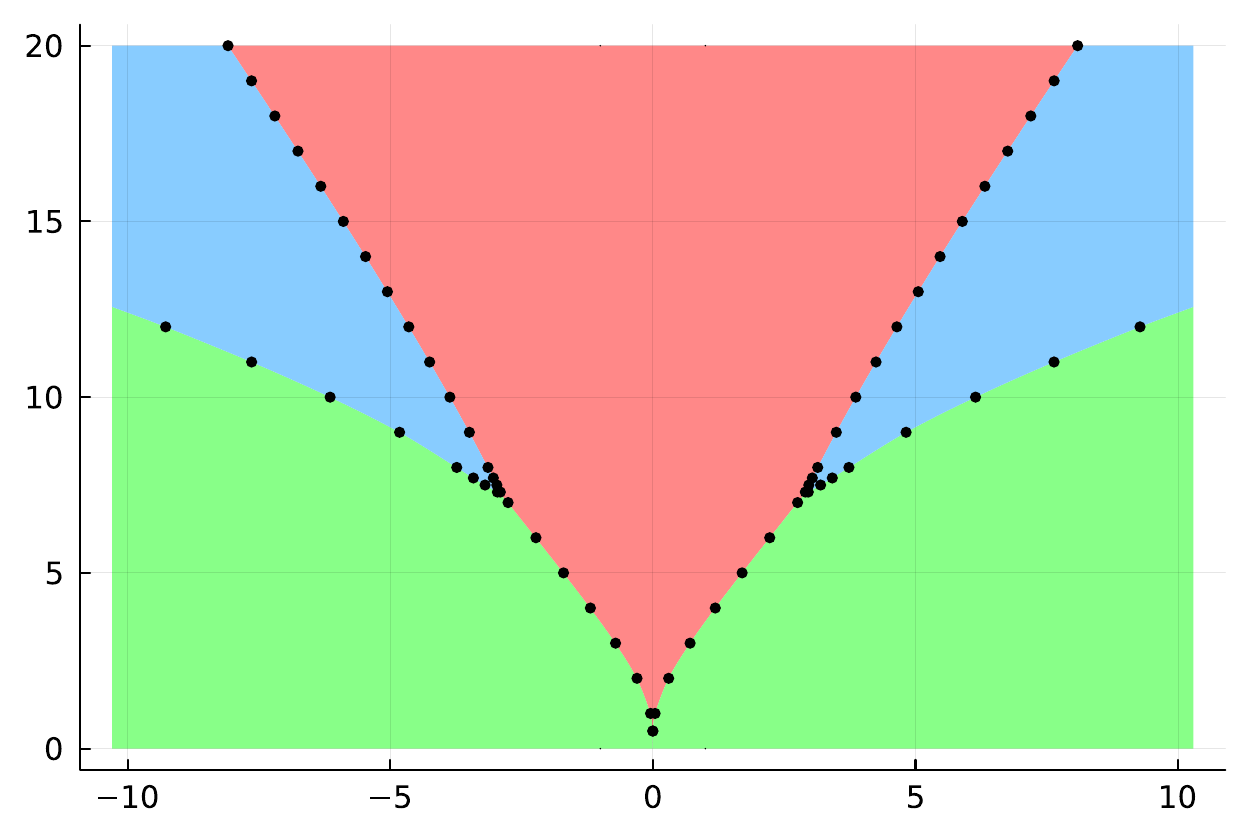 }
		%	\put(40,66.5){\footnotesize Numerical}
			\put(5.5,66.5){\footnotesize $U$}
			\put(99.5,4.5){\footnotesize $\mu$}
            \put(20,15){\footnotesize P}
            \put(80,15){\footnotesize P}
            \put(18,45){\footnotesize F}
            \put(84,45){\footnotesize F}				
			\put(49,45){\footnotesize AF}
		\end{overpic}
		\hspace{0cm}
		\begin{overpic}[width=.49\textwidth]{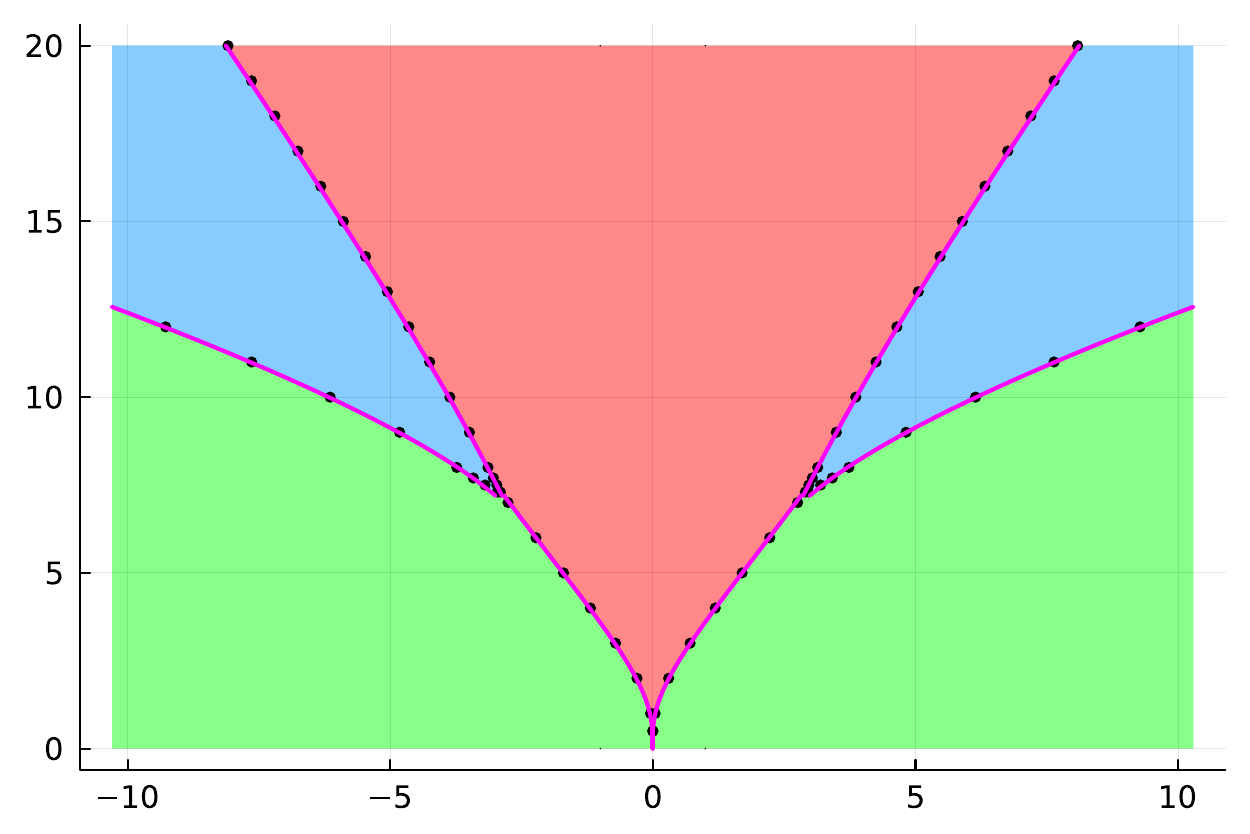 }
			\put(5.5,66.5){\footnotesize $U$}
			\put(99.5,4.5){\footnotesize $\mu$}
            \put(20,15){\footnotesize P}
            \put(80,15){\footnotesize P}
            \put(18,45){\footnotesize F}
            \put(84,45){\footnotesize F}				
            \put(69,50){\footnotesize $\mu_{\I}^{\app}$}		
            \put(82,33){\footnotesize $\mu_{\II}^{\app}$}		
            \put(57,14){\footnotesize $\mu_{\III}^{\app}$}		
			\put(49,45){\footnotesize AF}
		\end{overpic}
		 \begin{figuretext} \label{phasediagramUvsmufig} 
		Left: Same phase diagram as in Figure \ref{phasediagramfig} (left) except that here 
		%  Left: Numerically computed phase diagram of the 2D Hubbard model at zero temperature obtained by Hartree--Fock theory restricted to P, F, and AF states. The 
		 the phases are shown as a function of the chemical potential $\mu$ and the coupling $U$. 
%P, F, and AF regions are green, blue, and red, respectively. 
			\\
		Right: Same phase diagram as on the left, but with the phase boundaries obtained analytically in this paper, $\mu_{\I}^{\app}$, $\mu_{\II}^{\app}$, and $\nu_{\III}^{\app}$, superimposed (magenta curves). Formulas for the curves $\mu_{\I}^{\app}$, $\mu_{\II}^{\app}$, and $\nu_{\III}^{\app}$ are given in (\ref{muIappdef})--(\ref{muIIIappdef}). 
		\end{figuretext}
	\end{center}
	\vspace{-0.4cm}
\end{figure}

The existence of mixed phases can be understood by considering Figure \ref{phasediagramUvsmufig}, which shows the same phase diagrams as Figure \ref{phasediagramfig}, except that in Figure \ref{phasediagramUvsmufig} the phases are shown as functions of the chemical potential $\mu$ (horizontal axis) and the on-site repulsion $U$ (vertical axis). %The phase diagram on the right shows the phase boundaries obtained analytically in this paper superimposed on the numerical phase diagram. 
Let us consider the phase boundary between the AF and F phases approximated by the curve $\mu_{\I}^{\app}$.
It turns out that, at a fixed $(\mu, U)$ infinitesimally to the left of this phase boundary, the AF state has the lowest free energy and has doping $\nu_{\I}^{\AF}(U) = 0$. On the other hand, at a fixed $(\mu, U)$ infinitesimally to the right of this phase boundary, the F state has the lowest free energy and has a strictly positive doping $\nu_{\I}^{\F}(U) > 0$. Thus, for a given $U$, there is no value of $\mu$ for which the state with minimal free energy among the P, F, and AF states has doping in the interval $(0, \nu_{\I}^{\F}(U))$. This gives rise to the mixed region $0 < \nu < \nu_{\I}^{\F}(U)$ in Figure \ref{phasediagramfig}. Similarly, the mixed regions $\nu_{\II}^{\F}(U) < \nu < \nu_{\II}^{\mathrm{P}}(U)$ and $0 < \nu < \nu_{\III}^{\mathrm{P}}(U)$ in Figure \ref{phasediagramfig} are due to discontinuities in the doping across the F-P and AF-P phase boundaries, respectively.

The occurrence of mixed phases was missed in the pioneering works on the Hartree--Fock phase diagrams of the Hubbard model on $\Z^n$ for $n=3$ in \cite{P1966} and $n=2$ in \cite{H1985}. This oversight was due to a shortcut which works for many models but which can give incorrect results for Hubbard-like models (this is explained in more detail in \cite{LLphysrevlong}). The phase diagrams obtained by using this shortcut predict an AF phase in a finite doping regime away from half-filling $\nu=0$  \cite{H1985}, which is in contradiction with results by other methods, as noted already in \cite{H1985}. In the corrected Hartree--Fock diagrams there is no such discrepancy. It is interesting to note that, for the 2D Hubbard model, mixed phases occur exactly in the parameter regions where real high-temperature superconductors exhibit exotic physics; see e.g.\ \cite{HTSC} for a schematic phase diagram of high-temperature superconductors. In forthcoming work, we show that Hartree--Fock theory of the 2D $t$-$t'$-$U$-$V$ model is much richer and produces phase diagrams that are more similar to phase diagrams of real high-temperature superconductors than Hartree--Fock theory of the $t$-$U$ model \cite{LLM2025}.

\subsection{The method}
Let us now describe our method. We first describe the general steps of the method, and then we explain how the curves in the right half of Figure \ref{phasediagramfig} were obtained by implementing these steps to the particular case of the 2D Hubbard model.

Our method consists of the following three steps:
\begin{enumerate}[1.]

\item Study the asymptotic behavior of the P, F, and AF mean-field equations in the limit as $(\nu, U)$ approaches one of the four edges of the phase diagram, i.e., in one of the following four limits: (i) $U \to +\infty$ (top edge of the phase diagram), (ii) $\nu \to 1$ (right edge of the phase diagram), (iii) $U \to 0$ (bottom edge of the phase diagram), (iv) $\nu \to -1$ (left edge of the phase diagram). 
Use that the functions in the mean-field equations have asymptotic expansions in the above limits to identify asymptotic sectors. The asymptotic sectors are characterized by the fact that the P, F, and AF mean-field equations have a fixed number of solutions within each sector. 

\item Find the expansion of the free energy for each of the AF mean-field solutions. Choose the solution corresponding to the free energy that is the smallest; the free energy of this solution is the AF free energy, denoted by $\mathcal{F}_{\AF}$, in the sector under consideration. In the same way, compute the P and F free energies, $\mathcal{F}_{\mathrm{P}}$ and $\mathcal{F}_{\F}$. It is important to compare free energies at the same chemical potential $\mu$ (and not at the same doping $\nu$) \cite{LL2025}; we therefore view $\mathcal{F}_{\mathrm{P}}$, $\mathcal{F}_{\F}$, and $\mathcal{F}_{\AF}$ as functions of $(U, \mu)$ (rather than as functions of $(U, \nu)$); see Appendix \ref{modelapp} for precise definitions of $\mu$ and $\nu$.

\item The curves separating the different phases in the phase diagram are obtained by finding curves on which $\mathcal{F}_{\AF} = \mathcal{F}_{\F}$, $\mathcal{F}_{\AF} = \mathcal{F}_{\mathrm{P}}$, or $\mathcal{F}_{\F} = \mathcal{F}_{\mathrm{P}}$. %These equalities correspond to AF-Mixed-F, F-Mixed-P, and F-Mixed-P interfaces, respectively.
Compute asymptotic approximations to these curves by substituting the expansions obtained in the previous step into these equalities and solving the resulting equations order by order. This gives an asymptotic expression for the value of the chemical potential $\mu$ corresponding to the phase boundary as a function of $U$. By computing the dopings corresponding to this chemical potential, analytic approximations to the curves separating the different phases in the phase diagram are obtained.

\end{enumerate}

\subsection{The case of the 2D Hubbard model}\label{2Dintrosubsec}
We expect the above scheme to be applicable to a large class of Hubbard-like models. However, for reasons explained already above, we have decided to focus solely on the 2D Hubbard model at zero temperature in this paper. Below is a summary of how we obtained the phase diagram in Figure \ref{phasediagramfig} by implementing the above three steps (see Theorems \ref{sectorIth}--\ref{sectorIIIth} for precise statements). To the best of our knowledge, this is the first non-numerical construction of a phase diagram for a Hubbard-like model.

\begin{figure}
	\vspace{0.4cm}
	\begin{center}
		\begin{overpic}[width=.49\textwidth]{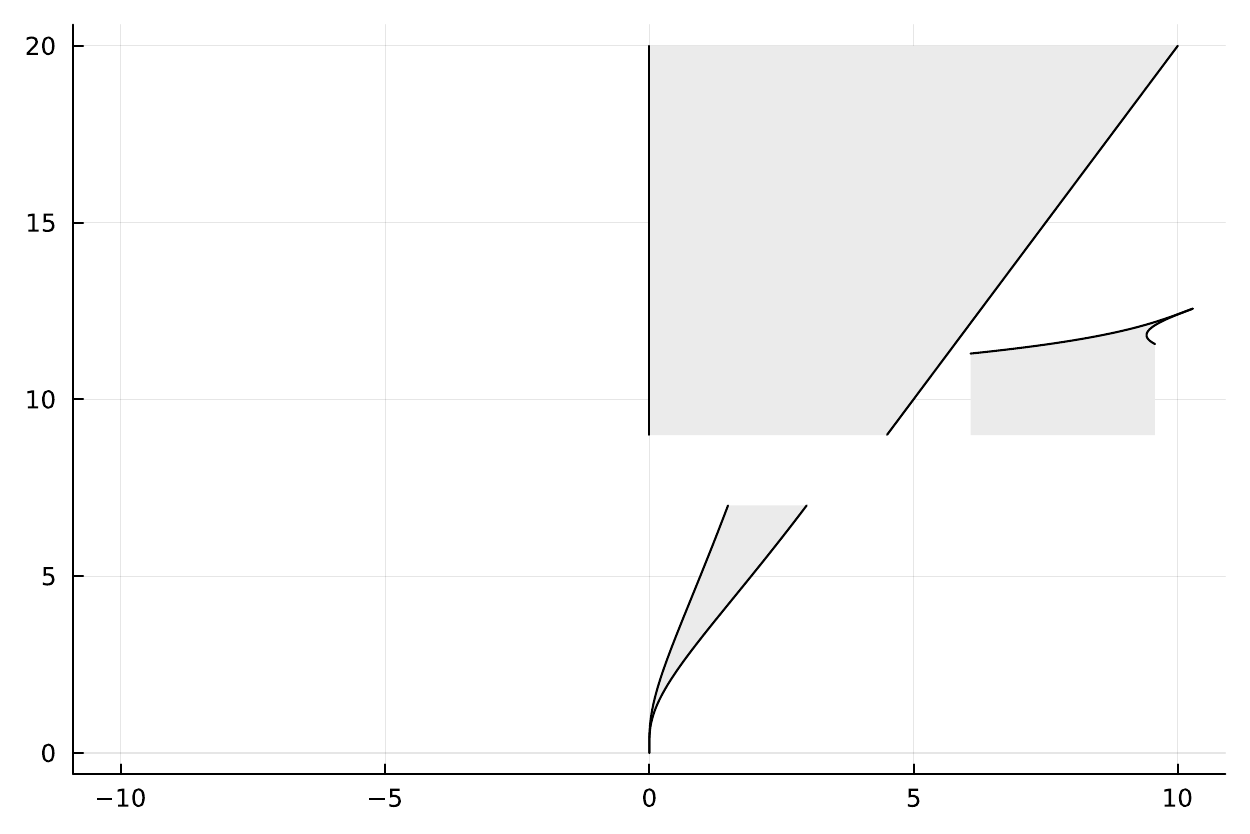 }
			\put(5,66.5){\footnotesize $U$}
			\put(99.5,4){\footnotesize $\mu$}
            \put(63,47){\footnotesize I}
            \put(85.5,34.5){\footnotesize II}
            \put(57,21){\footnotesize III}
		\end{overpic}
		\begin{overpic}[width=.49\textwidth]{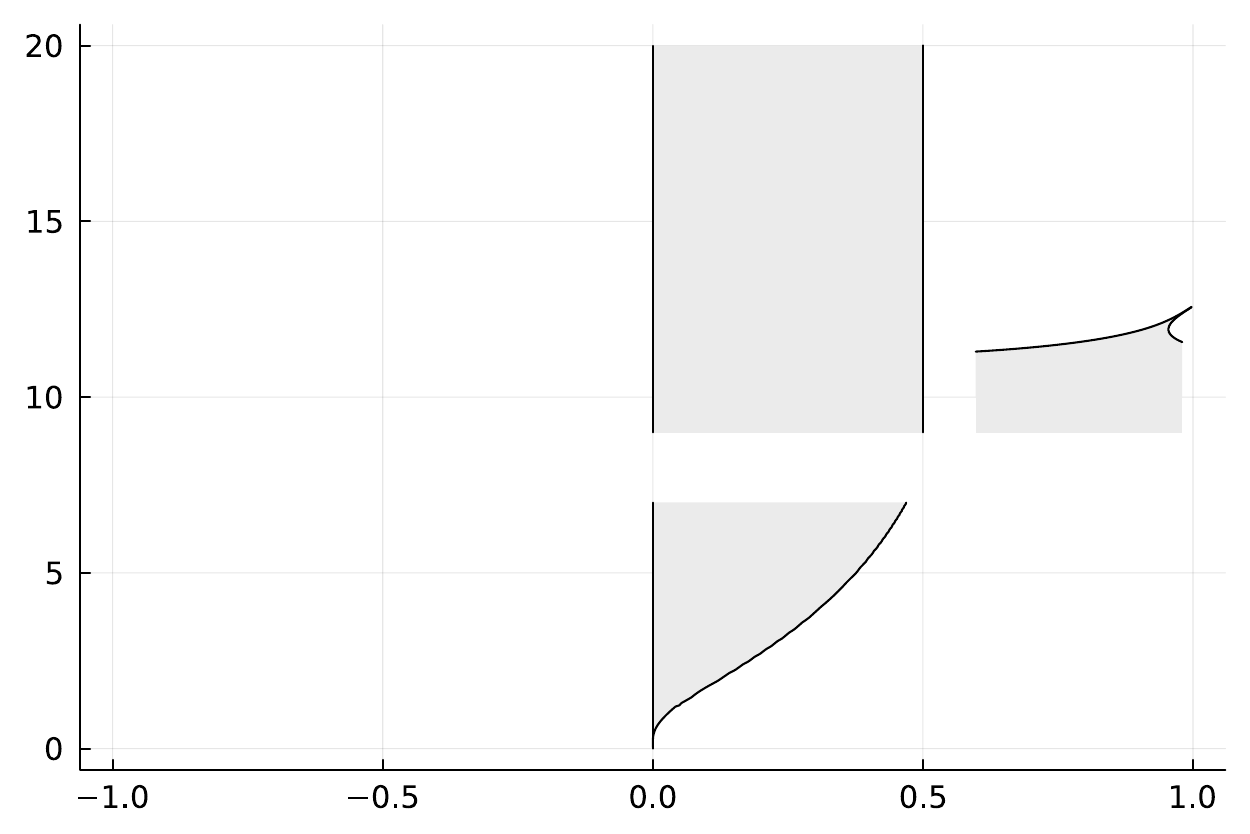 }
			\put(5,66.5){\footnotesize $U$}
			\put(99.5,4){\footnotesize $\nu$}
            \put(63,47){\footnotesize I}
            \put(85.5,34.5){\footnotesize II}
            \put(57,19){\footnotesize III}
		\end{overpic}
		 \begin{figuretext}\label{sectorsfig} 
		  Left: The three asymptotic sectors I, II, and III 
		defined in (\ref{sectorIdef})--(\ref{sectorIIIdef}) for $\delta = 0.001$, $M=1$, $U_0=9$ (for I and II), and $U_0=7$ (for III). 	
		By symmetry, it is enough to consider asymptotic sectors in the region $\mu\geq 0$.
		 \\
		  Right: The corresponding subsets of the $(\nu, U)$-plane. 
		  \end{figuretext}
	\end{center}
	\vspace{-0.4cm}
\end{figure}

\begin{enumerate}[1.]

\item The P, F, and AF mean-field equations for the 2D Hubbard model are given in (\ref{Pmeanfield})--(\ref{AFmeanfield}) below. By particle-hole symmetry, the phase diagram is symmetric under $\nu \to -\nu$, it is therefore sufficient to consider $\nu \geq 0$, or equivalently, $\mu \geq 0$ (see e.g.\ \cite[Appendix~A.3]{LLphysrevlong} for details on this symmetry). We identify the three asymptotic sectors I, II, and III displayed in Figure \ref{sectorsfig}. It is possible to consider other sectors as well, but sectors I, II, and III contain phase boundary curves, and are therefore crucial for the construction of the phase diagram. An extension of the analysis to other asymptotic sectors would determine the phase of the system with mathematical certainty in additional regions, but this is beyond the scope of the current paper.

\item In Sector I, it turns out that there is one P mean-field solution, one F mean-field solution, and one AF mean-field solution. Moreover, there is a curve $\mu = \mu_{\I}(U)$ such that the AF free energy is the smallest for $\mu < \mu_{\I}(U)$ while the F free energy is the smallest for $\mu > \mu_{\I}(U)$. Our method shows that $\mu_{\I}(U) = \mu_{\I}^{\app}(U) + O\big(U^{-9/2}\big)$ as $U \to +\infty$, where the approximation $\mu_{\I}^{\app}(U)$ is given by
\begin{align}\nonumber
\mu_{\I}^{\app}(U) = &\; \frac{U}{2}  - 4 + \frac{4\sqrt{2\pi}}{\sqrt{U}} 
- \frac{2\pi}{3 U} 
-\frac{5 \pi^{3/2}}{36 \sqrt{2} U^{3/2}} 
-\frac{11 \pi^2}{270 U^{2}} 
-\frac{\sqrt{\frac{\pi }{2}} \left(691200+1163 \pi^2\right)}{34560 U^{5/2}} 
	\\ \nonumber
& + \bigg(\frac{10 \pi }{3}-\frac{18071 \pi^3}{1088640} \bigg) \frac{1}{U^3}
+ \frac{\pi^{3/2} \left(51840000-907207 \pi^2\right)}{49766400 \sqrt{2} U^{7/2}} 
	\\  \label{muIappdef}
& + \bigg(\frac{11 \pi^2}{27} - \frac{561913\pi^{4}}{52254720}\bigg)\frac{1}{U^4}.
\end{align} 

In Sector II, there is one P mean-field solution, two F mean-field solutions, and no AF mean-field solution. Moreover, there is a curve $\mu = \mu_{\II}(U)$ such that the F free energy is the smallest for $\mu < \mu_{\II}(U)$ while the P free energy is the smallest for $\mu > \mu_{\II}(U)$. Our method shows that $\mu_{\II}(U) = \mu_{\II}^{\app}(U) + O\big((4\pi - U)^7\big)$ as $U \uparrow 4\pi$ (the notation $U \uparrow 4\pi$ indicates that $U$ approaches $4\pi$ from below), where 
\begin{align}\nonumber
\mu_{\II}^{\app}(U) & = 4 + 2\pi 
- \frac{(8+\pi ) }{2 \pi } (4 \pi - U)
+ \frac{7}{12 \pi^2} (4 \pi - U)^2
+\frac{17}{288 \pi^3} (4 \pi - U)^3
	\\ \label{muIIappdef}
&+\frac{1861}{138240 \pi^4} (4 \pi - U)^4
+ \frac{15181}{3317760 \pi^5} (4 \pi - U)^5
 + \frac{469909}{247726080 \pi^6} (4 \pi - U)^6.
\end{align}

In Sector III, there is one P mean-field solution, no F mean-field solution, and two AF mean-field solutions. Moreover, there is a curve $\mu = \mu_{\III}(U)$ such that the AF free energy is the smallest for $\mu < \mu_{\III}(U)$ while the P free energy is the smallest for $\mu > \mu_{\III}(U)$. We will show that $\mu_{\III}(U) = \mu_{\III}^{\app}(U) + O\big(U^{5/2} e^{-\frac{2\pi}{\sqrt{U}}}\big)$ as $U \downarrow 0$ (the notation $U \downarrow 0$ indicates that $U$ approaches $0$ from above), where 
\begin{align}\nonumber
\mu_{\III}^{\app}(U) = &\; \bigg\{ 16 \sqrt{2} 
+ \frac{2 \sqrt{2} (2+\ln{2})}{\pi } \sqrt{U}
+\frac{(\ln(2)-6) (\ln(8)-2)}{4 \sqrt{2} \pi^2} U 
	\\ \nonumber
& + \frac{56+\ln(2) (76+5 (\ln(2)-6) \ln{2})}{32 \sqrt{2} \pi^3} U^{3/2}
	\\ \label{muIIIappdef}
& + \frac{(\ln(2)-6) (1320+\ln(2) (396+5 \ln(2) (21 \ln(2)-10)))}{3072 \sqrt{2} \pi^4} U^2 \bigg\} e^{-\frac{2\pi}{\sqrt{U}}}.
\end{align} 
The curves $\mu_{\I}^{\app}$, $\mu_{\II}^{\app}$, and $\mu_{\III}^{\app}$ are displayed in Figure \ref{phasediagramUvsmufig} (right).

\item At the phase boundary $\mu = \mu_{\I}(U)$ in Sector I, the doping of the AF state is $\nu = 0$ and the doping of the F state, $d_0^{\F}$, is given by
$$d_0^{\F} = \nu_{\I}^{\F}(U) + O\big(U^{-7/2}\big) \qquad \text{as $U \to +\infty$},$$
where 
\begin{align}\nonumber
\nu_{\I}^{\F}(U) := &\; \sqrt{\frac{2}{\pi}}\frac{1}{\sqrt{U}} + \frac{1}{3U}
+ \frac{31 \sqrt{\frac{\pi }{2}}}{144 U^{3/2}} 
+ \frac{203 \pi }{2160 U^2} 
	\\ \label{nuIFdef}
& + \frac{13573 \pi^2-691200}{138240 \sqrt{2 \pi } U^{5/2}}
 + \bigg(\frac{979 \pi^2}{17010}-\frac{5}{3}\bigg) \frac{1}{U^3}.
 \end{align}
The curve $\nu_{\I}^{\F}$ is one of the magenta curves shown in Figure \ref{phasediagramfig}.
Since $0 < \nu_{\I}^{\F}(U)$, there is a mixed phase between the AF and F phases (yellow region between AF and F phases in Figure \ref{phasediagramfig}).

At the phase boundary $\mu = \mu_{\II}(U)$ in Sector II, the doping of the F state is given by
$$d_0^{\F} = \nu_{\II}^{\F}(U) + O\big((4\pi - U)^7\big) \qquad \text{as $U \uparrow 4\pi$},$$
and the doping of the P state is given by
$$d_0^{\mathrm{P}} = \nu_{\II}^{\mathrm{P}}(U) + O\big((4\pi - U)^7\big) \qquad \text{as $U \uparrow 4\pi$},$$
where 
\begin{align} \nonumber
\nu_{\II}^{\F}(U) := &\; 1 - \frac{1}{\pi^2} (4\pi - U)
-\frac{5}{48 \pi^3}(4\pi - U)^2
-\frac{19}{1152 \pi^4}  (4\pi - U)^3
	\\ \label{nuIIFdef}
& -\frac{2039}{552960 \pi^5}  (4\pi - U)^4
-\frac{3691}{2654208 \pi^6}   (4\pi - U)^5
 -\frac{2369993}{2972712960 \pi^7} (4\pi - U)^6
\end{align}
and
\begin{align} \nonumber
\nu_{\II}^{\mathrm{P}}(U) := &\; 1 - \frac{1}{\pi^2} (4\pi - U)
-\frac{1}{24 \pi^3} (4\pi - U)^2
+ \frac{1}{144 \pi^4} (4\pi - U)^3
	\\ \label{nuIIPdef}
& + \frac{2851}{552960 \pi^5} (4\pi - U)^4
+  \frac{15839}{6635520 \pi^6} (4\pi - U)^5
+ \frac{207463}{198180864 \pi^7} (4\pi - U)^6.
\end{align}
The curves $\nu_{\II}^{\F}$ and  $\nu_{\II}^{\mathrm{P}}$ are two of the magenta curves shown in Figure \ref{phasediagramfig}.
Since $\nu_{\II}^{\F}(U) < \nu_{\II}^{\mathrm{P}}(U)$, there is a mixed phase between the F and P phases (yellow region between F and P phases in Figure \ref{phasediagramfig}).

At the phase boundary $\mu = \mu_{\III}(U)$ in Sector III, the doping of the AF state is $\nu = 0$ and the doping of the P state is given by
\begin{align}\label{d0PnuIII}
d_0^{\mathrm{P}} = \nu_{\III}^{\mathrm{P}}(U) + O\big(\sqrt{U} e^{-\frac{2\pi}{\sqrt{U}}}\big) \qquad \text{as $U \downarrow 0$},\end{align}
where 
\begin{align}\label{nuIIIPdef}
\nu_{\III}^{\mathrm{P}}(U) := \bigg\{ \frac{32 \sqrt{2}}{\pi  \sqrt{U}}
-\frac{4 \sqrt{2} (2+\ln{2})}{\pi^2}
+\frac{1}{2} \frac{(22 - \ln{2}) (2+\ln{2})}{2 \sqrt{2} \pi^3} \sqrt{U}\bigg\} e^{-\frac{2\pi}{\sqrt{U}}};
\end{align}
see Remark \ref{expansionsremark} for a discussion of the size of the error term in (\ref{d0PnuIII}).
The curve $\nu_{\III}^{\mathrm{P}}$ is one of the magenta curves shown in Figure \ref{phasediagramfig}.
Since $0 < \nu_{\III}^{\mathrm{P}}(U)$, there is a mixed phase between the AF and P phases (yellow region between AF and P phases in Figure \ref{phasediagramfig}).
\end{enumerate}

In addition to providing the above analytic expressions (\ref{nuIFdef})--(\ref{nuIIIPdef}) for the phase boundary curves, our method also gives asymptotic expansions for the free energies, as well as for the magnetization of the F and AF states, in each asymptotic sector. 
Let us also mention that all expansions can be extended to arbitrary high order at the expense of more laborious calculations.

\subsection{Organization of the paper}
Our results for the 2D Hubbard model are presented in Section \ref{mainsec} in the form of three theorems (Theorems \ref{sectorIth}--\ref{sectorIIIth}). These theorems treat the three asymptotic sectors I, II, and III, respectively. 
Basic properties of the P, F, and AF Hartree--Fock functions are established in Sections \ref{Psec}, \ref{Fsec}, and \ref{AFsec}, respectively.
The proofs of the three theorems are presented in Sections \ref{SectorIproofsec}, \ref{SectorIIproofsec}, and \ref{SectorIIIproofsec}, respectively. 
Finally, Section \ref{conclusionssec} contains some conclusions. 
%A review of the definition of the Hubbard model is included in Appendix \ref{modelapp}.  
For completeness, a mathematical definition of the Hubbard model is included in Appendix \ref{modelapp}. In Appendix \ref{modelapp}, expressions for the P, F, and AF free energies are also derived. 
In Appendix \ref{N0app}, we derive several properties that we need of the function $N_{0}(\epsilon)$ defined in (\ref{2Ddensityofstates}). Appendix \ref{Delta1primeapp} contains the proof of a technical lemma.

\section{Theorems for the 2D Hubbard model}\label{mainsec}
In this section, we state three theorems for the 2D Hubbard model at zero temperature. Before stating the theorems, we need to introduce the free energy densities of the P, F, and AF states for the 2D Hubbard model, and define the three asymptotic sectors I, II, and III.

\begin{figure}
\bigskip
\begin{center}
\begin{overpic}[width=.46\textwidth]{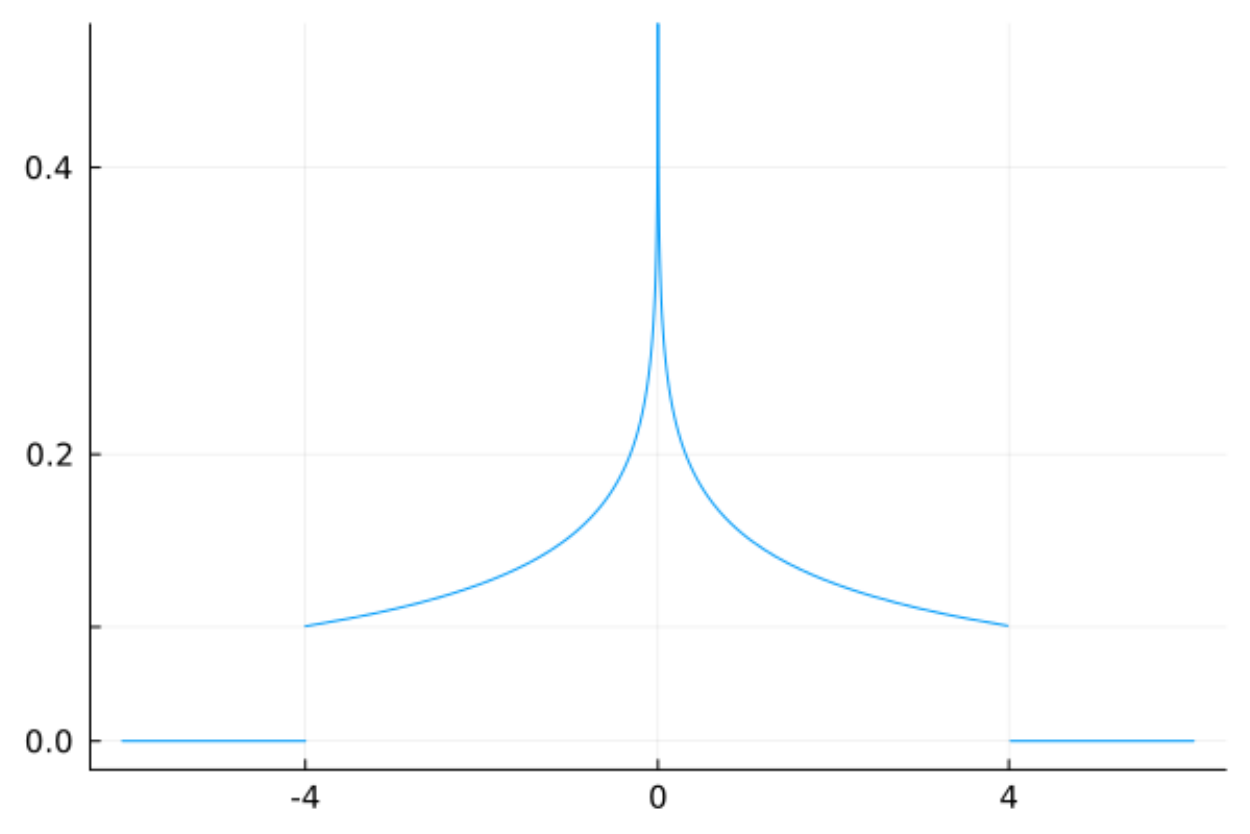}
      \put(2,68){\footnotesize $N_{0}(\epsilon)$}
      \put(101,3.7){\footnotesize $\epsilon$}
       \put(1,15.4){\footnotesize $\frac{1}{4\pi}$}
   \end{overpic}
        \begin{figuretext}\label{N0fig}
       The function $N_0(\epsilon)$ defined in (\ref{2Ddensityofstates}) has a logarithmic singularity at $\epsilon = 0$ and jump discontinuities at $\epsilon = \pm 4$. 
      \end{figuretext}
     \end{center}
\end{figure}

\subsection{The P, F, and AF free energies}
The free energy densities of the P, F, and AF states for the zero temperature 2D Hubbard model are denoted by $\mathcal{F}_{\mathrm{P}}(U, \mu)$, $\mathcal{F}_{\F}(U, \mu)$, and $\mathcal{F}_{\AF}(U, \mu)$, and are defined as follows. Define the density of states $N_{0}:\R \to [0, +\infty]$ by (the graph of $N_0$ is displayed in Figure \ref{N0fig})
\begin{align}\label{2Ddensityofstates}
N_{0}(\epsilon) = \int_{[-\pi,\pi]^2} \delta\big(2 (\cos{k_1} + \cos{k_2}) + \epsilon\big) \frac{dk_1dk_2}{(2\pi)^2},
\end{align}
where $\delta(\cdot)$ is the Dirac delta function.
Let $U > 0$ be the on-site repulsion in the Hubbard model, and let $\mu \in \R$ be the chemical potential. 
Define the {\it Hartree--Fock functions} $\mathcal{G}_{\mathrm{P}}$, $\mathcal{G}_{\F}$, and $\mathcal{G}_{\AF}$ by
\begin{align}\label{PHartreeFockFunction}
\mathcal{G}_{\mathrm{P}}(d_0, U, \mu) = & -  \frac{U}{4} d_0^2 + \frac{U}{2} d_0 - \mu - 2\int_{\R} N_0(\epsilon)\bigg(\frac{U}{2} d_0 - \mu + \epsilon\bigg) \theta\bigg(\frac{U}{2} d_0 - \mu + \epsilon\bigg) d\epsilon,
	\\ \nonumber
\mathcal{G}_{\F}(d_0, m_0, U, \mu) = &\; \frac{U}{4}  \big(m_0^2 - d_0^2\big)
 + \frac{U}{2} d_0 - \mu
 	\\ \nonumber
& - \int_{\R} N_0(\epsilon) \bigg(\frac{U(d_0 - m_0)}{2}  - \mu + \epsilon\bigg)\theta\bigg(\frac{U(d_0 - m_0)}{2}  - \mu + \epsilon\bigg)d\epsilon
	\\ \label{FHartreeFockFunction}
& - \int_{\R}  N_0(\epsilon) \bigg(\frac{U(d_0 + m_0)}{2}  - \mu + \epsilon \bigg) \theta\bigg(\frac{U(d_0 + m_0)}{2}  - \mu + \epsilon \bigg) d\epsilon,
	\\ \nonumber
\mathcal{G}_{\AF}(d_0, m_1, U, \mu) = &\; \frac{U}{4}  \big(m_1^2 - d_0^2\big)
+ \frac{U}{2} d_0 - \mu 
	\\ \nonumber
&\hspace{-2cm} - \int_{\R} N_0(\epsilon) \bigg(\frac{U}{2} d_0 - \mu + \sqrt{\frac{U^2}{4}m_1^2 + \epsilon^2}\bigg) \theta\bigg(\frac{U}{2} d_0 - \mu + \sqrt{\frac{U^2}{4}m_1^2 + \epsilon^2}\bigg)d\epsilon
	\\\label{AFHartreeFockFunction}
&\hspace{-2cm} - \int_{\R} N_0(\epsilon)  \bigg(\frac{U}{2} d_0 - \mu - \sqrt{\frac{U^2}{4}m_1^2 + \epsilon^2}\bigg) \theta\bigg(\frac{U}{2} d_0 - \mu - \sqrt{\frac{U^2}{4}m_1^2 + \epsilon^2}\bigg) d\epsilon,
\end{align}
where $d_0 \in \R$, $m_0, m_1 \in [0, +\infty)$, and $\theta(\cdot)$ is the Heaviside function; see Appendix \ref{modelapp} for the origin of the expressions (\ref{PHartreeFockFunction})--(\ref{AFHartreeFockFunction}). Given $U > 0$ and $\mu \in \R$, the {\it P mean-field equation} is given by
\begin{align}\label{Pmeanfield}
\frac{\partial \mathcal{G}_{\mathrm{P}}}{\partial d_0}(d_0^{\mathrm{P}}, U, \mu) = 0;
\end{align}
the {\it F mean-field equations} are given by
\begin{align}\label{Fmeanfield}
\frac{\partial \mathcal{G}_{\F}}{\partial d_0}(d_0^{\F}, m_0^{\F}, U, \mu) = 0, \qquad \frac{\partial \mathcal{G}_{\F}}{\partial m_0}(d_0^{\F}, m_0^{\F}, U, \mu) = 0;
\end{align}
and the {\it AF mean-field equations} are given by
\begin{align}\label{AFmeanfield}
\frac{\partial \mathcal{G}_{\AF}}{\partial d_0}(d_0^{\AF}, m_1^{\AF}, U, \mu) = 0, \qquad \frac{\partial \mathcal{G}_{\AF}}{\partial m_1}(d_0^{\AF}, m_1^{\AF}, U, \mu) = 0.
\end{align}
The {\it free energy densities} of the P, F, and AF states for the 2D Hubbard model at zero temperature in the Hartree--Fock approximation are defined by
\begin{subequations}\label{freeenergies}
\begin{align}\label{freeenergiesP}
\mathcal{F}_{\mathrm{P}}(U, \mu) & = \min\{\mathcal{G}_{\mathrm{P}}(d_0^{\mathrm{P}}, U, \mu) \, |\, \text{$d_0^{\mathrm{P}} \in \R$ solves (\ref{Pmeanfield})}\},
	\\\label{freeenergiesF}
\mathcal{F}_{\F}(U, \mu) & = \min\{\mathcal{G}_{\F}(d_0^{\F}, m_0^{\F}, U, \mu) \, |\, \text{$(d_0^{\F}, m_0^{\F}) \in \R \times (0,+\infty)$ solves (\ref{Fmeanfield})}\},
	\\\label{freeenergiesAF}
\mathcal{F}_{\AF}(U, \mu) & = \min\{\mathcal{G}_{\AF}(d_0^{\AF}, m_1^{\AF}, U, \mu) \, |\, \text{$(d_0^{\AF}, m_1^{\AF}) \in \R \times (0,+\infty)$ solves (\ref{AFmeanfield})}\}.
\end{align}
\end{subequations}
We often refer to the free energy densities simply as the {\it free energies}.
In (\ref{freeenergies}) and in many other places, we have suppressed the $(U, \mu)$-dependence of $d_0^{\mathrm{P}}$, $(d_0^{\F},m_0^{\F})$, and $(d_0^{\AF},m_1^{\AF})$.
Observe that if $d_0^{\mathrm{P}}$ solves the P mean-field equation, then $(d_0^{\F}, m_0^{\F}) = (d_0^{\mathrm{P}}, 0)$ solves the F mean-field equations. However, the solution $(d_0^{\mathrm{P}}, 0)$ is excluded in our definition of $\mathcal{F}_{\F}$. In other words, to compute $\mathcal{F}_{\F}$, we only compare the values of the F Hartree--Fock function $\mathcal{G}_{\F}$ at the F mean-field solutions with $m_0^{\F} > 0$; a similar remark applies to $\mathcal{F}_{\AF}$.

It turns out that $\mathcal{G}_{\mathrm{P}}(d_0, U, \mu)$, $\mathcal{G}_{\F}(d_0, m_0, U, \mu)$, and $\mathcal{G}_{\AF}(d_0, m_1, U, \mu)$ are strictly concave functions of $d_0$ (see Lemma \ref{Plemma1}, (\ref{d2GFdd02}), and (\ref{d2GAFdd02})). In particular, any solution of \eqref{Pmeanfield}, \eqref{Fmeanfield}, or \eqref{AFmeanfield} corresponds to a maximum in $d_{0}$. The P mean-field equation always has a unique real solution $d_0^{\mathrm{P}}$, so $\mathcal{F}_{\mathrm{P}}(U, \mu)$ is well-defined for all $U> 0$ and $\mu \in \R$, and may be expressed more simply as
$$\mathcal{F}_{\mathrm{P}}(U, \mu) = \mathcal{G}_{\mathrm{P}}(d_0^{\mathrm{P}}, U, \mu),$$
where $d_0^{\mathrm{P}} \in \R$ is the unique solution of (\ref{Pmeanfield}), see Lemma \ref{Plemma1}.
On the other hand, the F and AF mean-field equations do not in general have unique solutions with $m_0^{\F} > 0$ and $m_{1}^{\AF} > 0$, respectively.
For example, as mentioned already in Section \ref{2Dintrosubsec}, in Sectors I, II, and III, the F mean-field equations have 1, 2, and 0 solutions whereas the AF mean-field equations have 1, 0, and 2 solutions, respectively. If there are no F or AF solutions for some given value of $(U, \mu)$, then the corresponding free energy is not defined at that point.

Within our approximation (i.e., within restricted Hartree--Fock theory restricted to P, F, and AF states), the {\it free energy}  $\mathcal{F}(U, \mu)$ of the system is given by
$$\mathcal{F}(U, \mu) = \min\Big\{\mathcal{F}_{\mathrm{P}}(U, \mu), \mathcal{F}_{\F}(U, \mu), \mathcal{F}_{\AF}(U, \mu)\Big\};$$
if $\mathcal{F}_{\AF}(U, \mu)$ is not defined, this should be interpreted as $\mathcal{F}(U, \mu) = \min\{\mathcal{F}_{\mathrm{P}}(U, \mu), \mathcal{F}_{\F}(U, \mu)\}$, etc.
We say that the system is in the P phase at $(U, \mu)$ if the value of $\mathcal{F}_{\mathrm{P}}$ is smaller than the values of $\mathcal{F}_{\F}$ and $\mathcal{F}_{\AF}$ at $(U, \mu)$, that it is in the F phase if $\mathcal{F}_{\F}$ is the smallest, and that it is in the AF phase if $\mathcal{F}_{\AF}$ is the smallest. 
Observe that
$$\mathcal{G}_{\mathrm{P}}(d_0, U, \mu) = \mathcal{G}_{\AF}(d_0, 0, U, \mu)
= \mathcal{G}_{\F}(d_0, 0, U, \mu), $$
which means that the P phase can be viewed as the limiting case $m_0 = 0$ of the F phase or as the limiting case $m_1 = 0$ of the AF phase.

\subsection{Physical interpretation}
The physical interpretation of the P, F, and AF states is as follows, see also Appendix \ref{modelapp}.
Suppose $d_0^{\mathrm{P}}$, $(d_0^{\F}, m_{0}^{\F})$, and $(d_0^{\AF}, m_1^{\AF})$ are points at which the three right-hand sides of (\ref{freeenergies}) are minimized. If the system is in the P phase for some given value of $(U, \mu)$, then the expectation value of the electron density is given by $d_0^{\mathrm{P}} + 1$ at every lattice site $\mathbf{x} \in \Z^2$, whereas the expectation value of the spin operator vanishes at every site (no magnetic order). 
If the system is in the F phase, then the expectation value of the electron density is $d_0^{\F} + 1$ and the expectation value of the spin operator is $m_0^{\F} \vec{e}$ at every site for some unit vector $\vec{e} \in \R^3$  (ferromagnetic order). If the system is in the AF phase, then the expectation value of the electron density is $d_0^{\AF} + 1$ and the expectation value of the spin operator is $(-1)^{x_1+x_2}m_1^{\AF} \vec{e}$ at every site $\mathbf{x} = (x_1, x_2) \in \Z^2$ for some unit vector $\vec{e} \in \R^3$ (antiferromagnetic order).
By rotation invariance, we may assume that rotation invariance is broken along the $z$-axis in the magnetic states, so that $\vec{e} = (0,0,1)$, but this is of no relevance for us here.

We refer to the quantities $d_0^{\mathrm{P}}$, $d_0^{\F}$, and $d_0^{\AF}$ as the {\it dopings} of the P, F, and AF states, because they measure the electron density normalized such that doping $0$ corresponds to the system being half-filled (i.e., on average one electron per site), 
doping $+1$ corresponds to the system being completely filled (two electrons per site), and doping $-1$ corresponds to the system being empty (no electrons). We will see in Sections \ref{Psec}--\ref{AFsec} that all solutions of the P, F, and AF mean-field equations \eqref{Pmeanfield}--\eqref{AFmeanfield} have doping values in the interval $[-1,1]$, and magnetizations $m_0^{\F}, m_1^{\AF}$ in the interval $[0,1]$, as expected from the physics interpretation.

\subsection{The three asymptotic sectors}
Let us define the three asymptotic sectors I, II, and III. These sectors are defined as subsets of the $(U, \mu)$-plane and are shown as such in Figure \ref{sectorsfig} (left). However, by computing the dopings corresponding to the boundaries of the sectors, we can understand what regions they correspond to in the phase diagram (i.e., in the $(\nu, U)$-plane); these regions are displayed in Figure \ref{sectorsfig} (right). It is important to note that the curve $\nu_{\I}^{\F}$ shown in Figure \ref{phasediagramfig} (right) lies in the region corresponding to Sector I, the curves $\nu_{\II}^{\F}$ and $\nu_{\II}^{\mathrm{P}}$ lie in the region corresponding to Sector II, and the curve $\nu_{\III}^{\mathrm{P}}$ lies in the region corresponding to Sector III.

For $\delta>0$ and $U_0\geq 2\delta$, we define the asymptotic sector $\I = \I_{U_0, \delta}$ by
\begin{align}\label{sectorIdef}
\I_{U_0, \delta}= \Big\{(U, \mu) \in \R^2 \, \Big| \, U \geq U_0 \;\text{and}\; \mu \in [0, \tfrac{U}{2} - \delta] \Big\}.
\end{align}

For $U_0 \in (0, 4\pi)$ and $M > 0$, we define the asymptotic sector $\II = \II_{U_0, M}$ by
\begin{align}\label{sectorIIdef}
\II_{U_0, M} = \big\{(U, \mu) \in \R^2 \, \big| \,  U \in [U_0, 4\pi)  \;\text{and}\;  |\mu - \mu_{\II,0}(U)| \leq M (4\pi - U)^3 \big\} %.
\end{align}
where
\begin{align}\label{muII0def}
\mu_{\II,0}(U) := 2 \pi +4 - \frac{8+\pi}{2 \pi}(4 \pi -U) + \frac{7}{12 \pi^2} (4 \pi - U)^2.
\end{align}

For $0<\delta<8$ and $U_0>0$, we define the asymptotic sector $\III = \III_{U_0, \delta}$ by
\begin{align}\label{sectorIIIdef}
\III_{U_0, \delta}= \Big\{(U, \mu) \in \R^2 \, \Big| \, U \in (0, U_0] \;\text{and}\; \mu \in \Big[(16 + \delta)e^{-\frac{2\pi}{\sqrt{U}}} , (32 - \delta)e^{-\frac{2\pi}{\sqrt{U}}}\Big] \Big\}.
\end{align}

\subsection{First theorem: Sector I}
Our first theorem shows that in Sector I, there is a curve $\mu_{\I}(U)$ such that the system is in the AF phase for $\mu < \mu_{\I}(U)$, whereas it is in the F phase for $\mu > \mu_{\I}(U)$. It also gives expressions for the associated dopings of these phases: the theorem shows that whereas the AF state has zero doping  (i.e., $d_0^{\AF} = 0$), the F state has strictly positive doping (i.e., $d_0^{\F} > 0$) at $\mu = \mu_{\I}(U)$. In particular, this establishes the existence of a mixed phase between the AF and F phases in the phase diagram, see Figure~\ref{phasediagramfig}.

\begin{theorem}[The AF-Mixed-F interface]\label{sectorIth}
Let $\delta >0$. There exist a $U_0>0$ and a smooth real-valued function $\mu_{\I}(U)$ of $U\in[U_0, +\infty)$ such that if $(U, \mu) \in \I_{U_0, \delta}$, then the following hold:
\begin{enumerate}[$(i)$]
\item If $\mu < \mu_{\I}(U)$, then the system exhibits AF order, i.e.,
$$\mathcal{F}_{\AF}(U, \mu) < \mathcal{F}_{\F}(U, \mu) < \mathcal{F}_{\mathrm{P}}(U, \mu).$$ 

\item If $\mu = \mu_{\I}(U)$, then the AF and F states have the same free energy, i.e.,
$$\mathcal{F}_{\AF}(U, \mu) = \mathcal{F}_{\F}(U, \mu) < \mathcal{F}_{\mathrm{P}}(U, \mu).$$

\item If $\mu > \mu_{\I}(U)$, then the system exhibits F order, i.e.,
$$\mathcal{F}_{\F}(U, \mu) < \mathcal{F}_{\AF}(U, \mu) < \mathcal{F}_{\mathrm{P}}(U, \mu).$$

\end{enumerate}

Moreover, it holds that:

\begin{itemize}
\item As $U \to +\infty$,
\begin{align}\label{muIexpansion}
\mu_{\I}(U) = &\; \mu_{\I}^{\app}(U) + O\bigg(\frac{1}{U^{9/2}}\bigg),
\end{align} 
where $\mu_{\I}^{\app}(U)$ is the function defined in (\ref{muIappdef}).

\item The doping of the AF state at any $(U, \mu) \in \I_{U_0, \delta}$ satisfies
$$d_0^{\AF} = -\frac{\partial \mathcal{F}_{\AF}}{\partial \mu}(U, \mu)  = 0;$$
in other words, the AF free energy is independent of $\mu$, i.e., $\mathcal{F}_{\AF}(U, \mu) = \mathcal{F}_{\AF}(U)$. 
\item The doping of the F state at $\mu = \mu_{\I}(U)$ is strictly positive,
$$d_0^{\F} = -\frac{\partial \mathcal{F}_{\F}}{\partial \mu}(U, \mu_{\I}(U)) %= 1 - m_0^{\F} 
> 0;$$
more precisely, the doping of the F state at $\mu = \mu_{\I}(U)$ satisfies
\begin{align}\label{d0FatmuIexpansion}
d_0^{\F} = \nu_{\I}^{\F}(U) + O\big(U^{-7/2}\big) \qquad \text{as $U \to +\infty$},
\end{align}
where $\nu_{\I}^{\F}(U)$ is the function defined in (\ref{nuIFdef}).

\item As $U \to +\infty$, the free energies satisfy
\begin{subequations}\label{freeenergiesatmuI}
\begin{align}\label{freeenergiesatmuIa}
& \mathcal{F}_{\AF}(U) = \mathcal{F}_{\F}(U, \mu_{\I}(U)) = -\frac{U}{4} - \frac{4}{U} + \frac{20}{U^3} + O\bigg(\frac{1}{U^5}\bigg),
	\\ \label{freeenergiesatmuIb}
& \mathcal{F}_{\mathrm{P}}(U, \mu_{\I}(U)) = 
-\frac{U}{4}
   +4
- \frac{4 \sqrt{2 \pi }}{\sqrt{U}}
+\frac{2 (\pi -96)}{3 U}
+ \sqrt{\frac{\pi }{2}}  \frac{(4608+5 \pi )}{36 U^{3/2}}
+ O\bigg(\frac{1}{U^2}\bigg).
%+ \frac{\pi  (57600+11 \pi )}{270 U^2}
\end{align}
\end{subequations}
\end{itemize}
\end{theorem}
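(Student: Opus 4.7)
The plan is to follow the three-step method outlined in the introduction: classify the P, F, and AF mean-field solutions throughout $\I_{U_0,\delta}$, expand the three resulting free energies, and then locate the crossing curve $\mu_{\I}(U)$ defined by $\mathcal{F}_{\AF}=\mathcal{F}_{\F}$. I start with the AF state. The decisive observation is that in Sector~I the AF gap $\tfrac{U}{2}m_1$ exceeds $\mu$, so both AF bands are separated from the Fermi level---the upper band lies entirely above $\mu$ and the lower entirely below. Under this gap condition the theta functions in $\mathcal{G}_{\AF}$ simplify, $\partial_{d_0}\mathcal{G}_{\AF}=0$ reduces algebraically to $d_0^{\AF}=0$, and $\partial_{m_1}\mathcal{G}_{\AF}=0$ becomes the familiar gap equation $1=\tfrac{U}{2}\int N_0(\epsilon)(\tfrac{U^2 m_1^2}{4}+\epsilon^2)^{-1/2}\,d\epsilon$, which by the implicit function theorem has a unique solution $m_1^{\AF}(U)=1-8/U^2+O(U^{-4})$. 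For the F state, the same large-$U$ regime forces one spin band ($\uparrow$, say) to be fully filled and the other only partially filled near its lower band edge, so the two F equations collapse to the explicit pair $d_0^{\F}=F(\mu-U/2)$, $m_0^{\F}=1-d_0^{\F}$, where $F$ denotes the cumulative distribution function of $N_0$. The consistency condition (top of the $\uparrow$ band below $\mu$) is automatic since $d_0^{\F}=O(U^{-1/2})$, and the P equation $d_0=2F(\mu-Ud_0/2)-1$ is treated analogously.

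Substituting these solutions into the Hartree--Fock functions yields the compact closed forms
\begin{align*}
\mathcal{F}_{\AF}(U) &= \tfrac{Um_1^2}{4}-\int N_0(\epsilon)\sqrt{\tfrac{U^2 m_1^2}{4}+\epsilon^2}\,d\epsilon,\\
\mathcal{F}_{\F}(U,\mu) &= -\tfrac{U}{4}+\int_{-4}^{\mu-U/2}N_0(\epsilon)(\epsilon-(\mu-U/2))\,d\epsilon,
\end{align*}
together with an analogous expression for $\mathcal{F}_{\mathrm{P}}$. The first is manifestly $\mu$-independent, proving $d_0^{\AF}=-\partial_\mu\mathcal{F}_{\AF}=0$; the second satisfies $\partial_\mu\mathcal{F}_{\F}=-F(\mu-U/2)=-d_0^{\F}\leq 0$, with strict inequality once $\mu>U/2-4$. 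Asymptotic expansions in $1/\sqrt{U}$ of all three quantities are then obtained by analysing these integrals using the behaviour of $N_0$ near its band edges $\pm 4$ (where $N_0\to 1/(4\pi)$ with a jump), as developed in Appendix~\ref{N0app}; this produces \eqref{freeenergiesatmuIa}--\eqref{freeenergiesatmuIb} and the analogous expansion of $\mathcal{F}_{\F}(U,\mu)$.

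The curve $\mu_{\I}(U)$ is then defined implicitly by $\mathcal{F}_{\AF}(U)=\mathcal{F}_{\F}(U,\mu)$. Since $\partial_\mu\mathcal{F}_{\F}<0$ at the crossing (because $d_0^{\F}>0$ there), the implicit function theorem delivers a unique smooth $\mu_{\I}(U)$ for $U$ large, and inverting the expansion of $\mathcal{F}_{\F}(U,\mu)-\mathcal{F}_{\AF}(U)$ order by order gives \eqref{muIexpansion}. Strict monotonicity of $\mathcal{F}_{\F}(U,\cdot)$ then yields the inequalities between $\mathcal{F}_{\AF}$ and $\mathcal{F}_{\F}$ in parts~(i)--(iii). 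To see that $\mathcal{F}_{\mathrm{P}}$ dominates both throughout $\I_{U_0,\delta}$, one compares leading orders: at $\mu_{\I}(U)$ we have $\mathcal{F}_{\mathrm{P}}=-U/4+4+O(U^{-1/2})$, beating $\mathcal{F}_{\AF}=\mathcal{F}_{\F}=-U/4+O(1/U)$ by roughly $4$; the gap $\mathcal{F}_{\mathrm{P}}-\mathcal{F}_{\AF}$ grows as $\mu$ decreases (since $\partial_\mu\mathcal{F}_{\mathrm{P}}=-d_0^{\mathrm{P}}<0$ while $\mathcal{F}_{\AF}$ is constant) and remains bounded below by $\delta-O(1/U)$ as $\mu\uparrow U/2-\delta$. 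Finally, \eqref{d0FatmuIexpansion} follows immediately from $d_0^{\F}=F(\mu_{\I}(U)-U/2)$ together with the expansion of $F$ near the band edge $-4$.

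The hard part is producing these expansions to the precise orders claimed in \eqref{muIexpansion}, \eqref{d0FatmuIexpansion}, and \eqref{freeenergiesatmuI}. Because $N_0$ has only limited smoothness---a logarithmic singularity at $0$ and jump discontinuities at $\pm 4$---each integral $\int N_0(\epsilon)(\cdots)d\epsilon$ above must be expanded by splitting the domain into a smooth bulk and endpoint contributions and then combining these with careful Taylor expansions of $N_0$ at $\pm 4$ (the purpose of Appendix~\ref{N0app}). Once those expansions are in hand, inverting the gap equation and the equation $\mathcal{F}_{\AF}=\mathcal{F}_{\F}$ to the advertised order is a mechanical but laborious recursion, and the main technical hurdle is the bookkeeping of many contributing terms at each order rather than any conceptual difficulty.
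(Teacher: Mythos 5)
Your proposal correctly identifies the overall structure of the proof (classify the P, F, AF mean-field solutions in Sector~I, expand the resulting free energies, locate the crossing curve $\mathcal{F}_{\AF}=\mathcal{F}_{\F}$) and your explicit formulas for $d_0^{\F}$, $m_0^{\F}$, $\mathcal{F}_{\F}$ and the gap equation for $m_1^{\AF}$ agree with the paper's after some algebra. However, there is a genuine gap: you treat the \emph{uniqueness} of the F and AF mean-field solutions in Sector~I as self-evident, when in fact this is where most of the work in the paper's proof lies. You assert that in Sector~I ``the AF gap $\tfrac{U}{2}m_1$ exceeds $\mu$'' and deduce $d_0^{\AF}=0$ from this, but the gap condition is a \emph{property of a particular solution}, not an a priori fact about all AF solutions. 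One must still rule out AF solutions in which the chemical potential cuts the upper effective band (i.e.\ $d_0^{\AF}\in(0,1)$ with some $b_+>0$). The paper does this via a chain of non-trivial estimates (Lemmas~\ref{limsupdeltaUlemma}--\ref{AFfinallemmaSectorI}), using the lower bound $N_0(\epsilon)\geq 1/(4\pi)$ and the inequality $\sqrt{\Delta^2+16}\geq \tfrac{U}{2}(1-d_0^{\AF})$ to force a contradiction with $\mu\leq \tfrac{U}{2}-\delta$. Likewise, your statement that large $U$ ``forces one spin band to be fully filled'' for the F state is exactly what needs to be proven; the paper's Lemma~\ref{FsolutionIlemma} rules out the other case (both bands partially filled) by showing the magnetization equation then reads $m_0^{\F}\geq \tfrac{U}{4\pi}m_0^{\F}$, impossible for $U>4\pi$ unless $m_0^{\F}=0$. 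Because $\mathcal{F}_{\F}$ and $\mathcal{F}_{\AF}$ are defined in \eqref{freeenergiesF}--\eqref{freeenergiesAF} as \emph{minima over all} mean-field solutions with positive magnetization, and because multiple solutions do occur in Sectors~II and~III, uniqueness in Sector~I is genuinely nontrivial and your closed-form expressions for the free energies are unjustified without it.

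Two smaller points. First, you invoke ``strict monotonicity of $\mathcal{F}_{\F}(U,\cdot)$,'' but $\mathcal{F}_{\F}(U,\mu)=-U/4$ is in fact \emph{constant} on $\mu\in[0,\tfrac{U}{2}-4]$ and only strictly decreasing on $[\tfrac{U}{2}-4,\tfrac{U}{2}-\delta]$; this must be handled as two cases (as the paper does). Second, your comparison $\mathcal{F}_{\mathrm{P}}>\max\{\mathcal{F}_{\AF},\mathcal{F}_{\F}\}$ tracks the gap $\mathcal{F}_{\mathrm{P}}-\mathcal{F}_{\AF}$ across the sector; this can be made to work, but the paper's route is simpler and sharper: $\mathcal{F}_{\mathrm{P}}>-U/4$ everywhere in Sector~I (Lemma~\ref{FPboundIlemma}), while $\mathcal{F}_{\F}\leq \mathcal{F}_{\F}(U,0)=-U/4$ and $\mathcal{F}_{\AF}<-U/4$, so the inequality follows at once without tracking $\mu$-dependence.
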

\begin{proof}
See Section \ref{SectorIproofsec}.
\end{proof}

\begin{remark*}[Nagaoka's theorem]\upshape
Nagaoka's theorem \cite{N1966} states that the ground state of the Hubbard model with infinite repulsion $U=\infty$ doped one hole away from half-filling is ferromagnetic with maximum total spin.
Theorem \ref{sectorIth} implies the following Hartree--Fock version of Nagaoka's result: For any fixed arbitrarily small $\nu \neq 0$, there is a $U_0 >0$ such that in Hartree--Fock theory restricted to P, F, and AF states, the ground state with doping $\nu$ is ferromagnetic whenever $U \geq U_0$.
\end{remark*}

\subsection{Second theorem: Sector II}
Our second theorem shows that in Sector II, there is a curve $\mu_{\II}(U)$ such that the system is in the F phase for $\mu < \mu_{\II}(U)$, whereas it is in the P phase for $\mu > \mu_{\II}(U)$. It also proves the existence of a mixed phase between the F and P phases in the phase diagram, see Figure \ref{phasediagramfig}.

\begin{theorem}[The F-Mixed-P interface]\label{sectorIIth}
Let $M >0$. 
There exist a $U_0 < 4\pi$ and a smooth real-valued function $\mu_{\II}(U)$ of $U\in[U_0, 4\pi)$ such that if
$(U, \mu) \in \II_{U_0, M}$, then the following hold:
\begin{enumerate}[$(i)$]
\item The AF mean-field equations (\ref{AFmeanfield}) have no solution in $\R \times (0,+\infty)$, so $\mathcal{F}_{\AF}(U, \mu)$ is not defined. 

\item If $\mu < \mu_{\II}(U)$, then the system exhibits F order, i.e.,
$$\mathcal{F}_{\F}(U, \mu) < \mathcal{F}_{\mathrm{P}}(U, \mu).$$

\item If $\mu = \mu_{\II}(U)$, then the F and P states have the same free energy, i.e.,
$$\mathcal{F}_{\F}(U, \mu) = \mathcal{F}_{\mathrm{P}}(U, \mu).$$

\item If $\mu > \mu_{\II}(U)$, then the system exhibits P order, i.e.,
$$\mathcal{F}_{\mathrm{P}}(U, \mu) < \mathcal{F}_{\F}(U, \mu).$$
\end{enumerate}

Moreover, it holds that: 

\begin{itemize}
\item As $U \uparrow 4\pi$,
\begin{align}\label{muIIexpansion}
\mu_{\II}(U) = &\; \mu_{\II}^{\app}(U)  + O\big((4\pi - U)^7\big),
\end{align} 
where $\mu_{\II}^{\app}(U)$ is the function defined in (\ref{muIIappdef}).

\item As $U \uparrow 4\pi$, the doping $d_0^{\F}(U, \mu)$ of the F state at  $\mu = \mu_{\II}(U)$ satisfies 
\begin{align}\label{d0FatmuIIexpansion}
d_0^{\F} = \nu_{\II}^{\F}(U) + O\big((4\pi - U)^7\big),
\end{align}
while the doping $d_0^{\mathrm{P}}(U, \mu)$ of the P state at $\mu = \mu_{\II}(U)$ satisfies 
\begin{align}\label{d0PatmuIIexpansion}
d_0^{\mathrm{P}} = \nu_{\II}^{\mathrm{P}}(U) + O\big((4\pi - U)^7\big),
\end{align}
where $\nu_{\II}^{\F}(U)$ and $\nu_{\II}^{\mathrm{P}}(U)$ are the functions in (\ref{nuIIFdef}) and (\ref{nuIIPdef}).

\item As $U \uparrow 4\pi$, the P and F free energies at $\mu = \mu_{\II}(U)$ satisfy
\begin{align}\nonumber
 \mathcal{F}_{\F}(&U, \mu_{\II}(U)) = \mathcal{F}_{\mathrm{P}}(U, \mu_{\II}(U))
=  -4 - \pi + \bigg(\frac{1}{4} + \frac{4}{\pi}\bigg) (4\pi - U)
 -\frac{24+7 \pi }{12 \pi^3}(4\pi - U)^2 
 	\\ \nonumber
& + \frac{72-17 \pi }{288 \pi^4} (4\pi - U)^3 
 + \frac{8040-1861 \pi }{138240 \pi^5} (4\pi - U)^4
 + \frac{50952-15181 \pi }{3317760 \pi^6} (4\pi - U)^5
 	\\ \label{freeenergiesatmuII}
& + \frac{3734192-1409727 \pi }{743178240 \pi^7} (4\pi - U)^6
+ O\big((4\pi - U)^7\big).
\end{align}
\end{itemize}
\end{theorem}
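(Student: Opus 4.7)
The plan is to implement the three-step method of the introduction, specialized to the regime $U \uparrow 4\pi$, $\mu \to \mu_{\II,0}(U)$. The key physical fact is that in Sector II the Fermi level of each spin channel lies close to the upper band edge $\epsilon = 4$, where the density of states $N_{0}$ is real-analytic with limiting value $1/(4\pi)$; correspondingly, $U = 4\pi$ is exactly the critical coupling at which the P branch loses stability against an F perturbation at saturation $d_{0}=1$. All asymptotic expansions will be power series in the two small parameters $4\pi - U$ and $\tau := (\mu - \mu_{\II,0}(U))/(4\pi - U)^{3}$, with $\tau \in [-M,M]$ throughout the sector.

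\emph{Absence of AF solutions.} For (i), I would cast the AF gap equation $\partial \mathcal{G}_{\AF}/\partial m_{1} = 0$ in the equivalent form
\begin{equation*}
\frac{2}{U} \;=\; \int_{\{\epsilon\in\R\,:\,E>|A|\}}\frac{N_{0}(\epsilon)}{E}\,d\epsilon,\qquad A := \tfrac{U}{2}d_{0}-\mu,\qquad E := \sqrt{\tfrac{U^{2}m_{1}^{2}}{4}+\epsilon^{2}},
\end{equation*}
paired with the $d_{0}$-equation $\partial \mathcal{G}_{\AF}/\partial d_{0} = 0$. A preliminary analysis shows that $A$ is pinned to a small neighbourhood of $-4$ for any putative AF solution in Sector II. Using the detailed properties of $N_{0}$ from Appendix \ref{N0app} (smoothness at $\pm 4$ with limiting value $1/(4\pi)$ and the logarithmic singularity at $0$), together with $U<4\pi$, one derives a sharp upper bound on the right-hand side that is strictly less than $2/U$ for every $(U,\mu)\in\II_{U_{0},M}$ and every $m_{1}\in(0,1]$, ruling out any solution with $m_{1}>0$.

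\emph{Expansions of F and P solutions.} For the F analysis I would distinguish two regimes according to whether $a_{-}:=\mu-\tfrac{U}{2}(d_{0}-m_{0})$ lies above or below the band edge $\epsilon=4$. In each regime, substituting a suitable power-series ansatz for $(d_{0}^{\F}, m_{0}^{\F})$ into \eqref{Fmeanfield} and Taylor-expanding $N_{0}$ about $\epsilon=4$ yields a triangular recursion whose unique solution produces one F mean-field branch; the two regimes therefore give the two F branches mentioned in the introduction, and the energy-minimising one is selected by direct comparison of $\mathcal{G}_{\F}$. An analogous single-variable expansion of $d_{0}^{\mathrm{P}}$ solves \eqref{Pmeanfield}. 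Substituting the solutions into $\mathcal{G}_{\F}$ and $\mathcal{G}_{\mathrm{P}}$ then produces power-series expansions of $\mathcal{F}_{\F}(U,\mu)$ and $\mathcal{F}_{\mathrm{P}}(U,\mu)$ in $(4\pi-U)$ to any desired order, with coefficients polynomial in $\tau$.

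\emph{Phase boundary, dopings, and main obstacle.} The equation $\mathcal{F}_{\F}(U,\mu) = \mathcal{F}_{\mathrm{P}}(U,\mu)$ becomes, after the substitutions, a scalar equation in $\tau$ whose lowest non-trivial term is linear in $\tau$ with non-vanishing coefficient; the implicit function theorem then produces a unique smooth solution $\mu_{\II}(U)$, and solving order by order gives \eqref{muIIexpansion} together with \eqref{muIIappdef}. The sign of $\mathcal{F}_{\F}-\mathcal{F}_{\mathrm{P}}$ on the two sides of $\mu_{\II}(U)$ gives (ii)--(iv); the doping formulas \eqref{d0FatmuIIexpansion}--\eqref{d0PatmuIIexpansion} follow from the Hellmann--Feynman identity $d_{0} = -\partial\mathcal{F}/\partial \mu$ applied at $\mu=\mu_{\II}(U)$, and the common value \eqref{freeenergiesatmuII} is read off directly. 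The main technical obstacle is the \emph{uniformity} of all these expansions over the whole sector $\II_{U_{0},M}$: one must verify that the recursion for the F and P coefficients is well-conditioned uniformly in $\tau\in[-M,M]$, that the two F branches stay well separated so that the energy-minimising one is consistently identified throughout, and that the coefficient of $\tau$ in the scalar phase-boundary equation is bounded away from zero as $U\uparrow 4\pi$, so that the implicit function theorem delivers the uniform error estimate $O((4\pi-U)^{7})$.
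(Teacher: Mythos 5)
Your outline of the F and P analysis, the reduction to a scalar equation in $\tau = \mu_E$, and the implicit function theorem step matches the structure of the paper's proof; the case split on whether the lower Fermi level $a_{-}$ lies above or below the band edge $\epsilon = 4$ is exactly the paper's Case~1 / Case~2 dichotomy in Lemma~\ref{FsolutionIIlemma}, and the energy comparison picks the correct branch. One small omission: to obtain a \emph{unique} boundary across the whole sector (rather than just a local solution from IFT) you need that $\mu\mapsto\mathcal{F}_{\mathrm{P}}-\mathcal{F}_{\F}$ is strictly monotone in $\II_{U_0,M}$, which the paper gets from $d_{0,2}^{\F} < d_{0}^{\mathrm{P}}$ together with $\partial_\mu\mathcal{F} = -d_0$; make that step explicit.

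The genuine gap is in your argument for (i). Your claim that ``a preliminary analysis shows that $A=\tfrac{U}{2}d_0^{\AF}-\mu$ is pinned to a small neighbourhood of $-4$ for any putative AF solution in Sector II'' is not something the mean-field equations give you, and it is simply false in the case $d_0^{\AF}=0$: there $A = -\mu \approx -(2\pi+4)$, far from $-4$. For $d_0^{\AF}\in(0,1)$ one only knows $A = -\sqrt{\Delta^2+b_+^2}$ with $b_+\in(0,4)$ and $\Delta>0$, which does not pin $A$ near $-4$ either. Without that pinning, the proposed ``sharp upper bound strictly less than $2/U$'' has no visible mechanism to work uniformly over $(U,\mu)\in\II_{U_0,M}$ and all admissible $m_1$. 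The paper instead proves the global statement (Lemma~\ref{AFnosolutionlemma}) that an AF solution with $m_1>0$ \emph{forces} $\mu < U/2$: for $d_0^{\AF}=0$ one has $\Delta\geq\mu\geq U/2$ and then the gap equation $\frac{2}{U}=\int_\R N_0(\epsilon)(\Delta^2+\epsilon^2)^{-1/2}d\epsilon < 2/U$ gives an immediate contradiction, while for $d_0^{\AF}\in(0,1)$ one compares $b_+$ with the auxiliary point $c_+>b_+$ solving $\frac{1}{U}=\int_{c_+}^4 N_0(\epsilon)\,d\epsilon/\sqrt{\Delta^2+c_+^2}$ and chains
\begin{equation*}
\sqrt{\Delta^2+b_+^2} < \sqrt{\Delta^2+c_+^2} = U\int_{c_+}^4 N_0\,d\epsilon < U\int_{b_+}^4 N_0\,d\epsilon = \tfrac{U}{2}(1-d_0^{\AF}),
\end{equation*}
which with $\mu = \tfrac{U}{2}d_0^{\AF} + \sqrt{\Delta^2+b_+^2}$ yields $\mu < U/2$. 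Since $\mu\geq U/2$ throughout Sector~II (after possibly increasing $U_0$), this is a contradiction. Note this argument is non-perturbative and valid for all $\mu\geq U/2$; you should replace your ``pinning'' heuristic by an inequality of this type rather than attempt a perturbative bound on the gap integral.
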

\begin{proof}
See Section \ref{SectorIIproofsec}.
\end{proof}

\subsection{Third theorem: Sector III}
Our last theorem shows that in Sector III, there is a curve $\mu_{\III}(U)$ such that the system is in the AF phase for $\mu < \mu_{\III}(U)$ whereas it is in the P phase for $\mu > \mu_{\III}(U)$. It also proves that the AF state has zero doping (i.e., $d_0^{\AF} = 0$), whereas the P state has strictly positive doping (i.e., $d_0^{\mathrm{P}} > 0$) at $\mu = \mu_{\III}(U)$, thereby establishing the existence of a mixed phase between the AF and P phases in the phase diagram, see Figure \ref{phasediagramfig}.

\begin{theorem}[The AF-Mixed-P interface]\label{sectorIIIth}
Let $\delta >0$. There exist a $U_0 >0$ and a smooth real-valued function $\mu_{\III}(U)$ of $U \in (0,U_0]$ such that if $(U, \mu) \in \III_{U_0, \delta}$, then the following hold:
\begin{enumerate}[$(i)$]
\item The F mean-field equations (\ref{Fmeanfield}) have no solution in $\R \times (0,+\infty)$, so $\mathcal{F}_{\F}(U, \mu)$ is not defined. 

\item If $\mu < \mu_{\III}(U)$, then the system exhibits AF order, i.e.,
$$\mathcal{F}_{\AF}(U, \mu) < \mathcal{F}_{\mathrm{P}}(U, \mu).$$ 

\item If $\mu = \mu_{\III}(U)$, then the AF and P states have the same free energy, i.e.,
$$\mathcal{F}_{\AF}(U, \mu) = \mathcal{F}_{\mathrm{P}}(U, \mu).$$

\item If $\mu > \mu_{\III}(U)$, then the system exhibits P order, i.e.,
$$\mathcal{F}_{\mathrm{P}}(U, \mu) < \mathcal{F}_{\AF}(U, \mu).$$
\end{enumerate}

Moreover, it holds that:

\begin{itemize}
\item As $U \downarrow 0$,
\begin{align}\label{muIIIexpansion}
\mu_{\III}(U) = &\; \mu_{\III}^{\app}(U) + O\big(U^{5/2} e^{-\frac{2\pi}{\sqrt{U}}}\big),
\end{align} 
where $\mu_{\III}^{\app}(U)$ is the function defined in (\ref{muIIIappdef}).

\item The doping of the AF state at any $(U, \mu) \in \III_{U_0, \delta}$ satisfies
$$d_0^{\AF} = -\frac{\partial \mathcal{F}_{\AF}}{\partial \mu}(U, \mu)  = 0;$$
in other words, the AF free energy is independent of $\mu$, i.e., $\mathcal{F}_{\AF}(U, \mu) = \mathcal{F}_{\AF}(U)$.

\item As $U \downarrow 0$, the doping $d_0^{\mathrm{P}}(U, \mu)$ of the P state at $\mu = \mu_{\III}(U)$ satisfies 
\begin{align} \nonumber
d_0^{\mathrm{P}}(U, \mu_{\III}(U)) = &\; 
\bigg\{ \frac{32 \sqrt{2}}{\pi  \sqrt{U}}
-\frac{4 \sqrt{2} (2+\ln{2})}{\pi^2}
+\frac{ (22 - \ln{2}) (2+\ln{2})}{2 \sqrt{2} \pi^3}\sqrt{U}
	\\ \nonumber
& -\frac{ (10+\ln{2}) (44+\ln(2)(8+\ln{2}))}{16 \sqrt{2} \pi^4}U
	\\ \nonumber
& + \frac{43536+\ln (2)(16416+\ln (2)  (5208-\ln(2) (184+15 \ln{2})))}{1536 \sqrt{2} \pi^5}U^{3/2} 
   	\\\label{d0PatmuIIIexpansion}
& + O(U^{2})\bigg\} e^{-\frac{2\pi}{\sqrt{U}}}. %Error term is O(U) if (\ref{d0PexpansionSectorIII}) is used as is in the proof. A higher-order version of (\ref{d0PexpansionSectorIII}) is needed to get error term $O(U^2)$
\end{align}

\item As $U \downarrow 0$, the P and F free energies satisfy
\begin{align}\label{freeenergiesatmuIII}
 \mathcal{F}_{\AF}(U) = \mathcal{F}_{\mathrm{P}}(U, \mu_{\III}(U))
= -\frac{16}{\pi^2}
-\frac{512 e^{-\frac{4 \pi }{\sqrt{U}}}}{\pi \sqrt{U}}  -\frac{128 e^{-\frac{4 \pi }{\sqrt{U}}}}{\pi^2} 
+ O\bigg(\frac{e^{-\frac{8\pi}{\sqrt{U}}}}{U}\bigg).
\end{align}
\end{itemize}
\end{theorem}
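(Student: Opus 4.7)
The plan is to exploit the weak-coupling structure of Sector III, where $\mu$ and the AF gap both sit on the exponentially small scale $e^{-2\pi/\sqrt{U}}$ dictated by the Van Hove logarithmic singularity of $N_{0}$ at $\epsilon = 0$. First I analyze the AF branch. Substituting $d_0 = 0$ into (\ref{AFHartreeFockFunction}) and using $\int_{\R} N_0 \, d\epsilon = 1$ together with the observation that in Sector III the gap $Um_1/2$ exceeds $\mu \geq 0$ (so the first Heaviside equals $1$ and the second equals $0$ for all $\epsilon$), the $\mu$-linear terms telescope and one obtains
\[
\mathcal{G}_{\AF}(0, m_1, U, \mu) = \tfrac{U}{4}\, m_1^2 - \int_{\R} N_0(\epsilon) \sqrt{\tfrac{U^2 m_1^2}{4} + \epsilon^2}\, d\epsilon,
\]
which is manifestly independent of $\mu$. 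The symmetry $\epsilon \to -\epsilon$ of $N_0$ further gives $\partial_{d_0} \mathcal{G}_{\AF}|_{d_0 = 0} = 0$, so $(d_0, m_1) = (0, m_{1,0}^{\AF})$ solves the AF mean-field equations with $m_{1,0}^{\AF}(U)$ determined by the standard gap equation $\tfrac{2}{U} = \int_{\R} N_0(\epsilon) (U^2 m_1^2/4 + \epsilon^2)^{-1/2}\, d\epsilon$. Feeding the expansion of $N_0$ near $\epsilon = 0$ from Appendix \ref{N0app} into this equation and inverting yields the asymptotic expansion of $m_{1,0}^{\AF}(U)$ in powers of $\sqrt{U}$ times $e^{-2\pi/\sqrt{U}}$, and in turn the expansion (\ref{freeenergiesatmuIII}) of $\mathcal{F}_{\AF}(U)$; that $d_0^{\AF} = 0$ is the minimizing solution among those with $m_1 > 0$ follows by comparing free energies within the family of critical points.

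Next I treat the P branch and rule out F solutions. For $(U, \mu) \in \III_{U_0, \delta}$ the strict concavity of $\mathcal{G}_{\mathrm{P}}$ in $d_0$ (Lemma \ref{Plemma1}) gives a unique solution $d_0^{\mathrm{P}}(U, \mu)$ of (\ref{Pmeanfield}); rewriting the equation as $d_0^{\mathrm{P}} = -1 + 2 \int_{-\infty}^{\mu - Ud_0^{\mathrm{P}}/2} N_0(\epsilon)\, d\epsilon$ and iterating produces, via the implicit function theorem and the same local expansion of $N_0$, an asymptotic expansion of $d_0^{\mathrm{P}}(U, \mu)$ and hence of $\mathcal{F}_{\mathrm{P}}(U, \mu) = \mathcal{G}_{\mathrm{P}}(d_0^{\mathrm{P}}, U, \mu)$, uniformly for $\mu$ in the Sector III window. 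The absence of F mean-field solutions with $m_0^{\F} > 0$ will be obtained by inspecting the F mean-field equations (Section \ref{Fsec}): any such branch would require a Stoner-type quantity of the form $U$ times a local density-of-states integral to exceed $1$, but in Sector III this quantity is of order $U \log(1/\mu) \sim \sqrt{U}$ and hence strictly less than $1$ for $U \leq U_0$ with $U_0$ small enough.

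Finally, I locate the phase boundary by imposing $\mathcal{F}_{\AF}(U) = \mathcal{F}_{\mathrm{P}}(U, \mu)$. Since $-\partial_\mu \mathcal{F}_{\mathrm{P}} = d_0^{\mathrm{P}} > 0$, the right-hand side is strictly decreasing in $\mu$, and the implicit function theorem produces a smooth $\mu_{\III}(U)$. Substituting the expansions of both free energies and solving order by order in the small parameter $\sqrt{U}\, e^{-2\pi/\sqrt{U}}$ yields (\ref{muIIIexpansion}); evaluating $d_0^{\mathrm{P}}$ and $\mathcal{F}_{\mathrm{P}}$ at $\mu = \mu_{\III}(U)$ gives (\ref{d0PatmuIIIexpansion}) and (\ref{freeenergiesatmuIII}), while the qualitative comparisons (i)--(iv) follow from the monotonicity of $\mathcal{F}_{\mathrm{P}}$ in $\mu$ combined with the $\mu$-independence of $\mathcal{F}_{\AF}$. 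The main obstacle will be the precise asymptotic evaluation of the integrals $\int_{\R} N_0(\epsilon) (U^2 m_1^2/4 + \epsilon^2)^{\pm 1/2}\, d\epsilon$ to sufficiently high order: the Van Hove logarithm forces one to use an expansion of $N_0$ of the shape $\frac{1}{2\pi^2}\ln(16/|\epsilon|)$ plus polynomial corrections in $\epsilon^2$ containing further logarithms, and the delicate book-keeping of these nested logarithmic terms, combined with a tail estimate on $|\epsilon| \gtrsim 1$ producing the error $O(U^{5/2} e^{-2\pi/\sqrt{U}})$ in (\ref{muIIIexpansion}), is the computational heart of the argument.
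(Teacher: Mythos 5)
The student's outline is broadly aligned with the paper's structure in the P-branch analysis and in the final implicit-function-theorem step, but it contains two genuine gaps that would stop the argument.

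\textbf{First gap: absence of F solutions.} You propose to rule out F mean-field solutions by a Stoner-type estimate: $U$ times a local density-of-states quantity is $\sim U\ln(1/\mu)\sim\sqrt{U}<1$, so no F branch exists. This is insufficient. The F mean-field equations (\ref{Fmeanfield2}) force $m_0^{\F}=\int_{\mu-\frac{U(d_0^{\F}+m_0^{\F})}{2}}^{\mu-\frac{U(d_0^{\F}-m_0^{\F})}{2}}N_0(\epsilon)\,d\epsilon=U m_0^{\F} N_0(\xi)$ for some intermediate $\xi$, so a nontrivial solution requires $U N_0(\xi)=1$ for some $\xi$ in that window---but this can always be achieved, since $N_0$ has a logarithmic singularity at $0$ and is unbounded. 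The value $U N_0(\mu)\sim\sqrt{U}$ controls only the local behavior at the paramagnetic point $m_0=0$ (the true Stoner criterion) and does not exclude a first-order F branch that jumps away from $m_0=0$ passing close to the singularity. Precisely this difficulty is why Proposition \ref{FsolutionIIIprop} in the paper requires a genuinely global argument: away from a window of width $e^{-1/U}$ around $\mu$ one has $G'(x)>-1$ (Lemma \ref{Gprimelemma}), inside that window $G$ itself is exponentially small (Lemma \ref{Gsmallnearmulemma}), and patching these gives the two-sided inequality (\ref{Gbounds}) from which the nonexistence of $2$-cycles of $G$ follows. Your Stoner sketch skips the logarithmic blow-up of $G'$ near $\mu$ entirely and would not close.

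\textbf{Second gap: identifying all AF solutions.} You construct the AF solution with $d_0^{\AF}=0$ and then assert that it ``is the minimizing solution among those with $m_1>0$ \ldots by comparing free energies within the family of critical points,'' without saying what that family is. In Sector III the AF mean-field equations (\ref{AFmeanfieldeqs}) in fact have a \emph{second} solution with $d_0^{\AF}\in(0,1)$ (the paper parametrizes it via the root of the function $\Xi$ in (\ref{Xidef}); see Lemmas \ref{Xicorrespondencelemma}--\ref{Xionesolutionlemma}), and one has to compute its free energy (\ref{GAFsmallU2}) and verify that it exceeds that of the $d_0^{\AF}=0$ branch to obtain Lemma \ref{FAFSectorIIIlemma}. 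Without that classification, $\mathcal{F}_{\AF}$ as defined by the minimum in (\ref{freeenergiesAF}) is not pinned down, so the claim $\mathcal{F}_{\AF}(U,\mu)=\mathcal{F}_{\AF}(U)$ does not follow from examining only the $d_0^{\AF}=0$ branch. A secondary point: the error term $O(U^{5/2}e^{-2\pi/\sqrt{U}})$ in (\ref{muIIIexpansion}) does not come from a tail estimate over $|\epsilon|\gtrsim1$, but from the expansion of $N_0$ near the Van Hove singularity; tail contributions are exponentially small relative to the terms kept.
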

\begin{proof}
See Section \ref{SectorIIIproofsec}.
\end{proof}

\begin{remark}\label{expansionsremark}\upshape
All expansions in Theorems \ref{sectorIth}--\ref{sectorIIIth} can easily be extended to higher order. However, we emphasize that the expansions in Theorems \ref{sectorIth}--\ref{sectorIIIth} are asymptotic expansions, and we do not know if they have positive radii of convergence. Thus, although the agreement with the numerical phase diagram of Figure \ref{phasediagramfig} is excellent even for values of $(\nu, U)$ far from the four boundaries $U = +\infty$, $U=0$, and $\nu = \pm 1$, there is no guarantee that the inclusion of more terms in the expansions will improve the accuracy far from the boundaries (we hope that such information will be obtained in future work). 
In our definitions (\ref{nuIFdef})--(\ref{nuIIPdef}) of $\nu_{\I}^{\F}(U)$, $\nu_{\II}^{\F}(U)$, and $\nu_{\II}^{\mathrm{P}}(U)$, we have included enough terms in the expansions that the curves appear to have converged from the point of view of Figure \ref{phasediagramfig}.
In our definition (\ref{nuIIIPdef}) of $\nu_{\III}^{\mathrm{P}}(U)$, we have included terms as follows: numerically, one finds that the series in (\ref{d0PatmuIIIexpansion}) decays slowly for $U \approx 6$. To obtain a good approximation for the limiting curve, one option is therefore to compute a large number of higher-order terms in the series. However, the alternating nature of the series (\ref{d0PatmuIIIexpansion}) suggests that a good approximation to the limiting curve can be obtained with only a small number of terms by multiplying the last term in the truncated series by $1/2$. This is indeed what is found numerically, and we have therefore only included terms up to order $\sqrt{U}$ in our definition (\ref{nuIIIPdef}) of $\nu_{\III}^{\mathrm{P}}(U)$, but the term of order $\sqrt{U}$ has an extra factor of $1/2$ compared to (\ref{d0PatmuIIIexpansion}).
\end{remark}

\section{The paramagnetic Hartree--Fock function}\label{Psec}
It follows from the definition (\ref{PHartreeFockFunction}) of the P Hartree--Fock function $\mathcal{G}_{\mathrm{P}}(d_0, U, \mu)$ that
\begin{align}\label{dGPdd0}
\frac{\partial \mathcal{G}_{\mathrm{P}}}{\partial d_0} 
= &- \frac{U}{2}\bigg\{d_0 - 1
+ 2\int_{\R} N_0(\epsilon) \theta\bigg(\frac{U}{2} d_0 - \mu + \epsilon\bigg) d\epsilon\bigg\}
= -\frac{U}{2}\bigg(\frac{\partial \mathcal{G}_{\mathrm{P}}}{\partial \mu} + d_0\bigg).
\end{align}
The density of states $N_0(\epsilon)$ is a smooth function of $\epsilon \in \R \setminus \{0, \pm 4\}$ with a logarithmic singularity at $\epsilon = 0$ and with jump discontinuities at $\epsilon = \pm 4$, see Figure \ref{N0fig}. From (\ref{dGPdd0}), we therefore deduce the following lemma. 

\begin{lemma}\label{GPlemma}
For any fixed $U > 0$, $\mathcal{G}_{\mathrm{P}}(d_0, U, \mu)$ is a $C^1$-function of $(d_0, \mu) \in \R^2$, which is smooth for $(d_0, \mu) \in \R^2 \setminus S_{\mathrm{P}}(U)$, where
$$S_{\mathrm{P}}(U) := \Big\{(d_0, \mu) \in \R^2 \,\Big|\, \tfrac{U d_0}{2} - \mu \in \{0, \pm 4\} \Big\}.$$
For $(d_0, \mu) \in \R^2 \setminus S_{\mathrm{P}}(U)$, we have
\begin{align}\label{d2GPdd02}
%\frac{\partial^2 \mathcal{G}_{\mathrm{P}}}{\partial d_0^2} = & - \frac{U}{2}\bigg\{1 + U N_0\bigg(\frac{U}{2} d_0 - \mu\bigg)\bigg\},	\\\nonumber
\frac{\partial^2 \mathcal{G}_{\mathrm{P}}}{\partial \mu \partial d_0} 
= -\frac{U}{2} \frac{\partial^2 \mathcal{G}_{\mathrm{P}}}{\partial \mu^2} 
= - \frac{2}{U} \frac{\partial^2 \mathcal{G}_{\mathrm{P}}}{\partial d_0^2} - 1
= U N_0\bigg(\frac{U}{2} d_0 - \mu\bigg).
\end{align}
\end{lemma}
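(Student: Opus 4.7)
The plan is to reduce the two-variable regularity problem to a one-variable one. Set $a := \mu - \frac{U d_0}{2}$ and define
\[ f(a) := \int_{\R} N_0(\epsilon)(\epsilon - a)\,\theta(\epsilon - a)\,d\epsilon = \int_{a}^{4} N_0(\epsilon)(\epsilon - a)\,d\epsilon, \]
where the second form uses $\mathrm{supp}\,N_0 \subseteq [-4,4]$, a fact established in Appendix \ref{N0app}. Then \eqref{PHartreeFockFunction} reads
\[ \mathcal{G}_{\mathrm{P}}(d_0, U, \mu) = -\tfrac{U}{4}d_0^2 + \tfrac{U}{2}d_0 - \mu - 2 f(a). \]
Since $(d_0,\mu)\mapsto a$ is affine (hence $C^\infty$) and since $\tfrac{U d_0}{2} - \mu = -a$, we have $(d_0,\mu) \in S_{\mathrm{P}}(U)$ if and only if $a \in \{0,\pm 4\}$. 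Thus the chain rule will let me deduce both the global $C^1$ claim and the smoothness claim for $\mathcal{G}_{\mathrm{P}}$ from the corresponding statements for the scalar function $f$.

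The next step is to establish $f \in C^1(\R)$ and $f \in C^{\infty}(\R \setminus \{0, \pm 4\})$. The inputs I would quote from Appendix \ref{N0app} are: $N_0 \ge 0$ is integrable on $\R$ (with only a logarithmic singularity at $0$ and jump discontinuities at $\pm 4$), is even, and is $C^{\infty}$ on $\R\setminus\{0,\pm 4\}$. Differentiation under the integral sign (justified by the $L^1$ bound $|N_0(\epsilon)| \le C(1+|\log|\epsilon||)\mathbbm{1}_{[-4,4]}(\epsilon)$ and dominated convergence applied to difference quotients) gives
\[ f'(a) = -\int_{a}^{4} N_0(\epsilon)\,d\epsilon, \]
which is continuous on all of $\R$ by absolute continuity of the Lebesgue integral. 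Hence $f \in C^1(\R)$. For $a \notin \{0, \pm 4\}$, the fundamental theorem of calculus and continuity of $N_0$ there give $f''(a) = N_0(a)$; iterating with the smoothness of $N_0$ off the three singular points promotes this to $f \in C^\infty(\R\setminus\{0,\pm 4\})$.

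The last step is a chain-rule computation. Using $\partial_{d_0} a = -U/2$ and $\partial_\mu a = 1$,
\[ \frac{\partial \mathcal{G}_{\mathrm{P}}}{\partial d_0} = -\frac{U}{2}d_0 + \frac{U}{2} + U f'(a), \qquad \frac{\partial \mathcal{G}_{\mathrm{P}}}{\partial \mu} = -1 - 2 f'(a), \]
both continuous on $\R^2$, and a short algebraic rearrangement yields $\partial_{d_0}\mathcal{G}_{\mathrm{P}} = -\tfrac{U}{2}\bigl(d_0 + \partial_\mu \mathcal{G}_{\mathrm{P}}\bigr)$. Differentiating once more on the complement of $S_{\mathrm{P}}(U)$ and inserting $f''(a) = N_0(a)$ together with evenness $N_0(-a) = N_0\bigl(\tfrac{U d_0}{2} - \mu\bigr)$ produces the three equalities in \eqref{d2GPdd02}. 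The only genuinely subtle point, which I expect to be the main (small) obstacle, is the continuity of $f'$ at $a=0$ in the presence of the logarithmic divergence of $N_0$; this is exactly where integrability of $\log$ is used, and it is also the reason the lemma asserts $C^1$ rather than $C^2$ regularity across the singular set $S_{\mathrm{P}}(U)$.
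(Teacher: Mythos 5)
Your proof is correct and follows essentially the same line of reasoning that the paper sketches (the paper computes the first partial derivatives directly in equation (\ref{dGPdd0}) and then states the lemma as an immediate consequence of the properties of $N_0$, without spelling out the details). Your reduction to the scalar auxiliary function $f(a)$ and the explicit chain-rule bookkeeping simply makes precise what the paper leaves implicit; the key inputs—integrability of $N_0$ (yielding continuity of $f'$, hence $C^1$ regularity of $\mathcal{G}_{\mathrm{P}}$) and smoothness of $N_0$ off $\{0,\pm 4\}$ (yielding $f''(a)=N_0(a)$ and the second-derivative identities)—are the same, and your verification of the three equalities in (\ref{d2GPdd02}) via $f''(a)=N_0(a)=N_0(-a)$ is accurate.
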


\subsection{P mean-field equation}
According to Lemma \ref{GPlemma}, $\mathcal{G}_{\mathrm{P}}(d_0, U, \mu)$ is a $C^1$-function of $d_0 \in \R$. Thus, if $d_0^{\mathrm{P}} = d_0^{\mathrm{P}}(U, \mu)$ is an extremizer of the right-hand side of (\ref{freeenergiesP}), i.e., if 
$$\mathcal{F}_{\mathrm{P}}(U, \mu) = \mathcal{G}_{\mathrm{P}}(d_0^{\mathrm{P}}, U, \mu),$$
then $d_0^{\mathrm{P}}$ solves the P mean-field equation $\frac{\partial \mathcal{G}_{\mathrm{P}}}{\partial d_0}(d_0^{\mathrm{P}}, U, \mu) = 0$. Using the explicit expression (\ref{dGPdd0}) for $\frac{\partial \mathcal{G}_{\mathrm{P}}}{\partial d_0}$, we can write the P mean-field equation as
\begin{align}\label{Pmeanfieldeq}
d_0^{\mathrm{P}} = 1 - 2\int_{\R} N_0(\epsilon) \theta\bigg(\frac{U}{2} d_0^{\mathrm{P}} - \mu + \epsilon\bigg) d\epsilon.
\end{align}

\begin{lemma}\label{Plemma1}
For any choice of $U > 0$ and $\mu \in \R$, the following hold:
\begin{enumerate}[$(i)$]
\item The P mean-field equation (\ref{Pmeanfieldeq}) has a unique solution $d_0^{\mathrm{P}} \in \R$. Moreover, this solution satisfies $d_0^{\mathrm{P}} \in [-1,1]$. 

\item The P Hartree--Fock function $\mathcal{G}_{\mathrm{P}}(d_0, U, \mu)$ is a strictly concave function of $d_0 \in \R$ that attains its maximum at $d_0 = d_0^{\mathrm{P}}$.
\end{enumerate}
\end{lemma}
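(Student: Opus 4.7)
Both parts follow from two features of $\mathcal{G}_{\mathrm{P}}(\cdot, U, \mu)$ already (nearly) established in Lemma~\ref{GPlemma}: it is globally $C^1$ in $d_0$, and its second derivative is uniformly negative off a finite set of points. Together these give strict concavity, which yields uniqueness of any critical point and that it is the global maximum; existence is then a sign-change argument using the asymptotics of $\partial_{d_0}\mathcal{G}_{\mathrm{P}}$ as $d_0 \to \pm \infty$; and the bound $d_0^{\mathrm{P}} \in [-1,1]$ is read off directly from (\ref{Pmeanfieldeq}).

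\textbf{Step 1 (strict concavity, part (ii)).} The last equality in (\ref{d2GPdd02}) gives
\[
\frac{\partial^2 \mathcal{G}_{\mathrm{P}}}{\partial d_0^2}(d_0, U, \mu) = -\frac{U}{2}\Big(1 + U N_0\big(\tfrac{U}{2}d_0 - \mu\big)\Big) \leq -\frac{U}{2} < 0
\]
for all $(d_0, \mu) \notin S_{\mathrm{P}}(U)$. For fixed $U, \mu$, the exceptional set reduces to the three values of $d_0$ with $\tfrac{U}{2}d_0 - \mu \in \{0, \pm 4\}$. By Lemma~\ref{GPlemma}, $d_0 \mapsto \mathcal{G}_{\mathrm{P}}(d_0, U, \mu)$ is $C^1$ on all of $\R$ and smooth off those three points; hence its first derivative $\partial_{d_0}\mathcal{G}_{\mathrm{P}}$ is continuous on $\R$ and has derivative $\leq -U/2$ elsewhere, so it is strictly decreasing. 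This gives strict concavity of $\mathcal{G}_{\mathrm{P}}(\cdot, U, \mu)$.

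\textbf{Step 2 (existence of a unique critical point, part (i)).} From (\ref{dGPdd0}), $\partial_{d_0}\mathcal{G}_{\mathrm{P}} = -\tfrac{U}{2} F(d_0)$ with
\[
F(d_0) := d_0 - 1 + 2 \int_{\R} N_0(\epsilon)\, \theta\big(\tfrac{U}{2}d_0 - \mu + \epsilon\big)\, d\epsilon.
\]
Since $N_0$ is compactly supported on $[-4,4]$ (as the density of states of a bounded kinetic term) and $\int_{\R} N_0 = 1$, dominated convergence gives $\int N_0(\epsilon)\theta(\tfrac{U}{2}d_0 - \mu + \epsilon)\,d\epsilon \to 1$ as $d_0 \to +\infty$ and $\to 0$ as $d_0 \to -\infty$. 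Therefore $F(d_0) \to +\infty$ as $d_0 \to +\infty$ and $F(d_0) \to -\infty$ as $d_0 \to -\infty$. Continuity of $F$ and the intermediate value theorem produce a zero, which is a solution of (\ref{Pmeanfieldeq}); uniqueness follows from the strict monotonicity of $F$ (equivalently, strict concavity from Step~1). By Step~1 this solution is the global maximizer of $\mathcal{G}_{\mathrm{P}}(\cdot, U, \mu)$.

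\textbf{Step 3 (bound $d_0^{\mathrm{P}} \in [-1,1]$).} The integral $I := \int_{\R} N_0(\epsilon)\,\theta\big(\tfrac{U}{2}d_0^{\mathrm{P}} - \mu + \epsilon\big)\,d\epsilon$ satisfies $0 \leq I \leq \int_{\R} N_0 = 1$, so (\ref{Pmeanfieldeq}) yields $d_0^{\mathrm{P}} = 1 - 2I \in [-1, 1]$.

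\textbf{Expected obstacle.} The only nontrivial point is that $\mathcal{G}_{\mathrm{P}}(\cdot, U, \mu)$ is \emph{not} $C^2$: the logarithmic singularity and jumps of $N_0$ at $\epsilon \in \{0,\pm 4\}$ produce three kinks in $\partial_{d_0}^2 \mathcal{G}_{\mathrm{P}}$. One therefore cannot simply invoke strict concavity from a pointwise second derivative argument; one must use that $\partial_{d_0}\mathcal{G}_{\mathrm{P}}$ is globally continuous (from Lemma~\ref{GPlemma}) and strictly decreasing off a finite set, which suffices for strict concavity on $\R$.
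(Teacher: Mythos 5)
Your proof is correct and takes essentially the same approach as the paper: both establish strict concavity from the uniform bound $\partial_{d_0}^2\mathcal{G}_{\mathrm{P}} \leq -U/2$ away from the three exceptional points combined with the global $C^1$ regularity of Lemma~\ref{GPlemma}, and both read off $d_0^{\mathrm{P}}\in[-1,1]$ directly from (\ref{Pmeanfieldeq}) using $N_0\geq 0$ and $\int_{\R}N_0=1$. You are somewhat more explicit on the existence step (spelling out the sign change of $\partial_{d_0}\mathcal{G}_{\mathrm{P}}$ via the intermediate value theorem), which the paper leaves implicit.
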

\begin{proof}
Fix $U > 0$ and $\mu \in \R$. We know from Lemma \ref{GPlemma} that $\frac{\partial \mathcal{G}_{\mathrm{P}}}{\partial d_0}(d_0, U, \mu)$ is a continuous function of $d_0 \in \R$, and that, except at the three isolated values of $d_0$ at which $\frac{Ud_0}{2} \in \R \setminus \{\mu, \mu \pm 4\}$, we have
$$\frac{\partial^2 \mathcal{G}_{\mathrm{P}}}{\partial d_0^2}(d_0, U, \mu) = - \frac{U}{2}\bigg\{1 + U N_0\bigg(\frac{U}{2} d_0 - \mu\bigg)\bigg\} \leq - \frac{U}{2}.$$
It follows that $\frac{\partial \mathcal{G}_{\mathrm{P}}}{\partial d_0}(d_0, U, \mu)$ is a strictly decreasing function of $d_0 \in \R$. Therefore, the P mean-field equation $\frac{\partial \mathcal{G}_{\mathrm{P}}}{\partial d_0}(d_0^{\mathrm{P}}, U, \mu) = 0$ has a unique solution $d_0^{\mathrm{P}} \in \R$, and $\mathcal{G}_{\mathrm{P}}(d_0, U, \mu)$ is a strictly concave function of $d_0 \in \R$ that attains its maximum at $d_0 = d_0^{\mathrm{P}}$. 
The fact that $d_0^{\mathrm{P}} \in [-1,1]$ is a consequence of (\ref{Pmeanfieldeq}) (to see this, recall that $N_0(\epsilon)\geq 0$ and $\int_{\R} N_0(\epsilon)d\epsilon=1$).
\end{proof}

\subsection{Properties of the P mean-field solution}
By Lemma \ref{Plemma1}, the solution $d_0^{\mathrm{P}}$ of the P mean-field equation always exists and is unique. The next lemma %, 
 analyzes how $d_0^{\mathrm{P}}$ depends on $U$ and $\mu$.

\begin{lemma}\label{Plemma2}
For any $U > 0$, the unique solution $d_0^{\mathrm{P}} = d_0^{\mathrm{P}}(U, \mu)$ of the P mean-field equation (\ref{Pmeanfieldeq}) is an odd continuous function of $\mu \in \R$, which is $C^1$ for $\mu \in \R \setminus \{\pm (4 + \tfrac{U}{2})\}$, smooth for $\mu \in \R \setminus \{0, \pm (4 + \tfrac{U}{2})\}$, and such that 
\begin{enumerate}[$(a)$]
\item $d_0^{\mathrm{P}}(U, \mu) = -1$ for all $\mu \leq -4 - \tfrac{U}{2}$,

\item \label{d0Pitemb}
$d_0^{\mathrm{P}}(U, \mu)$ is strictly increasing from $-1$ to $1$ as $\mu$ increases from $-4 - \tfrac{U}{2}$ to $4 + \tfrac{U}{2}$,

\item $d_0^{\mathrm{P}}(U, \mu) = 1$ for all $\mu \geq 4 + \tfrac{U}{2}$,

\item $\frac{\partial d_0^{\mathrm{P}}}{\partial \mu} (U, 0) 
= \frac{2}{U}$,

\item $|\frac{\partial^2 d_0^{\mathrm{P}}}{\partial \mu^2} (U, \mu) |
\to +\infty$ as $\mu \to 0$,

\item $\mu \mapsto \frac{\partial d_0^{\mathrm{P}}}{\partial \mu} (U, \mu)$ has a jump discontinuity at $\mu = 4 + \tfrac{U}{2}$; more precisely, $\lim_{\epsilon \downarrow 0} \frac{\partial d_0^{\mathrm{P}}}{\partial \mu} (U, 4 + \tfrac{U}{2} - \epsilon) 
= \frac{2}{4\pi + U}$
and
$\lim_{\epsilon \downarrow 0} \frac{\partial d_0^{\mathrm{P}}}{\partial \mu} (U, 4 + \tfrac{U}{2} + \epsilon) 
= 0$.
\end{enumerate}

\end{lemma}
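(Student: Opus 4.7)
The plan is to derive all assertions from the P mean-field equation
\[
d_0^{\mathrm{P}} = 1 - 2\int_{\R} N_0(\epsilon)\,\theta\!\left(\tfrac{U}{2} d_0^{\mathrm{P}} - \mu + \epsilon\right) d\epsilon \qquad (\mathrm{MF})
\]
together with three structural features of the density of states $N_0$: it is even, it is supported in $[-4, 4]$ with value $1/(4\pi)$ just inside the band edges $\pm 4$, and it has a logarithmic singularity $N_0(\epsilon) \sim -\tfrac{1}{2\pi^2}\ln|\epsilon|$ as $\epsilon \to 0$, all of which will be extracted from Appendix \ref{N0app}. Evenness of $N_0$ follows from the invariance of the dispersion $2(\cos k_1+\cos k_2)$ under $k \mapsto k + (\pi, \pi)$, which flips its sign. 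Substituting $-d_0^{\mathrm{P}}(U,\mu)$ into the right-hand side of $(\mathrm{MF})$ at chemical potential $-\mu$, changing $\epsilon \to -\eta$, and using $\theta(-x) = 1 - \theta(x)$ a.e.\ together with $\int N_0 = 1$, one verifies the result equals $-d_0^{\mathrm{P}}(U,\mu)$; uniqueness from Lemma \ref{Plemma1} then gives oddness.

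For the regularity claim I set $F(d_0,\mu) = \partial_{d_0} \mathcal{G}_{\mathrm{P}}(d_0, U, \mu)$. By Lemma \ref{GPlemma}, $F$ is smooth off $S_{\mathrm{P}}(U)$, and by Lemma \ref{Plemma1}, $\partial_{d_0} F \leq -U/2 < 0$. The implicit function theorem yields smoothness of $\mu \mapsto d_0^{\mathrm{P}}(U, \mu)$ wherever $\tfrac{U}{2} d_0^{\mathrm{P}} - \mu \notin \{0, \pm 4\}$; continuity on all of $\R$ follows from the standard continuous-dependence property of the unique maximizer of a jointly continuous strictly concave family. Plugging the three singular values of $\tfrac{U}{2} d_0^{\mathrm{P}} - \mu$ back into $(\mathrm{MF})$ and using $\mathrm{supp}\, N_0 = [-4, 4]$ together with oddness shows that the exceptional $\mu$-values are exactly $\{0, \pm(4 + U/2)\}$, and the same computation also yields (a) and (c).

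The implicit function theorem combined with \eqref{d2GPdd02} gives the key formula
\[
\frac{\partial d_0^{\mathrm{P}}}{\partial \mu}(U, \mu) = \frac{2 N_0(\epsilon)}{1 + U N_0(\epsilon)}, \qquad \epsilon = \tfrac{U}{2} d_0^{\mathrm{P}}(U, \mu) - \mu,
\]
from which everything else follows. Strict positivity of this expression on $|\epsilon| < 4$ gives (b); at $\mu = 0$ we have $\epsilon = 0$ and $N_0 \to +\infty$, and continuity of $x \mapsto 2x/(1 + Ux)$ on $[0, +\infty]$ (with limit $2/U$ at $+\infty$) gives (d); the one-sided values $N_0(-4^+) = 1/(4\pi)$, $N_0(-4^-) = 0$ together with $d_0^{\mathrm{P}} \equiv 1$ for $\mu \geq 4 + U/2$ give the left and right limits in (f). For (e) one differentiates the derivative formula once more, using that it rearranges to $\partial_\mu \epsilon = -1/(1 + UN_0)$, to obtain
\[
\frac{\partial^2 d_0^{\mathrm{P}}}{\partial \mu^2}(U, \mu) = \frac{-2\,N_0'(\epsilon)}{\bigl(1 + U N_0(\epsilon)\bigr)^3};
\]
the logarithmic expansion $N_0'(\epsilon) \sim -1/(2\pi^2 \epsilon)$ together with $1 + UN_0 \sim -\tfrac{U}{2\pi^2}\ln|\epsilon|$ then forces $|\partial^2_\mu d_0^{\mathrm{P}}|$ to diverge like $1/(|\epsilon|\,|\ln|\epsilon||^3)$ as $\mu \to 0$ (the exclusion $\epsilon(\mu)\neq 0$ for $\mu$ in a punctured neighborhood of $0$ is checked by plugging $\epsilon=0$ back into $(\mathrm{MF})$ to force $\mu=0$).

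The main technical obstacle I anticipate is the precise behavior of $N_0$ and $N_0'$ near the three exceptional points $\epsilon \in \{0, \pm 4\}$; once the one-sided band-edge values and the leading logarithmic asymptotics at the origin are recorded by Appendix \ref{N0app}, the rest of the lemma reduces to straightforward manipulations of $(\mathrm{MF})$ and the two-line derivative formula above.
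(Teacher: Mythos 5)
Your proof is correct and follows essentially the same route as the paper: implicit function theorem off the singular set, the formula $\partial_\mu d_0^{\mathrm{P}}=2N_0/(1+UN_0)$ evaluated at $\tfrac{U}{2}d_0^{\mathrm{P}}-\mu$, then (a)--(f) read off from the behavior of $N_0$ at $0$ and at the band edges. The only substantive difference is that you actually prove the oddness assertion (via the change of variable $\epsilon\to-\eta$, $\theta(-x)=1-\theta(x)$ a.e., and uniqueness from Lemma~\ref{Plemma1}), whereas the paper's proof states it without explicit justification — a worthwhile addition. Your variable $\epsilon=\tfrac{U}{2}d_0^{\mathrm{P}}-\mu$ is the negative of the paper's $X$, which flips the sign of $N_0'$ in the second-derivative formula; since $N_0$ is even, the two formulas agree and both give (e) and (f) correctly.
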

\begin{proof}
Fix $U > 0$. In this proof, we suppress the $U$-dependence of $d_0^{\mathrm{P}}(U, \mu)$ and $\mathcal{G}_{\mathrm{P}}(d_0, U, \mu)$ for brevity.
By the implicit function theorem, $d_0^{\mathrm{P}}(\mu)$ is a smooth function of $\mu$ as long as $(d_0^{\mathrm{P}}, \mu) \notin S_{\mathrm{P}}(U)$. If $(d_0^{\mathrm{P}}, \mu) \in S_{\mathrm{P}}(U)$, then we have three cases:
\begin{enumerate}[$(1)$]
\item\label{item1} If  $\tfrac{U d_0^{\mathrm{P}}}{2} - \mu = 4$, then (\ref{Pmeanfieldeq}) gives $d_0^{\mathrm{P}} = -1$, so $\mu = -4 - \tfrac{U}{2}$. 

\item\label{item2}  If  $\tfrac{U d_0}{2} - \mu = 0$,  then (\ref{Pmeanfieldeq}) gives $d_0^{\mathrm{P}} = 0$, so $\mu = 0$. 

\item\label{item3} If  $\tfrac{U d_0}{2} - \mu = -4$,  then (\ref{Pmeanfieldeq}) gives $d_0^{\mathrm{P}} = 1$, so $\mu = 4 + \tfrac{U}{2}$. 
\end{enumerate}
We conclude that $d_0^{\mathrm{P}}$  is a smooth function of $\mu \in \R \setminus \{0, \pm (4 + \tfrac{U}{2})\}$.
Furthermore, differentiation with respect to $\mu$ of the identity
$$\frac{\partial \mathcal{G}_{\mathrm{P}}}{\partial d_0}(d_0^{\mathrm{P}}(\mu), \mu) = 0 \qquad \text{for $\mu \in \R \setminus \{0, \pm (4 + \tfrac{U}{2})\}$}$$
gives
$$\frac{\partial^2 \mathcal{G}_{\mathrm{P}}}{\partial d_0^2}(d_0^{\mathrm{P}}(\mu), \mu)
(d_0^{\mathrm{P}})'(\mu)
+ \frac{\partial^2 \mathcal{G}_{\mathrm{P}}}{\partial d_0 \partial \mu} (\mu) = 0, $$
i.e., using also (\ref{d2GPdd02}),
\begin{align}\label{d0Pprime}
(d_0^{\mathrm{P}})'(\mu)
%= -\frac{\frac{\partial^2 \mathcal{G}_{\mathrm{P}}}{\partial d_0 \partial \mu} (\mu)}{\frac{\partial^2 \mathcal{G}_{\mathrm{P}}}{\partial d_0^2}(d_0^{\mathrm{P}}(\mu), \mu)}
= \frac{2N_0(X)}{1 + U N_0(X)}
\qquad \text{for $\mu \in \R \setminus \{0, \pm (4 + \tfrac{U}{2})\}$},
\end{align}
where we have introduced the short-hand notation $X := \mu - \frac{U}{2} d_0^{\mathrm{P}}$. Equation (\ref{d0Pprime}) implies that $(d_0^{\mathrm{P}})'(\mu) > 0$ whenever $X \in (-4,4)$. Together with (\ref{item1})--(\ref{item3}), this proves assertions $(a)$--$(c)$. We also see from (\ref{Pmeanfieldeq}) that $X \to 0$ as $\mu \to 0$. Hence (\ref{d0Pprime}) implies that $d_0^{\mathrm{P}}$ is $C^1$  at $\mu = 0$ and that assertion $(d)$ holds. 
Differentiation of (\ref{d0Pprime}) gives
$$(d_0^{\mathrm{P}})''(\mu)
= \frac{2 N_0'(X)}{(1 + U N_0(X))^{3}}
\qquad \text{for $\mu \in \R \setminus \{0, \pm (4 + \tfrac{U}{2})\}$},$$
from which assertion $(e)$ follows.
If $\mu \uparrow 4 + \tfrac{U}{2}$, then $X \downarrow 4$, and if $\mu \downarrow 4 + \tfrac{U}{2}$, then $X \uparrow 4$.
Since $N_0(4^-) = \frac{1}{4\pi}$ by (\ref{N0near4}) and $N_0(4^+) = 0$, assertion $(f)$ follows from (\ref{d0Pprime}).
\end{proof}

\subsection{P free energy}
By combining the above lemmas, we arrive at the following description of the P free energy $\mathcal{F}_{\mathrm{P}}$.

\begin{lemma}\label{Plemma}
For any $U >0$ and $\mu \in \R$, the P free energy defined in (\ref{freeenergiesP}) is given by
\begin{align}\label{FPGP}
\mathcal{F}_{\mathrm{P}}(U, \mu) = \mathcal{G}_{\mathrm{P}}(d_0^{\mathrm{P}}(U, \mu), U, \mu),
\end{align}
where $d_0^{\mathrm{P}} = d_0^{\mathrm{P}}(U, \mu)$ is the unique solution of the P mean-field equation (\ref{Pmeanfieldeq}).
Moreover, $\mathcal{F}_{\mathrm{P}}(U, \mu)$ is an even and concave $C^1$-function of $\mu \in \R$, which is smooth for $\mu \notin \{0, \pm (4 + \tfrac{U}{2})\}$; its maximum value is
$$\mathcal{F}_{\mathrm{P}}(U, 0) = - \frac{16}{\pi^2}  \approx -1.62114,$$
and it satisfies
\begin{align}\label{dFPdmuequalsd0P}
\frac{\partial \mathcal{F}_{\mathrm{P}}}{\partial \mu} (U, \mu) = -d_0^{\mathrm{P}}(U, \mu) \qquad \text{for $U >0$ and $\mu \in \R$}.
\end{align}
\end{lemma}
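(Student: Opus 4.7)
The plan is to obtain this lemma from Lemmas \ref{GPlemma}--\ref{Plemma2} by a brief envelope-theorem computation together with a symmetry check, the only non-routine ingredient being a classical Brillouin-zone integral at the very end.

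Formula (\ref{FPGP}) is immediate from Lemma \ref{Plemma1}: the P mean-field equation (\ref{Pmeanfieldeq}) has a unique real solution $d_0^{\mathrm{P}}(U,\mu)$, so the set in (\ref{freeenergiesP}) is a singleton and $\mathcal{F}_{\mathrm{P}}(U,\mu) = \mathcal{G}_{\mathrm{P}}(d_0^{\mathrm{P}}(U,\mu),U,\mu)$. For smoothness on $\R\setminus\{0,\pm(4+U/2)\}$, I would combine the smoothness of $\mathcal{G}_{\mathrm{P}}$ off $S_{\mathrm{P}}(U)$ from Lemma \ref{GPlemma} with that of $d_0^{\mathrm{P}}$ off the three exceptional $\mu$-values from Lemma \ref{Plemma2}; the proof of the latter lemma already shows that $(d_0^{\mathrm{P}}(U,\mu),\mu)\in S_{\mathrm{P}}(U)$ only at those three values of $\mu$, so the composition is smooth on the complement.

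I would then derive (\ref{dFPdmuequalsd0P}) by an envelope-type argument: away from the three exceptional points the chain rule gives
\begin{equation*}
\frac{\partial \mathcal{F}_{\mathrm{P}}}{\partial \mu}(U,\mu) = \frac{\partial \mathcal{G}_{\mathrm{P}}}{\partial d_0}(d_0^{\mathrm{P}},U,\mu)\,\frac{\partial d_0^{\mathrm{P}}}{\partial \mu} + \frac{\partial \mathcal{G}_{\mathrm{P}}}{\partial \mu}(d_0^{\mathrm{P}},U,\mu),
\end{equation*}
the first term vanishes by the mean-field equation (\ref{Pmeanfieldeq}), and evaluating (\ref{dGPdd0}) at $d_0 = d_0^{\mathrm{P}}$ rewrites the second term as $-d_0^{\mathrm{P}}$. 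Since $d_0^{\mathrm{P}}$ is continuous on $\R$ by Lemma \ref{Plemma2}, the identity $\partial_\mu \mathcal{F}_{\mathrm{P}} = -d_0^{\mathrm{P}}$ extends continuously across the three exceptional $\mu$, promoting $\mathcal{F}_{\mathrm{P}}(U,\cdot)$ to a $C^1$-function on $\R$. Concavity then follows at once: Lemma \ref{Plemma2} shows $\mu \mapsto d_0^{\mathrm{P}}(U,\mu)$ is non-decreasing on $\R$, hence $\partial_\mu \mathcal{F}_{\mathrm{P}} = -d_0^{\mathrm{P}}$ is non-increasing.

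For evenness I would verify the symmetry $\mathcal{G}_{\mathrm{P}}(-d_0, U, -\mu) = \mathcal{G}_{\mathrm{P}}(d_0, U, \mu)$ by a direct change of variables $\epsilon \to -\epsilon$ in the integral in (\ref{PHartreeFockFunction}), using $N_0(-\epsilon) = N_0(\epsilon)$ together with the normalizations $\int_\R N_0(\epsilon)\,d\epsilon = 1$ and $\int_\R \epsilon\, N_0(\epsilon)\,d\epsilon = 0$ (the latter by evenness of $N_0$); combined with the oddness of $d_0^{\mathrm{P}}$ in $\mu$ from Lemma \ref{Plemma2}, this yields $\mathcal{F}_{\mathrm{P}}(U,-\mu) = \mathcal{F}_{\mathrm{P}}(U,\mu)$. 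By concavity and evenness the maximum of $\mathcal{F}_{\mathrm{P}}(U,\cdot)$ is attained at $\mu = 0$, where oddness forces $d_0^{\mathrm{P}}(U,0) = 0$, so substitution into (\ref{PHartreeFockFunction}) gives
\begin{equation*}
\mathcal{F}_{\mathrm{P}}(U,0) = \mathcal{G}_{\mathrm{P}}(0,U,0) = -2 \int_\R N_0(\epsilon)\,\epsilon\,\theta(\epsilon)\,d\epsilon = -\int_{[-\pi,\pi]^2} \bigl|2(\cos k_1 + \cos k_2)\bigr|\,\frac{dk_1\,dk_2}{(2\pi)^2}
\end{equation*}
after using $N_0(-\epsilon)=N_0(\epsilon)$ and (\ref{2Ddensityofstates}). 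The one genuinely non-trivial step in the whole proof is the explicit evaluation of this Brillouin-zone average as $16/\pi^2$, a classical elliptic integral computation that I would import from the results on $N_0$ collected in Appendix \ref{N0app}.
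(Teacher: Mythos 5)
Your proof is correct and follows essentially the same route as the paper: identify $\mathcal{F}_{\mathrm{P}}$ with $\mathcal{G}_{\mathrm{P}}$ at the unique mean-field solution, apply the envelope argument to get $\partial_\mu\mathcal{F}_{\mathrm{P}} = -d_0^{\mathrm{P}}$ off the exceptional set, extend by continuity, and read off concavity, evenness, smoothness, and the value at $\mu=0$ from the properties of $d_0^{\mathrm{P}}$ in Lemma~\ref{Plemma2}. The only cosmetic difference is that you establish evenness via the symmetry $\mathcal{G}_{\mathrm{P}}(-d_0,U,-\mu)=\mathcal{G}_{\mathrm{P}}(d_0,U,\mu)$ combined with oddness of $d_0^{\mathrm{P}}$, whereas the paper deduces it directly from oddness of $\partial_\mu\mathcal{F}_{\mathrm{P}}=-d_0^{\mathrm{P}}$ (noting both sides agree at $\mu=0$); and for the final value you unwind $N_0$ back into a Brillouin-zone average before invoking Appendix~\ref{N0app}, while the paper quotes the moment identity~\eqref{N0epsilon8pi2} directly.
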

\begin{proof}
From Lemma \ref{Plemma1}, we have $\mathcal{F}_{\mathrm{P}}(U, \mu) = \mathcal{G}_{\mathrm{P}}(d_0^{\mathrm{P}}(U, \mu), U, \mu)$ and so, for $\mu \notin \{0, \pm (4 + \tfrac{U}{2})\}$,
\begin{align}\nonumber
\frac{\partial \mathcal{F}_{\mathrm{P}}}{\partial \mu} (U, \mu) 
& = \frac{\partial \mathcal{G}_{\mathrm{P}}}{\partial d_0}(d_0^{\mathrm{P}}(U, \mu), U, \mu)
\frac{\partial d_0^{\mathrm{P}}}{\partial \mu} (U, \mu)
+ \frac{\partial \mathcal{G}_{\mathrm{P}}}{\partial \mu}(d_0^{\mathrm{P}}(U, \mu), U, \mu)
	\\ \nonumber
& = \frac{\partial \mathcal{G}_{\mathrm{P}}}{\partial \mu}(d_0^{\mathrm{P}}(U, \mu), U, \mu)
= -d_0^{\mathrm{P}}(U, \mu)
\end{align}
where we used (\ref{dGPdd0}) and the fact that $\frac{\partial \mathcal{G}_{\mathrm{P}}}{\partial d_0}(d_0^{\mathrm{P}}(U, \mu), U, \mu) = 0$. 
This proves (\ref{dFPdmuequalsd0P}) for $\mu \notin \{0, \pm (4 + \tfrac{U}{2})\}$. By Lemmas \ref{GPlemma} and \ref{Plemma2}, both sides of (\ref{dFPdmuequalsd0P}) are continuous functions of $\mu \in \R$, so (\ref{dFPdmuequalsd0P}) in fact holds for all $\mu \in \R$. Lemma \ref{Plemma2} also shows that $\frac{\partial \mathcal{F}_{\mathrm{P}}}{\partial \mu} (U, \mu)$ is an odd and decreasing function of $\mu$, which is smooth for $\mu \notin \{0, \pm (4 + \tfrac{U}{2})\}$. Hence $\mu \mapsto \mathcal{F}_{\mathrm{P}}(U, \mu)$ is an even and concave $C^1$-function of $\mu \in \R$, which is smooth for $\mu \notin \{0, \pm (4 + \tfrac{U}{2})\}$. If $\mu = 0$, then (\ref{Pmeanfieldeq}) has the unique solution $d_0^{\mathrm{P}} = 0$, and so (\ref{PHartreeFockFunction}) implies that
$$\mathcal{F}_{\mathrm{P}}(U, 0) = \mathcal{G}_{\mathrm{P}}(0, U, 0)
=  - 2\int_0^4 N_0(\epsilon) \epsilon d\epsilon= - \frac{16}{\pi^2},$$
where we have used (\ref{N0epsilon8pi2}) in the last step.
The proof is complete.
\end{proof}

\section{The ferromagnetic Hartree--Fock function}\label{Fsec}
It follows from the definition (\ref{FHartreeFockFunction}) of the F Hartree--Fock function $\mathcal{G}_{\F}(d_0, m_0, U, \mu)$ that
\begin{align}\label{dGFdd0}
\frac{\partial \mathcal{G}_{\F}}{\partial d_0} 
= &- \frac{U}{2} \bigg\{d_0  - 1 
+ \sum_{r=\pm 1}  \int_{\R} N_0(\epsilon) \theta\bigg(\frac{U(d_0 - r m_0)}{2}  - \mu + \epsilon\bigg)d\epsilon
 \bigg\}
= -\frac{U}{2}\bigg(\frac{\partial \mathcal{G}_{\F}}{\partial \mu} + d_0\bigg),
	\\  \label{dGFdm0}
\frac{\partial \mathcal{G}_{\F}}{\partial m_0} 
= &\; \frac{U}{2}\bigg\{ m_0
 + \sum_{r=\pm 1} r\int_{\R} N_0(\epsilon)  \theta\bigg(\frac{U(d_0 - r m_0)}{2}  - \mu + \epsilon\bigg)d\epsilon\bigg\}.
\end{align}
Since $\epsilon \mapsto N_0(\epsilon)$ is smooth for $\epsilon \in \R \setminus \{0, \pm 4\}$, we conclude that, for any fixed $U > 0$, the function $\mathcal{G}_{\F}(d_0, m_0, U, \mu)$ is a $C^1$-function of $(d_0, m_0, \mu) \in \R \times [0,+\infty) \times \R$. In particular, the F mean-field equations (\ref{Fmeanfield}) are well-defined for all $(d_0, m_0, \mu) \in \R \times [0,+\infty) \times \R$.
Moreover, differentiation of (\ref{dGFdd0}) gives
\begin{align}\label{d2GFdd02}
\frac{\partial^2 \mathcal{G}_{\F}}{\partial d_0^2} 
= &- \frac{U}{2} \bigg\{1  
+ \frac{U}{2} \sum_{r=\pm 1} \int_{\R} N_0(\epsilon) \delta\bigg(\frac{U(d_0 - r m_0)}{2}  - \mu + \epsilon\bigg)d\epsilon \bigg\}
 < 0,
 \end{align}
implying that $\mathcal{G}_{\F}(d_0, m_0, U, \mu)$ is a strictly concave function of $d_0$.

\subsection{F mean-field equations}
Using the explicit expressions (\ref{dGFdd0}) and (\ref{dGFdm0}) for $\frac{\partial \mathcal{G}_{\F}}{\partial d_0}$ and $\frac{\partial \mathcal{G}_{\F}}{\partial m_0}$, respectively, the F mean-field equations (\ref{Fmeanfield}) can be written as
\begin{align}\label{Fmeanfield2}
\begin{cases}
 d_0^{\F} = 1 
- \int_{\R} N_0(\epsilon) \theta\big(\frac{U(d_0^{\F} - m_0^{\F})}{2}  - \mu + \epsilon\big) d\epsilon
- \int_{\R} N_0(\epsilon) \theta\big(\frac{U(d_0^{\F} + m_0^{\F})}{2}  - \mu + \epsilon \big) d\epsilon,
	\\
m_0^{\F}
 = - \int_{\R} N_0(\epsilon) \theta\big(\frac{U(d_0^{\F} - m_0^{\F})}{2}  - \mu + \epsilon\big) d\epsilon
 + \int_{\R} N_0(\epsilon) \theta\big(\frac{U(d_0^{\F} + m_0^{\F})}{2}  - \mu + \epsilon \big) d\epsilon.
\end{cases}
\end{align}
Since $N_0(\epsilon)\geq 0$ and $\int_{\R} N_0(\epsilon) d\epsilon=1$, it is immediate from these equations that any F mean-field solution must satisfy
\begin{align}\label{d0Fm0Fbound}
d_0^{\F} \in [-1,1] \quad \text{and} \quad  m_0^{\F} \in [0,1].
\end{align}
Furthermore, adding and subtracting the equations in (\ref{Fmeanfield2}), we can write the F mean-field equations as
\begin{align}\label{Fmeanfieldeqs}
\begin{cases}
 d_0^{\F} + m_0^{\F} = 1 
- 2\int_{\R} N_0(\epsilon) \theta\big(\frac{U(d_0^{\F} - m_0^{\F})}{2}  - \mu + \epsilon\big) d\epsilon,
	\\
 d_0^{\F} - m_0^{\F}
 = 1 - 2 \int_{\R} N_0(\epsilon) \theta\big(\frac{U(d_0^{\F} + m_0^{\F})}{2}  - \mu + \epsilon \big) d\epsilon.
\end{cases}
\end{align}

\section{The antiferromagnetic Hartree--Fock function}\label{AFsec}
The definition (\ref{AFHartreeFockFunction}) of the AF Hartree--Fock function $\mathcal{G}_{\AF}(d_0, m_1, U, \mu)$ implies that
\begin{align}\nonumber
\frac{\partial \mathcal{G}_{\AF}}{\partial d_0} 
= & -\frac{U}{2} \bigg\{d_0
- 1 
 + \sum_{r=\pm 1}  \int_{\R} N_0(\epsilon) \theta\bigg(\frac{U}{2} d_0 - \mu + r \sqrt{\frac{U^2}{4}m_1^2 + \epsilon^2}\bigg) d\epsilon \bigg\}
 	\\ \label{dGAFdd0}
=& -\frac{U}{2}\bigg(\frac{\partial \mathcal{G}_{\AF}}{\partial \mu} + d_0\bigg)
 \end{align}
and
\begin{align}\label{dGAFdm1}	
\frac{\partial \mathcal{G}_{\AF}}{\partial m_1} 
= \frac{U}{2} m_1
 - \sum_{r=\pm 1} r \int_{\R} N_0(\epsilon) \frac{ \frac{U^2}{4}m_1}{\sqrt{\frac{U^2}{4}m_1^2 + \epsilon^2}}  \theta\bigg(\frac{U}{2} d_0 - \mu + r \sqrt{\frac{U^2}{4}m_1^2 + \epsilon^2}\bigg) d\epsilon.
\end{align}
We infer that, for any fixed $U > 0$, the function $\mathcal{G}_{\AF}(d_0, m_1, U, \mu)$ is a $C^1$-function of $(d_0, m_1, \mu) \in \R \times [0,+\infty) \times \R$.
In particular, the AF mean-field equations (\ref{AFmeanfield}) are well-defined for all $(d_0, m_1, \mu) \in \R \times [0,+\infty) \times \R$. Moreover, differentiation of (\ref{dGAFdd0}) gives
\begin{align}\label{d2GAFdd02}
\frac{\partial^2 \mathcal{G}_{\AF}}{\partial d_0^2} 
= & -\frac{U}{2} \bigg\{1
 + \frac{U}{2} \sum_{r=\pm 1} \int_{\R} N_0(\epsilon) \delta\bigg(\frac{U}{2} d_0 - \mu + r \sqrt{\frac{U^2}{4}m_1^2 + \epsilon^2}\bigg) d\epsilon \bigg\} < 0
 \end{align}
implying that $\mathcal{G}_{\F}(d_0, m_0, U, \mu)$ is a strictly concave function of $d_0$.

\subsection{AF mean-field equations}
Using the explicit expressions (\ref{dGAFdd0}) and (\ref{dGAFdm1}) for $\frac{\partial \mathcal{G}_{\AF}}{\partial d_0}$ and $\frac{\partial \mathcal{G}_{\AF}}{\partial m_1}$, respectively, we can write the AF mean-field equations as
\begin{align}\label{AFmeanfieldeqs}
\begin{cases}
d_0^{\AF} = 1 
- \int_{\R} N_0(\epsilon) \Big\{\theta\Big(\frac{U}{2} d_0^{\AF} - \mu + \sqrt{\frac{U^2}{4}(m_1^{\AF})^2 + \epsilon^2}\Big) 
	\\
\hspace{3.8cm} + \; \theta\Big(\frac{U}{2} d_0^{\AF} - \mu - \sqrt{\frac{U^2}{4}(m_1^{\AF})^2 + \epsilon^2}\Big)\Big\} d\epsilon,
\vspace{.1cm}	\\ 
m_1^{\AF}
 = \int_{\R}  N_0(\epsilon) \frac{\frac{U}{2}m_1^{\AF}}{\sqrt{\frac{U^2}{4}(m_1^{\AF})^2 + \epsilon^2}} \Big\{\theta\Big(\frac{U}{2} d_0^{\AF} - \mu + \sqrt{\frac{U^2}{4}(m_1^{\AF})^2 + \epsilon^2}\Big) 
	\\
\hspace{6cm}  - \; \theta\Big(\frac{U}{2} d_0^{\AF} - \mu - \sqrt{\frac{U^2}{4}(m_1^{\AF})^2 + \epsilon^2}\Big)\Big\} d\epsilon.
\end{cases}
\end{align}
Since $N_0(\epsilon)\geq 0$ and $\int_{\R} N_0(\epsilon) d\epsilon=1$, it is easy to see from these equations that any AF mean-field solution must satisfy
\begin{align}\label{d0AFbound}
d_0^{\AF} \in [-1,1] \quad \text{and} \quad  m_1^{\AF} \in [0,1].
\end{align}

\section{Proof of Theorem \ref{sectorIth}}\label{SectorIproofsec}
In this section, we analyze the P, F, and AF free energies in Sector I and provide a proof of Theorem \ref{sectorIth}.
From (\ref{sectorIdef}), we recall that Sector I, denoted $\I_{U_0, \delta}$, consists of all $(U, \mu) \in \R^2$ with $U \geq U_0$ and $0 \leq \mu \leq \tfrac{U}{2} - \delta$.
We assume that $\delta > 0$ is a fixed small number, and that $U_0 > 0$.

\subsection{P free energy in Sector I}
Given $0 < \delta < \alpha$, let $\I_{U_0, \alpha, \delta}$ be the subsector of $\I_{U_0, \delta}$ in which $\mu \geq \tfrac{U}{2} - \alpha$, i.e.,
$$\I_{U_0, \alpha, \delta} = \Big\{(U, \mu) \in \R^2 \, \Big| \, U \geq U_0 \;\text{and}\; \mu \in [\tfrac{U}{2} - \alpha, \tfrac{U}{2} - \delta] \Big\}.$$
The next lemma determines the asymptotics of the P free energy in $\I_{U_0, \alpha, \delta}$.

\begin{lemma}[Asymptotics of $\mathcal{F}_{\mathrm{P}}$ in $\I_{U_0, \alpha, \delta}$]\label{PIlemma}
Let $0 < \delta < \alpha$.
As $(U, \mu) \in \I_{U_0, \alpha, \delta}$ tends to infinity, we have
\begin{align}\label{FPexpansionSectorIhat}
\mathcal{F}_{\mathrm{P}}(U, \mu) = &\; -\frac{U}{4} - 4
+\hat{\mu} -\frac{\hat{\mu}^2}{U}
+\frac{4 \pi  \hat{\mu}^2}{U^2}
 -\frac{2 \pi^2 (\hat{\mu}+24) \hat{\mu}^2}{3 U^3}
+ O\bigg(\frac{1}{U^4}\bigg),
\end{align}
where the error term is uniform with respect to $\mu$, and we have expressed the formula in terms of the variable $\hat{\mu} := \frac{U}{2} - \mu+4 \in [4 + \delta, 4 + \alpha]$ which is of order $O(1)$.
\end{lemma}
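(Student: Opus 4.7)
The plan is to use Lemma \ref{Plemma} to reduce $\mathcal{F}_{\mathrm{P}}(U,\mu)$ to an analysis of the unique P mean-field solution $d_0^{\mathrm{P}}$ in variables adapted to the near-band-edge regime, and then to solve the resulting mean-field equation iteratively in powers of $1/U$. The first step is to introduce $\eta := 1 - d_0^{\mathrm{P}}$ and $X := \mu - \tfrac{U}{2} d_0^{\mathrm{P}}$. Using $\mu = \tfrac{U}{2} + 4 - \hat{\mu}$, one gets the basic identity $4 - X = \hat{\mu} - \tfrac{U}{2}\eta$. Substituting $d_0 = 1-\eta$ into (\ref{PHartreeFockFunction}) and using that $N_0$ is supported in $[-4,4]$, a short algebraic manipulation gives
\begin{equation*}
\mathcal{F}_{\mathrm{P}}(U,\mu) = -\tfrac{U}{4} - 4 + \hat{\mu} - \tfrac{U}{4}\eta^{2} - 2\int_X^{4} N_0(\epsilon)(\epsilon - X)\, d\epsilon,
\end{equation*}
while (\ref{Pmeanfieldeq}) takes the form $\eta = 2\int_X^{4} N_0(\epsilon)\, d\epsilon$.

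The next step is to invoke the analytic expansion of $N_0$ near $\epsilon = 4^{-}$, which follows from Appendix \ref{N0app} (or from the closed form $N_0(\epsilon) = \tfrac{1}{2\pi^{2}} K(\sqrt{1-(\epsilon/4)^{2}})$) and reads $N_0(\epsilon) = \tfrac{1}{4\pi}\bigl(1 + \tfrac{1}{8}(4-\epsilon) + O((4-\epsilon)^{2})\bigr)$. Term-by-term integration yields polynomial expansions for the two relevant integrals in $(4-X)$:
\begin{equation*}
2\int_{X}^{4} N_0(\epsilon)\, d\epsilon = \tfrac{4-X}{2\pi} + \tfrac{(4-X)^{2}}{32\pi} + O\bigl((4-X)^{3}\bigr),
\end{equation*}
\begin{equation*}
2\int_{X}^{4} N_{0}(\epsilon)(\epsilon - X)\, d\epsilon = \tfrac{(4-X)^{2}}{4\pi} + \tfrac{(4-X)^{3}}{96\pi} + O\bigl((4-X)^{4}\bigr).
\end{equation*}

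I would then solve the coupled system for $\eta$ and $4-X$ iteratively. The linearized equation gives $\eta_{0} = \tfrac{2\hat{\mu}}{U+4\pi}$ and $4 - X_{0} = \tfrac{4\pi\hat{\mu}}{U+4\pi}$, both of order $1/U$; the first nonlinear iterate produces the correction $\eta - \eta_{0} = \tfrac{2\pi^{2}\hat{\mu}^{2}}{(U+4\pi)^{3}} + O(U^{-4})$. Because $\hat{\mu}$ stays in the compact interval $[4+\delta, 4+\alpha]$, the implicit function theorem (or continued iteration) yields a genuine asymptotic expansion $\eta = \sum_{k \geq 1} \eta_{k}(\hat{\mu})\, U^{-k}$ with polynomial coefficients $\eta_{k}$ and a uniformly controlled remainder. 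Substituting this (and the corresponding expansion of $4-X$) into the closed-form expression for $\mathcal{F}_{\mathrm{P}}$ above and collecting terms in $1/U$ reproduces (\ref{FPexpansionSectorIhat}). As a sanity check on the more delicate $O(U^{-3})$ coefficient, the $-16\pi^{2}\hat{\mu}^{2}/U^{3}$ part comes from sub-subleading contributions of $-U\eta_{0}^{2}/4$ and $-(4-X_{0})^{2}/(4\pi)$, while the $-\tfrac{2\pi^{2}\hat{\mu}^{3}}{3U^{3}}$ part arises from the cubic integral term $-(4-X_{0})^{3}/(96\pi)$ together with the cross-contributions involving $\eta - \eta_{0}$.

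The main obstacle is purely technical bookkeeping: propagating the analytic expansion of $N_{0}$ through the iterative solve cleanly enough to obtain a uniform $O(U^{-4})$ remainder. Because $\hat{\mu}$ is bounded and bounded away from zero, and $4-X$ stays at distance $O(1/U)$ from the upper band edge $\epsilon = 4$ (hence uniformly away from both the jump discontinuity of $N_{0}$ there and its logarithmic singularity at $\epsilon = 0$), all error estimates are uniform in $\hat{\mu}$, and the argument reduces to elementary asymptotic algebra.
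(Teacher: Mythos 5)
Your proposal is correct and follows essentially the same route as the paper: invoke Lemma \ref{Plemma}, change variables to $\hat{\mu}$ and $1-d_0^{\mathrm{P}}$, expand $N_0$ near the band edge $\epsilon=4$, solve the mean-field equation iteratively, and substitute back. The only cosmetic difference is that you keep some denominators resummed as $(U+4\pi)$ rather than expanding everything directly in powers of $1/U$; the paper's (\ref{1minusd0Pexpansion}) is just the $1/U$-expanded version of your $\eta_0 + \eta_1$.

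One small inaccuracy in your sanity check: the cross-contributions involving $\eta-\eta_0$ actually cancel between $-\tfrac{U}{4}\eta^2$ and the $(4-X)^2$ integral term — which is exactly what one expects, since $d_0^{\mathrm{P}}$ is a stationary point of $\mathcal{G}_{\mathrm{P}}$, so first-order perturbations of it do not affect $\mathcal{F}_{\mathrm{P}}$ to leading order. The $-\tfrac{2\pi^2\hat{\mu}^3}{3U^3}$ term therefore comes entirely from $-(4-X_0)^3/(96\pi)$. This does not affect the validity of the argument, since carrying out the substitution mechanically gives the correct result.
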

\begin{proof}
According to Lemma \ref{Plemma}, the P free energy is given by $\mathcal{F}_{\mathrm{P}}(U, \mu) = \mathcal{G}_{\mathrm{P}}(d_0^{\mathrm{P}}, U, \mu)$, where $d_0^{\mathrm{P}} = d_0^{\mathrm{P}}(U, \mu)$ is the unique solution of the P mean-field equation (\ref{Pmeanfieldeq}).

Assume first that $(U, \mu) \in \I_{U_0, \delta}$. 
Then $\mu \in [0, \frac{U}{2} - \delta]$, so Lemma \ref{Plemma2} (\ref{d0Pitemb}) implies that $d_0^{\mathrm{P}} \in (-1,1)$. Thus, it follows from (\ref{Pmeanfieldeq}) that $\mu - \frac{U}{2} d_0^{\mathrm{P}} \in (-4,4)$ (since $\int_{\R} N_{0}(\epsilon) d\epsilon=1$, $N_{0}(\epsilon) \geq 0$ for all $\epsilon \in \R$ and $N_{0}(\epsilon)=0$ for $|\epsilon|>4)$.
In particular, (\ref{Pmeanfieldeq}) can be written as
\begin{align}\label{1minusd0PSectorI}
1 - d_0^{\mathrm{P}} = 2\int_{\mu - \frac{U}{2}  d_0^{\mathrm{P}}}^4 N_0(\epsilon) d\epsilon,
\end{align}
and, recalling (\ref{PHartreeFockFunction}), the P free energy can be written as
\begin{align}\label{FPSectorI}
\mathcal{F}_{P}(U, \mu) 
= -  \frac{U}{4} (1-d_0^{\mathrm{P}})^2 + \frac{U}{4} - \mu - 2\int_{\mu - \frac{U}{2} d_0^{\mathrm{P}}}^4  N_0(\epsilon)\bigg(\epsilon  - \bigg(\mu - \frac{U}{2} d_0^{\mathrm{P}}\bigg)\bigg) d\epsilon.
\end{align}

Assume now that $(U, \mu) \in \I_{U_0, \alpha, \delta}$. The condition $\mu - \frac{U}{2} d_0^{\mathrm{P}} \in (-4,4)$ then implies that
$$\delta -4 < \tfrac{U}{2}(1 - d_0^{\mathrm{P}}) < 4 + \alpha;$$
in particular, $1 - d_0^{\mathrm{P}}(U, \mu) = O(1/U)$ as $(U, \mu) \in \I_{U_0, \alpha, \delta}$ tends to infinity.
Since $1 - d_0^{\mathrm{P}}(U, \mu) > 0$ by Lemma \ref{Plemma2}, it follows from (\ref{1minusd0PSectorI}) that $4 - (\mu - \frac{U}{2} d_0^{\mathrm{P}})$ is strictly positive and $O(1/U)$ as $(U, \mu) \to \infty$ in $\I_{U_0, \alpha, \delta}$. 
Therefore we can use the expansion (\ref{N0near4}) of $N_0$ as $\epsilon \uparrow 4$ in (\ref{1minusd0PSectorI}) to find (note that $4 - (\mu - \frac{U}{2} d_0^{\mathrm{P}})=\hat{\mu} - \frac{U}{2}(1-d_0^{\mathrm{P}})$)
\begin{align}\nonumber
1 - d_0^{\mathrm{P}} & = 2\int_{\mu - \frac{U}{2} d_0^{\mathrm{P}}}^4 \bigg(N_0^{(0)} + N_0^{(1)} (4 - \epsilon) + N_0^{(2)} (4 - \epsilon)^2 + O((4-\epsilon)^{3})\bigg) d\epsilon
	\\ \label{oneminusd0PsectorI}
& = \frac{\hat{\mu} - \frac{U}{2}(1-d_0^{\mathrm{P}})}{2\pi} 
+ \frac{\big(\hat{\mu} - \frac{U}{2}(1-d_0^{\mathrm{P}})\big)^2}{32\pi} 
+ \frac{5\big(\hat{\mu} - \frac{U}{2}(1-d_0^{\mathrm{P}})\big)^3}{1536\pi} 
+ O\bigg(\frac{1}{U^{4}}\bigg).
\end{align}
Solving for $1-d_0^{\mathrm{P}}$, we obtain
\begin{align}\label{1minusd0Pexpansion}
1-d_0^{\mathrm{P}} = \frac{2 \hat{\mu}}{U}
-\frac{8 \pi \hat{\mu}}{U^2} + \frac{2 \pi^2\hat{\mu}(16+\hat{\mu})}{U^3} +  O\bigg(\frac{1}{U^4}\bigg)
\end{align}
as $(U, \mu) \in \I_{U_0, \alpha, \delta}$ tends to infinity.

Similarly, substituting the expansion (\ref{N0near4}) of $N_0$ into (\ref{FPSectorI}), we find
\begin{align*}
\mathcal{F}_{P}(U, \mu) 
= & -  \frac{U}{4} (1-d_0^{\mathrm{P}})^2 + \frac{U}{4}  - \mu 
- \frac{(4 - (\mu - \frac{U}{2} d_0^{\mathrm{P}}))^2}{4\pi} 
- \frac{(4 - (\mu - \frac{U}{2} d_0^{\mathrm{P}}))^3}{96\pi} 
+ O\bigg(\frac{1}{U^4}\bigg)
\end{align*}
as $(U, \mu) \in \I_{U_0, \alpha, \delta}$ tends to infinity.
Employing (\ref{1minusd0Pexpansion}) to eliminate $d_0^{\mathrm{P}}$ from this expression, we arrive at (\ref{FPexpansionSectorIhat}).
\end{proof}

Choosing $\hat{\mu} = 4+\delta$ in Lemma \ref{PIlemma}, we see that $\mathcal{F}_{\mathrm{P}}(U, \frac{U}{2} - \delta) > -U/4$ for all large enough $U > 0$. Moreover, by Lemma \ref{Plemma}, $\mathcal{F}_{\mathrm{P}}(U, \mu)$ is a decreasing function of $\mu \geq 0$.
Thus, Lemma \ref{PIlemma} immediately yields the following lower bound for $\mathcal{F}_{\mathrm{P}}$ in all of Sector I.

\begin{lemma}[Lower bound for $\mathcal{F}_{\mathrm{P}}$ in Sector I]\label{FPboundIlemma}
If $U_0 > 0$ is large enough, then
\begin{align}
\mathcal{F}_{\mathrm{P}}(U, \mu) > -\frac{U}{4} \qquad \text{for all $(U, \mu) \in \I_{U_0, \delta}$}.
\end{align}
\end{lemma}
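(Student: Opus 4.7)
The plan is to combine the two ingredients that the paragraph preceding the lemma has already flagged: the sharp asymptotic expansion of $\mathcal{F}_{\mathrm{P}}$ at the right edge of Sector I (Lemma \ref{PIlemma}), and the qualitative monotonicity of $\mathcal{F}_{\mathrm{P}}$ in $\mu$ (Lemma \ref{Plemma}). Since the right edge $\mu = U/2-\delta$ minimizes $\mathcal{F}_{\mathrm{P}}$ over the $\mu$-range of Sector I, it suffices to verify the claimed bound at that single curve, which is immediate from the expansion.

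First, I would fix any $\alpha > \delta$ (e.g.\ $\alpha = \delta+1$) so that the point $\mu = \tfrac{U}{2}-\delta$ lies in the subsector $\I_{U_0,\alpha,\delta}$ to which Lemma \ref{PIlemma} applies. Taking $\hat{\mu} = 4 + \delta$ in (\ref{FPexpansionSectorIhat}) yields
\begin{equation*}
\mathcal{F}_{\mathrm{P}}\Bigl(U,\tfrac{U}{2}-\delta\Bigr) = -\tfrac{U}{4} + \delta - \tfrac{(4+\delta)^2}{U} + O\bigl(U^{-2}\bigr)
\qquad \text{as } U \to +\infty.
\end{equation*}
Because $\delta > 0$ is fixed and the remainder decays like $1/U$, there exists $U_1 > 0$ such that $\mathcal{F}_{\mathrm{P}}(U,\tfrac{U}{2}-\delta) > -\tfrac{U}{4}$ for every $U \geq U_1$.

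Second, I would invoke Lemma \ref{Plemma}, which states that $\mu \mapsto \mathcal{F}_{\mathrm{P}}(U,\mu)$ is an even concave $C^1$-function of $\mu \in \R$ attaining its maximum at $\mu = 0$. Evenness and concavity together force this function to be non-increasing on $[0,+\infty)$. Consequently, for any $(U,\mu) \in \I_{U_0,\delta}$ with $U_0 \geq U_1$, one has $\mu \in [0,\tfrac{U}{2}-\delta]$ and therefore
\begin{equation*}
\mathcal{F}_{\mathrm{P}}(U,\mu) \geq \mathcal{F}_{\mathrm{P}}\Bigl(U,\tfrac{U}{2}-\delta\Bigr) > -\tfrac{U}{4},
\end{equation*}
which is the claim.

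There is really no genuine obstacle: the lemma is essentially a corollary of the two earlier results, and the only bookkeeping point is to select $\alpha$ strictly larger than $\delta$ so that Lemma \ref{PIlemma} can be invoked precisely at the right boundary of Sector I. The reason the argument works is that the $+\delta$ term produced by $\hat{\mu} = 4+\delta$ in the constant-order part of the expansion is strictly positive, overwhelming the $O(1/U)$ correction once $U$ is large.
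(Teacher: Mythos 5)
Your proof is correct and follows essentially the same approach as the paper: use Lemma \ref{PIlemma} with $\hat{\mu}=4+\delta$ to verify the bound at the right edge $\mu=\tfrac{U}{2}-\delta$ of Sector I, then propagate it to all of Sector I by the monotonicity of $\mathcal{F}_{\mathrm{P}}$ in $\mu$ on $[0,+\infty)$ coming from Lemma \ref{Plemma}. The only cosmetic difference is that you deduce monotonicity from evenness plus concavity, whereas the paper simply cites the decreasing property (which also follows from $\partial_\mu\mathcal{F}_{\mathrm{P}}=-d_0^{\mathrm{P}}\le 0$ together with Lemma \ref{Plemma2}); both are valid.
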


\subsection{F free energy in Sector I}
We now turn to the F free energy in Sector I.

\begin{lemma}[F mean-field solution in Sector I]\label{FsolutionIlemma}
Let $U_0 > 16$. If $(U, \mu) \in \I_{U_0, \delta}$, then there is a unique solution $(d_0^{\F}, m_0^{\F})$ of the F mean-field equations (\ref{Fmeanfieldeqs}) satisfying $m_0^{\F} > 0$. This solution is given by
\begin{align}\label{d0Fm0FSectorI}
(d_0^{\F}, m_0^{\F}) = \bigg(1 - \int_{\R}  N_0(\epsilon) \theta\Big(\frac{U}{2}  - \mu + \epsilon \Big) d\epsilon, \int_{\R}  N_0(\epsilon) \theta\Big(\frac{U}{2}  - \mu + \epsilon \Big) d\epsilon\bigg)
\end{align}
and satisfies
\begin{align}\label{Fcase2SectorI}
4 + \frac{U}{2}(d_0^{\F} - m_0^{\F}) \leq \mu.
\end{align}
\end{lemma}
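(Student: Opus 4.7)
The plan is to re-parametrize the F mean-field equations (\ref{Fmeanfieldeqs}) via $\alpha := d_0^{\F} - m_0^{\F}$ and $\beta := d_0^{\F} + m_0^{\F}$, so that $m_0^{\F} > 0$ becomes $\alpha < \beta$, and the two equations take the symmetric form $\beta = H(\alpha)$, $\alpha = H(\beta)$, where
\[
H(x) := 1 - 2\int_{\R} N_0(\epsilon)\,\theta\!\left(\tfrac{Ux}{2} - \mu + \epsilon\right) d\epsilon.
\]
Since $N_0$ is supported on $[-4,4]$, the map $H$ equals $1$ on $(-\infty, x_-]$ and $-1$ on $[x_+,+\infty)$, where $x_\pm := 2(\mu \pm 4)/U$, and is continuous and strictly decreasing from $1$ down to $-1$ on $[x_-, x_+]$. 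Thus F solutions with $m_0^{\F}>0$ correspond bijectively to $2$-periodic orbits $\{\alpha,\beta\}$ of $H$ with $\alpha < \beta$.

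For existence, I would verify directly that the pair (\ref{d0Fm0FSectorI}) corresponds to $\beta = 1$ and $\alpha = 1 - 2 m_0^{\F}$. The equation $\alpha = H(\beta) = H(1)$ is then simply the defining identity of $m_0^{\F}$, while $\beta = H(\alpha) = 1$ is equivalent to $\int_{\R} N_0(\epsilon)\,\theta(U\alpha/2 - \mu + \epsilon)\,d\epsilon = 0$, i.e., to (\ref{Fcase2SectorI}). To verify (\ref{Fcase2SectorI}), I would split into two cases. If $\mu \leq U/2 - 4$, then $m_0^{\F}=1$ and (\ref{Fcase2SectorI}) reduces to $U \geq 8 - 2\mu$, trivially true. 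If $\mu > U/2 - 4$, then $\mu - U/2 \in (-4,0)$ gives $m_0^{\F} = \int_{\mu - U/2}^{4} N_0\,d\epsilon > \int_{0}^{4}N_0\,d\epsilon = 1/2$; moreover, $U > 16$ forces $\mu > U/2 - 4 > 4$, so $(4-\mu)/U < 0$ and (\ref{Fcase2SectorI}), equivalent to $m_0^{\F} \geq 1/2 + (4-\mu)/U$, follows from $m_0^{\F} > 1/2$. Positivity $m_0^{\F}>0$ is then immediate.

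For uniqueness, suppose by contradiction that $(\alpha',\beta')$ is a second $2$-periodic orbit of $H$ with $\alpha' < \beta'$ and $\beta' \neq 1$. Since $\beta' = H(\alpha') < 1$ we have $\alpha' > x_-$, and analogously $\beta' < x_+$: otherwise $\alpha' = H(\beta') = -1$, which combined with $\alpha' > x_-$ yields $-1 > 2(\mu-4)/U$, impossible for $U > 16$ and $\mu \geq 0$. Hence $\alpha',\beta' \in (x_-, x_+)$. Subtracting the two mean-field equations and changing variables via $y = \mu - Ux/2$ yields
\[
\beta' - \alpha' = 2\int_{\mu - U\beta'/2}^{\mu - U\alpha'/2} N_0(y)\,dy \;\geq\; \frac{U}{4\pi}\,(\beta' - \alpha'),
\]
since the integration interval lies in $(-4, 4)$ with length $U(\beta' - \alpha')/2$ and $N_0 \geq 1/(4\pi)$ on $(-4, 4)$ (cf.\ Figure \ref{N0fig}). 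Dividing by $\beta' - \alpha' > 0$ forces $U \leq 4\pi$, contradicting $U > 16$.

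The main obstacle is the uniqueness argument: existence and the verification of (\ref{Fcase2SectorI}) are direct case analyses, whereas uniqueness requires ruling out ``interior'' $2$-periodic orbits of $H$ via the global lower bound $N_0 \geq 1/(4\pi)$. The hypothesis $U_0 > 16$ enters twice, furnishing both the inequality $\mu > 4$ in the second existence case and the bound $U > 4\pi$ needed for the final contradiction.
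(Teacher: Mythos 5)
Your proposal is correct and takes essentially the same route as the paper. Your re-parametrization via $\alpha = d_0^{\F}-m_0^{\F}$, $\beta = d_0^{\F}+m_0^{\F}$ and the map $H$ is a cosmetic repackaging of \eqref{Fmeanfieldeqs}; your uniqueness argument (ruling out interior orbits via $N_0 \geq \tfrac{1}{4\pi}$ and $U > 16 > 4\pi$) is precisely the paper's Case~1 analysis, and your existence/verification of \eqref{Fcase2SectorI} by splitting on $\mu \lessgtr U/2 - 4$ is the paper's Case~2.
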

\begin{proof}
Suppose $(U, \mu) \in \I_{U_0, \delta}$ with $U_0 > 16$. 
Any solution $(d_0^{\F}, m_0^{\F})$ of the F mean-field equations (\ref{Fmeanfieldeqs}) must satisfy either $0 \leq \mu < 4 + \frac{U}{2}(d_0^{\F} - m_0^{\F})$ or $4 + \frac{U}{2}(d_0^{\F} - m_0^{\F}) \leq \mu$. We consider the two cases in turn. 

Case 1. $0 \leq \mu < 4 + \frac{U}{2}(d_0^{\F} - m_0^{\F})$.

In this case, $\mu - \frac{U}{2}(d_0^{\F} \pm m_0^{\F}) < 4$, so the equations (\ref{Fmeanfieldeqs}) become
\begin{align*}
\begin{cases}
d_0^{\F} + m_0^{\F} = 1 - 2\int_{\max(-4,\mu - \frac{U(d_0^{\F}-m_0^{\F})}{2})}^4 N_0(\epsilon) d\epsilon,
 	\\
d_0^{\F} - m_0^{\F} = 1 - 2\int_{\max(-4,\mu - \frac{U(d_0^{\F}+m_0^{\F})}{2})}^4 N_0(\epsilon) d\epsilon.
\end{cases}
\end{align*}
If $\mu - \frac{U(d_0^{\F}+m_0^{\F})}{2} \leq -4$, then the second equation is 
$d_0^{\F} - m_0^{\F} = -1$, so the assumption of Case 1, $0 \leq \mu < 4 + \frac{U}{2}(d_0^{\F} - m_0^{\F}) = 4 - \frac{U}{2}$, fails since $U \geq U_0 > 8$. Thus, $\mu - \frac{U(d_0^{\F}+m_0^{\F})}{2} > -4$. Consequently, we also have $\mu - \frac{U(d_0^{\F}-m_0^{\F})}{2} > -4$, which means that the must have 
\begin{align}\label{FmeanfieldeqsSectorI}
\begin{cases}
d_0^{\F} + m_0^{\F} = 1 - 2\int_{\mu - \frac{U(d_0^{\F}-m_0^{\F})}{2}}^4 N_0(\epsilon) d\epsilon,
 	\\
d_0^{\F} - m_0^{\F} = 1 - 2\int_{\mu - \frac{U(d_0^{\F}+m_0^{\F})}{2}}^4 N_0(\epsilon) d\epsilon.
\end{cases}
\end{align}
Subtracting the second equation in (\ref{FmeanfieldeqsSectorI}) from the first, we obtain
\begin{align}\label{2DeltaFUint}
m_0^{\F} =  \int_{\mu - \frac{U(d_0^{\F}+m_0^{\F})}{2}}^{\mu - \frac{U(d_0^{\F}-m_0^{\F})}{2}} N_0(\epsilon) d\epsilon.
\end{align}
Since $\mu - \frac{U(d_0^{\F}\pm m_0^{\F})}{2} \in [-4,4]$ and $N_0(\epsilon) \geq \frac{1}{4\pi}$ for $\epsilon \in [-4,4]$, the right-hand side of (\ref{2DeltaFUint}) is $\geq \frac{U m_0^{\F}}{4\pi}$. But then (\ref{2DeltaFUint}) implies that either $m_0^{\F} = 0$ or $1 \geq \frac{U}{4\pi}$. The latter possibility is ruled out by our assumption that $U \geq U_0 \geq 16$, so there is no solution in Case 1 with $m_0^{\F} > 0$.

Case 2. $4 + \frac{U}{2}(d_0^{\F} - m_0^{\F}) \leq \mu$.

In this case, $\frac{U}{2}(d_0^{\F} - m_0^{\F}) - \mu \leq -4$, so the first equation in (\ref{Fmeanfieldeqs}) reduces to
$$d_0^{\F} + m_0^{\F} = 1,$$
and then the second equation in (\ref{Fmeanfieldeqs}) becomes
$$m_0^{\F}
 = \int_{\R}  N_0(\epsilon) \theta\Big(\frac{U}{2}  - \mu + \epsilon \Big) d\epsilon.$$
It follows that if there is a solution of the F mean-field equations in Case 2, then this solution must be given by (\ref{d0Fm0FSectorI}). 
To see that (\ref{d0Fm0FSectorI}) indeed is a solution, we must verify that it fulfills the condition $4 + \frac{U}{2}(d_0^{\F} - m_0^{\F}) \leq \mu$. 
If $d_0^{\F}$ and $m_0^{\F}$ are given by (\ref{d0Fm0FSectorI}), then this condition reduces to
\begin{align}\label{condFcase2}
\frac{U}{2} - U\int_{\R} N_0(\epsilon) \theta\Big(\frac{U}{2}  - \mu + \epsilon \Big) d\epsilon \leq \mu - 4.
\end{align}
For $(U, \mu) \in \I_{U_0, \delta}$, we have either $\mu \in [0, \tfrac{U}{2} - 4]$ or $\mu \in [\tfrac{U}{2} - 4, \tfrac{U}{2} - \delta]$.
If $\mu \in [0, \tfrac{U}{2} - 4]$, then the integral in (\ref{condFcase2}) equals $1$, so the condition holds because $U > 8$ and $\mu \geq 0$. 
If $\mu \in [\tfrac{U}{2} - 4, \tfrac{U}{2} - \delta]$, then the left-hand side of (\ref{condFcase2}) obeys
\begin{align*}
\frac{U}{2} - U\int_{\R}  N_0(\epsilon) \theta\Big(\frac{U}{2}  - \mu + \epsilon \Big) d\epsilon
=
\frac{U}{2} - U\int_{\mu - \frac{U}{2}}^4  N_0(\epsilon)  d\epsilon
\leq \frac{U}{2} - U\int_{-\delta}^4  N_0(\epsilon) d\epsilon
\leq 0,
\end{align*}
so the condition again holds because $0 \leq \mu - 4$ for $U \geq 16$. 
\end{proof}

\begin{lemma}[Expression for $\mathcal{F}_{\F}$ in Sector \I]\label{FFSectorIlemma}
Let $U_0 > 4\pi$ and suppose that $(U, \mu) \in \I_{U_0, \delta}$. If $(d_0^{\F}, m_0^{\F}) = (d_0^{\F}(U, \mu), m_0^{\F}(U, \mu))$ is the solution of Lemma \ref{FsolutionIlemma} of the F mean-field equations, then the following hold:
\begin{enumerate}[$(i)$]
\item If $0 \leq \mu \leq \frac{U}{2} - 4$, then $(d_0^{\F}, m_0^{\F}) = (0, 1)$ and the F free energy satisfies
$$\mathcal{F}_{\F}(U, \mu) = - \frac{U}{4}.$$

\item If $\frac{U}{2} - 4 \leq \mu \leq \frac{U}{2} - \delta$, then the F free energy satisfies
\begin{align}\label{GFexplicitI}
\mathcal{F}_{\F}(U, \mu) = \frac{U}{4} - \mu - \int_{\mu - \frac{U}{2}}^4  N_0(\epsilon) \bigg(\frac{U}{2} + \epsilon - \mu\bigg) d\epsilon.
\end{align}

\end{enumerate}

\end{lemma}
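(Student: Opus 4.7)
The plan is to substitute the unique F mean-field solution identified in Lemma~\ref{FsolutionIlemma} into the definition (\ref{FHartreeFockFunction}) of $\mathcal{G}_{\F}$, exploit two algebraic simplifications to reduce to a single integral, and then split according to whether the argument of its Heaviside function changes sign on the support of $N_0$. First, I would observe that the uniqueness of the F mean-field solution with $m_0^{\F} > 0$ in fact holds under the weakened hypothesis $U_0 > 4\pi$: Case~1 in the proof of Lemma~\ref{FsolutionIlemma} was excluded precisely because the identity (\ref{2DeltaFUint}) would force $1 \geq U/(4\pi)$, while Case~2 produces the explicit solution (\ref{d0Fm0FSectorI}) together with the inequality (\ref{Fcase2SectorI}). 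Since this is the only solution with $m_0^{\F}>0$, definition (\ref{freeenergiesF}) reduces to $\mathcal{F}_{\F}(U,\mu) = \mathcal{G}_{\F}(d_0^{\F}, m_0^{\F}, U, \mu)$.

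Two facts visible directly from (\ref{d0Fm0FSectorI}) and (\ref{Fcase2SectorI}) then do most of the work. On the one hand, the solution satisfies $d_0^{\F} + m_0^{\F} = 1$, which gives the algebraic simplification
\begin{equation*}
\tfrac{U}{4}\big((m_0^{\F})^2 - (d_0^{\F})^2\big) + \tfrac{U}{2}\, d_0^{\F} = \tfrac{U}{4}(m_0^{\F}+d_0^{\F})(m_0^{\F}-d_0^{\F}) + \tfrac{U}{2}\, d_0^{\F} = \tfrac{U}{4}.
\end{equation*}
On the other hand, the inequality (\ref{Fcase2SectorI}) yields $\tfrac{U}{2}(d_0^{\F} - m_0^{\F}) - \mu + \epsilon \leq \epsilon - 4 \leq 0$ for every $\epsilon$ in the support $[-4,4]$ of $N_0$, so the first Heaviside integral in (\ref{FHartreeFockFunction}) vanishes identically. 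Together these reduce the F free energy to
\begin{equation*}
\mathcal{F}_{\F}(U, \mu) = \tfrac{U}{4} - \mu - \int_{\R} N_0(\epsilon)\Big(\tfrac{U}{2} - \mu + \epsilon\Big)\,\theta\Big(\tfrac{U}{2} - \mu + \epsilon\Big)\, d\epsilon.
\end{equation*}

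Finally I would split according to the sign of $\tfrac{U}{2} - \mu - 4$. In case $(i)$, the assumption $\mu \leq \tfrac{U}{2}-4$ makes the argument of $\theta$ nonnegative throughout $[-4,4]$; then (\ref{d0Fm0FSectorI}) gives $m_0^{\F} = \int_{\R} N_0(\epsilon)\,d\epsilon = 1$ and $d_0^{\F}=0$, while the remaining integral evaluates to $(\tfrac{U}{2}-\mu)\int N_0 + \int \epsilon\, N_0(\epsilon)\,d\epsilon = \tfrac{U}{2}-\mu$ using $\int N_0 = 1$ and the evenness of $N_0$ (established in Appendix~\ref{N0app}); this yields $\mathcal{F}_{\F}(U,\mu) = -U/4$. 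In case $(ii)$, the Heaviside function restricts the integration range to $\epsilon \in [\mu - \tfrac{U}{2},\, 4]$, directly giving (\ref{GFexplicitI}). I do not foresee any substantive obstacle; the argument is an explicit substitution followed by an elementary case analysis, with the only delicate point being the use of (\ref{Fcase2SectorI}) to annihilate one of the two Heaviside integrals, and the verification that the hypothesis $U_0 > 4\pi$ suffices in place of the $U_0 > 16$ used in Lemma~\ref{FsolutionIlemma}.
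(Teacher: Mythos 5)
Your proof takes essentially the same route as the paper, with one harmless reorganization: you first derive the unified formula
\begin{equation*}
\mathcal{F}_{\F}(U, \mu) = \tfrac{U}{4} - \mu - \int_{\R} N_0(\epsilon)\Big(\tfrac{U}{2} - \mu + \epsilon\Big)\,\theta\Big(\tfrac{U}{2} - \mu + \epsilon\Big)\, d\epsilon
\end{equation*}
valid throughout Sector~I (using $d_0^{\F}+m_0^{\F}=1$ and (\ref{Fcase2SectorI}) to collapse the first Heaviside integral), and only then split on the sign of $\tfrac{U}{2}-\mu-4$, whereas the paper treats case $(i)$ by first computing $(d_0^{\F},m_0^{\F})=(0,1)$ and substituting directly and reserves the (\ref{Fcase2SectorI}) argument for case $(ii)$. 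The two algebraic observations you single out — $\tfrac{U}{4}((m_0^{\F})^2-(d_0^{\F})^2)+\tfrac{U}{2}d_0^{\F}=\tfrac{U}{4}$ and the annihilation of the first Heaviside term — are the same two used in the paper's case $(ii)$, and your evaluations in both cases are correct.

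The one substantive point you raise, that $U_0 > 4\pi$ suffices in place of the $U_0 > 16$ hypothesis of Lemma~\ref{FsolutionIlemma}, is true but your justification has a gap. You argue that Case~1 only needs $U > 4\pi$ (correct: the exclusion comes from $1 \geq U/(4\pi)$) and that Case~2 then ``produces the explicit solution together with the inequality (\ref{Fcase2SectorI}).'' But the paper's verification that (\ref{d0Fm0FSectorI}) actually satisfies the Case~2 condition (\ref{condFcase2}) when $\mu \in [\tfrac{U}{2}-4, \tfrac{U}{2}-\delta]$ proceeds by showing the left side is $\leq 0$ and then invoking $\mu - 4 \geq 0$; that last bound follows from $\mu \geq \tfrac{U}{2}-4$ only when $U \geq 16$. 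For $U \in (4\pi, 16)$ the paper's argument breaks and you need a replacement, for instance: writing $x = \tfrac{U}{2}-\mu \in [\delta, 4]$, (\ref{condFcase2}) becomes $U\int_{-x}^4 N_0\,d\epsilon \geq 4 + x$, whose derivative in $x$ is $UN_0(x) - 1 \geq \tfrac{U}{4\pi} - 1 > 0$ when $U > 4\pi$, and whose value at $x=0^+$ is $\tfrac{U}{2}-4 > 0$ when $U > 8$. This is a minor repair, but without it your claim that Case~2 goes through for $U_0 > 4\pi$ is unsupported. (For what it's worth, this same mismatch between $U_0 > 16$ and $U_0 > 4\pi$ is present in the paper itself: its proof of Lemma~\ref{FFSectorIlemma} cites Lemma~\ref{FsolutionIlemma} without comment.)
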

\begin{proof}
Let $(U, \mu) \in \I_{U_0, \delta}$ where $U_0 > 4\pi$. 
By Lemma \ref{FsolutionIlemma}, $(d_0^{\F}, m_0^{\F})$ is the unique solution of the F mean-field equations; hence 
it follows from the definition (\ref{freeenergiesF}) of $\mathcal{F}_{\F}$ that
\begin{align}\label{FFGFI}
\mathcal{F}_{\F}(U, \mu) = \mathcal{G}_{\F}(d_0^{\F}, m_0^{\F}, U, \mu)
\end{align}
for all $(U, \mu) \in \I_{U_0, \delta}$.

If $0 \leq \mu \leq \frac{U}{2} - 4$, then $\frac{U}{2}  - \mu + \epsilon \geq 0$ for all $\epsilon \in [-4,4]$, so it follows immediately from (\ref{d0Fm0FSectorI}) that $(d_0^{\F}, m_0^{\F}) = (0, 1)$, and substituting this into the definition (\ref{FHartreeFockFunction}) of $\mathcal{G}_{\F}$, we infer that
\begin{align*}
\mathcal{G}_{\F}(d_0^{\F}, m_0^{\F}, U, \mu) = &\; \frac{U}{4}  - \mu
- \int_{\R} N_0(\epsilon) \bigg(\frac{U}{2}  - \mu + \epsilon \bigg) d\epsilon
=   - \frac{U}{4}.
\end{align*}
In combination with (\ref{FFGFI}), this proves $(i)$.

Suppose that $\frac{U}{2} - 4 \leq \mu \leq \frac{U}{2} - \delta$. In this case, 
$\frac{U(d_0^{\F} - m_0^{\F})}{2}  - \mu + \epsilon \leq 0$ for all $\epsilon \in [-4,4]$ by (\ref{Fcase2SectorI}), so substitution of (\ref{d0Fm0FSectorI}) into (\ref{FHartreeFockFunction}) using that $d_0^{\F} + m_0^{\F} = 1$ gives
\begin{align*}
\mathcal{G}_{\F}(d_0^{\F}, m_0^{\F}, U, \mu) 
= &\; \frac{U}{4} - \mu - \int_{\R}  N_0(\epsilon) \bigg(\frac{U}{2}  - \mu + \epsilon \bigg) \theta\bigg(\frac{U}{2}  - \mu + \epsilon \bigg) d\epsilon.
\end{align*}
Using (\ref{FFGFI}) and the fact that $\mu - \frac{U}{2} \in [-4, -\delta]$, assertion $(ii)$ follows.
\end{proof}

\subsection{AF free energy in Sector I}\label{AFSectorIsec}
The proofs of the next few lemmas will use the short-hand notation
\begin{align}\label{Deltadef}
\Delta := \frac{U}{2}m_1^{\AF}
\end{align}
and
\begin{align}\label{Xpmdef}
X_\pm := \frac{U}{2} d_0^{\AF} - \mu \pm \sqrt{\Delta^2 + \epsilon^2}.
\end{align}
Using this notation, we can write the AF mean-field equations (\ref{AFmeanfieldeqs}) as
\begin{subequations}\label{AFeqs}
\begin{align}\label{AFeq1}
& 1 - d_0^{\AF} = 2\int_{0}^4 N_0(\epsilon) \Big(\theta(X_+) + \theta(X_-) \Big) d\epsilon,
	\\ \label{AFeq2}
& \frac{1}{U} = \int_{0}^4 N_0(\epsilon) \Big(\theta(X_+)
- \theta(X_-)\Big) \frac{1}{\sqrt{\Delta^2 + \epsilon^2}} d\epsilon.
\end{align}
\end{subequations}
Since $X_+ \geq X_-$, we have $\theta(X_+) - \theta(X_-) \geq 0$. In particular, the two sets
\begin{align}\label{Aplusminusdef}
A_\pm = A_\pm(d_0^{\AF}, m_1^{\AF},U,\mu) := \{\epsilon \in (0,4) \,|\, X_\pm \geq 0\}
\end{align}
always satisfy $A_- \subseteq A_+$, and if $A_-$ is nonempty, then $A_+ = (0,4)$.

Our first lemma shows that there is a unique AF solution with $d_0^{\AF} = 0$ in Sector I if $U$ is large enough, and determines the large $U$ behavior of this solution. 

\begin{lemma}[AF mean-field solution with $d_0^{\AF} = 0$ in Sector I]\label{AFsolutionIlemma}
There is a $U_0 > 0$ such that, if $(U, \mu) \in \I_{U_0, \delta}$, then there is a unique solution $(d_0^{\AF}, m_1^{\AF})$ of the AF mean-field equations (\ref{AFmeanfieldeqs}) satisfying $d_0^{\AF} = 0$ and $m_1^{\AF} > 0$. Moreover, this solution is such that $m_1^{\AF} = m_1^{\AF}(U)$ is independent of $\mu$, $m_1^{\AF}$ depends smoothly on $U \in [U_0, +\infty)$, and
\begin{align}\label{mAFexpansionI}
m_1^{\AF} = 1 - \frac{8}{U^2} + \frac{88}{U^4} + O\bigg(\frac{1}{U^6}\bigg) \qquad \text{as $U \to +\infty$}.
\end{align}
\end{lemma}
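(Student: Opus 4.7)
The plan is to exploit the symmetry $d_{0}^{\AF}=0$ to reduce the two AF mean-field equations (\ref{AFeqs}) to a single scalar equation for $\Delta:=\tfrac{U}{2}m_{1}^{\AF}$ that is independent of $\mu$, and then to analyze that scalar equation via monotonicity and a large-$\Delta$ expansion.

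First, I would substitute $d_{0}^{\AF}=0$ into (\ref{Xpmdef}), giving $X_{\pm}=-\mu\pm\sqrt{\Delta^{2}+\epsilon^{2}}$. The key observation is that if $\Delta>\mu$ then $X_{+}>0$ and $X_{-}<0$ for every $\epsilon\in(0,4)$, so $\theta(X_{+})=1$ and $\theta(X_{-})=0$ throughout. Under this hypothesis, (\ref{AFeq1}) reduces to the identity $1=2\int_{0}^{4}N_{0}(\epsilon)\,d\epsilon$ (which holds because $N_{0}$ is even with $\int_{\R}N_{0}=1$), while (\ref{AFeq2}) reduces to the scalar equation
\begin{align}\label{gDeltaPlan}
\frac{1}{U}=g(\Delta),\qquad g(\Delta):=\int_{0}^{4}\frac{N_{0}(\epsilon)}{\sqrt{\Delta^{2}+\epsilon^{2}}}\,d\epsilon,
\end{align}
which is manifestly $\mu$-independent.

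Next I would analyze $g:(0,\infty)\to(0,\infty)$. Using the properties of $N_{0}$ established in Appendix~\ref{N0app}, $g$ is smooth with $g'<0$, $g(\Delta)\to+\infty$ as $\Delta\downarrow 0$ (owing to the logarithmic singularity of $N_{0}$ at the origin), and $g(\Delta)\to 0$ as $\Delta\to+\infty$. Hence for every $U>0$ there is a unique $\Delta=\Delta(U)>0$ solving (\ref{gDeltaPlan}), smooth in $U$ by the implicit function theorem, and the candidate solution is $m_{1}^{\AF}(U):=2\Delta(U)/U$. The crude bound $g(\Delta)\ge\tfrac{1}{2}(\Delta^{2}+16)^{-1/2}$ (from $\int_{0}^{4}N_{0}=\tfrac{1}{2}$ and $\epsilon\le 4$) turns $g(\Delta)=1/U$ into $\Delta(U)\ge\sqrt{U^{2}/4-16}$, hence $\Delta(U)\ge U/2-C/U$ for $U$ large; for $U_{0}$ large enough this yields $\Delta(U)>\mu$ uniformly on $\I_{U_{0},\delta}$, so the hypothesis of the reduction is satisfied and existence follows. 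For uniqueness, I would note that if $(0,m^{*})$ with $m^{*}>0$ is any AF mean-field solution at $(U,\mu)\in\I_{U_{0},\delta}$, then the alternative $\mu>\Delta^{*}:=Um^{*}/2$ would force (\ref{AFeq1}) to read $1=2\int_{\epsilon^{*}}^{4}N_{0}$ with $\epsilon^{*}=\sqrt{\mu^{2}-\Delta^{*2}}>0$ (or the even stronger $1=0$ when $\epsilon^{*}>4$), both of which contradict $\int_{0}^{4}N_{0}=\tfrac{1}{2}$; hence $\mu\le\Delta^{*}$ necessarily, and strict monotonicity of $g$ forces $m^{*}=m_{1}^{\AF}(U)$.

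Finally, for the expansion (\ref{mAFexpansionI}), I would expand $(\Delta^{2}+\epsilon^{2})^{-1/2}$ in powers of $\epsilon^{2}/\Delta^{2}$ and integrate term by term, using the moments $I_{2k}:=\int_{0}^{4}\epsilon^{2k}N_{0}(\epsilon)\,d\epsilon$, which are easily computed from (\ref{2Ddensityofstates}) (one finds $I_{0}=\tfrac{1}{2}$, $I_{2}=2$, $I_{4}=18$). This produces $g(\Delta)=\tfrac{1}{2\Delta}-\tfrac{1}{\Delta^{3}}+\tfrac{27}{4\Delta^{5}}+O(\Delta^{-7})$. Inserting the ansatz $\Delta=\tfrac{U}{2}+aU^{-1}+bU^{-3}+O(U^{-5})$ into $g(\Delta)=1/U$ and matching powers of $U$ yields $a=-4$ and $b=44$, which is equivalent to (\ref{mAFexpansionI}). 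The main technical obstacle I foresee is the sign analysis in the consistency/uniqueness step, which hinges on the a priori lower bound $\Delta(U)\ge U/2-O(1/U)$; the remaining asymptotic bookkeeping is then purely routine Taylor expansion.
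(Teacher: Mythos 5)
Your proposal is correct and follows essentially the same path as the paper's proof: reduce to the scalar equation $\tfrac{2}{U}=\int_{\R}N_0(\epsilon)(\Delta^2+\epsilon^2)^{-1/2}\,d\epsilon$ under the constraint $\Delta\ge\mu$, use monotonicity in $\Delta$ for existence and uniqueness, and expand in inverse powers of $\Delta$ via the moments of $N_0$ to get (\ref{mAFexpansionI}). The one small (and slightly cleaner) variation is that you verify $\Delta(U)>\mu$ on Sector I with the elementary a priori bound $g(\Delta)\ge\tfrac12(\Delta^2+16)^{-1/2}$, whereas the paper reads this directly off the asymptotic expansion $\Delta(U)=U/2-4/U+O(U^{-3})$.
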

\begin{proof}
Assume that $d_0^{\AF} = 0$. Then the AF mean-field equations (\ref{AFeqs}) reduce to
\begin{align}\label{AFeqsd0zero}
\begin{cases}
0 = 1 
- \int_{\R} N_0(\epsilon) \big\{\theta\big( - \mu + \sqrt{\Delta^2 + \epsilon^2}\big)
+ \theta\big( - \mu - \sqrt{\Delta^2 + \epsilon^2}\big)\big\} d\epsilon,
	\\
1
 = \frac{U}{2} \int_{\R}  N_0(\epsilon) \frac{1}{\sqrt{\Delta^2 + \epsilon^2}} \big\{\theta\big( - \mu + \sqrt{\Delta^2 + \epsilon^2}\big) - \theta\big( - \mu - \sqrt{\Delta^2 + \epsilon^2}\big)\big\} d\epsilon.
\end{cases}
\end{align}
Let $\mu \geq 0$. Since $\int_{\R} N_0(\epsilon)  \theta\big( - \mu - \sqrt{\Delta^2 + \epsilon^2}\big) d\epsilon = 0$, the first equation in (\ref{AFeqsd0zero}) can be simplified to
$$1 = \int_{\R} N_0(\epsilon) \theta\big( - \mu + \sqrt{\Delta^2 + \epsilon^2}\big) d\epsilon,$$
which holds if and only if $\sqrt{\Delta^2 + \epsilon^2} \geq \mu$ for all $\epsilon \in [-4,4]$, i.e., if and only if 
$\Delta \geq \mu$. We conclude that 
\begin{enumerate}[$(i)$]
\item any solution $\Delta > 0$ of (\ref{AFeqsd0zero}) must satisfy $\Delta \geq \mu$, and 

\item if $\Delta \geq \mu$, then the first equation in (\ref{AFeqsd0zero}) is automatically fulfilled, while the second equation in (\ref{AFeqsd0zero}) reduces to
\begin{align}\label{Deltaeq}
\frac{2}{U} = &\; \int_{\R} N_0(\epsilon) \frac{1}{\sqrt{\Delta^2 + \epsilon^2}} d\epsilon. 
\end{align}
\end{enumerate}
The right-hand side of (\ref{Deltaeq}) is a decreasing function of  $\Delta$, so (\ref{Deltaeq}) has a unique solution  $\Delta = \Delta(U) > 0$ for every $U > 0$. As $U \to +\infty$, we have $\Delta \to +\infty$, which means that we can expand
\begin{align*}
\frac{2}{U} = &\; \int_{\R} N_0(\epsilon)  \bigg(\frac{1}{\Delta }-\frac{\epsilon^2}{2 \Delta^3}+ \frac{3 \epsilon^4}{8 \Delta^5} + O(\Delta^{-7})\bigg) d\epsilon
	\\
= &\; \frac{1}{\Delta} - \frac{\mathcal{M}_2}{2 \Delta^3}+ \frac{3 \mathcal{M}_4}{8 \Delta ^5} + O(\Delta^{-7}),
\end{align*}
where $\mathcal{M}_j = \int_{\R} N_0(\epsilon) \epsilon^j d\epsilon$. Inverting the series, we obtain
$$\Delta(U) = \frac{U}{2} - \frac{\mathcal{M}_2}{U} + \frac{3 \mathcal{M}_4 - 4 \mathcal{M}_2^2}{U^3} + O(U^{-5}).
$$
By Lemma \ref{N0momentslemma}, we have $\mathcal{M}_2 = 4$ and $\mathcal{M}_4 = 36$, so
\begin{align}\label{DeltaU22U}
\Delta(U) =  \frac{U}{2} - \frac{4}{U} + \frac{44}{U^3}+O(U^{-5}).
\end{align}
Consequently, if $(U, \mu) \in \I_{U_0, \delta}$ with $U_0 > 0$ sufficiently large, then $0 \leq \mu \leq \tfrac{U}{2} - \delta \leq \Delta(U)$, so the condition $\Delta \geq \mu$ is fulfilled. 
This shows that $(d_0^{\AF}, m_1^{\AF}) = (0, \frac{2}{U} \Delta(U))$ is the unique solution of the AF mean-field equations satisfying $d_0^{\AF} = 0$ and $m_1^{\AF} > 0$. Smoothness of $U \mapsto m_1^{\AF} = \frac{2}{U} \Delta(U)$ follows from (\ref{Deltaeq}).
Since the expansion (\ref{mAFexpansionI}) follows from (\ref{DeltaU22U}), the proof is complete.
\end{proof}

Our objective in the next three lemmas is to show that if $(U, \mu) \in \I_{U_0, \delta}$ is large enough, then there are no AF mean-field solutions apart from the one found in Lemma \ref{AFsolutionIlemma}.

\begin{lemma}\label{limsupdeltaUlemma}
Let $U > 0$ and $\mu \geq 0$. If $(d_0^{\AF}, m_1^{\AF}) = (d_0^{\AF}(U, \mu), m_1^{\AF}(U, \mu)) \in \R \times [0, +\infty)$ is an AF mean-field solution, then $d_0^{\AF} \in [0, 1)$. 
\end{lemma}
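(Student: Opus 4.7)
The plan is to establish the two bounds $d_0^{\AF} < 1$ and $d_0^{\AF} \geq 0$ separately, in each case extracting information from the two AF mean-field equations (\ref{AFeq1})--(\ref{AFeq2}) combined with the fact that $N_0 > 0$ a.e. on $(0,4)$ and that $X_+ \geq X_-$, so $\theta(X_+) - \theta(X_-) \geq 0$ pointwise.

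For the upper bound, I would argue by contradiction: suppose $d_0^{\AF} \geq 1$. Then the left-hand side of (\ref{AFeq1}) is $\leq 0$, while the right-hand side is $\geq 0$ since $\theta(X_+) + \theta(X_-) \geq 0$. Using $N_0(\epsilon) > 0$ for a.e.\ $\epsilon \in (0,4)$, this forces $\theta(X_+) = \theta(X_-) = 0$ for a.e.\ $\epsilon \in (0,4)$, so in particular $\theta(X_+) - \theta(X_-) = 0$ a.e. Substituting into (\ref{AFeq2}) gives $\tfrac{1}{U} = 0$, which is impossible since $U>0$. Hence $d_0^{\AF} < 1$.

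For the lower bound, I use that $\mu \geq 0$. Recall the sets $A_\pm$ from (\ref{Aplusminusdef}). If $A_-$ is non-empty, pick any $\epsilon_0 \in A_-$; then $X_-(\epsilon_0) \geq 0$ gives
\[
\frac{U}{2}\,d_0^{\AF} \;\geq\; \mu + \sqrt{\Delta^2 + \epsilon_0^2} \;\geq\; 0,
\]
so $d_0^{\AF} \geq 0$ immediately. If on the other hand $A_- = \emptyset$, then $\theta(X_-) = 0$ on $(0,4)$, and (\ref{AFeq1}) simplifies to
\[
1 - d_0^{\AF} \;=\; 2\int_{0}^{4} N_0(\epsilon)\,\theta(X_+)\,d\epsilon \;\leq\; 2\int_{0}^{4} N_0(\epsilon)\,d\epsilon \;=\; 1,
\]
using that $N_0$ is even with total integral $1$ over $\R$. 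This again yields $d_0^{\AF} \geq 0$, completing the argument.

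There is no real obstacle here; the proof is essentially a direct unpacking of (\ref{AFeq1})--(\ref{AFeq2}) together with the sign conditions $\mu \geq 0$, $\theta(X_+) \geq \theta(X_-)$, and positivity of $N_0$ on $(0,4)$. The only minor subtlety is handling the two possibilities $A_- = \emptyset$ vs.\ $A_- \neq \emptyset$ for the lower bound, but both branches are short. The bound $d_0^{\AF} < 1$ (strict) crucially uses equation (\ref{AFeq2}) rather than just (\ref{AFeq1}), which is worth emphasizing.
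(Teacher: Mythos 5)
Your proof is correct and follows essentially the same approach as the paper: both use the observation that $A_-$ nonempty forces $\tfrac{U}{2}d_0^{\AF} \geq \mu$, while $A_-$ empty lets one bound $1 - d_0^{\AF}$ by $1$ via (\ref{AFeq1}), and both exclude $d_0^{\AF} \geq 1$ by feeding the resulting vanishing of $\theta(X_+)-\theta(X_-)$ into (\ref{AFeq2}) to obtain the contradiction $1/U = 0$. The only cosmetic difference is that the paper first invokes the a priori bound $d_0^{\AF}\in[-1,1]$ from (\ref{d0AFbound}) and argues by contradiction from $d_0^{\AF}<0$, whereas you split directly on whether $A_-$ is empty.
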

\begin{proof}
Fix $U > 0$ and $\mu \geq 0$, and let $(d_0^{\AF}, m_1^{\AF})$ be an AF mean-field solution. Suppose first that $d_0^{\AF} < 0$. Then equation (\ref{AFeq1}) can only be satisfied if $A_-$ is nonempty. 
In particular, $\max_{\epsilon \in [-4,4]} X_- = X_-|_{\epsilon = 0}$ is strictly positive, i.e.,
$$|\Delta| < \frac{U}{2} d_0^{\AF} - \mu.$$
Since $d_0^{\AF}  < 0$ and $\mu \geq 0$, the right-hand side is $< 0$ which is a contradiction. 

By (\ref{d0AFbound}), we therefore have $d_0^{\AF} \in [0,1]$. 
If $d_0^{\AF} = 1$, then (\ref{AFeq1}) implies that the sets $A_+$ and $A_-$ have measure $0$, and then (\ref{AFeq2}) reduces to $1/U = 0$, which is a contradiction. 
This completes the proof.
\end{proof}

\begin{lemma}\label{d0zerolemma}
Let $\gamma > 0$.
If $(d_0^{\AF}, m_1^{\AF}) \in \R \times (0, +\infty)$ is an AF mean-field solution corresponding to $(U, \mu)$ with $U > \frac{16\pi}{\gamma}$ and 
$0 \leq \mu \leq \frac{U}{2} - \max(\gamma, \frac{4U}{(\frac{U\gamma}{4\pi} - 4)^2})$, then $d_0^{\AF} = 0$.
\end{lemma}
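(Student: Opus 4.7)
The plan is a proof by contradiction: assume an AF mean-field solution has $d_0^{\AF} > 0$ (in addition to $m_1^{\AF} > 0$). First I would run a case analysis on the sets $A_{\pm}$ of \eqref{Aplusminusdef}. Since $X_+ \geq X_-$ always gives $A_+ \supseteq A_-$, the possibilities split according to the position of $Y := \mu - \tfrac{U}{2}d_0^{\AF}$ relative to $\pm \Delta$. In every configuration except $Y > \Delta \geq 0$, equation \eqref{AFeq1} combined with $d_0^{\AF} \geq 0$ from Lemma \ref{limsupdeltaUlemma} immediately forces $d_0^{\AF} = 0$ and contradicts the assumption; so it suffices to rule out the single remaining configuration, in which $A_- = \emptyset$ and $A_+ = [s, 4]$ with $s := \sqrt{Y^2-\Delta^2} \in (0, 4)$. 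Here \eqref{AFeqs} collapses to
\[
c = 2\int_s^4 N_0(\epsilon)\,d\epsilon, \qquad \frac{1}{U} = \int_s^4 \frac{N_0(\epsilon)}{\sqrt{\Delta^2+\epsilon^2}}\,d\epsilon,
\]
where $c := 1 - d_0^{\AF}$.

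The next step is to feed into these equations the elementary uniform lower bound $N_0 \geq 1/(4\pi)$ on $[-4,4]$ recorded in the proof of Lemma \ref{FsolutionIlemma}. From the first equation this gives $c \geq (4-s)/(2\pi)$; from the second, together with the crude bracket $1/\sqrt{\Delta^2+16} \leq 1/\sqrt{\Delta^2+\epsilon^2} \leq 1/Y$, it yields the sandwich $Y \leq Uc/2 \leq \sqrt{\Delta^2+16}$. Rewriting the hypothesis $\mu \leq U/2 - \max(\gamma, \eta)$, with $\eta := 4U/(U\gamma/(4\pi)-4)^2$, as $Uc/2 - Y \geq \max(\gamma, \eta)$ and combining with $Uc/2 \leq \sqrt{\Delta^2+16}$ produces the master inequality
\[
\frac{16-s^2}{\sqrt{\Delta^2+16}+Y} \geq \max(\gamma, \eta).
\]
Substituting the two natural lower bounds $\sqrt{\Delta^2+16}+Y \geq 4+s$ (from $\sqrt{\Delta^2+16} \geq 4$ and $Y \geq s$) and $\sqrt{\Delta^2+16}+Y \geq Uc/2 \geq U(4-s)/(4\pi)$ into the denominator yields
\[
s \leq 4 - \max(\gamma, \eta) \quad\text{and}\quad s \geq \frac{U \max(\gamma, \eta)}{4\pi} - 4,
\]
which together force $\max(\gamma, \eta)\,(U+4\pi) \leq 32\pi$.

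The final step, which I expect to be the main obstacle, is to verify that this inequality is incompatible with the hypothesis $U > 16\pi/\gamma$. The $\gamma$-term alone only yields $U > 32\pi/\gamma - 4\pi$, which improves on $16\pi/\gamma$ only when $\gamma \geq 4$, so the $\eta$-term must close the remaining range. I would finish by noting that $\eta(U+4\pi) > 32\pi$ is equivalent to $U(U+4\pi) > 8\pi(U\gamma/(4\pi)-4)^2$ and carrying out an elementary case split on $\gamma$ (treating $\gamma \leq \sqrt{2\pi}$, $\gamma \in (\sqrt{2\pi}, 4)$, and $\gamma \geq 4$ separately) to show that for every $U > 16\pi/\gamma$ at least one of the $\gamma$- and $\eta$-conditions holds, contradicting $\max(\gamma, \eta)(U+4\pi) \leq 32\pi$. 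The technical heart of the argument lies in verifying this dovetailing; the somewhat peculiar-looking form of $\eta$ in the statement is precisely what makes the two conditions cover every $U > 16\pi/\gamma$ and thereby close the proof.
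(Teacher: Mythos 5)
Your proof is correct, and it follows a genuinely different route from the paper's. The paper's argument starts from the chain $\mu = \tfrac{U}{2}d_0^{\AF} + \sqrt{\Delta^2+b_+^2}$, uses the second mean-field equation to obtain $\sqrt{\Delta^2+16}\ge\tfrac{U}{2}(1-d_0^{\AF})$, and then splits into two cases according to whether $d_0^{\AF}>1-\tfrac{\gamma}{2\pi}$: the first case gives $b_+>4-\gamma$ and hence $\mu>\tfrac{U}{2}-\gamma$, while the second gives $\Delta\ge\tfrac{U\gamma}{4\pi}-4$ and hence $\mu>\tfrac{U}{2}(1-\tfrac{8}{\Delta^2})\ge\tfrac{U}{2}-\eta$, so one of the two bounds is violated. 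You instead push the hypothesis $\tfrac{U}{2}-\mu\ge\max(\gamma,\eta)$ through $\sqrt{\Delta^2+16}-Y=\tfrac{16-s^2}{\sqrt{\Delta^2+16}+Y}$ and two lower bounds on the denominator to get the \emph{two-sided} constraint $\tfrac{U\max(\gamma,\eta)}{4\pi}-4\le s\le 4-\max(\gamma,\eta)$, which collapses to the single inequality $\max(\gamma,\eta)(U+4\pi)\le 32\pi$. This is cleaner in that there is no split on $d_0^{\AF}$ and no guessing of the threshold $1-\tfrac{\gamma}{2\pi}$; the price is the algebraic verification at the end. I checked that it does close: with $v:=\tfrac{U\gamma}{4\pi}-4>0$ the $\gamma$-condition $\gamma(U+4\pi)>32\pi$ is equivalent to $v>4-\gamma$, which holds whenever $\gamma\ge 4$; and when $\gamma<4$ and $v\in(0,4-\gamma]$ one has $\gamma^2v^2\le(\gamma(4-\gamma))^2\le 16<32\pi\le 2\pi(v+4)(v+4+\gamma)$, which is exactly the $\eta$-condition $\eta(U+4\pi)>32\pi$. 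So you do not actually need the three-way split on $\gamma$ (with the $\sqrt{2\pi}$ threshold and a convexity argument); the one-line bound $\gamma v\le\gamma(4-\gamma)\le 4$ dispatches the remaining range directly. Otherwise the proposal is sound.
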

\begin{proof}
Suppose $(d_0^{\AF}, m_1^{\AF})$ is an AF mean-field solution corresponding to $(U, \mu)$ with $U > \frac{16\pi}{\gamma}$ and 
$0 \leq \mu \leq \frac{U}{2} - \max(\gamma, \frac{4U}{(\frac{U\gamma}{4\pi} - 4)^2})$ such that $d_0^{\AF} \in (0,1)$. By Lemma \ref{limsupdeltaUlemma}, it is enough to show that this leads to a contradiction.

Together with (\ref{AFeq1}) this yields
\begin{align}\label{1int04depsilon}
2\int_{0}^4 N_0(\epsilon) \Big(\theta(X_+) + \theta(X_-) \Big) d\epsilon \in (0,1).
\end{align}
If $A_-$ is nonempty, then $A_+ = (0,4)$ and so $2\int_{0}^4 N_0(\epsilon) \theta(X_+) d\epsilon = 1$ and $2\int_{0}^4 N_0(\epsilon) \theta(X_-) d\epsilon > 0$, which contradicts (\ref{1int04depsilon}). Thus we must have $A_- = \emptyset$ and $A_+ = [b_+, 4)$ for some $b_+ \in (0, 4)$.
Consequently, the mean-field equations (\ref{AFeq1})--(\ref{AFeq2}) can be written as
\begin{align}\label{AFeqswithbplus}
\begin{cases}
1 - d_0^{\AF} = 2\int_{b_+}^4 N_0(\epsilon) d\epsilon,
	\\
\frac{1}{U} = \int_{b_+}^4 N_0(\epsilon)
 \frac{1}{\sqrt{\Delta^2 + \epsilon^2}} d\epsilon.
 \end{cases}
\end{align}
Since $A_+ = [b_+, 4)$ for some $b_+ \in (0, 4)$, the definition (\ref{Xpmdef}) of $X_+$ implies that
\begin{align}\label{bpluseq}
\frac{U d_0^{\AF}}{2} - \mu + \sqrt{\Delta^2 + b_+^2}  = 0.
\end{align}

The second equation in (\ref{AFeqswithbplus}) gives
$$\frac{1}{U} \geq  \frac{1}{\sqrt{\Delta^2 + 16}} \int_{b_+}^4 N_0(\epsilon) d\epsilon;$$
thus, using also the first equation in (\ref{AFeqswithbplus}),
\begin{align}\label{sqrtDelta216}
\sqrt{\Delta^2 + 16} \geq \frac{U}{2}(1-d_0^{\AF}).
\end{align}
Using (\ref{bpluseq}), this yields
\begin{align}\nonumber
\mu & = \frac{U d_0^{\AF}}{2} + \frac{\sqrt{\Delta^2 + b_+^2}}{\sqrt{\Delta^2 + 16}} \sqrt{\Delta^2 + 16} 
> \frac{U d_0^{\AF}}{2} + \frac{1}{\sqrt{1 + \frac{16}{\Delta^2}}} \sqrt{\Delta^2 + 16} 
	\\ \nonumber
&= \frac{U}{2} - \frac{U}{2}(1-d_0^{\AF}) + \frac{1}{\sqrt{1 + \frac{16}{\Delta^2}}} \sqrt{\Delta^2 + 16} 
 \geq \frac{U}{2} - \bigg(1 - \frac{1}{\sqrt{1 + \frac{16}{\Delta^2}}}\bigg)\frac{U}{2}(1-d_0^{\AF})
	\\ \label{muestimate}
&\geq \frac{U}{2} - \bigg(1 - \frac{1}{\sqrt{1 + \frac{16}{\Delta^2}}}\bigg)\frac{U}{2}
\geq \frac{U}{2}\bigg(1 - \frac{8}{\Delta^2}\bigg).
\end{align}

We distinguish two cases. Suppose first that $d_0^{\AF} > 1- \frac{\gamma}{2\pi}$. Then, since $N_0(\epsilon) \geq \frac{1}{4\pi}$ for $\epsilon \in (-4,4)$,
$$\frac{\gamma}{2\pi} > 1 - d_0^{\AF} = 2\int_{b_+}^4 N_0(\epsilon) d\epsilon \geq \frac{4 - b_+}{2\pi},$$
so $b_+ > 4-\gamma$ which gives
$$\mu = \frac{U d_0^{\AF}}{2} + \sqrt{\Delta^2 + b_+^2} 
> \frac{U d_0^{\AF}}{2} + \sqrt{\Delta^2 + (4- \gamma)^2} 
\geq \frac{U d_0^{\AF}}{2} + \sqrt{\Delta^2 + 16} - \gamma
\geq \frac{U}{2} - \gamma,
$$
where we used (\ref{sqrtDelta216}) in the last step.

Suppose now that $d_0^{\AF} \leq 1-\frac{\gamma}{2\pi}$. Then (\ref{sqrtDelta216}) gives
$$\Delta + 4 \geq \sqrt{\Delta^2 + 16} \geq \frac{U}{2}(1 - d_0^{\AF}) \geq \frac{U}{2}\frac{\gamma}{2\pi}.$$
Recalling our assumption that $U > \frac{16\pi}{\gamma}$, this implies that $\Delta \geq \frac{U\gamma}{4\pi} - 4 > 0$,
and hence (\ref{muestimate}) yields
$$\mu > \frac{U}{2}\bigg(1 - \frac{8}{\Delta^2}\bigg) \geq 
\frac{U}{2}- \frac{4U}{(\frac{U \gamma}{4\pi} - 4)^2}.$$

In either case, the assumption
$\mu \leq \frac{U}{2} - \max(\gamma, \frac{4U}{(\frac{U\gamma}{4\pi} - 4)^2})$ is contradicted.
\end{proof}

\begin{lemma}[AF mean-field solution in Sector I]\label{AFfinallemmaSectorI}
There is a $U_0 > 0$ such that, if $(U, \mu) \in \I_{U_0, \delta}$, then the solution $(d_0^{\AF}, m_1^{\AF}) = (0, m_1^{\AF}(U))$ found in Lemma \ref{AFsolutionIlemma} is the only solution in $\R \times (0, +\infty)$ of the AF mean-field equations (\ref{AFmeanfieldeqs}).
\end{lemma}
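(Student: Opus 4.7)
The plan is to deduce Lemma \ref{AFfinallemmaSectorI} by combining the three preceding lemmas. By Lemma \ref{limsupdeltaUlemma}, any AF mean-field solution $(d_0^{\AF}, m_1^{\AF}) \in \R \times [0,+\infty)$ at a point $(U,\mu)$ with $\mu \geq 0$ must have $d_0^{\AF} \in [0, 1)$. Lemma \ref{d0zerolemma} then rules out the case $d_0^{\AF} > 0$ on a suitable subregion of the $(U,\mu)$-plane. Once $d_0^{\AF} = 0$ is established, Lemma \ref{AFsolutionIlemma} produces the unique solution with $m_1^{\AF} > 0$, namely $(0, m_1^{\AF}(U))$. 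So the only real task is to verify that, for $U_0$ sufficiently large, the entire sector $\I_{U_0, \delta}$ lies inside the region where Lemma \ref{d0zerolemma} applies.

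I would apply Lemma \ref{d0zerolemma} with $\gamma = \delta$. The hypothesis $U > 16\pi/\delta$ is automatic once we take $U_0 > 16\pi/\delta$. The remaining hypothesis on $\mu$ reads
\begin{align*}
\mu \leq \frac{U}{2} - \max\!\bigg(\delta, \frac{4U}{\big(U\delta/(4\pi) - 4\big)^2}\bigg).
\end{align*}
Rewriting the second entry of the max as $(64\pi^2/(\delta^2 U))(1 - 16\pi/(U\delta))^{-2}$, we see it is $O(1/U)$ and in particular tends to $0$ as $U \to \infty$. Hence, after possibly enlarging $U_0$, we can arrange that this term is $\leq \delta$ for all $U \geq U_0$. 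The maximum then equals $\delta$, and the hypothesis collapses to $\mu \leq U/2 - \delta$, which is precisely the defining condition of $\I_{U_0, \delta}$.

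Taking $U_0$ large enough for all of the above and also large enough for Lemma \ref{AFsolutionIlemma} to apply, I conclude: for any $(U,\mu) \in \I_{U_0, \delta}$, any AF mean-field solution $(d_0^{\AF}, m_1^{\AF}) \in \R \times (0,+\infty)$ satisfies $d_0^{\AF} = 0$ by Lemma \ref{d0zerolemma}, and Lemma \ref{AFsolutionIlemma} then identifies it as $(0, m_1^{\AF}(U))$. I do not anticipate any substantial obstacle here: the statement is essentially a packaging of the three preceding results, and the only computation to perform is the elementary asymptotic $4U/(U\delta/(4\pi) - 4)^2 = O(1/U)$ as $U \to \infty$. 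The bulk of the work has already been carried out in Lemmas \ref{AFsolutionIlemma}--\ref{d0zerolemma}, in particular in the nontrivial geometric argument (based on analyzing the sets $A_\pm$ and the estimate \eqref{sqrtDelta216}) that forced $d_0^{\AF} = 0$ in Lemma \ref{d0zerolemma}.
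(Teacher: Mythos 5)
Your proposal is correct and follows essentially the same route as the paper: apply Lemma \ref{d0zerolemma} with $\gamma = \delta$, checking that for $U_0$ large enough the term $4U/(U\delta/(4\pi)-4)^2$ drops below $\delta$ so the max collapses to $\delta$, and then invoke Lemma \ref{AFsolutionIlemma}. The only cosmetic difference is that the paper exhibits an explicit threshold $U_+$ as a root of the quadratic $p(U) = \delta(U\delta/(4\pi)-4)^2 - 4U$, whereas you argue via the $O(1/U)$ asymptotic; both are equivalent, and the explicit appeal to Lemma \ref{limsupdeltaUlemma} is harmless but redundant since it is already used inside the proof of Lemma \ref{d0zerolemma}.
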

\begin{proof}
Let $U_\pm := \frac{16 \pi  \left(\delta^2 \pm 2 \sqrt{\pi } \sqrt{\delta^2+\pi }+2 \pi \right)}{\delta^3}$ be the two roots of the polynomial $p(U) := \delta (\frac{U\delta}{4\pi} - 4)^2 - 4U$.
For $U \geq U_+$, we have $p(U) \geq 0$, and hence also $\frac{4U}{(\frac{U\delta}{4\pi} - 4)^2} \leq \delta$.
It follows that 
\begin{align}\label{maxdelta}
\max(\delta, \frac{4U}{(\frac{U\delta}{4\pi} - 4)^2}) = \delta \quad \text{for $U \geq U_+$}.
\end{align}

Let $U_0 = \max\{U_+, \frac{16\pi}{\delta}\}$. If $(U, \mu) \in \I_{U_0, \delta}$, then $U \geq U_0 \geq \frac{16\pi}{\delta}$ and, by (\ref{maxdelta}), $0 \leq \mu \leq \frac{U}{2} - \delta = \frac{U}{2} - \max(\delta, \frac{4U}{(\frac{U\delta}{4\pi} - 4)^2})$, so the desired assertion follows from Lemma \ref{d0zerolemma} applied with $\gamma = \delta$ and Lemma \ref{AFsolutionIlemma}.
\end{proof}

Our next lemma gives the asymptotic behavior of the AF free energy $\mathcal{F}_{\AF}$ in Sector I.

\begin{lemma}[Asymptotics of $\mathcal{F}_{\AF}$ in Sector \I]\label{GAFSectorIlemma}
If $U_0 > 0$ is large enough, then the AF free energy $\mathcal{F}_{\AF}(U, \mu) = \mathcal{F}_{\AF}(U)$ is independent of $\mu$ for $(U, \mu) \in \I_{U_0, \delta}$, and $\mathcal{F}_{\AF}(U)$ depends smoothly on $U \in [U_0, +\infty)$. Furthermore, $\mathcal{F}_{\AF}(U)$ enjoys the following asymptotics as $(U, \mu) \in \I_{U_0, \delta}$ tends to $\infty$:
\begin{align}\label{FAFasymptoticsSectorI}
\mathcal{F}_{\AF}(U) = -\frac{U}{4} - \frac{4}{U} + \frac{20}{U^3} + O\bigg(\frac{1}{U^5}\bigg).
\end{align}
\end{lemma}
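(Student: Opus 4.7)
The plan is to collapse the AF free energy to a single integral that manifestly depends only on $U$, and then read off the expansion from already-known expansions of $\Delta(U) := \tfrac{U}{2}m_1^{\AF}(U)$.

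\textbf{Step 1 (closed form and $\mu$-independence).} By Lemma \ref{AFfinallemmaSectorI}, after possibly enlarging $U_0$, the AF mean-field equations have a unique solution $(d_0^{\AF},m_1^{\AF})=(0,m_1^{\AF}(U))$ in $\R\times(0,+\infty)$ for every $(U,\mu)\in\I_{U_0,\delta}$, so $\mathcal{F}_{\AF}(U,\mu)=\mathcal{G}_{\AF}(0,m_1^{\AF}(U),U,\mu)$. From (\ref{DeltaU22U}) we have $\Delta(U)=U/2-4/U+O(U^{-3})$, and since $\mu\leq U/2-\delta$ on $\I_{U_0,\delta}$, enlarging $U_0$ once more ensures $\Delta(U)\geq\mu$ throughout the sector. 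Consequently $-\mu+\sqrt{\Delta^{2}+\epsilon^{2}}\geq 0$ and $-\mu-\sqrt{\Delta^{2}+\epsilon^{2}}\leq 0$ for all $\epsilon\in\R$, so the two Heaviside factors in (\ref{AFHartreeFockFunction}) evaluate to $1$ and $0$ respectively. Using $\int_{\R}N_{0}(\epsilon)\,d\epsilon=1$, the $\mu$-linear contributions cancel and
\begin{equation*}
\mathcal{F}_{\AF}(U,\mu)=\frac{\Delta(U)^{2}}{U}-\int_{\R}N_{0}(\epsilon)\sqrt{\Delta(U)^{2}+\epsilon^{2}}\,d\epsilon,
\end{equation*}
which is independent of $\mu$. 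Smoothness of $U\mapsto\mathcal{F}_{\AF}(U)$ on $[U_{0},+\infty)$ is inherited from the smoothness of $m_{1}^{\AF}(U)$ established in Lemma \ref{AFsolutionIlemma}.

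\textbf{Step 2 (asymptotic expansion).} Since $\Delta(U)\to\infty$ and $N_{0}$ is supported in $[-4,4]$, I would Taylor expand
\begin{equation*}
\sqrt{\Delta^{2}+\epsilon^{2}}=\Delta+\frac{\epsilon^{2}}{2\Delta}-\frac{\epsilon^{4}}{8\Delta^{3}}+O(\Delta^{-5})
\end{equation*}
uniformly for $\epsilon\in[-4,4]$, integrate termwise against $N_{0}$, and invoke the moment formulas $\mathcal{M}_{2}=4$ and $\mathcal{M}_{4}=36$ from Lemma \ref{N0momentslemma} (where $\mathcal{M}_{j}:=\int_{\R}N_{0}(\epsilon)\epsilon^{j}d\epsilon$), obtaining
\begin{equation*}
\mathcal{F}_{\AF}(U)=\frac{\Delta^{2}}{U}-\Delta-\frac{\mathcal{M}_{2}}{2\Delta}+\frac{\mathcal{M}_{4}}{8\Delta^{3}}+O\!\left(\frac{1}{U^{5}}\right).
\end{equation*}
Substituting (\ref{DeltaU22U}) and collecting powers of $1/U$ then yields (\ref{FAFasymptoticsSectorI}): the leading terms $\Delta^{2}/U-\Delta$ give $-U/4+16/U^{3}+O(U^{-5})$, the $-\mathcal{M}_{2}/(2\Delta)$ term contributes $-4/U-32/U^{3}+O(U^{-5})$, and $\mathcal{M}_{4}/(8\Delta^{3})$ contributes $36/U^{3}+O(U^{-5})$, which combine to $-U/4-4/U+20/U^{3}+O(U^{-5})$.

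\textbf{Main obstacle.} The argument is essentially a direct computation once the mean-field solution has been identified and controlled. The only place requiring genuine care is verifying $\Delta(U)\geq\mu$ uniformly on $\I_{U_{0},\delta}$, which is what forces the collapse of the Heaviside factors and is responsible for the striking fact that $\mathcal{F}_{\AF}$ is $\mu$-independent throughout the sector; everything else reduces to careful bookkeeping in the asymptotic series.
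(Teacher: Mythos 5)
Your proof is correct and follows essentially the same route as the paper: invoke Lemma \ref{AFfinallemmaSectorI} for uniqueness of the AF solution with $d_0^{\AF}=0$, verify that the chemical potential lies inside the AF gap so that the Heaviside factors collapse and the $\mu$-dependence cancels via $\int_{\R}N_0=1$, and then Taylor expand $\sqrt{\Delta^2+\epsilon^2}$ termwise against the moments $\mathcal{M}_2=4$, $\mathcal{M}_4=36$ before substituting the expansion (\ref{DeltaU22U}). The only cosmetic difference is that the paper works with $m_1^{\AF}$ and its expansion (\ref{mAFexpansionI}) rather than $\Delta=\tfrac{U}{2}m_1^{\AF}$, but these are the same calculation.
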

\begin{proof}
Lemma \ref{AFfinallemmaSectorI} shows that there is a $U_0 > 0$ such that the solution $(d_0^{\AF}, m_1^{\AF}) = (0, m_1^{\AF}(U))$ found in Lemma \ref{AFsolutionIlemma} is the only AF mean-field solution with $m_1^{\AF} > 0$ whenever $(U, \mu) \in \I_{U_0, \delta}$. Thus, the definition (\ref{freeenergiesAF}) of $\mathcal{F}_{\AF}$ reduces to 
\begin{align}\label{FAFGAFI}
\mathcal{F}_{\AF}(U,\mu) = \mathcal{G}_{\AF}(d_0^{\AF}, m_1^{\AF}, U,\mu).
\end{align}
We infer from (\ref{mAFexpansionI}) that $- \mu + \sqrt{\frac{U^2}{4}(m_1^{\AF})^2 + \epsilon^2} \geq 0$ for all $\epsilon \in \R$ and all sufficiently large $(U, \mu) \in \I_{U_0, \delta}$.
Using also that $d_0^{\AF} = 0$, we obtain from the definition (\ref{AFHartreeFockFunction}) of $\mathcal{G}_{\AF}$ that
\begin{align}\label{GAFd0m1}
\mathcal{F}_{\AF}(U,\mu) = \mathcal{G}_{\AF}(d_0^{\AF}, m_1^{\AF}, U,\mu) = &\; \frac{U}{4} (m_1^{\AF})^2  
 - \int_{\R}  N_0(\epsilon) \sqrt{\frac{U^2}{4}(m_1^{\AF})^2 + \epsilon^2} d\epsilon. 
\end{align}
Since $m_1^{\AF} = m_1^{\AF}(U)$ is independent of $\mu$ by Lemma \ref{AFsolutionIlemma}, so is $\mathcal{F}_{\AF}(U,\mu) = \mathcal{F}_{\AF}(U)$. Smoothness of $U \mapsto \mathcal{F}_{\AF}(U)$ is a consequence of (\ref{GAFd0m1}) and the smoothness of $U \mapsto m_1^{\AF}(U)$ established in Lemma \ref{AFsolutionIlemma}.
Moreover, substituting the expansion (\ref{mAFexpansionI}) for $m_1^{\AF}$ into (\ref{GAFd0m1}) and using Lemma \ref{N0momentslemma}, we arrive at (\ref{FAFasymptoticsSectorI}). 
\end{proof}

\begin{remark}\upshape
By (\ref{AFmeanfield}), we have $\frac{\partial \mathcal{G}_{\AF}}{\partial d_0}(d_0^{\AF}, m_1^{\AF}, U, \mu) = 0$. Hence the equations (\ref{FAFGAFI}) and (\ref{dGAFdd0}) imply that
$$\frac{\partial \mathcal{F}_{\AF}}{\partial \mu}(U, \mu) = \frac{\partial \mathcal{G}_{\AF}}{\partial \mu}(d_0^{\AF}, m_1^{\AF}, U, \mu) = -d_0^{\AF}(U, \mu).$$
Thus the $\mu$-independence of the AF free energy observed in Lemma \ref{GAFSectorIlemma} is a reflection of the fact that the AF state has no doping ($d_0^{\AF} = 0$). Physically this can be understood by noting that the so-called effective AF band relations that appear in (\ref{AFHartreeFockFunction}), $\frac{U}{2} d_0 \pm \sqrt{\frac{U^2}{4}m_1^2 + \epsilon^2}$, are separated by a gap (see also (\ref{Errp})). If $\mu$ lies within the AF gap, i.e., if $|\mu - \frac{U}{2} d_0| < \frac{U}{2}m_1$, then the first Heaviside function in (\ref{AFHartreeFockFunction}) is identically equal to $1$ while the second is identically equal to $0$, and hence 
\begin{align}
\mathcal{G}_{\AF}(d_0, m_1, U, \mu) = &\; \frac{U}{4}  \big(m_1^2 - d_0^2\big)
 - \int_{\R}  N_0(\epsilon) \sqrt{\frac{U^2}{4}m_1^2 + \epsilon^2} d\epsilon.
\end{align}
In particular, the AF free energy does not change as the chemical potential lies within the AF gap $|\mu| < \frac{U}{2}m_1$.
\end{remark}

\subsection{Final steps}
We are now ready to complete the proof of Theorem \ref{sectorIth}.
We know from Lemma \ref{GAFSectorIlemma} that $\mathcal{F}_{\AF}(U, \mu) = \mathcal{F}_{\AF}(U)$ is independent of $\mu$ in Sector I and satisfies (\ref{FAFasymptoticsSectorI}) as $U \to +\infty$; in particular, $\mathcal{F}_{\AF}(U) <  - \frac{U}{4}$ for all large enough $U$.
On the other hand, we know from Lemma \ref{FFSectorIlemma} that, if $0 \leq \mu \leq \frac{U}{2} - 4$, then $\mathcal{F}_{\F}(U, \mu) = - \frac{U}{4}$. It follows that $\mathcal{F}_{\AF}(U) < \mathcal{F}_{\F}(U, \mu)$ in Sector I whenever $0 \leq \mu \leq \frac{U}{2} - 4$.
On the other hand, if $\frac{U}{2} - 4 \leq \mu \leq \frac{U}{2} - \delta$, then Lemma \ref{FFSectorIlemma} shows that $\mathcal{F}_{\F}(U, \mu)$ is given in Sector I by (\ref{GFexplicitI}), and hence that 
\begin{align*}\nonumber
\frac{\partial \mathcal{F}_{\F}}{\partial\mu} (U, \mu)
 =&\; - 1 + \int_{\mu - \frac{U}{2}}^4  N_0(\epsilon) d\epsilon < 0.
\end{align*}
If $\mu = \frac{U}{2} - \delta$, then (\ref{GFexplicitI}) gives (using also \eqref{N0epsilon8pi2})
\begin{align*}
\mathcal{F}_{\F}(U, \mu) 
= -\frac{U}{4} 
 - \frac{8}{\pi^2}
+ \frac{1}{2}\delta
 - \int_{-\delta}^0  N_0(\epsilon) (\epsilon + \delta) d\epsilon.
\end{align*}
Shrinking $\delta > 0$ if necessary, we conclude that $\mathcal{F}_{\F}(U, \mu)$ is a strictly decreasing function of $\mu$ in the interval $[\frac{U}{2} - 4, \frac{U}{2} - \delta]$ such that at the left end-point $\mu = \frac{U}{2} - 4$, we have $\mathcal{F}_{\AF}(U) < \mathcal{F}_{\F}(U, \mu) = - \frac{U}{4}$, and at the right end-point $\mu = \frac{U}{2} - \delta$, we have $\mathcal{F}_{\AF}(U) > \mathcal{F}_{\F}(U, \mu)$.
Therefore there is a unique function $\mu_{\I}(U)$ such that for $(U, \mu) \in \I_{U_0, \delta}$ it holds that
$$\begin{cases}
\mathcal{F}_{\AF}(U) < \mathcal{F}_{\F}(U, \mu) & \text{if $\mu < \mu_{\I}(U)$},	\\ 
\mathcal{F}_{\AF}(U) = \mathcal{F}_{\F}(U, \mu) & \text{if $\mu = \mu_{\I}(U)$}, \\ 
\mathcal{F}_{\AF}(U) > \mathcal{F}_{\F}(U, \mu) & \text{if $\mu > \mu_{\I}(U)$}.
\end{cases}
$$
Increasing $U_0 > 0$ if necessary, we have $\mathcal{F}_{\mathrm{P}}(U, \mu) > -\frac{U}{4}$ everywhere in Sector I by Lemma \ref{FPboundIlemma}. We conclude that 
$$\max\{\mathcal{F}_{\AF}(U), \mathcal{F}_{\F}(U, \mu)\} \leq \mathcal{F}_{\F}(U, 0) 
= - \frac{U}{4} < \mathcal{F}_{\mathrm{P}}(U, \mu)$$ 
for all $(U, \mu) \in \I_{U_0, \delta}$.
The implicit function theorem applied to the defining relation $\mathcal{F}_{\AF}(U) = \mathcal{F}_{\F}(U, \mu_{\I}(U))$ implies that $\mu_{\I}(U)$ depends smoothly on $U \geq U_0$; indeed, smoothness of $U \mapsto \mathcal{F}_{\AF}(U)$ was established in Lemma \ref{GAFSectorIlemma} and smoothness of $(U, \mu) \mapsto \mathcal{F}_{\F}(U, \mu)$ in the relevant domain follows from (\ref{GFexplicitI}). This completes the proof of $(i)$--$(iii)$ of Theorem \ref{sectorIth}.
 
 We next prove (\ref{muIexpansion}).
Using (\ref{FAFasymptoticsSectorI}) and (\ref{GFexplicitI}) in the relation $\mathcal{F}_{\AF}(U) = \mathcal{F}_{\F}(U, \mu_{\I}(U))$, we infer that $\mu_{\I}(U)$ satisfies
\begin{align}\label{muIrelation}
\frac{U}{4} - \mu_{\I}(U)
- \int_{\mu_{\I}(U) - \frac{U}{2}}^4 N_0(\epsilon) \bigg(\frac{U}{2} + \epsilon - \mu_{\I}(U)\bigg) d\epsilon
= -\frac{U}{4}  - \frac{4}{U} + \frac{20}{U^3} + O\bigg(\frac{1}{U^5}\bigg) 
\end{align}
as $U \to + \infty$.
Writing $\mu_{\I}(U) = U/2 - 4 + \tilde{\mu}_{\I}(U)$, we have $\tilde{\mu}_{\I}(U) \in (0, 4-\delta)$ and (\ref{muIrelation}) can be written as
$$ \int_{-4}^{\tilde{\mu}_{\I}(U) - 4}  N_0(\epsilon) (\epsilon + 4 - \tilde{\mu}_{\I}(U)) d\epsilon
= - \frac{4}{U} + \frac{20}{U^3} + O\bigg(\frac{1}{U^5}\bigg),$$
which shows that $\tilde{\mu}_{\I}(U) \downarrow 0$ as $U \to +\infty$.
Substituting in the expansion (\ref{N0near4}) of $N_0(\epsilon) = N_0(-\epsilon)$ and evaluating the integral, we obtain
$$-\frac{\tilde{\mu}_{\I}(U)^2}{8 \pi }
-\frac{\tilde{\mu}_{\I}(U)^3}{192 \pi}
-\frac{5\tilde{\mu}_{\I}(U)^4}{12288 \pi} + O(\tilde{\mu}_{\I}(U)^5)
= - \frac{4}{U} + \frac{20}{U^3} + O\bigg(\frac{1}{U^5}\bigg),$$
which shows that $\tilde{\mu}_{\I}(U) = \frac{4\sqrt{2\pi}}{\sqrt{U}} + O(1/U)$. Extending this calculation to higher order, we arrive at 
\begin{align}\nonumber
\tilde{\mu}_{\I}(U) = &\; \frac{4\sqrt{2\pi}}{\sqrt{U}} 
- \frac{2\pi}{3 U} 
-\frac{5 \pi^{3/2}}{36 \sqrt{2} U^{3/2}} 
-\frac{11 \pi^2}{270 U^{2}} 
-\frac{\sqrt{\frac{\pi }{2}} \left(691200+1163 \pi^2\right)}{34560 U^{5/2}} 
	\\ \nonumber
& + \bigg(\frac{10 \pi }{3}-\frac{18071 \pi^3}{1088640} \bigg) \frac{1}{U^3}
+ \frac{\pi^{3/2} \left(51840000-907207 \pi^2\right)}{49766400 \sqrt{2} U^{7/2}} 
	\\ \label{muItildeexpansion}
& + \bigg(\frac{11 \pi^2}{27} - \frac{561913\pi^{4}}{52254720}\bigg)\frac{1}{U^4}
+ O\bigg(\frac{1}{U^{9/2}}\bigg),
\end{align} 
which implies the expansion in (\ref{muIexpansion}).

Let us derive the expansion (\ref{d0FatmuIexpansion}) of $d_0^{\F}(U, \mu_{\I}(U))$.
Since $\mu_{\I}(U) - \frac{U}{2} \downarrow -4$ as $U \to \infty$, we deduce from (\ref{d0Fm0FSectorI}) and (\ref{N0near4}) that
\begin{align*}
d_0^{\F}(U, \mu_{\I}(U)) 
& = 1 - \int_{\mu_{\I}(U) - \frac{U}{2}}^4 N_0(\epsilon) d\epsilon
= \int_{\frac{U}{2} - \mu_{\I}(U)}^4  N_0(\epsilon) d\epsilon
	\\
& = \frac{\tilde{\mu}_{\I}}{4 \pi }
+\frac{\tilde{\mu}_{\I}^2}{64 \pi }
+\frac{5 \tilde{\mu}_{\I}^3}{3072 \pi}
+\frac{7 \tilde{\mu}_{\I}^4}{32768 \pi }
+\frac{169 \tilde{\mu}_{\I}^5}{5242880 \pi}
+ \frac{269 \tilde{\mu}_{\I}^6}{50331648 \pi }
+ O(\tilde{\mu}_{\I}^7).
\end{align*}
Substituting in the expansion (\ref{muItildeexpansion}) of $\tilde{\mu}_{\I}$, we obtain (\ref{d0FatmuIexpansion}).

The expansion (\ref{freeenergiesatmuIa}) follows immediately from (\ref{FAFasymptoticsSectorI}). 
The expansion (\ref{freeenergiesatmuIb}) of the P free energy follows by evaluating (\ref{FPexpansionSectorIhat}) at 
$\hat{\mu} =  \frac{U}{2} - \mu_{\I}(U) +4 =  8 - \tilde{\mu}_{\I}(U)$ and substituting in the asymptotics (\ref{muItildeexpansion}) for $\tilde{\mu}_{\I}(U)$. 
This completes the proof of Theorem \ref{sectorIth}.

\section{Proof of Theorem \ref{sectorIIth}}\label{SectorIIproofsec}

For the analysis of Sector \II, it is useful to introduce the variable $\mu_E = \mu_E(U,\mu)$ by
\begin{align}\label{muEdef}
\mu_E := \frac{\mu - \mu_{\II,0}(U)}{(4\pi - U)^3},
\end{align}
where $\mu_{\II,0}(U)$ is given by (\ref{muII0def}).
With this definition, we have
\begin{align}\label{mumuE}
\mu = \mu_{\II,0}(U) + \mu_E (4 \pi - U)^3,
\end{align}
and Sector II is characterized by the fact that $\mu_E \in [-M, M]$ as $(U, \mu) \in \II_{U_0, M}$ tends to $(4\pi, 2\pi+4)$.

\subsection{P free energy in Sector II}
We first establish the behavior of the P free energy.

\begin{lemma}[Asymptotics of $\mathcal{F}_{\mathrm{P}}$ in Sector II]
As $(U, \mu) \in \II_{U_0, M}$ tends to $(4\pi, 2\pi+4)$, we have
\begin{align}\label{d0PexpansionSectorII}
d_{0}^{\mathrm{P}} = &\; 1 - \frac{4\pi - U}{\pi^2} - \frac{(4\pi - U)^2}{24\pi^3} -\frac{(1-32\pi^{3}\mu_{E})(4\pi - U)^{3}}{128\pi^{4}} + O((4 \pi -U)^4)
\end{align}
and
\begin{align}\nonumber
\mathcal{F}_{\mathrm{P}}(U, \mu) = &\; -4 - \pi + \bigg(\frac{1}{4} + \frac{4}{\pi}\bigg) (4\pi - U)
 -\frac{24+7 \pi }{12 \pi^3}(4\pi - U)^2 
 	\\ \nonumber
 & + \bigg(\frac{1}{4\pi^4} - \mu_E\bigg)(4\pi - U)^3 
+ \bigg(-\frac{1}{1152 \pi^5} + \frac{\mu_E}{\pi^2}\bigg)(4\pi - U)^4
	\\\nonumber
& - \frac{13-960 \pi^3 \mu_E }{23040 \pi^6}(4 \pi - U)^5
 -\frac{51840 \pi^3 \mu_E \left(16 \pi^3 \mu_E-1\right)+913 }{6635520 \pi^7} (U-4 \pi )^6
	\\\nonumber
&  - \frac{5040 \pi^3 \mu_E \left(576 \pi^3 \mu_E-35\right)+3209 }{92897280 \pi^8}(4 \pi -U)^7
 	\\ \label{FPexpansionSectorII}
&  -\frac{72576 \pi^3 \mu_E \left(3680 \pi^3 \mu_E-237\right)+320297 }{35672555520 \pi^9} (4 \pi - U)^8
 + O((4\pi - U)^9),
\end{align}
%\mathcal{F}_{\mathrm{P}}(U, \mu) = &\; 
%-4 -\pi 
%+\hat{\mu}
%+\frac{1}{4} (4 \pi -U)
%-\frac{\hat{\mu}^2}{8 \pi }
%- \frac{(4 \pi - U)\hat{\mu}^2}{64 \pi^2} 
%-\frac{\hat{\mu}^3}{768 \pi }
%+ O\big((4\pi - U)^4\big),
where the error terms are uniform with respect to $\mu$, and $\mu_E$ is given by (\ref{muEdef}).
\end{lemma}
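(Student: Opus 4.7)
The plan is to mimic the Sector I computation of Lemma \ref{PIlemma}, but expanding in the small parameter $V := 4\pi - U$ rather than in $1/U$, and treating $\mu_E$ as a bounded parameter via $\mu = \mu_{\II,0}(U) + \mu_E V^3$. First I would verify that in Sector $\II_{U_{0},M}$, for $U_0$ close enough to $4\pi$, the P mean-field solution $d_0^{\mathrm{P}}$ supplied by Lemma \ref{Plemma1} satisfies $\mu - \tfrac{U}{2}d_0^{\mathrm{P}} \in (-4,4)$, so that by Lemma \ref{Plemma2} the mean-field equation \eqref{Pmeanfieldeq} and the free energy \eqref{FPGP} may be recast in the integral forms
\begin{align*}
1 - d_0^{\mathrm{P}} &= 2\int_{\mu - \frac{U}{2}d_0^{\mathrm{P}}}^{4} N_0(\epsilon)\,d\epsilon,\\
\mathcal{F}_{\mathrm{P}}(U,\mu) &= -\tfrac{U}{4}(1-d_0^{\mathrm{P}})^{2} + \tfrac{U}{4} - \mu - 2\int_{\mu - \frac{U}{2}d_0^{\mathrm{P}}}^{4} N_0(\epsilon)\bigl(\epsilon - (\mu - \tfrac{U}{2}d_0^{\mathrm{P}})\bigr)\,d\epsilon,
\end{align*}
exactly as in \eqref{1minusd0PSectorI}--\eqref{FPSectorI}. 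This containment is the only delicate point: a priori I would establish the cruder bound $1 - d_0^{\mathrm{P}} = O(V)$ by a fixed-point argument based on the monotonicity of $\mathcal{G}_{\mathrm{P}}$ in $d_0$ (Lemma \ref{Plemma1}) together with the fact that $\mu - \tfrac{U}{2} = 4 - \tfrac{4V}{\pi} + O(V^{2})$ uniformly in $\mu_E \in [-M,M]$ from \eqref{mumuE} and \eqref{muII0def}.

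Once the containment is known, the integration interval $[\mu - \tfrac{U}{2}d_0^{\mathrm{P}}, 4]$ has length $O(V)$ and shrinks to a point as $U \uparrow 4\pi$. I would then insert the smooth expansion of $N_0$ at the left edge $\epsilon = 4^{-}$, recorded in Appendix \ref{N0app} as \eqref{N0near4}, into the mean-field equation. Writing $\eta := 4 - (\mu - \tfrac{U}{2}d_0^{\mathrm{P}})$, integration term-by-term gives
\[
1 - d_0^{\mathrm{P}} = \tfrac{\eta}{2\pi} + \tfrac{\eta^{2}}{32\pi} + \tfrac{5\eta^{3}}{1536\pi} + \cdots,
\]
while on the other hand
\[
\eta = 4 - \mu + \tfrac{U}{2} - \tfrac{U}{2}(1-d_0^{\mathrm{P}}) = V\!\cdot\!(\text{known expansion in } V,\mu_E) - \tfrac{U}{2}(1-d_0^{\mathrm{P}})
\]
by the explicit form of $\mu_{\II,0}(U)$ and $\mu$. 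Substituting this relation into the previous identity yields a scalar equation in $1 - d_0^{\mathrm{P}}$ whose coefficients are polynomials in $V$ and $\mu_E$; solving it order by order in $V$ by the usual Lagrange-inversion/iteration procedure gives the expansion \eqref{d0PexpansionSectorII} (extended to whatever order is needed), with the first nontrivial dependence on $\mu_E$ appearing only at order $V^{3}$ because $\mu$ itself depends on $\mu_E$ only through the term $\mu_E V^{3}$.

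Finally, to obtain \eqref{FPexpansionSectorII} I would substitute the now-known expansion of $d_0^{\mathrm{P}}$ back into the displayed formula for $\mathcal{F}_{\mathrm{P}}$, once more using \eqref{N0near4} to evaluate the remaining integral $\int_{4-\eta}^{4} N_0(\epsilon)(\epsilon - (4-\eta))\,d\epsilon$ as a power series in $\eta$, and then re-expressing everything in $V$ and $\mu_E$. All terms are polynomial combinations of the explicitly known $N_0^{(k)}$, so the computation is mechanical, ideally automated in a computer algebra system to avoid arithmetic errors; uniformity of the error terms with respect to $\mu_E \in [-M,M]$ follows automatically because $\mu_E$ only enters polynomially at each stage. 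The main obstacle, besides careful bookkeeping of coefficients, is the initial containment $\mu - \tfrac{U}{2}d_0^{\mathrm{P}} \in (-4,4)$: if this failed we would have $d_0^{\mathrm{P}} = 1$ (the filled band) and the smooth expansion around the band edge would break down. This is precisely why $\mu_{\II,0}(U)$ is defined by \eqref{muII0def}: the first three Taylor coefficients of $\mu_{\II,0}$ are tuned so that $\eta = 4 - (\mu - \tfrac{U}{2}d_0^{\mathrm{P}})$ is strictly positive and of order $V$ throughout Sector II, which is what makes the expansion around $\epsilon = 4$ legitimate.
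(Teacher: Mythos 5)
Your proposal follows essentially the same route as the paper: reduce to the integral forms \eqref{1minusd0PSectorI}--\eqref{FPSectorI}, verify $\mu - \tfrac{U}{2}d_0^{\mathrm{P}}\in(-4,4)$ so the band-edge expansion \eqref{N0near4} applies, and then solve the resulting scalar equation for $1-d_0^{\mathrm{P}}$ order by order in $V=4\pi-U$, substituting back to get $\mathcal{F}_{\mathrm{P}}$. This is precisely what the paper does (with $\hat\mu = \tfrac{U}{2}-\mu+4$ in place of your $\eta$, to which it is related by $\eta = \hat\mu - \tfrac{U}{2}(1-d_0^{\mathrm{P}})$), and the computations match.

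Two minor remarks. First, the a priori bound $1-d_0^{\mathrm{P}}=O(V)$ does not require a separate fixed-point argument: since $\hat\mu = \tfrac{4}{\pi}V + O(V^2) > 0$ forces $\mu < 4+\tfrac{U}{2}$, Lemma~\ref{Plemma2}$(b)$ gives $d_0^{\mathrm{P}}\in(-1,1)$ directly, whence $\mu-\tfrac{U}{2}d_0^{\mathrm{P}}\in(-4,4)$ from \eqref{Pmeanfieldeq}, and then $0<\tfrac{U}{2}(1-d_0^{\mathrm{P}})<\hat\mu$ is immediate. Second, your explanation of the coefficients in \eqref{muII0def} is off: only the constant and linear parts of $\mu_{\II,0}$ are needed to make $\hat\mu$ positive and of order $V$; the quadratic coefficient is tuned so that $\mu_{\II,0}$ tracks the F--P phase boundary $\mu_{\II}(U)$ to order $V^2$, ensuring that the $O(V^3)$-wide Sector $\II_{U_0,M}$ actually contains that boundary curve, which is what Theorem~\ref{sectorIIth} ultimately needs.
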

\begin{proof}
By Lemma \ref{Plemma}, the P free energy is given by $\mathcal{F}_{\mathrm{P}}(U, \mu) = \mathcal{G}_{\mathrm{P}}(d_0^{\mathrm{P}}(U, \mu), U, \mu)$, where $d_0^{\mathrm{P}}(U, \mu)$ is the (unique) solution of (\ref{Pmeanfieldeq}).
Assume that $(U, \mu) \in \II_{U_0, M}$. Then the variable $\hat{\mu} = \frac{U}{2} - \mu+4$ obeys
\begin{align}\label{hatmuexpansion}
\hat{\mu} = \frac{4}{\pi}(4\pi - U) - \frac{7}{12\pi^2}(4\pi - U)^2 - \mu_E (4\pi - U)^3; 
\end{align}
in particular, $\hat{\mu} = O(4\pi - U)$. Thus, increasing $U_0 < 4\pi$ if necessary, Lemma \ref{Plemma2} (\ref{d0Pitemb}) implies that $d_0^{\mathrm{P}} \in (-1,1)$. It then follows from (\ref{Pmeanfieldeq}) that $\mu - \frac{U}{2} d_0^{\mathrm{P}} \in (-4,4)$, which means that (\ref{Pmeanfieldeq}) can be written as in (\ref{1minusd0PSectorI}), and that the P free energy can be expressed as in (\ref{FPSectorI}).
Using (\ref{hatmuexpansion}), we also infer from the condition $\mu - \frac{U}{2} d_0^{\mathrm{P}} \in (-4,4)$ that $1 - d_0^{\mathrm{P}} = O(4\pi - U)$ as $U \uparrow 4\pi$, which in view of (\ref{1minusd0PSectorI}) implies that $4 - (\mu - \frac{U}{2} d_0^{\mathrm{P}}) = O(4\pi - U)$ as $U \uparrow 4\pi$. 
Thus, we can use the expansion (\ref{N0near4}) in (\ref{1minusd0PSectorI}) to obtain
\begin{align*}
1 - d_0^{\mathrm{P}}  & = \frac{\hat{\mu} - \frac{U}{2}(1-d_0^{\mathrm{P}})}{2\pi} + \frac{\big(\hat{\mu} - \frac{U}{2}(1-d_0^{\mathrm{P}})\big)^2}{32\pi} + \frac{5\big(\hat{\mu} - \frac{U}{2}(1-d_0^{\mathrm{P}})\big)^3}{1536\pi} + O\big((4\pi - U)^4\big).
\end{align*}
Solving for $1-d_0^{\mathrm{P}}$ and utilizing (\ref{hatmuexpansion}), we obtain
\begin{align}\label{1minusd0PexpansionII}
1-d_0^{\mathrm{P}} = \frac{4\pi - U}{\pi^2} 
+ \frac{(4\pi - U)^2}{24 \pi^3} 
+ \frac{1 - 32 \pi^3 \mu_E}{128 \pi^4}(4\pi - U)^3 
    + O\big((4\pi - U)^{4}\big)
\end{align}
as $U \uparrow 4\pi$. On the other hand, utilizing the expansion (\ref{N0near4}) of $N_0$ in (\ref{FPSectorI}), we find
\begin{align*}
\mathcal{F}_{P}(U, \mu) 
= & -  \frac{U}{4} (d_0^{\mathrm{P}})^2 + \frac{U}{2} d_0^{\mathrm{P}} - \mu 
- \frac{(4 - (\mu - \frac{U}{2} d_0^{\mathrm{P}}))^2}{4\pi} 
- \frac{(4 - (\mu - \frac{U}{2} d_0^{\mathrm{P}}))^3}{96\pi} 
	\\
& + O\big((4\pi - U)^4\big)
\end{align*}
as $U \uparrow 4\pi$.
Employing (\ref{1minusd0PexpansionII}) to eliminate $d_0^{\mathrm{P}}$ from this expression, we obtain the first few terms on the right-hand side of (\ref{FPexpansionSectorII}); extending the calculation to higher orders, we obtain also the remaining terms.
\end{proof}

\subsection{F free energy in Sector II}
Our next lemma shows that the F mean-field equations have exactly two distinct solutions in Sector II; we denote these solutions by $(d_{0,1}^{\F}, m_{0,1}^{\F})$ and $(d_{0,2}^{\F}, m_{0,2}^{\F})$, ordered so that $m_{0,1}^{\F} < m_{0,2}^{\F}$. The lemma shows that the F state corresponding to $(d_{0,1}^{\F}, m_{0,1}^{\F})$ has slightly higher free energy than the one corresponding to $(d_{0,2}^{\F}, m_{0,2}^{\F})$. 
Recall from (\ref{dGFdd0}) that $\mathcal{G}_{\F}(d_0, m_0, U, \mu)$ is a strictly concave function of $d_0$.
This means that as $m_0$ increases from $0$ to $+\infty$, the value of $\max_{d_0 \in \R} \mathcal{G}_{\F}(d_0, m_0, U, \mu)$ behaves as follows: it starts out at the P free energy $\mathcal{F}_{\mathrm{P}}(U, \mu)$ at $m_0 = 0$, it then increases to a local maximum at $m_0 = m_{0,1}^{\F}$, then decreases to a local minimum at $m_0 = m_{0,2}^{\F}$, and finally increases again to $+\infty$ as $m_0 \to +\infty$; the curve $\mu_{\II}(U)$ of Theorem \ref{sectorIIth} is characterized by the condition that $m_0 = m_{0,2}^{\F}$ is a global minimum for $\mu < \mu_{\II}(U)$ but only a local minimum for $\mu > \mu_{\II}(U)$.

\begin{lemma}[F mean-field solutions in Sector II]\label{FsolutionIIlemma}
There is a $U_0 < 4 \pi$ such that, if $(U, \mu) \in \II_{U_0, M}$, then the F mean-field equations (\ref{Fmeanfieldeqs}) have exactly two solutions in $\R \times (0, +\infty)$. Denoting these solutions by $(d_{0,1}^{\F}, m_{0,1}^{\F})$ and $(d_{0,2}^{\F}, m_{0,2}^{\F})$, where $0 < m_{0,1}^{\F} < m_{0,2}^{\F}$, the following asymptotic formulas are valid as $(U, \mu) \in \II_{U_0, M}$ tends to $(4\pi, 2\pi+4)$:
\begin{subequations}
\begin{align}\label{d01Fexpansion}
d_{0,1}^{\F} = &\; 1 - \frac{4\pi - U}{\pi^2} + O((4 \pi -U)^2),
	\\ \label{m01Fexpansion}
m_{0,1}^{\F} = &\; \frac{4\pi - U}{\pi^2 \sqrt{2}} + O((4 \pi -U)^2),
	\\ \nonumber
\mathcal{G}_{\F}(d_{0,1}^{\F}, m_{0,1}^{\F}, U, \mu) =& -4 - \pi + \bigg(\frac{1}{4} + \frac{4}{\pi}\bigg) (4\pi - U)
 -\frac{24+7 \pi }{12 \pi^3}(4\pi - U)^2  
 	\\ \nonumber
  & + \bigg(\frac{1}{4\pi^4} - \mu_E\bigg)(4\pi - U)^3 
  	\\ \label{GFatd01Fm01Fexpansion}
  &   + \bigg(\frac{1}{2304 \pi^5} + \frac{\mu_E}{\pi^2}\bigg)(4\pi - U)^4
 + O((4\pi - U)^5),
	\\ \label{d02Fexpansion}
d_{0,2}^{\F} = &\; 1 - \frac{4\pi - U}{\pi^2} - \frac{5(4\pi - U)^2}{48\pi^3}  + O((4 \pi -U)^3),
	\\ \label{m02Fexpansion}
m_{0,2}^{\F} = &\; \frac{4\pi - U}{\pi^2} + \frac{5(4\pi - U)^2}{48\pi^3} + O((4 \pi -U)^3),
	\\\nonumber
\mathcal{G}_{\F}(d_{0,2}^{\F}, m_{0,2}^{\F}, U, \mu) =& -4 - \pi + \bigg(\frac{1}{4} + \frac{4}{\pi}\bigg) (4\pi - U)
 -\frac{24+7 \pi }{12 \pi^3}(4\pi - U)^2 
 	\\ \nonumber
& + \bigg(\frac{1}{4\pi^4} - \mu_E\bigg)(4\pi - U)^3 
+ \bigg(-\frac{1}{1152 \pi^5} + \frac{\mu_E}{\pi^2}\bigg)(4\pi - U)^4
	\\  \nonumber
& - \frac{49-1200 \pi^3 \mu_E }{11520 \pi^6} (4 \pi - U)^5
	\\  \nonumber
& -\frac{51840 \pi^3 \mu_E  \left(4 \pi^3 \mu_E -1\right)+3919}{1658880 \pi^7} (4 \pi - U)^6
	\\  \nonumber
& -\frac{5040 \pi^3 \mu_E  \left(576 \pi^3 \mu_E -133\right)+58847 }{46448640 \pi^8}(4 \pi -U)^7
	\\  \nonumber
& -\frac{48384 \pi^3 \mu_E  \left(5520 \pi^3 \mu_E -1429\right)+6284935 }{8918138880 \pi^9} (4 \pi - U)^8   
 	\\ \label{GFatd02Fm02Fexpansion}
& + O((4\pi - U)^9),
\end{align}
\end{subequations}
where the error terms are uniform with respect to $\mu$, and $\mu_E$ is given by (\ref{muEdef}).
Moreover,
\begin{align}\label{d02Fm02FSectorII}
(d_{0,2}^{\F}, m_{0,2}^{\F}) = \bigg(1 - \int_{\R} N_0(\epsilon) \theta\Big(\frac{U}{2}  - \mu + \epsilon \Big) d\epsilon, \int_{\R}  N_0(\epsilon) \theta\Big(\frac{U}{2}  - \mu + \epsilon \Big) d\epsilon\bigg)
\end{align}
and
\begin{align}\label{dGFdmuequalsd02F}
\frac{d}{d \mu} \mathcal{G}_{\F}(d_{0,2}^{\F}(U,\mu), m_{0,2}^{\F}(U,\mu), U, \mu) = -d_{0,2}^{\F}(U, \mu) \qquad \text{for $(U, \mu) \in \II_{U_0, M}$}.
\end{align}
\end{lemma}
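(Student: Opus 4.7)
The plan is to reformulate the F mean-field equations into a symmetric pair and then to case-split according to whether one of the two ``effective band positions'' lies inside or outside $[-4,4]$. Introducing $a := \mu - \tfrac{U}{2}(d_0^{\F} + m_0^{\F})$ and $b := \mu - \tfrac{U}{2}(d_0^{\F} - m_0^{\F})$ (so that $b - a = U m_0^{\F}$) together with the auxiliary function $F(y) := \int_y^4 N_0(\epsilon)\,d\epsilon$, the equations (\ref{Fmeanfieldeqs}) take the symmetric form
\begin{equation*}
a = (\mu - \tfrac{U}{2}) + U F(b), \qquad b = (\mu - \tfrac{U}{2}) + U F(a).
\end{equation*}
Since (\ref{d0Fm0Fbound}) forces $d_0^{\F} \pm m_0^{\F} \le 1$, any solution with $m_0^{\F} > 0$ satisfies $a < 4$ and $a < b$, so the two cases to consider are $b \ge 4$ (yielding the second solution) and $b < 4$ (yielding the first).

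In the case $b \ge 4$ one has $F(b) = 0$, so the first equation forces $a = \mu - U/2$ and back-substitution into the second yields the closed form (\ref{d02Fm02FSectorII}). Writing $h := 4\pi - U$ and $s := 4 - (\mu - U/2)$, the constraint $b \ge 4$ reduces to $U F(4 - s) \ge s$; expanding $F(4 - s)$ via (\ref{N0near4}) shows the left-hand side minus the right equals $h^3/(48 \pi^3) + O(h^4)$, uniformly in $\mu_E := (\mu - \mu_{\II,0}(U))/h^3 \in [-M, M]$, so the constraint holds throughout $\II_{U_0, M}$ for $U_0$ sufficiently close to $4\pi$. Inserting the expansion of $F$ into the closed-form formula for $(d_{0,2}^{\F}, m_{0,2}^{\F})$ produces (\ref{d02Fexpansion})--(\ref{m02Fexpansion}); substituting into the definition (\ref{FHartreeFockFunction}) of $\mathcal{G}_{\F}$ (in which, on this solution, the Heaviside factor coming from the $d_0 - m_0$ integral vanishes identically on $[-4,4]$) and expanding gives (\ref{GFatd02Fm02Fexpansion}).

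In the case $b < 4$ one first establishes the a priori bound $a, b = 4 + O(h)$ uniformly in Sector II: the inequality $\mu - U/2 \le a < 4$ gives $a = 4 + O(h)$ directly, and then $F(a) \le F(\mu - U/2) = O(h)$ combined with the second equation gives the same bound for $b$. Parametrising $a = 4 - \alpha h$ and $b = 4 - \beta h$ with bounded $\alpha > \beta > 0$ and Taylor-expanding $F(4 - \sigma h) = \tfrac{\sigma h}{4\pi} + \tfrac{\sigma^2 h^2}{64\pi} + \tfrac{5\sigma^3 h^3}{3072 \pi} + \cdots$, the sum of the two equations yields $\alpha + \beta = 4/\pi + O(h)$, while their difference (after dividing by the nonzero factor $\alpha - \beta$) produces an independent relation that, combined with the next order of the sum, forces $\alpha\beta = 2/\pi^2 + O(h)$; this identifies the unique leading-order solution $\{\alpha, \beta\} = \{(2 + \sqrt{2})/\pi,\ (2 - \sqrt{2})/\pi\}$. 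The expansion is extended to all orders, and local uniqueness in this case follows, by applying the implicit function theorem to the properly rescaled map $(\alpha, \beta, h) \mapsto ((\Phi_1 + \Phi_2)/h^2,\ (\Phi_1 - \Phi_2)/(h^2(\alpha - \beta)))$ at $h = 0$ and the leading-order point, where $\Phi_j$ is the $j$-th equation of the symmetric pair moved to one side; back-substitution gives (\ref{d01Fexpansion})--(\ref{GFatd01Fm01Fexpansion}).

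Finally, (\ref{dGFdmuequalsd02F}) follows from the chain rule combined with the mean-field equations $\partial_{d_0}\mathcal{G}_{\F} = \partial_{m_0}\mathcal{G}_{\F} = 0$ and the identity $\partial_\mu \mathcal{G}_{\F} = -d_0$ extracted from (\ref{dGFdd0}), since the closed-form (\ref{d02Fm02FSectorII}) makes $(d_{0,2}^{\F}, m_{0,2}^{\F})$ a manifestly smooth function of $\mu$. The main obstacle I expect is the \emph{uniform} a priori bound on $(a,b)$ throughout Sector II required to exclude spurious F solutions outside neighbourhoods of the two leading-order points; once such a bound is in hand, the case $b \ge 4$ is handled by the explicit formula and the case $b < 4$ by the implicit function theorem, but the bound itself must be extracted carefully from the symmetric pair, the monotonicity and convexity of $F$ near $4^-$, and the specific defining inequalities of $\II_{U_0, M}$ in (\ref{sectorIIdef}).
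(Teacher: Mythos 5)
Your case split on whether $\mu - \tfrac{U}{2}(d_0^{\F}-m_0^{\F})$ lies above or below $4$, the closed-form identification of $(d_{0,2}^{\F},m_{0,2}^{\F})$ in the first case, the verification that the constraint reduces to $UF(4-\hat{\mu})-\hat{\mu}=\tfrac{(4\pi-U)^3}{48\pi^3}+O((4\pi-U)^4)>0$, and the Taylor expansion of $F$ via \eqref{N0near4} all match the paper's proof; you are using the same symmetric pair, just centered at $4$ rather than at $0$. The envelope/chain-rule derivation of \eqref{dGFdmuequalsd02F} is likewise the paper's argument. Two points need attention, one cosmetic and one substantive. Cosmetically: the claim that $d_0^{\F}+m_0^{\F}\le 1$ alone yields $a<4$ is wrong (it gives the opposite inequality $a\ge \mu-U/2$); the correct reason is that $a\ge 4$ would force $F(a)=F(b)=0$ and hence $a=b=\mu-U/2<4$, a contradiction.

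Substantively, the implicit-function-theorem step in the $b<4$ case does not work as you describe. The rescaled map $(\alpha,\beta,h)\mapsto\bigl((\Phi_1+\Phi_2)/h^2,\,(\Phi_1-\Phi_2)/(h^2(\alpha-\beta))\bigr)$ is not continuous, let alone $C^1$, at $h=0$: since $(\Phi_1+\Phi_2)/h\to \tfrac{8}{\pi}-2(\alpha+\beta)+O(h)$, the extra division by $h$ produces a $1/h$ singularity at every point with $\alpha+\beta\neq 4/\pi$. Even after a legitimate normalisation, both rescaled components evaluated at $h=0$ depend only on $\alpha+\beta$ (the constraint on $\alpha\beta$ arises only at the next order in $h$, as your own sum/difference computation shows), so the Jacobian one would need is singular in the $\alpha-\beta$ direction. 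The paper avoids this degeneracy by collapsing the pair into the single function $\varphi(x):=x-F(F(x))$ and showing that $\varphi(x\hat{\mu})/(4\pi-U)^3$ together with its $x$-derivative converges uniformly to the cubic $\tfrac{(1-2x)(8x^2-8x+1)}{48\pi^3}$ on $[0,1]$; the three zeros of that cubic are simple, so one gets exactly three solutions (the middle one being the P fixed point, identified by $x_2\hat\mu = F(x_2\hat\mu)$) with no further scaling issues. To repair your version you would need a genuine two-step Lyapunov--Schmidt argument, first solving for $\alpha+\beta$ and then for $\alpha\beta$ at successive orders (your leading-order algebra $\alpha+\beta=4/\pi$, $\alpha\beta=2/\pi^2$ is correct and would drive such a bootstrap), or simply switch to the $\varphi$ formulation, which makes the nondegeneracy manifest.
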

\begin{proof}
Choose $U_0 \in (8, 4\pi)$.
Let $(d_0^{\F}, m_0^{\F})$ be a solution of the F mean-field equations (\ref{Fmeanfieldeqs}) corresponding to $(U, \mu) \in \II_{U_0, \delta}$. As in the proof of Lemma \ref{FsolutionIlemma}, we consider the two cases $0 \leq \mu < 4 + \frac{U}{2}(d_0^{\F} - m_0^{\F})$ and $4 + \frac{U}{2}(d_0^{\F} - m_0^{\F}) \leq \mu$ in turn. 

Case 1. $0 \leq \mu < 4 + \frac{U}{2}(d_0^{\F} - m_0^{\F})$.

In this case, since $U \geq U_0 > 8$, the same argument as in the proof of Lemma \ref{FsolutionIlemma} shows that $(d_0^{\F}, m_0^{\F})$ obeys $\mu - \frac{U(d_0^{\F}\pm m_0^{\F})}{2} \in (-4, 4)$. Hence the F mean-field equations take the form (\ref{FmeanfieldeqsSectorI}), and (\ref{FHartreeFockFunction}) becomes
\begin{align}\nonumber
\mathcal{G}_{\F}(d_0^{\F}, m_0^{\F}, U, \mu) = &\; \frac{U}{4}  \big((m_0^{\F})^2 - (d_0^{\F})^2\big)
 + \frac{U}{2} d_0^{\F} - \mu
 	\\ \nonumber
& - \int_{\mu - \frac{U(d_0^{\F}-m_0^{\F})}{2}}^4 N_0(\epsilon) \bigg(\frac{U(d_0^{\F} - m_0^{\F})}{2}  - \mu + \epsilon\bigg) d\epsilon
	\\  \label{FHartreeFockSectorII}
& - \int_{\mu - \frac{U(d_0^{\F}+m_0^{\F})}{2}}^4 N_0(\epsilon) \bigg(\frac{U(d_0^{\F} + m_0^{\F})}{2}  - \mu + \epsilon \bigg) d\epsilon.
\end{align}
Let us introduce the new variables $\hat{\mu}, a,b$ by
\begin{align}\label{lol1}
\hat{\mu} := \frac{U}{2} - \mu+4, \qquad
a := 4 - \mu + \frac{U(d_0^{\F} - m_0^{\F})}{2}, \qquad
b := 4 - \mu + \frac{U(d_0^{\F} + m_0^{\F})}{2}.
\end{align} 
Since $d_0^{\F} + m_0^{\F} \in (-1,1)$ by (\ref{FmeanfieldeqsSectorI}), we have $0 < a < b < \hat{\mu}$. 
In terms of these variables, we can write (\ref{FmeanfieldeqsSectorI}) as
\begin{align}\label{Fmeanfieldeqsxy}
\begin{cases}
b  = F(a),
 	\\
a = F(b),
\end{cases}
\end{align}
where the function $F = F_{U, \mu}$ is defined by
$$F(x) = \hat{\mu} - U\int_{4-x}^4 N_0(\epsilon) d\epsilon,$$
and we can write (\ref{FHartreeFockSectorII}) as
\begin{align*}
\mathcal{G}_{\F}(d_0^{\F}, m_0^{\F}, U, \mu) = &\; \frac{U}{4} -\frac{(\hat{\mu}-a) (\hat{\mu}-b)}{U} -\mu
 	\\ \nonumber
& - \int_{4-a}^4 N_0(\epsilon) (a - 4  + \epsilon) d\epsilon
 - \int_{4-b}^4  N_0(\epsilon) (b - 4 + \epsilon ) d\epsilon.
\end{align*}
If $(a,b)$ is a solution of (\ref{Fmeanfieldeqsxy}), then $\varphi(a) = 0$, where $\varphi(x) := x - F(F(x))$. Conversely, if 
$\varphi(a) = 0$ for some $a \in (0, \hat{\mu})$ and $a < F(a)$, then $(a,b) = (a,F(a))$ is a solution of (\ref{Fmeanfieldeqsxy}). Similarly, if 
$\varphi(b) = 0$ for some $b \in (0, \hat{\mu})$ and $b > F(b)$, then $(a,b) = (F(b),b)$ is a solution of (\ref{Fmeanfieldeqsxy}).

A calculation using (\ref{hatmuexpansion}) shows that
\begin{align}\label{varphiexpansion}
\varphi(x \hat{\mu}) = \frac{(1-2x) (8 x^2-8 x+1)}{48 \pi^3}(4\pi - U)^3 + O((4\pi - U)^4) \qquad \text{as $U \uparrow 4\pi$}
\end{align}
uniformly for $x \in [0,1]$, and using that
%$$F'(x) = -U N_0(4 - x) < 0$$
%$$F''(x) = U N_0'(4 - x) < 0$$
$$\varphi'(x \hat{\mu}) = 1 - F'(F(x \hat{\mu}))F'(x \hat{\mu}) = 1 - U^2 N_0(4 - F(x \hat{\mu})) N_0(4 - x\hat{\mu})$$
we infer that (\ref{varphiexpansion}) can be differentiated term-wise with respect to $x$ without increasing the error term. 
It follows that $\varphi(x \hat{\mu}(U))/(4\pi - U)^3$ tends to the function $\frac{(1-2x) (8 x^2-8 x+1)}{48 \pi^3}$ as $U \uparrow 4\pi$. In particular, if $U$ is sufficiently close to $4\pi$, the function $x \mapsto \varphi(x \hat{\mu}(U))$ has exactly three zeros for $x \in [0,1]$. Denoting these zeros $x_1 < x_2 < x_3$, we obtain
\begin{align}\nonumber
x_1 = &\;  \frac{2-\sqrt{2}}{4} + \frac{-288 \sqrt{2} \pi^3 \mu_E +17 \sqrt{2}+3}{192 \pi } (4\pi - U) + O((4\pi - U)^2),
	\\\nonumber
 x_2 = &\;  \frac{1}{2} +\frac{4 \pi -U}{32 \pi } + O((4 \pi -U)^2),
 	\\ \label{x1x2x3expansions}
 x_3 = &\; \frac{2+\sqrt{2}}{4} + \frac{288 \sqrt{2} \pi^3 \mu_E -17 \sqrt{2}+3}{192 \pi } (4\pi - U) + O((4\pi - U)^2).
\end{align}
By Lemma \ref{Plemma1}, one of these zeros corresponds to the P solution and must satisfy $x\hat{\mu}=F(x\hat{\mu})$; a direct verification shows that this zero is $x_{2}$, i.e. $x_2 \hat{\mu} = F(x_2 \hat{\mu})$. Furthermore, $x_3 \hat{\mu}= F(x_1 \hat{\mu} )$, $x_3 \hat{\mu} = F(x_1 \hat{\mu})$, and $0 < x_1 \hat{\mu} < x_3 \hat{\mu} < \hat{\mu}$. We infer that there is a unique solution of the F mean-field equations in Case 1 with $m_0^{\F} > 0$; this is the solution corresponding to $(a,b) = (x_1 \hat{\mu}, x_3 \hat{\mu})$ and we denote it by $(d_{0,1}^{\F}, m_{0,1}^{\F})$. It follows from (\ref{hatmuexpansion}), \eqref{lol1}, and (\ref{x1x2x3expansions}) that  
\begin{align}
\begin{cases}
d_{0,1}^{\F} = 1 - \frac{2\hat{\mu} - a - b}{U} = 1 - \hat{\mu}\frac{2 - x_1 - x_3}{U} 
= 1 - \frac{4\pi - U}{\pi^2} + O((4 \pi -U)^2),
	\\ 
m_{0,1}^{\F} =\frac{b - a}{U} = \hat{\mu}\frac{x_3 - x_1}{U} 
= \frac{4\pi - U}{\pi^2 \sqrt{2}} + O((4 \pi -U)^{2}),
\end{cases}
\end{align}
which proves (\ref{d01Fexpansion}) and (\ref{m01Fexpansion}). On the other hand, utilizing (\ref{hatmuexpansion}), (\ref{x1x2x3expansions}), and (\ref{N0near4}) in the expression
\begin{align*}
\mathcal{G}_{\F}(d_{0,1}^{\F}, m_{0,1}^{\F}, U, \mu) = &\;-\frac{U}{4}-4+\hat{\mu} - \hat{\mu}^2 \frac{(1-x_1) (1-x_3)}{U}
 	\\ \nonumber
& - \int_{4-x_1 \hat{\mu}}^4  N_0(\epsilon) (x_1 \hat{\mu} - 4  + \epsilon) d\epsilon
 - \int_{4-x_3 \hat{\mu}}^4  N_0(\epsilon) (x_3 \hat{\mu} - 4 + \epsilon ) d\epsilon,
\end{align*}
we obtain the first few terms in (\ref{GFatd01Fm01Fexpansion}); extending the calculation to higher order gives \eqref{GFatd01Fm01Fexpansion}.

Case 2. $4 + \frac{U}{2}(d_0^{\F} - m_0^{\F}) \leq \mu$.

In this case, the F mean-field equations (\ref{Fmeanfieldeqs}) reduce to
$$d_0^{\F} + m_0^{\F} = 1, \qquad
m_0^{\F}
 = \int_{\R}  N_0(\epsilon) \theta\Big(\frac{U}{2}  - \mu + \epsilon \Big) d\epsilon,$$
so (\ref{d02Fm02FSectorII}) is the unique solution of (\ref{Fmeanfieldeqs}) provided that the condition $4 + \frac{U}{2}(d_{0,2}^{\F} - m_{0,2}^{\F}) \leq \mu$ is fulfilled.
To verify this condition, we use (\ref{d02Fm02FSectorII}) to write it as
\begin{align}\label{condFcase2SectorII}
 \mu - \frac{U}{2} - 4 + U\int_{\R} N_0(\epsilon) \theta\Big(\frac{U}{2}  - \mu + \epsilon \Big) d\epsilon \geq 0.
\end{align}
If $(U, \mu) \in \II_{U_0, M}$, then we see using (\ref{N0near4}) that the left-hand side of (\ref{condFcase2SectorII}) equals
 \begin{align*}
 \mu - \frac{U}{2} - 4 + U\int_{4 - \hat{\mu}}^4 N_0(\epsilon) d\epsilon
&= \frac{(4 \pi - U)^3}{48 \pi^3}
 + O((4 \pi -U)^4).
 \end{align*}
Increasing $U_0 < 4\pi$ if necessary, we conclude that (\ref{condFcase2SectorII}) indeed holds for all $(U, \mu) \in \II_{U_0, \delta}$. A calculation using (\ref{d02Fm02FSectorII}), (\ref{hatmuexpansion}), and (\ref{N0near4}) gives (\ref{d02Fexpansion}) and (\ref{m02Fexpansion}).
Finally, by (\ref{FHartreeFockFunction}),
 \begin{align*}
\mathcal{G}_{\F}(d_{0,2}^{\F}, m_{0,2}^{\F}, U, \mu) = 
 \frac{U}{4}  \big((m_{0,2}^{\F})^2 - (d_{0,2}^{\F})^2\big)
 + \frac{U}{2} d_{0,2}^{\F} - \mu
 - \int_{4 - \hat{\mu}}^4  N_0(\epsilon) \big(\hat{\mu} - 4 + \epsilon \big) d\epsilon,
\end{align*}
and employing (\ref{hatmuexpansion}), (\ref{d02Fexpansion}), (\ref{m02Fexpansion}), and (\ref{N0near4}) in this formula, we arrive at the first few terms in (\ref{GFatd02Fm02Fexpansion}); extending the calculation to higher order gives (\ref{GFatd02Fm02Fexpansion}).
%The condition (\ref{condFcase2SectorII}) holds for all $\mu \leq \mu_{FP2}$, where $\mu_{FP1} < \mu_{FP2}$ and $\mu_{FP1}$ is the value at which the local max F solution appears ($x_1=x_2=x_3 = 1/2$ at $\mu = \mu_{FP1}$), and $\mu_{FP2}$ is the value at which $x_1 = 0$. So both the local max F solution and the local min F solution disappear at $\mu_{FP2}$. Local min F solution is the one found here in case 2. Local max F solution is present for $\mu$ between $\mu_{FP1}$ and $\mu_{FP2}$, and it is the F solution found in case 1 corresponding to $x_1$ and $x_3$. 
\end{proof}

\begin{lemma}\label{FsolutionIIlemma2}
There is a $U_0 < 4 \pi$ such that, if $(U, \mu) \in \II_{U_0, M}$, then 
\begin{align}\label{FequalsGFd02Fm02F}
\mathcal{F}_{\F}(U, \mu) = \mathcal{G}_{\F}(d_{0,2}^{\F}, m_{0,2}^{\F}, U, \mu)
\end{align}
and
\begin{align}\label{dFFdmuequalsd02F}
\frac{\partial \mathcal{F}_{\F}}{\partial \mu} (U, \mu) = -d_{0,2}^{\F}(U, \mu),
\end{align}
where $(d_{0,2}^{\F}, m_{0,2}^{\F})$ is given by (\ref{d02Fm02FSectorII}).
\end{lemma}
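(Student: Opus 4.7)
The strategy is to leverage Lemma \ref{FsolutionIIlemma}, which identifies all F mean-field solutions with $m_{0}^{\F}>0$ in Sector II. Since there are exactly two such solutions, $(d_{0,1}^{\F}, m_{0,1}^{\F})$ and $(d_{0,2}^{\F}, m_{0,2}^{\F})$, the definition (\ref{freeenergiesF}) of the F free energy reduces to a minimum over just these two candidates:
\begin{align*}
\mathcal{F}_{\F}(U, \mu) = \min\bigl\{\mathcal{G}_{\F}(d_{0,1}^{\F}, m_{0,1}^{\F}, U, \mu),\; \mathcal{G}_{\F}(d_{0,2}^{\F}, m_{0,2}^{\F}, U, \mu)\bigr\}.
\end{align*}
To establish (\ref{FequalsGFd02Fm02F}), it is therefore enough to show that the second candidate always wins, i.e., that the difference $\mathcal{G}_{\F}(d_{0,1}^{\F}, m_{0,1}^{\F}, U, \mu) - \mathcal{G}_{\F}(d_{0,2}^{\F}, m_{0,2}^{\F}, U, \mu)$ is strictly positive throughout $\II_{U_0, M}$ provided $U_0 < 4\pi$ is chosen close enough to $4\pi$.

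The comparison is carried out directly from the asymptotic expansions (\ref{GFatd01Fm01Fexpansion}) and (\ref{GFatd02Fm02Fexpansion}). The first four terms (up to and including order $(4\pi - U)^3$) coincide, and subtracting yields
\begin{align*}
\mathcal{G}_{\F}(d_{0,1}^{\F}, m_{0,1}^{\F}, U, \mu) - \mathcal{G}_{\F}(d_{0,2}^{\F}, m_{0,2}^{\F}, U, \mu) = \frac{1}{768\pi^{5}}\,(4\pi - U)^{4} + O\bigl((4\pi - U)^{5}\bigr),
\end{align*}
where the error is uniform in $\mu_E \in [-M, M]$ (that is, in $\mu$ on Sector II) by the uniformity assertions in Lemma \ref{FsolutionIIlemma}. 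The leading coefficient is strictly positive and the $\mu_E$-dependence drops out at order $(4\pi - U)^{4}$, so for $U_0$ sufficiently close to $4\pi$ the leading term dominates uniformly in $\mu$ and the difference is strictly positive. This gives (\ref{FequalsGFd02Fm02F}).

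For (\ref{dFFdmuequalsd02F}), I would simply differentiate the identity (\ref{FequalsGFd02Fm02F}) with respect to $\mu$ and invoke (\ref{dGFdmuequalsd02F}), which has already been recorded as part of Lemma \ref{FsolutionIIlemma}. (Equivalently, one can differentiate $\mathcal{G}_{\F}(d_{0,2}^{\F}(U,\mu), m_{0,2}^{\F}(U,\mu), U, \mu)$ using the chain rule: the contributions involving $\partial_{d_0}\mathcal{G}_{\F}$ and $\partial_{m_0}\mathcal{G}_{\F}$ vanish by the F mean-field equations, leaving only $\partial_\mu \mathcal{G}_{\F} = -d_{0,2}^{\F}$ by (\ref{dGFdd0}).) The only point requiring care is that the expansions in Lemma \ref{FsolutionIIlemma} are asymptotic statements, so the main (and essentially only) obstacle is to confirm that the uniform-in-$\mu$ control on the remainders is strong enough to guarantee strict positivity of the difference throughout the whole sector $\II_{U_0, M}$, rather than merely in the limit; this is ensured by the uniformity already built into Lemma \ref{FsolutionIIlemma} together with a sufficiently small choice of $4\pi - U_0$.
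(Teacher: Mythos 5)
Your proposal is correct and follows essentially the same argument as the paper: both compare the two F mean-field candidates via the expansions of Lemma \ref{FsolutionIIlemma}, observe that the $\mu_E$-dependence cancels at order $(4\pi - U)^4$ leaving the strictly positive coefficient $\frac{1}{2304\pi^5} + \frac{1}{1152\pi^5} = \frac{1}{768\pi^5}$, and deduce (\ref{dFFdmuequalsd02F}) directly from (\ref{dGFdmuequalsd02F}). The paper's proof is more terse but identical in substance.
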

\begin{proof}
The expansions (\ref{GFatd01Fm01Fexpansion}) and (\ref{GFatd02Fm02Fexpansion}) imply that 
$$\mathcal{G}_{\F}(d_{0,1}^{\F}, m_{0,1}^{\F}, U, \mu) - \mathcal{G}_{\F}(d_{0,2}^{\F}, m_{0,2}^{\F}, U, \mu)
=\bigg(\frac{1}{2304 \pi^5} + \frac{1}{1152 \pi^5}\bigg)(4\pi - U)^4 + O((4\pi - U)^5)$$
as $U\uparrow 4\pi$ in Sector  II. 
Since $(d_{0,1}^{\F}, m_{0,1}^{\F})$ and $(d_{0,2}^{\F}, m_{0,2}^{\F})$ are the only two F mean-field solutions in $\R \times (0, +\infty)$ by Lemma \ref{FsolutionIIlemma}, the lemma follows from (\ref{freeenergiesF}) and (\ref{dGFdmuequalsd02F}).
\end{proof}

\subsection{AF free energy in Sector II}

We will show that the AF mean-field equations have no solutions with $m_1^{\AF} > 0$ whenever $\mu \geq U/2$. Increasing $U_0 < 4\pi$ if necessary, we have $\mu \geq U/2$ whenever $(U, \mu) \in \II_{U_0, M}$ (see Figure \ref{sectorsfig}), which means that no AF states exist in Sector II.

\begin{lemma}[No AF mean-field solutions in Sector II]\label{AFnosolutionlemma}
Suppose $U > 0$ and $\mu \geq U/2$. Then the AF mean-field equations (\ref{AFmeanfieldeqs}) have no solution $(d_0^{\AF}, m_1^{\AF}) \in \R \times (0, +\infty)$.
\end{lemma}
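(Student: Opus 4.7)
The plan is to argue by contradiction. Suppose $(d_0^{\AF}, m_1^{\AF}) \in \R \times (0, +\infty)$ solves (\ref{AFmeanfieldeqs}) with $\mu \geq U/2$. Since $\mu \geq 0$, Lemma \ref{limsupdeltaUlemma} gives $d_0^{\AF} \in [0, 1)$, so $\tfrac{U}{2} d_0^{\AF} - \mu \leq \tfrac{U}{2} - \mu \leq 0$. In the notation $\Delta$, $X_{\pm}$ of (\ref{Deltadef})--(\ref{Xpmdef}), this yields $X_{-} \leq -\sqrt{\Delta^2+\epsilon^2} < 0$ for every $\epsilon$, so $\theta(X_-)\equiv 0$ on $[0,4]$ and the system (\ref{AFeqs}) collapses to
\begin{align*}
1 - d_0^{\AF} \;=\; 2\int_0^4 N_0(\epsilon)\,\theta(X_+)\,d\epsilon, \qquad
\frac{1}{U} \;=\; \int_0^4 \frac{N_0(\epsilon)\,\theta(X_+)}{\sqrt{\Delta^2 + \epsilon^2}}\,d\epsilon.
\end{align*}

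Next, set $\alpha := \mu - \tfrac{U}{2} d_0^{\AF}$; since $d_0^{\AF} < 1$ and $\mu \geq U/2$, one has $\alpha > 0$, and $X_+ \geq 0$ is equivalent to $\sqrt{\Delta^2 + \epsilon^2} \geq \alpha$. The plan is to split on whether $\Delta \geq \alpha$ or $\Delta < \alpha$ and derive a contradiction in each case. If $\Delta \geq \alpha$, then $A_+ = (0, 4)$, so the first equation forces $d_0^{\AF} = 0$ (hence $\alpha = \mu \geq U/2$), and bounding $1/\sqrt{\Delta^2+\epsilon^2} \leq 1/\Delta$ strictly on $(0,4)$ turns the second equation into $1/U < 1/(2\Delta) \leq 1/U$, a contradiction. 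If instead $\Delta < \alpha$, set $b := \sqrt{\alpha^2 - \Delta^2}$; one must have $b < 4$ (otherwise $A_+$ has measure zero and the second equation reads $1/U = 0$), and then $A_+ = [b, 4)$. On this interval, $\sqrt{\Delta^2 + \epsilon^2} \geq \alpha$, strictly on $(b, 4)$, so using that $N_0 > 0$ on $(0,4)$ the second equation yields
\[
\frac{1}{U} \;<\; \frac{1}{\alpha}\int_b^4 N_0(\epsilon)\,d\epsilon \;=\; \frac{1-d_0^{\AF}}{2\alpha},
\]
where the last equality is the first equation. Hence $\alpha < \tfrac{U}{2}(1 - d_0^{\AF})$, so $\mu = \alpha + \tfrac{U}{2} d_0^{\AF} < U/2$, contradicting $\mu \geq U/2$.

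The main (and only) delicacy is to ensure that every comparison between integrands produces a \emph{strict} inequality between integrals, so that the final estimate $\mu < U/2$ rules out even the boundary case $\mu = U/2$ allowed by the hypothesis. This relies on two observations already present above: the density of states satisfies $N_0 > 0$ on the full open interval $(0,4)$ (hence on any subinterval of positive length), and the inequality $\sqrt{\Delta^2 + \epsilon^2} \geq \alpha$ in the second case is strict off the single point $\epsilon = b$. Everything else amounts to routine manipulation of the two reduced mean-field equations.
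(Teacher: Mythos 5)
Your proof is correct, and the overall strategy---argue by contradiction, observe $\theta(X_-)\equiv 0$, and split on whether the occupied set $A_+$ is all of $(0,4)$ or a proper subinterval $[b,4)$---is the same as the paper's. Your case split ($\Delta\ge\alpha$ vs.\ $\Delta<\alpha$) is equivalent to the paper's split ($d_0^{\AF}=0$ vs.\ $d_0^{\AF}\in(0,1)$), and the Case~1 arguments coincide. The notable difference is in Case~2: you bound $1/\sqrt{\Delta^2+\epsilon^2}$ by $1/\alpha = 1/\sqrt{\Delta^2+b^2}$ pointwise on $A_+$ (strictly off $\epsilon=b$), so that the second mean-field equation immediately yields $\frac{1}{U}<\frac{1-d_0^{\AF}}{2\alpha}$, hence $\alpha<\frac{U}{2}(1-d_0^{\AF})$ and $\mu<U/2$. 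The paper gets to the same inequality by first introducing an auxiliary $c_+\in(b_+,4)$ solving $\frac{1}{U}=\int_{c_+}^4 N_0(\epsilon)\,d\epsilon/\sqrt{\Delta^2+c_+^2}$ and then chaining $\sqrt{\Delta^2+b_+^2}<\sqrt{\Delta^2+c_+^2}=U\int_{c_+}^4 N_0<U\int_{b_+}^4 N_0=\frac{U}{2}(1-d_0^{\AF})$. The $c_+$ detour is logically unnecessary here; your direct comparison is shorter and equally rigorous. Your attention to strictness (via $N_0>0$ on $(0,4)$) correctly handles the boundary case $\mu=U/2$, matching the paper.
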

\begin{proof}
Let $(d_0^{\AF}, m_1^{\AF}) \in \R \times (0, +\infty)$ be a solution of (\ref{AFmeanfieldeqs}).
As in Section \ref{AFSectorIsec}, we let $\Delta = \frac{U}{2}m_1^{\AF}$ and let $X_\pm$ and $A_\pm$ be as in (\ref{Xpmdef}) and (\ref{Aplusminusdef}), respectively.

Suppose first that $d_0^{\AF} = 0$. As in the proof of Lemma \ref{AFsolutionIlemma}, this implies that $\Delta \geq \mu$ and that $\Delta$ satisfies (\ref{Deltaeq}). Our assumption $\mu \geq U/2$ gives $\Delta \geq U/2$ and hence (\ref{Deltaeq}) leads to the following contradiction:
\begin{align*}
\frac{2}{U} \leq \int_{\R} N_0(\epsilon) \frac{1}{\sqrt{\frac{U^2}{4} + \epsilon^2}}  d\epsilon
< \int_{\R}  N_0(\epsilon) \frac{1}{\sqrt{\frac{U^2}{4}}} d\epsilon = \frac{2}{U},
\end{align*}
showing that there is no solution with $d_0^{\AF} = 0$.

Suppose now that $d_0^{\AF} \neq 0$. By Lemma \ref{limsupdeltaUlemma}, we then have $d_0^{\AF} \in (0,1)$. 
As in the proof of Lemma \ref{d0zerolemma}, we find that there is a $b_+ \in (0, 4)$ such that (\ref{AFeqswithbplus}) and (\ref{bpluseq}) hold. In other words, the system of equations
\begin{subequations}\label{AFthreeeqs}
\begin{align}\label{AFthreeeqsa}
& 1 - d_0^{\AF} = 2\int_{b_+}^4 N_0(\epsilon) d\epsilon,
	\\\label{AFthreeeqsb}
& \frac{1}{U} = \int_{b_+}^4  N_0(\epsilon) 
 \frac{1}{\sqrt{\Delta^2 + \epsilon^2}} d\epsilon,
	\\\label{AFthreeeqsc}
& \frac{U d_0^{\AF}}{2} - \mu + \sqrt{\Delta^2 + b_+^2}  = 0,
\end{align}
\end{subequations}
has a solution $(d_0^{\AF}, \Delta, b_+)$ with $d_0^{\AF} \in (0,1)$, $\Delta > 0$, and $b_+ \in (0,4)$. We will show that this is only possible if $\mu < U/2$, which will complete the proof that no solution can exist if $\mu \geq U/2$.

Since $b_+ \in (0,4)$, equation (\ref{AFthreeeqsb}) implies that
$$\frac{1}{U} < \int_{b_+}^4  N_0(\epsilon) \frac{1}{\sqrt{\Delta^2 + b_+^2}} d\epsilon.$$
The function $c_+ \mapsto \int_{c_+}^4 N_0(\epsilon) \frac{1}{\sqrt{\Delta^2 + c_+^2}} d\epsilon$ decreases from 
$\int_{b_+}^4 N_0(\epsilon) \frac{1}{\sqrt{\Delta^2 + b_+^2}} d\epsilon$ to $0$ as $c_+$ increases from $b_+$ to $4$. It follows that the equation
\begin{align}\label{cpluseq}
\frac{1}{U} = \int_{c_+}^4 N_0(\epsilon) \frac{1}{\sqrt{\Delta^2 + c_+^2}} d\epsilon
\end{align}
has a unique solution $c_+ \in (b_+, 4)$. 
From (\ref{cpluseq}) and (\ref{AFthreeeqsa}), we deduce that $b_+ < c_+$ satisfies
$$\sqrt{\Delta^2 + b_+^2} < \sqrt{\Delta^2 + c_+^2} = U\int_{c_+}^4 N_0(\epsilon) d\epsilon < U\int_{b_+}^4  N_0(\epsilon) d\epsilon = \frac{U}{2}(1 - d_0^{\AF}).$$
Using (\ref{AFthreeeqsc}) to eliminate $d_0^{\AF}$ from the right-hand side, we can write this as
$$\sqrt{\Delta^2 + b_+^2} < \frac{U}{2} - \mu + \sqrt{\Delta^2 + b_+^2},$$
which shows that $\mu < U/2$ as desired.
\end{proof}

\subsection{Final steps}
By Lemma \ref{AFnosolutionlemma}, the AF mean-field equations (\ref{AFmeanfieldeqs}) have no solution in $\R \times (0, +\infty)$ in Sector II, so we only have to consider the P and F free energies.
Comparing (\ref{d0PexpansionSectorII}) and (\ref{d02Fexpansion}), we see that, increasing $U_0 \in (0, 4\pi)$ if necessary, we have $d_{0,2}^{\F} < d_{0}^{\mathrm{P}}$ for all $(U, \mu) \in \II_{U_0, M}$.
By (\ref{dFPdmuequalsd0P}) and (\ref{dFFdmuequalsd02F}), this means that the function $\mu \mapsto \mathcal{F}_{\mathrm{P}}(U, \mu) - \mathcal{F}_{\F}(U, \mu)$ is strictly decreasing for $(U, \mu) \in \II_{U_0, M}$.
Moreover, we deduce from (\ref{FPexpansionSectorII}), (\ref{GFatd02Fm02Fexpansion}), and (\ref{FequalsGFd02Fm02F}) that $\mathcal{F}_{\mathrm{P}}(U, \mu) - \mathcal{F}_{\F}(U, \mu)$ is strictly positive for $\mu = \mu_{\II,0}(U) - M (4\pi - U)^3$ and strictly negative for $\mu = \mu_{\II,0}(U) + M (4\pi - U)^3$, whenever $U$ is sufficiently close to $4\pi$.
Consequently, increasing $U_0$ if necessary, there is a unique function $\mu_{\II}(U)$ such that for $(U, \mu) \in \II_{U_0, M}$ it holds that
$$\begin{cases}
\mathcal{F}_{\F}(U, \mu) < \mathcal{F}_{\mathrm{P}}(U, \mu) & \text{if $\mu < \mu_{\II}(U)$},	\\ 
\mathcal{F}_{\F}(U, \mu) = \mathcal{F}_{\mathrm{P}}(U, \mu) & \text{if $\mu = \mu_{\II}(U)$}, \\ 
\mathcal{F}_{\F}(U, \mu) > \mathcal{F}_{\mathrm{P}}(U, \mu) & \text{if $\mu > \mu_{\II}(U)$}.
\end{cases}
$$
It is a consequence of Lemma \ref{FsolutionIIlemma2} that $\mathcal{F}_{\F}$ depends smoothly on $(U, \mu) \in \II_{U_0, M}$. Smoothness of $U \mapsto \mu_{\II}(U)$ therefore follows from the implicit function theorem and the smoothness properties of $\mathcal{F}_{\mathrm{P}}$ established in Lemma \ref{Plemma}.
Defining $\mu_{E, \II} = \mu_{E, \II}(U)$ by
\begin{align}\label{muEIIdef}
\mu_{\II}(U) = \mu_{\II,0}(U) + \mu_{E, \II} (4 \pi - U)^3,
\end{align}
it follows from (\ref{FPexpansionSectorII}), (\ref{GFatd02Fm02Fexpansion}), and (\ref{FequalsGFd02Fm02F}) that 
\begin{align*}
& \frac{\left(17-288 \pi^3 \mu_{E, \II}\right) (4 \pi-U )^5}{4608 \pi^6}
 + \frac{\left(4921-51840 \pi^3 \mu_{E, \II}\right) (4 \pi-U )^6}{2211840 \pi^7}
	\\
& + \frac{\left(432 \pi^3 \mu_{E, \II} \left(192 \pi^3 \mu_{E, \II}-77\right)+3271\right)
   (4 \pi -U)^7}{2654208 \pi^8}
	\\
& + \frac{\left(24192 \pi^3 \mu_{E, \II} \left(33120 \pi^3 \mu_{E, \II}-10721\right)+24819443\right) (4 \pi-U )^8}{35672555520 \pi^9} = O((4\pi - U)^9).
\end{align*}
It results that
\begin{align*}
 \mu_{E, \II}(U) = &\; \frac{17}{288 \pi^3}
 + \frac{1861}{138240 \pi^4} (4 \pi - U)
 + \frac{15181}{3317760 \pi^5} (4 \pi - U)^2
 	\\
& +  \frac{469909}{247726080 \pi^6} (4 \pi - U)^3
%+ \frac{11248601}{12740198400 \pi^7}  (4 \pi - U)^4
 + O((4\pi - U)^{4}).
\end{align*}
Substituting this expansion into (\ref{muEIIdef}) and (\ref{GFatd02Fm02Fexpansion}), as well as into higher-order versions of (\ref{d0PexpansionSectorII}) and (\ref{d02Fexpansion}), we find the asymptotic formulas (\ref{muIIexpansion}) 
and (\ref{freeenergiesatmuII}), as well as (\ref{d0FatmuIIexpansion}) and (\ref{d0PatmuIIexpansion}). 
This completes the proof of Theorem \ref{sectorIIth}.

\section{Proof of Theorem \ref{sectorIIIth}}\label{SectorIIIproofsec}

\subsection{P free energy in Sector III}
The following lemma provides the asymptotic behavior of the P free energy in Sector III.

\begin{lemma}[Asymptotics of $\mathcal{F}_{\mathrm{P}}$ in Sector III]\label{FPsectorIIIlemma}
Let $\delta \in (0,8)$.
As $(U, \mu) \in \III_{U_0, \delta}$ tends to $(0,0)$, we have
\begin{align}\label{d0PexpansionSectorIII}
d_0^{\mathrm{P}} = &\;  e^{-\frac{2\pi}{\sqrt{U}}} \hat{\mu} \bigg(\frac{2}{\pi \sqrt{U}} 
+ \frac{\ln \big(\frac{16}{\hat{\mu}}\big)-1}{\pi^2}
- \frac{2 \ln \big(\frac{16}{\hat{\mu}}\big)-1}{\pi^3}\sqrt{U}
+ O(U)\bigg)
\end{align}
and
\begin{align}\nonumber
\mathcal{F}_{\mathrm{P}}(U, \mu) 
= & - \frac{16}{\pi^2} + e^{-\frac{4\pi}{\sqrt{U}}} \hat{\mu}^2 \bigg\{ 
-\frac{1}{\pi  \sqrt{U}}
 + \frac{1-2 \ln (\frac{16}{\hat{\mu}})}{4 \pi^2}
+ \frac{\ln (\frac{16}{\hat{\mu}})}{\pi^3} \sqrt{U}
	\\ \label{FPexpansionSectorIII}
& + \frac{(\ln (\frac{16}{\hat{\mu}})-4) \ln
   (\frac{16}{\hat{\mu}})+1}{4 \pi^4} U
+ \frac{3 (4-3 \ln (\frac{16}{\hat{\mu}})) \ln
   (\frac{16}{\hat{\mu}})-5}{12 \pi^5} U^{3/2}
   + O(U^2)
   \bigg\},
\end{align}
where $\hat{\mu} := \mu e^{\frac{2\pi}{\sqrt{U}}} \in [16 + \delta, 32 - \delta]$, and the error term is uniform with respect to $\hat{\mu}$.
\end{lemma}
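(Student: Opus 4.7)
My plan is to derive the expansion of $d_0^{\mathrm{P}}$ from the P mean-field equation by reducing it to an implicit scalar equation for $X := \mu - \tfrac{U}{2}d_0^{\mathrm{P}}$, and then substitute back into a simplified closed form for $\mathcal{F}_{\mathrm{P}}$. Using the evenness of $N_0$ and $\int_{-4}^4 N_0 = 1$, the mean-field equation (\ref{Pmeanfieldeq}) rearranges to
\[
d_0^{\mathrm{P}} = 2\int_0^X N_0(\epsilon)\,d\epsilon
\]
whenever $X \in (0,4)$, which holds for $\mu > 0$ and small $U$ by Lemma \ref{Plemma2} together with $d_0^{\mathrm{P}}(U,\mu) > 0$. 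Substituting this identity into (\ref{PHartreeFockFunction}) and using $\int_0^4 N_0(\epsilon)\epsilon\,d\epsilon = 8/\pi^2$ (cf.\ (\ref{N0epsilon8pi2})), a short simplification yields the clean representation
\[
\mathcal{F}_{\mathrm{P}}(U,\mu) = -\frac{16}{\pi^2} + \frac{U}{4}(d_0^{\mathrm{P}})^2 - \mu\, d_0^{\mathrm{P}} + 2\int_0^X N_0(\epsilon)\,\epsilon\,d\epsilon,
\]
which exposes the baseline $-16/\pi^2$ of Lemma \ref{Plemma} and confines the $(U,\mu)$-dependence to integrals over the shrinking interval $(0,X)$.

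Next I would invoke the local expansion of $N_0$ at its logarithmic singularity, established in Appendix \ref{N0app},
\[
N_0(\epsilon) = \frac{1}{2\pi^2}\ln\frac{16}{|\epsilon|} + O(\epsilon^2 \ln|\epsilon|) \quad \text{as } \epsilon \to 0,
\]
which integrates termwise to
\[
2\int_0^X N_0(\epsilon)\,d\epsilon = \frac{X(L+1)}{\pi^2} + O(X^3 L), \qquad 2\int_0^X N_0(\epsilon)\,\epsilon\,d\epsilon = \frac{X^2 L}{2\pi^2} + \frac{X^2}{4\pi^2} + O(X^4 L),
\]
with $L := \ln(16/X)$. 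The mean-field relation thus becomes the implicit coupled system $d_0^{\mathrm{P}} = X(L+1)/\pi^2 + O(X^3 L)$ together with $X = \mu - \tfrac{U}{2} d_0^{\mathrm{P}}$.

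The crucial observation is the double asymptotic scaling: since $\hat{\mu} \in [16+\delta, 32-\delta]$ one has $\mu \asymp e^{-2\pi/\sqrt{U}}$ and, self-consistently, $L = \tfrac{2\pi}{\sqrt{U}} + \ln(16/\hat{\mu}) + o(1)$, so $\sqrt{U}$ and $UL \sim 2\pi\sqrt{U}$ are the natural small parameters. I would solve the system iteratively via an ansatz $X = \mu[1 + \alpha_1 \sqrt{U} + \alpha_2 U + \cdots]$ whose coefficients $\alpha_j$ are polynomials in $\ell := \ln(16/\hat{\mu})$, determined by re-expanding $L$ around $\ln(16/\mu)$ and matching order by order; this yields (\ref{d0PexpansionSectorIII}). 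Substituting the resulting expansions of $d_0^{\mathrm{P}}$ and $X$ into the closed form for $\mathcal{F}_{\mathrm{P}}$ produces (\ref{FPexpansionSectorIII}); the leading combination $-\mu d_0^{\mathrm{P}} + 2\int_0^X N_0(\epsilon)\epsilon\,d\epsilon$ gives the dominant term $-\hat{\mu}^2 e^{-4\pi/\sqrt{U}}/(\pi\sqrt{U})$ after an exact cancellation of two $O(\hat{\mu}^2 L e^{-4\pi/\sqrt{U}})$ contributions.

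The main obstacle I foresee is controlling error uniformity in $\hat{\mu} \in [16+\delta, 32-\delta]$. The parameter $\ell$ enters multiplicatively in every iteration step, so one must verify that each correction stays polynomially bounded in $\ell$ on the compact interval; in addition, the remainder $O(\epsilon^2 \ln|\epsilon|)$ in the expansion of $N_0$ must be sharpened via higher-order information from Appendix \ref{N0app}, so that the power-of-$X$ errors translate to the stated remainder $O(U^2)$ inside the braces in (\ref{FPexpansionSectorIII}) rather than a weaker estimate.
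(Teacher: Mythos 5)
Your approach reproduces the paper's proof almost exactly: reduction to the scalar unknown $X := \mu - \tfrac{U}{2} d_0^{\mathrm{P}}$, the observation that $X\in(0,\mu]$ and $X\to 0$, the closed form for $\mathcal{F}_{\mathrm{P}}$ obtained by subtracting off $\int_0^4 N_0(\epsilon)\epsilon\,d\epsilon = 8/\pi^2$, and the near-origin expansion of $N_0$ from Appendix~\ref{N0app}. The one methodological divergence is the inversion of the implicit relation $\tfrac{2(\mu-X)}{U} = \tfrac{X(\ln(16/X)+1)}{\pi^2} + O(X^3\ln\tfrac{1}{X})$: the paper rewrites it via the Lambert $W_{-1}$ branch and quotes its known asymptotic expansion (\ref{Wasymptotics}), whereas you propose an iterative power-series ansatz $X = \mu(1 + \alpha_1\sqrt{U} + \alpha_2 U + \cdots)$ with coefficients polynomial in $\ell = \ln(16/\hat\mu)$. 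Both routes are rigorous and produce (\ref{Xexpansion}); the Lambert-$W$ route obtains the expansion in one step once the standard asymptotics of $W_{-1}$ are cited, while your iteration is somewhat more elementary and self-contained.

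The obstacle you flag at the end is not genuine. Since $X \asymp \mu \asymp e^{-2\pi/\sqrt{U}}$ decays faster than any power of $U$, the remainders $O(X^3\ln\tfrac{1}{X})$ in the mean-field equation and $O(X^4\ln\tfrac{1}{X})$ in $\mathcal{F}_{\mathrm{P}}$ are automatically $o(X^2 U^k)$ for every fixed $k$; the \emph{leading} term of the $N_0$ expansion therefore already yields the stated $O(U^2)$ error inside the braces of (\ref{FPexpansionSectorIII}), and indeed the paper works with only the leading term. Uniformity in $\hat\mu\in[16+\delta,32-\delta]$ is likewise automatic since $\ell$ ranges over a compact interval and each correction coefficient is a polynomial in $\ell$ with bounded coefficients.
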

\begin{proof}
By Lemma \ref{Plemma2}, $d_0^{\mathrm{P}} \in (0,1]$ whenever $U > 0$ and $\mu > 0$.
Consequently, as $(U, \mu)$ tends to  $(0,0)$ in Sector III, the quantity
\begin{align}\label{XdefIII}
X := \mu - \frac{U d_0^{\mathrm{P}}}{2}
\end{align}
tends to $0$, which means that the P mean-field equation (\ref{Pmeanfieldeq}) takes the form
\begin{align}\label{PXmeanfieldeq}
2\frac{\mu - X}{U}
= 1 - 2 \int_{X}^4 N_0(\epsilon) d\epsilon = 2 \int_0^X  N_0(\epsilon) d\epsilon.
\end{align}
It follows that $X \in [0, \mu]$ in Sector III, because otherwise the two sides of (\ref{PXmeanfieldeq}) have different signs. 
Furthermore, expanding the right-hand side of (\ref{PXmeanfieldeq}) using (\ref{N0expansionfirstfew}), we find 
\begin{align}\label{d0eqsmallU}
2\frac{\mu - X}{U}
= \frac{X \big(\ln \big(\frac{16}{X}\big)+1\big)}{\pi^2}
+ O\bigg( X^3 \ln\frac{1}{X}\bigg).
\end{align}

Recall that the Lambert $W$ function $W_{-1}(x)$ is defined for $-\frac{1}{e} \leq x < 0$ as follows:
for each $-\frac{1}{e} \leq x < 0$, $W_{-1}(x)$ is the unique solution of the equation $y e^y = x$ that lies in $(-\infty,-1]$.
According to \cite[Eq. (4.19); see also text below (4.20)]{CGHJK1996}, the $W_{-1}$ obeys the following expansion:
\begin{align}\label{Wasymptotics}
W_{-1}(-x) = L_1 - L_2 + \frac{L_2}{L_1} + \frac{L_2(L_2-2)}{2L_1^2} + O\bigg(\frac{L_2^3}{L_1^3}\bigg) \qquad \text{as $x \downarrow 0$},
\end{align}
where $L_1 = -\ln(\frac{1}{x})$ and $L_2 = \ln(\ln(\frac{1}{x}))$.

Solving (\ref{d0eqsmallU}) for $X$, we obtain
\begin{align}\label{Xsolvedfor}
X = -\frac{2 \pi^2 \mu(1+ O(Y))}{U W_{-1}(-Z)},
\end{align}
where $Y, Z$ are short-hand notations for 
$$Y := \frac{U X^3}{\mu} \ln\frac{1}{X}, \qquad Z := \frac{\pi^2 e^{-\frac{2 \pi^2}{U}-1} \mu(1+ O(Y))}{8 U}.$$ 
Since $X \in [0, \mu]$, we have $Y = O(U X^2 \ln\frac{1}{X})$. In particular, as $(U, \mu)$ tends to  $(0,0)$ in Sector III, $Y$  tends to zero, and we obtain
\begin{align*}
\ln\frac{1}{Z} = \frac{2 \pi^2}{U} + \frac{2\pi}{\sqrt{U}} + \ln(U) + 1 + \ln \frac{8}{\pi^2 \hat{\mu}} + O(Y)
\end{align*}
and
\begin{align*}
\ln \bigg(\ln \frac{1}{Z}\bigg) = &\; \ln(2\pi^2) - \ln(U) + \frac{\sqrt{U}}{\pi} 
+ \frac{U \left(\ln \left(\frac{8}{\pi^2 \hat{\mu}}\right)+ \ln{U}\right)}{2
   \pi^2}
   	\\
& -\frac{U^{3/2} \left(3 \ln \left(\frac{8}{\pi^2 \hat{\mu}}\right)+3
   \ln{U} +1\right)}{6 \pi^3}
   + O(U^2 |\ln U|^2).
\end{align*}
Hence, the asymptotic formula (\ref{Wasymptotics}) yields
\begin{align*}
W_{-1}(-Z)
& = -\ln\frac{1}{Z} - \ln \bigg(\ln \frac{1}{Z}\bigg) - \frac{\ln (\ln \frac{1}{Z}) }{\ln\frac{1}{Z}} + O\bigg(\bigg(\frac{\ln (\ln \frac{1}{Z}) }{\ln\frac{1}{Z}}\bigg)^2\bigg)
	\\
& = - \frac{2 \pi^2}{U} - \frac{2\pi}{\sqrt{U}} - 1 - \ln\bigg(\frac{16}{\hat{\mu}} \bigg)  - \frac{\sqrt{U}}{\pi} - \frac{\ln\big(\frac{16}{\hat{\mu}}\big)}{2\pi^2} U + O(U^{3/2})
\end{align*}
as $(U, \mu)$ tends to  $(0,0)$ in Sector III.
Substituting this expansion into (\ref{Xsolvedfor}), we get
\begin{align}\label{Xexpansion}
X = &\; \mu\bigg(1 - \frac{\sqrt{U}}{\pi} 
-\frac{\ln \big(\frac{16}{\hat{\mu}}\big)-1}{2 \pi^2}U
+ \frac{2 \ln \big(\frac{16}{\hat{\mu}}\big)-1}{2 \pi^3}U^{3/2}
+ O(U^2)\bigg).
\end{align}
The expansion (\ref{d0PexpansionSectorIII}) is a direct consequence of (\ref{XdefIII}) and (\ref{Xexpansion}).

According to (\ref{FPGP}) and (\ref{PHartreeFockFunction}), the P free energy is given by
\begin{align*}
\mathcal{F}_{\mathrm{P}}(U, \mu) = & -  \frac{U}{4} (d_0^{\mathrm{P}})^2 - X - 2\int_X^4  N_0(\epsilon)(\epsilon  - X) d\epsilon
	\\
= & - \frac{(\mu - X)^2}{U} - \frac{16}{\pi^2} + 2\int_0^X  N_0(\epsilon)(\epsilon  - X) d\epsilon,
\end{align*}
where we have used (\ref{N0epsilon8pi2}) in the second step. Employing (\ref{N0expansionfirstfew}) again, we find
\begin{align*}
\mathcal{F}_{\mathrm{P}}(U, \mu) 
= & - \frac{(\mu - X)^2}{U} - \frac{16}{\pi^2} -\frac{X^2 \left(2 \ln \left(\frac{16}{X}\right)+3\right)}{4 \pi^2} + O\bigg( X^4 \ln\frac{1}{X}\bigg).
\end{align*}
Substituting (\ref{Xexpansion}) into this formula, we arrive at (\ref{FPexpansionSectorIII}).
\end{proof}

\subsection{F free energy in Sector III}
In this section, we prove the following proposition.

\begin{proposition}[No F mean-field solutions in Sector III]\label{FsolutionIIIprop}
There is a $U_0 > 0$ such that, if $(U, \mu) \in \III_{U_0, \delta}$, then the F mean-field equations (\ref{Fmeanfieldeqs}) have no solution $(d_0^{\F}, m_{0}^{\F}) \in \R \times (0, +\infty)$.
\end{proposition}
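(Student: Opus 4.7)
The plan is to argue by contradiction. Suppose $(d_0^{\F}, m_0^{\F}) \in \R \times (0,+\infty)$ solves the F mean-field equations for some $(U,\mu) \in \III_{U_0,\delta}$. Introduce the ``effective chemical potentials'' $X_\pm := \mu - U(d_0^{\F} \pm m_0^{\F})/2$ (so that $X_+ < X_-$) and the odd function $g(X) := 2\int_0^X N_0(\epsilon)\,d\epsilon$. Exactly as in the proof of Lemma~\ref{FPsectorIIIlemma}, using $\int_{\R} N_0 = 1$ and the even parity of $N_0$, the system \eqref{Fmeanfieldeqs} rewrites as $d_0^{\F} \pm m_0^{\F} = g(X_\mp)$. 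Combining this with the defining relations $X_+ + X_- = 2\mu - U d_0^{\F}$ and $X_- - X_+ = U m_0^{\F}$ yields the equivalent pair
\begin{align*}
F(X_+) + F(X_-) = 2\mu, \qquad h(X_+) = h(X_-),
\end{align*}
where $F(X) := X + U g(X)/2$ is odd and strictly increasing, and $h(X) := g(X)/2 - X/U$ is odd with $h'(X) = N_0(X) - 1/U$. Note that $m_0^{\F} > 0$ forces $X_+ \neq X_-$, so one is reduced to a two-parameter problem.

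Next I would analyze the geometry of $h$. For $U_0 < 4\pi$ small there is a unique $X^* \in (0,4)$ with $N_0(X^*) = 1/U$, hence $h$ is strictly increasing on $(-X^*,X^*)$ and strictly decreasing on $[-4,-X^*) \cup (X^*,4]$. Since $h(0) = 0$ and $h(4) = 1/2 - 4/U < 0$ for small $U$, there is a unique $X^{**} \in (X^*,4)$ with $h(X^{**}) = 0$. Using the expansion of $N_0$ near the origin from Appendix~\ref{N0app}, one obtains
\begin{align*}
X^* = 16\,e^{-\frac{2\pi^2}{U}}(1+o(1)), \quad X^{**} = 16\,e^{1-\frac{2\pi^2}{U}}(1+o(1)), \quad h(X^*) = O\bigl(e^{-\tfrac{2\pi^2}{U}}\bigr),
\end{align*}
as $U \downarrow 0$, together with $h'(X^{**}) = -\frac{1}{2\pi^2} + o(1)$. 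The equation $h(X_+) = h(X_-)$ with $X_+ < X_-$ then forces the common value $v := h(X_\pm)$ to lie in $[-h(X^*),h(X^*)]$ and restricts $(X_+,X_-)$ to one of three configurations, according to which pair of the monotonicity intervals $(-4,-X^*)$, $(-X^*,X^*)$, $(X^*,4)$ the two preimages belong to. Inverting $h$ on each such interval and linearizing near $X^{**}$, one checks that in every configuration
\begin{align*}
|X_\pm| \leq \widetilde{X} := (h|_{(X^*,4)})^{-1}\bigl(-h(X^*)\bigr) = X^{**} + X^* + o(X^*) = O\bigl(e^{-\tfrac{2\pi^2}{U}}\bigr).
\end{align*}

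To finish, the identity $h(X^{**}) = 0$ rewrites as $U g(X^{**})/2 = X^{**}$, which yields $F(X^{**}) = 2 X^{**}$; since $N_0 < 1/U$ on $(X^{**},\widetilde{X})$, a one-line bound then gives $F(\widetilde{X}) \leq 2\widetilde{X} = O(e^{-2\pi^2/U})$. Consequently
\begin{align*}
|F(X_+) + F(X_-)| \leq 2\,F(\widetilde{X}) = O\bigl(e^{-\tfrac{2\pi^2}{U}}\bigr) \quad \text{as } U \downarrow 0,
\end{align*}
whereas $(U,\mu) \in \III_{U_0,\delta}$ enforces $2\mu \geq 2(16+\delta)\,e^{-\frac{2\pi}{\sqrt{U}}}$. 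Because $e^{-2\pi^2/U}/e^{-2\pi/\sqrt{U}} \to 0$ as $U \downarrow 0$, shrinking $U_0$ produces $|F(X_+) + F(X_-)| < 2\mu$, contradicting the first mean-field equation and proving the proposition. The hardest step is the uniform bound $|X_\pm| \leq \widetilde{X}$: a priori $X_-$ could sit anywhere in $(X^*,4)$, and only the Stoner-type combination of the sign of $h'(X^{**})$ with the exponential smallness of $h(X^*)$ pins $X_-$ to within $O(X^*)$ of $X^{**}$. Once this geometric picture is established, the remaining estimates are straightforward consequences of the logarithmic singularity of $N_0$ at the origin, which is precisely what makes the two scales $e^{-2\pi^2/U}$ and $e^{-2\pi/\sqrt{U}}$ asymptotically incompatible.
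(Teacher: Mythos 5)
Your argument is correct and takes a genuinely different route from the paper. The paper parametrizes F solutions by the effective band edges $u < v$ via $G(x) := U\int_0^{\mu-x}N_0(\epsilon)\,d\epsilon$, reducing the F equations to the fixed-point relation $u = G(G(u))$, and then shows this has no solution with $u < G(u)$ by proving the graph of $G$ lies strictly below the reflection line $2x_1 - x$ for $x < x_1$ and strictly above for $x > x_1$ (Lemma~\ref{Gbelowabovelemma}), where $x_1$ is the P fixed point. The engine behind this is that $G'(x) > -1$ outside a window of size $e^{-1/U}$ around $\mu$, where $G$ itself is $O(e^{-1/U})$, both scales being negligible compared to $\mu \gtrsim e^{-2\pi/\sqrt{U}}$. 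You instead decouple the two F equations into the ``energy balance'' $F(X_+) + F(X_-) = 2\mu$ and the ``Stoner-type'' constraint $h(X_+) = h(X_-)$; the latter is $\mu$-independent and, since $X_+ \neq X_-$, pins both $X_\pm$ to the tiny scale $\widetilde X = O(e^{-2\pi^2/U})$ around the $\mu$-independent point $X^{**}$ by the unimodal structure of $h$. The former equation is then incompatible with $\mu \gtrsim e^{-2\pi/\sqrt{U}}$. Both proofs ultimately rest on the same separation of exponential scales, but yours anchors at $X^{**}$ (the zero of $h$) rather than the paper's $x_1$ (the P solution), which is a different and arguably cleaner way to make the Stoner criterion visible. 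Two small points you should make explicit: (i) the reduction of \eqref{Fmeanfieldeqs} to the form $d_0^{\F}\pm m_0^{\F}=g(X_\mp)$ requires first observing, via \eqref{d0Fm0Fbound}, that $X_\pm \in [\mu - U, \mu + U]\subset(-4,4)$ for $U_0$ small (the citation of Lemma~\ref{FPsectorIIIlemma} is to the wrong lemma; the correct precedent is the opening paragraph of the paper's proof of the present proposition); and (ii) the claimed unimodality of $h$ on $(0,4)$ needs two observations: near $0$, $N_0$ is strictly decreasing by \eqref{N0expansionfirstfew}, so $N_0 = 1/U$ has a unique root $X^*$ there, and on $[c,4]$ with $c>0$ fixed, $N_0$ is bounded, so $N_0 < 1/U$ there once $U_0$ is small — together these give $h'>0$ on $(0,X^*)$ and $h'<0$ on $(X^*,4)$ without needing global monotonicity of $N_0$.
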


The remainder of this section is devoted to the rather involved proof of Proposition \ref{FsolutionIIIprop}.
 
First note that if $(d_0^{\F}, m_0^{\F})$ solves the F mean-field equations (\ref{Fmeanfieldeqs}), then, by (\ref{d0Fm0Fbound}), we have $d_0^{\F} \in [-1,1]$ and $m_0^{\F} \in [0,1]$. Thus, shrinking $U_0 > 0$ if necessary, it holds that $\mu - \frac{U(d_0^{\F}\pm m_0^{\F})}{2} \in (-4, 4)$ whenever $(U, \mu) \in \III_{U_0, \delta}$, and thus the F mean-field equations take the form (\ref{FmeanfieldeqsSectorI}).
Introducing new variables $(u, v)$ by 
$$u := \frac{U(d_0^{\F} - m_0^{\F})}{2}, \qquad
v := \frac{U(d_0^{\F} + m_0^{\F})}{2},$$
we can write (\ref{FmeanfieldeqsSectorI}) as
\begin{align}\label{Fmeanfieldeqsuv}
\begin{cases}
v  = G(u),
 	\\
u = G(v),
\end{cases}
\end{align}
where the function $G = G_{U, \mu}$ is defined by 
\begin{align}\label{GIIIdef}
 G(x) = U \int_0^{\mu - x} N_0(\epsilon) d\epsilon.
\end{align}
Observe that $G(x)$ obeys
$$G'(x) = -U N_0(\mu - x) \leq 0,$$
where the inequality is strict for $x \in (-4+\mu, 4 + \mu)$. Moreover, $G(0) > 0$ and $G(\mu) = 0$. Hence there is a unique point $x_1 = x_1(U,\mu) \in (0, \mu)$ such that
\begin{align}\label{x1SectorIIIdef}
G(x_1) = x_1.
\end{align}
The solution $(u,v) = (x_1, x_1)$ of (\ref{Fmeanfieldeqsuv}) has $m_0^{\F} = 0$ and therefore corresponds to the P solution. Solutions of (\ref{Fmeanfieldeqs}) with $m_0^{\F} > 0$ correspond to solutions $(u,v)$ of (\ref{Fmeanfieldeqsuv}) such that $u < v$; in fact, $- U \leq u < v \leq U$ as a consequence of (\ref{d0Fm0Fbound}). Moreover, $u < \mu$ because otherwise we would have $v =G(u) \leq 0 <\mu \leq u$ contradicting the inequality $u<v$.
Also, if $(u,v)$ solves (\ref{Fmeanfieldeqsuv}), then
\begin{align}\label{uGGu}
u - G(G(u)) = 0.
\end{align}
In order to prove Proposition \ref{FsolutionIIIprop}, it is therefore sufficient to show the following:
\begin{align}\label{Gclaim}
 \text{If $(U, \mu) \in \III_{U_0, \delta}$, then (\ref{uGGu}) has no solution $u \in [-U, \mu)$ with $u < G(u) \leq U$.}
\end{align}

In order to prove (\ref{Gclaim}), we will analyze the function $G$ and the solution $x_1$ in a series of lemmas.

\begin{lemma}\label{Gnear0lemma}
The function $G$ defined in (\ref{GIIIdef}) satisfies
\begin{align}\label{Gnear0}
G(x) = U\frac{(\mu - x ) \big(\ln \big(\frac{16}{|\mu -x|}\big) + 1\big)}{2 \pi^2}
 + O\bigg(U|\mu - x|^3 \ln\frac{1}{|\mu - x|}\bigg)
 \end{align}
uniformly for $x \in [-U, U]$ as $(U, \mu) \in \III_{U_0, \delta}$ tends to $(0,0)$.
\end{lemma}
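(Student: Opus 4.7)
The plan is to substitute the near-origin expansion of the density of states directly into the defining integral $G(x) = U \int_0^{\mu-x} N_0(\epsilon) d\epsilon$ and integrate the logarithmic main term explicitly. From Appendix \ref{N0app} (equation (\ref{N0expansionfirstfew})) the expansion takes the form $N_0(\epsilon) = \frac{1}{2\pi^2}\ln(16/|\epsilon|) + r(\epsilon)$ with $r(\epsilon) = O(\epsilon^2 \ln(1/|\epsilon|))$ as $\epsilon \to 0$; this is exactly the expansion already used to pass from (\ref{PXmeanfieldeq}) to (\ref{d0eqsmallU}) in the proof of Lemma \ref{FPsectorIIIlemma}, so the ingredients are in place.

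First I would verify that the entire integration interval lies inside a fixed neighborhood of $0$ on which this expansion is uniform. By the definition (\ref{sectorIIIdef}) of Sector III we have $\mu \leq (32-\delta) e^{-2\pi/\sqrt{U}}$, which for $U_0$ small is much smaller than $U$; combined with $x \in [-U, U]$ this gives $|\mu - x| \leq U + \mu \leq 2U_0$. Shrinking $U_0$ further if necessary ensures $|\mu - x|$ stays in a fixed small neighborhood of $0$, independently of $x$.

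Next I would use the evenness $N_0(-\epsilon) = N_0(\epsilon)$ to reduce the integral to
\begin{equation*}
\int_0^{\mu - x} N_0(\epsilon) d\epsilon = \sgn(\mu - x) \int_0^{|\mu - x|} N_0(\epsilon) d\epsilon,
\end{equation*}
then carry out the elementary integration $\int_0^y \ln(16/\epsilon) d\epsilon = y(\ln(16/y) + 1)$ for $y > 0$, and bound the remainder by $\int_0^y |r(\epsilon)| d\epsilon = O(y^3 \ln(1/y))$. Multiplying through by $U$ and reassembling the $\sgn(\mu-x)$ factor with $|\mu - x|$ to produce $(\mu - x)$ in the leading term yields (\ref{Gnear0}). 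Uniformity of the error in $x \in [-U, U]$ follows from the uniformity of the constants in the expansion of $N_0$ on the fixed neighborhood. I do not expect any real obstacle beyond careful book-keeping; the only point that requires a moment's care is ensuring the sign convention recombines correctly so that the leading term carries $(\mu - x)$ rather than $|\mu - x|$.
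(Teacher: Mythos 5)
Your proposal is correct and follows essentially the same route as the paper: substitute the near-origin expansion (\ref{N0expansionfirstfew}) of $N_0$ into the defining integral and integrate the logarithmic leading term. The only cosmetic difference is that you handle both sign cases of $\mu - x$ in one pass using the evenness of $N_0$ and a $\sgn(\mu-x)$ factor, whereas the paper first establishes the result for $x\in[-U,\mu]$ and then extends to $x\in[\mu,U]$ via the anti-symmetry $G(\mu-y)=-G(\mu+y)$, which is the same evenness fact in disguise.
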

\begin{proof}
Using the expansion (\ref{N0expansionfirstfew}) of $N_0(\epsilon)$ as $\epsilon \to 0$, we obtain, for $x \in [-U, \mu]$,
\begin{align*}
G(x) 
& = U\int_0^{\mu - x} \bigg(\frac{\ln(\frac{16}{\epsilon})}{2\pi^2}  + O\bigg(\epsilon^2 \ln\bigg(\frac{16}{\epsilon}\bigg)\bigg)\bigg) d\epsilon,
\end{align*}
from which (\ref{Gnear0}) follows for $x \in [-U, \mu]$ after integration. Since $G(\mu-x) = -G(\mu + x)$, we find that (\ref{Gnear0}) holds also for $x \in [\mu, U]$.
\end{proof}

\begin{lemma}\label{x1lemma}
The solution $x_1 = x_1(U, \mu)$ of (\ref{x1SectorIIIdef}) satisfies
  $$x_1= O(\mu \sqrt{U})$$
uniformly as $(U, \mu) \in \III_{U_0, \delta}$ tends to $(0,0)$.
\end{lemma}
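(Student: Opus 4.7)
The plan is to exploit the monotonicity of $G$ on $[0,\mu]$ so that the fixed-point condition $x_1 = G(x_1)$ immediately yields an upper bound $x_1 \leq G(0)$, and then to estimate $G(0)$ directly from Lemma \ref{Gnear0lemma}.

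First, I would observe that
$$G'(x) = -U N_0(\mu - x) \leq 0 \quad \text{for } x \in [0,\mu],$$
since $\mu-x \in [0,\mu]\subset [-4,4]$ for small $\mu$ and $N_0 \geq 0$ on $[-4,4]$. Combined with the fact that $x_1 \in (0,\mu)$ satisfies $G(x_1) = x_1$, monotonicity gives
$$x_1 = G(x_1) \leq G(0).$$
Since $x_1 > 0$ as noted in the setup, it therefore suffices to show $G(0) = O(\mu\sqrt{U})$.

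Next, I would apply Lemma \ref{Gnear0lemma} at $x = 0$ to get
$$G(0) = \frac{U\mu\bigl(\ln(16/\mu) + 1\bigr)}{2\pi^2} + O\!\left(U\mu^3\ln\tfrac{1}{\mu}\right).$$
In Sector III we have $\mu = \hat{\mu}\, e^{-2\pi/\sqrt{U}}$ with $\hat{\mu} \in [16+\delta, 32-\delta]$, so
$$\ln(16/\mu) = \frac{2\pi}{\sqrt{U}} + \ln(16/\hat{\mu}) = \frac{2\pi}{\sqrt{U}} + O(1),$$
and in particular $\ln(1/\mu) = O(1/\sqrt{U})$. Substituting these into the expansion of $G(0)$ gives
$$G(0) = \frac{\mu\sqrt{U}}{\pi}\bigl(1 + O(\sqrt{U})\bigr) + O(\mu^3\sqrt{U}) = O(\mu\sqrt{U})$$
uniformly as $(U,\mu) \in \III_{U_0,\delta}$ tends to $(0,0)$, which is the required estimate.

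There is no real obstacle: the estimate is essentially a direct consequence of Lemma \ref{Gnear0lemma} once one recognizes that monotonicity of $G$ replaces the need to solve the transcendental equation $G(x_1) = x_1$. The only mildly delicate point is checking that the error term in Lemma \ref{Gnear0lemma} remains subdominant at $x = 0$, which follows from $\mu$ being exponentially small in $1/\sqrt{U}$ so that $U\mu^3\ln(1/\mu) = O(\mu^3\sqrt{U})$ is dwarfed by $\mu\sqrt{U}$.
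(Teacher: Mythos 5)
Your proof is correct and takes a genuinely different, more elementary route than the paper. The paper observes that the fixed-point equation $G(x_1)=x_1$, written in terms of $X:=\mu-x_1$, is identical to the P mean-field equation \eqref{d0eqsmallU}, and therefore inherits the full Lambert-$W$-based asymptotic expansion \eqref{Xexpansion} for $X$; extracting $x_1 = \mu - X$ then gives the leading term $x_1 \sim \mu\sqrt{U}/\pi$ and the lemma follows. You instead use the monotonicity $G'(x) = -U N_0(\mu-x) \leq 0$ (already recorded in the text below \eqref{GIIIdef}) together with $x_1\in(0,\mu)$ to get the one-sided bound $0 < x_1 = G(x_1) \leq G(0)$, and then estimate $G(0)$ directly from Lemma~\ref{Gnear0lemma} at $x=0$, which lies in its range of validity $[-U,U]$. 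Your computation of $G(0)$ is right: with $\mu=\hat\mu\,e^{-2\pi/\sqrt{U}}$ and $\hat\mu$ bounded, $\ln(16/\mu)=2\pi/\sqrt{U}+O(1)$ gives $G(0)=\mu\sqrt{U}/\pi\cdot(1+O(\sqrt{U}))+O(\mu^3\sqrt{U})=O(\mu\sqrt{U})$, uniformly in Sector III. The trade-off is that your argument yields only the $O$-bound the lemma actually asks for, while the paper's route delivers a full asymptotic series for $x_1$ at no extra cost because \eqref{Xexpansion} was already established for the P equation; since the lemma only needs the order of magnitude, your argument is tighter in scope and avoids invoking the Lambert $W$ machinery entirely.
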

\begin{proof}
Since $\mu - x_1 = O(\mu)$ in Sector III, we conclude from Lemma \ref{Gnear0lemma} that
$$0 = G(x_1) - x_1 
= U\frac{(\mu - x_1 ) \big(\ln\big(\frac{16}{\mu -x_1}\big) + 1\big)}{2 \pi^2} - x_1 + O\bigg(U|\mu - x_1|^3 \ln\frac{1}{|\mu - x_1|}\bigg).$$
Writing $X := \mu - x_1$, this equation is exactly (\ref{d0eqsmallU}) and can be solved in the same way, which implies that $X = \mu - x_1$ obeys (\ref{Xexpansion}). Consequently,
\begin{align}
 x_1 = &\; \mu\bigg(  \frac{\sqrt{U}}{\pi} 
+\frac{\ln \big(\frac{16}{\hat{\mu}}\big)-1}{2 \pi^2}U
- \frac{2 \ln \big(\frac{16}{\hat{\mu}}\big)-1}{2 \pi^3}U^{3/2}
+ O(U^2)\bigg),
\end{align}
from which the desired conclusion follows.
\end{proof}

It will be important to know where $G'(x) = -U N_0(\mu - x)$ is larger than $-1$. It is not always true that $G'(x) > -1$, because $G'(\mu) = -\infty$. However, the next lemma shows that $G'(x) > -1$ whenever $x$ stays a distance of order $e^{-1/U}$ away from $\mu$.

\begin{lemma}\label{Gprimelemma}
There are constants $U_0 > 0$ and $C_1 > 0$ such that the function $G$ defined in (\ref{GIIIdef}) satisfies
$$G'(x) \geq -\frac{1}{2 \pi^2} - C_1 U$$ 
for all $(U, \mu) \in \III_{U_0, \delta}$ and all  $x \in [-U,2U] \setminus [\mu - e^{-1/U}, \mu + e^{-1/U}]$.
\end{lemma}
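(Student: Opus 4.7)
The plan is to bound $G'(x) = -U N_0(\mu - x)$ directly by inserting the small-$\epsilon$ expansion of $N_0$ (which is presumably (\ref{N0expansionfirstfew}) of the form $N_0(\epsilon) = \tfrac{\ln(16/|\epsilon|)}{2\pi^2} + O(\epsilon^2 \ln(1/|\epsilon|))$) and exploiting the hypothesis $|\mu - x| \geq e^{-1/U}$ to prevent the logarithm from blowing up faster than $1/U$.

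First, I would observe that in Sector III one has $0 \leq \mu \leq 32\, e^{-2\pi/\sqrt{U}}$, which is exponentially small in $1/\sqrt{U}$ and therefore much smaller than $U$ for $U_0$ small enough. Consequently, for $x \in [-U,2U]$ we have $|\mu - x| \leq 3U$ (after shrinking $U_0$), so $\mu - x$ lies well inside the interval of validity of the expansion of $N_0$ near $0$, and in particular $N_0(\mu-x)$ is determined by that expansion rather than by the behavior near $\pm 4$.

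Next, I would estimate the two pieces of
\begin{equation*}
U N_0(\mu - x) = \frac{U \ln (16/|\mu - x|)}{2\pi^2} + O\Bigl( U (\mu - x)^2 \ln (1/|\mu - x|)\Bigr)
\end{equation*}
separately. For the leading term, the assumption $|\mu - x| \geq e^{-1/U}$ yields $\ln(16/|\mu - x|) \leq \ln 16 + 1/U$, so this piece is bounded by $\tfrac{1}{2\pi^2} + \tfrac{U \ln 16}{2\pi^2}$. For the remainder, using $(\mu - x)^2 = O(U^2)$ together with the crude bound $\ln(1/|\mu - x|) \leq 1/U + O(1)$ (again from $|\mu - x| \geq e^{-1/U}$ and $|\mu - x|$ bounded above), one gets an $O(U^2)$ contribution. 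Combining, $U N_0(\mu - x) \leq \tfrac{1}{2\pi^2} + C_1 U$ for a suitable $C_1 > 0$, which is the desired inequality after multiplying by $-1$.

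The only mildly delicate point is verifying that the error estimate for $N_0$ near $0$ holds uniformly on the range $|\mu - x| \in [e^{-1/U}, 3U]$ and that the constant in the $O$-symbol can be absorbed into $C_1 U$; this requires (after fixing $U_0$) only that $|\mu - x|$ be kept bounded away from $\pm 4$, which is automatic since $|\mu - x| \leq 3U \leq 3U_0 \ll 4$. Everything else is a one-line logarithm manipulation.
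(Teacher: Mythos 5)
Your argument is correct and follows the same route as the paper's proof: write $G'(x) = -U N_0(\mu-x)$, invoke the small-$\epsilon$ expansion of $N_0$, and use $|\mu - x| \geq e^{-1/U}$ to bound $\ln(16/|\mu-x|) \leq \ln 16 + 1/U$, with the error term being negligible. The paper gives the slightly sharper error bound $O(U^3\ln(1/U))$ where you settle for $O(U^2)$, but both are $o(U)$ and so absorbed into the $C_1 U$ term.
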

\begin{proof}
Suppose that $x \in [-U,2U] \setminus [\mu - e^{-1/U}, \mu + e^{-1/U}]$. Then $\mu - x \to 0$ as $(U, \mu) \in \III_{U_0, \delta}$ tends to $(0,0)$, so using (\ref{N0expansionfirstfew}), we infer that
$$G'(x) = -U N_0(\mu - x)
= -U \frac{\ln(\frac{16}{|\mu - x|})}{2 \pi^2} + O\bigg(U |\mu - x|^2 \ln \frac{16}{|\mu - x|}\bigg).$$
In particular, since $|\mu - x| \in [e^{-1/U}, 2U]$, there are constants $C_0, C_1 > 0$ such that
$$G'(x) \geq -U \frac{\ln(\frac{16}{e^{-1/U}})}{2 \pi^2} - C_0 U^3 \ln\frac{1}{U} \geq -\frac{1}{2 \pi^2} - C_1 U,$$
whenever $(U, \mu) \in \III_{U_0, \delta}$ is sufficiently close to $(0,0)$. 
\end{proof}

For $x$ in the narrow sector of size $\sim e^{-1/U}$ around $\mu$, Lemma \ref{Gprimelemma} does not apply. In this narrow sector, we will instead use the following lemma. 

\begin{lemma}\label{Gsmallnearmulemma}
As $(U, \mu) \in \III_{U_0, \delta}$ tends to $(0,0)$, the function $G$ defined in (\ref{GIIIdef}) satisfies
 $G(x) = O(e^{-1/U})$
uniformly for $x \in [\mu - e^{-1/U}, \mu + e^{-1/U}]$.
\end{lemma}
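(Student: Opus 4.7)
The plan is to bound $|G(x)|$ directly from its integral representation in \eqref{GIIIdef} using the expansion of $N_0$ near the origin. Since $N_0(\epsilon)\geq 0$ for $\epsilon\in[-4,4]$ and $N_0(\epsilon)=N_0(-\epsilon)$, we have
\begin{align*}
|G(x)|\;=\;U\left|\int_0^{\mu-x}N_0(\epsilon)\,d\epsilon\right|\;\leq\; U\int_0^{|\mu-x|}N_0(\epsilon)\,d\epsilon.
\end{align*}
The map $y\mapsto\int_0^{y}N_0(\epsilon)\,d\epsilon$ is nondecreasing for $y\in[0,4]$, so for any $x$ with $|\mu-x|\leq e^{-1/U}$ we obtain
\begin{align*}
|G(x)|\;\leq\; U\int_0^{e^{-1/U}}N_0(\epsilon)\,d\epsilon.
\end{align*}

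Next, I would invoke the small-$\epsilon$ expansion \eqref{N0expansionfirstfew} of $N_0$ already established in Appendix \ref{N0app}, namely $N_0(\epsilon)=\frac{1}{2\pi^2}\ln\!\bigl(\tfrac{16}{|\epsilon|}\bigr)+O\!\bigl(\epsilon^2\ln\tfrac{1}{|\epsilon|}\bigr)$. Integrating term by term from $0$ to $y$ for small $y>0$ gives
\begin{align*}
\int_0^{y}N_0(\epsilon)\,d\epsilon\;=\;\frac{y\bigl(\ln(16/y)+1\bigr)}{2\pi^2}+O\!\Bigl(y^3\ln\tfrac{1}{y}\Bigr),
\end{align*}
which is exactly the computation that already appears in the proof of Lemma \ref{Gnear0lemma}.

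Finally, I would substitute $y=e^{-1/U}$. Since $\ln(16/e^{-1/U})=\tfrac{1}{U}+\ln 16$, this yields
\begin{align*}
U\int_0^{e^{-1/U}}N_0(\epsilon)\,d\epsilon
\;=\;U\cdot\frac{e^{-1/U}\bigl(\tfrac{1}{U}+\ln 16+1\bigr)}{2\pi^2}+O\!\Bigl(Ue^{-3/U}\cdot\tfrac{1}{U}\Bigr)
\;=\;\frac{e^{-1/U}\bigl(1+U(1+\ln 16)\bigr)}{2\pi^2}+O\!\bigl(e^{-3/U}\bigr).
\end{align*}
As $(U,\mu)\in\III_{U_0,\delta}$ tends to $(0,0)$ the factor $1+U(1+\ln 16)$ stays bounded, so the right-hand side is $O(e^{-1/U})$ uniformly in $x\in[\mu-e^{-1/U},\mu+e^{-1/U}]$, as claimed.

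There is no real obstacle here: the entire argument is a one-step monotonicity bound followed by insertion of the already-tabulated small-$\epsilon$ expansion of $N_0$. The only mild point to verify is that $e^{-1/U}\to 0$ fast enough to make the small-$\epsilon$ expansion of $N_0$ applicable (i.e.\ $e^{-1/U}\in(0,4)$ for $U$ small), which is immediate since $e^{-1/U}\ll 1$ as $U\downarrow 0$.
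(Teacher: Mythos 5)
Your proof is correct and takes essentially the same approach as the paper: both proofs reduce to the small-$\epsilon$ expansion of $N_0$, i.e.\ the content of Lemma~\ref{Gnear0lemma}, and then observe that $U\cdot e^{-1/U}\cdot\ln(1/e^{-1/U}) = O(e^{-1/U})$. The only minor cosmetic difference is that you insert a monotonicity bound to reduce to evaluating at the endpoint $|\mu - x| = e^{-1/U}$, whereas the paper plugs $\mu - x = O(e^{-1/U})$ directly into the uniform estimate of Lemma~\ref{Gnear0lemma}.
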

\begin{proof}
Suppose $(U, \mu) \in \III_{U_0, \delta}$ tends to $(0,0)$ with $x \in [\mu - e^{-1/U}, \mu + e^{-1/U}]$. Using that $\mu - x = O(e^{-1/U})$, we obtain from Lemma \ref{Gnear0lemma} that
$$G(x) = U\frac{(\mu - x ) \left(1 - \ln \left(\frac{|\mu -x|}{16}\right)\right)}{2 \pi^2}
 + O(e^{-3/U})
 = O(e^{-1/U}),$$
 which is the desired assertion.
\end{proof}

The inequalities of the next lemma state that the graph of $G$ lies below the line $2x_1 - x$ for $x < x_1$, and above this line for $x > x_1$. Since solutions of (\ref{uGGu}) correspond to points $(u,v) = (u,G(u))$ which belong both to the graph of $G$ and to the reflection of this graph in the line $u=v$, we will be able to use these inequalities to prove (\ref{Gclaim}).

\begin{lemma}\label{Gbelowabovelemma}
There is a $U_0 > 0$ such that the following inequalities hold for all $(U, \mu) \in \III_{U_0, \delta}$:
\begin{subequations}\label{Gbounds}
\begin{align}\label{Gbelowline}
& G(x) < 2x_1 - x \qquad \text{for $x \in [-U,x_1)$},
	\\ \label{Gaboveline}
& G(x) > 2x_1 - x \qquad \text{for $x \in (x_1, 2x_1 + U]$}.
\end{align}
\end{subequations}
\end{lemma}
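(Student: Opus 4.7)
The plan is to introduce the auxiliary function $H(x) := G(x) + x - 2x_1$, which satisfies $H(x_1) = 0$ by the defining relation (\ref{x1SectorIIIdef}), and to reformulate both inequalities in (\ref{Gbounds}) as $H(x) < 0$ for $x \in [-U, x_1)$ and $H(x) > 0$ for $x \in (x_1, 2x_1 + U]$. The natural strategy is to show that $H$ is strictly increasing, after which the conclusion follows at once from $H(x_1) = 0$. Since $H'(x) = 1 + G'(x) = 1 - U N_0(\mu - x)$ and $N_0$ has a logarithmic singularity at $0$, strict monotonicity will fail in a narrow band of size $\sim e^{-1/U}$ around $x = \mu$, which must be handled separately by a size estimate rather than by a derivative bound.

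For $x$ in the complement $[-U, 2U] \setminus [\mu - e^{-1/U}, \mu + e^{-1/U}]$ of this band, Lemma \ref{Gprimelemma} will give $H'(x) \geq 1 - \frac{1}{2\pi^2} - C_1 U$, which is strictly positive once $U_0$ is small enough. By Lemma \ref{x1lemma}, $x_1 = O(\mu \sqrt{U})$, so $0 < x_1 \ll \mu$; combined with the comparison $e^{-1/U} = o(\mu)$ as $U \downarrow 0$ in Sector \III\ (which follows from $\mu \geq (16+\delta) e^{-2\pi/\sqrt{U}}$ together with $1/U \gg 2\pi/\sqrt{U}$), this places $x_1$ safely to the left of $\mu - e^{-1/U}$. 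Hence $H$ is strictly increasing on $[-U, \mu - e^{-1/U}]$, which immediately yields (\ref{Gbelowline}) as well as $H(x) > 0$ on $(x_1, \mu - e^{-1/U}]$.

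For $x \in [\mu - e^{-1/U}, \mu + e^{-1/U}]$ I will instead use Lemma \ref{Gsmallnearmulemma}, which gives $G(x) = O(e^{-1/U})$, so
$$H(x) = G(x) + x - 2 x_1 \geq \mu - e^{-1/U} - C e^{-1/U} - 2 x_1.$$
Since $e^{-1/U}$ and $x_1 = O(\mu \sqrt{U})$ are both $o(\mu)$ as $U \downarrow 0$, this is strictly positive for $U_0$ small. Finally, for $x \in (\mu + e^{-1/U}, 2 x_1 + U]$ (which is nonempty because $U \gg \mu$), strict monotonicity from the previous derivative bound combined with the positivity $H(\mu + e^{-1/U}) > 0$ just established yields $H(x) > 0$, completing (\ref{Gaboveline}).

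The main technical obstacle is not the strict monotonicity itself but the careful scale separation: one must verify uniformly over Sector \III\ that $x_1 \ll \mu$, $e^{-1/U} \ll \mu$, and $\mu \ll U$, so that $x_1$ sits deep inside the monotone region while the bad band sits deep inside the region where $2 x_1 - x$ is strongly negative. These comparisons are routine consequences of $\mu \asymp e^{-2\pi/\sqrt{U}}$ together with the fact that $1/U$ grows faster than $1/\sqrt{U}$ as $U \downarrow 0$.
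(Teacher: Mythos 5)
Your proposal is correct and follows essentially the same approach as the paper's proof: you introduce the same auxiliary function ($H = \psi$), invoke the same three ingredients (Lemma \ref{Gprimelemma} for the derivative bound off the narrow band, Lemma \ref{Gsmallnearmulemma} for the size estimate on the band, Lemma \ref{x1lemma} for $x_1 = O(\mu\sqrt{U})$), and rely on the same scale hierarchy $e^{-1/U} \ll x_1 \ll \mu \ll U$; the paper merely handles the narrow band before the monotone regions rather than interleaving them as you do, which is an immaterial reordering.
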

\begin{proof}
Let $\psi(x):= x + G(x) - 2x_1$. 
By Lemma \ref{x1lemma}, the solution $x_1 > 0$ satisfies $x_1 = O(\mu \sqrt{U})$ as $(U, \mu) \in \III_{U_0, \delta}$ tends to $(0,0)$. 
Using also Lemma \ref{Gsmallnearmulemma}, we see that $\psi(x) = x + O(e^{-1/U}) + O(\mu \sqrt{U}) = \mu + O(\sqrt{U}e^{-2\pi/\sqrt{U}})$ uniformly for $x \in [\mu - e^{-1/U}, \mu + e^{-1/U}]$ as $(U, \mu) \in \III_{U_0, \delta}$ tends to $(0,0)$. Since $\mu > 16 e^{-\frac{2\pi}{\sqrt{U}}}$ in Sector III, this shows that $\psi(x) > 0$, and hence that (\ref{Gaboveline}) holds, for all $x \in [\mu - e^{-1/U}, \mu + e^{-1/U}]$ whenever $(U, \mu) \in \III_{U_0, \delta}$ is sufficiently close to $(0,0)$. 
On the other hand, since $\psi'(x) = G'(x) + 1$, Lemma \ref{Gprimelemma} implies that the function $\psi$ is strictly increasing for all $x \in [-U,2U] \setminus [\mu - e^{-1/U}, \mu + e^{-1/U}]$ whenever $(U, \mu) \in \III_{U_0, \delta}$.
Since $\psi(x_1) = 0$, it follows that (\ref{Gbelowline}) holds, and that (\ref{Gaboveline}) holds for $x \in (x_1, \mu - e^{-1/U})$. Since $\psi(\mu + e^{-1/U}) > 0$ by the first half of the proof, it follows that (\ref{Gaboveline}) also holds for $x \in (\mu + e^{-1/U}, 2x_1 + U]$.
The proof is complete.
\end{proof}

In light of the discussion above (\ref{Gclaim}), the following lemma completes the proof of Proposition \ref{FsolutionIIIprop}.

\begin{lemma}
  The claim (\ref{Gclaim}) holds for any sufficiently small $U_0> 0$.
\end{lemma}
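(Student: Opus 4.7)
The plan is to reduce (\ref{Gclaim}) to a short contradiction using the sandwich bounds of Lemma~\ref{Gbelowabovelemma} together with the monotonicity of $G$. Suppose, for contradiction, that some $u \in [-U,\mu)$ satisfies (\ref{uGGu}) and $u < G(u) \leq U$, and write $v := G(u)$, so that $G(v) = u$ and $u < v \leq U$.

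First I would pin down that $u$ must lie strictly to the left of the fixed point $x_{1}$. Since $G'(x) = -U N_{0}(\mu - x) \leq 0$ and $N_{0}$ is strictly positive on $(-4,4)$, $G$ is strictly decreasing on $(\mu-4,\mu+4)$. For $(U,\mu) \in \III_{U_{0},\delta}$ with $U_{0}$ small, $\mu < 4$ and $-U > \mu - 4$, so both $x_{1} \in (0,\mu)$ and the whole interval $[-U,\mu)$ sit inside $(\mu-4,\mu+4)$. Hence $u = x_{1}$ would force $G(u) = u$, and $u \in (x_{1},\mu)$ would force $G(u) < G(x_{1}) = x_{1} < u$; each contradicts $u < G(u)$. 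Therefore $u \in [-U,x_{1})$, and the same strict monotonicity yields $v = G(u) > G(x_{1}) = x_{1}$.

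The conclusion now follows from a two-line sandwich. Since $u \in [-U,x_{1})$, (\ref{Gbelowline}) yields
\begin{equation*}
v = G(u) < 2 x_{1} - u, \qquad \text{i.e.,} \qquad u + v < 2x_{1}.
\end{equation*}
Since $v \in (x_{1}, U] \subset (x_{1}, 2x_{1}+U]$ (using $x_{1} > 0$), (\ref{Gaboveline}) yields
\begin{equation*}
u = G(v) > 2 x_{1} - v, \qquad \text{i.e.,} \qquad u + v > 2x_{1}.
\end{equation*}
These strict inequalities are incompatible, so no such $u$ can exist, establishing (\ref{Gclaim}).

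Geometrically, the argument says that the graph of $G$ intersects its reflection across the diagonal $u = v$ only at the fixed point $(x_{1},x_{1})$, because Lemma~\ref{Gbelowabovelemma} forces the graph to lie on opposite sides of the anti-diagonal line $u + v = 2x_{1}$ on the two sides of $x_{1}$. The real work is already packaged into Lemma~\ref{Gbelowabovelemma} (together with the preparatory Lemmas~\ref{Gprimelemma} and \ref{Gsmallnearmulemma}), and the only delicate point at this stage is verifying that both $u$ and $v = G(u)$ land in the intervals where (\ref{Gbelowline}) and (\ref{Gaboveline}) apply; this reduces to the harmless checks $-U \leq u < x_{1}$ (from the hypothesis and the monotonicity step) and $v \leq U \leq 2x_{1} + U$ (from $G(u) \leq U$ and $x_{1} > 0$). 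No further estimates are required, which is why the main obstacle was conceptual---identifying the correct sandwich bounds---rather than computational.
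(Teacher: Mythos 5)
Your proof is correct and follows essentially the same short contradiction argument as the paper's, built on the two sandwich bounds (\ref{Gbelowline}) and (\ref{Gaboveline}) of Lemma~\ref{Gbelowabovelemma}. The only cosmetic difference is that you apply (\ref{Gaboveline}) directly to $v = G(u)$ (after checking $v \in (x_1, 2x_1 + U]$ via strict monotonicity and $v \leq U$), whereas the paper applies it to the point $2x_1 - u$ and inserts one extra monotonicity step $G(G(u)) \geq G(2x_1 - u)$.
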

\begin{proof}
Let $u \in [-U, \mu)$ be such that $u < G(u) \leq U$. If $u\in [x_1, \mu)$, then, since $G$ is nonincreasing, $G(u) \leq G(x_1) = x_1 \leq u$, which contradicts the assumption that $u < G(u)$. Thus we must have $u\in [-U, x_1)$, and we can apply (\ref{Gbelowline}) to conclude that $G(u) < 2x_1 - u$ where $2x_1 - u \in (x_1, 2x_1 + U]$, and so, using first that $G$ is nonincreasing and then (\ref{Gaboveline}), we obtain
$$G(G(u)) \geq G(2x_1 - u) > 2x_1 - (2x_1 - u) = u,$$
which contradicts (\ref{uGGu}).
\end{proof}

\subsection{AF free energy in Sector III}
Suppose $(d_0^{\AF}, m_1^{\AF})$ is an AF mean-field solution corresponding to $(U, \mu) \in \III_{U_0,\delta}$.
Lemma \ref{limsupdeltaUlemma} implies that $d_0^{\AF} \in [0,1)$, and so, in view of (\ref{AFeq1}),
$$2\int_{0}^4 N_0(\epsilon) \big(\theta(X_+) + \theta(X_-) \big) d\epsilon \in (0,1],$$
where we recall the short-hand notation in (\ref{Deltadef})--(\ref{Xpmdef}).
Hence the sets $A_\pm$ defined in (\ref{Aplusminusdef}) satisfy $A_- = \emptyset$ and $A_+ = (b_+, 4)$ for some $b_+ \geq 0$, where $b_+ = 0$ if $d_0^{\AF} = 0$ and $b_+ \in (0,4)$ if $d_0^{\AF} \in (0,1)$. We treat these two cases in turn. 
We will show that there exists one solution of the AF mean-field equations with $m_0^{\AF} > 0$ in each case, and we will denote these solutions by $(d_{0,1}^{\AF}, m_{1,1}^{\AF})$ and $(d_{0,2}^{\AF}, m_{1,2}^{\AF})$, where $d_{0,1}^{\AF} = 0$ and $d_{0,2}^{\AF} \in (0,1)$.

\subsubsection{The case $d_0^{\AF} = 0$}
The next lemma treats the relatively easy case of $d_0^{\AF} = 0$.

\begin{lemma}[AF mean-field solution with $d_0^{\AF} = 0$ in Sector III]\label{AFsolutionIIIlemma}
There is a $U_0 > 0$ such that, if $(U, \mu) \in \III_{U_0, \delta}$, then there is a unique solution $(d_{0,1}^{\AF}, m_{1,1}^{\AF})$ of the AF mean-field equations (\ref{AFmeanfieldeqs}) satisfying $d_{0,1}^{\AF} = 0$ and $m_{1,1}^{\AF} > 0$. Moreover, $m_{1,1}^{\AF} = m_{1,1}^{\AF}(U)$ is independent of $\mu$ and, as $(U, \mu) \in \III_{U_0, \delta}$ tends to $(0,0)$,
\begin{align}\label{GAFsmallU}
\mathcal{G}_{\AF}(d_{0,1}^{\AF}, m_{1,1}^{\AF}, U, \mu)
=&-\frac{16}{\pi^2}
-\frac{512 e^{-\frac{4 \pi }{\sqrt{U}}}}{\pi \sqrt{U}}  -\frac{128 e^{-\frac{4 \pi }{\sqrt{U}}}}{\pi^2} 
+ O\bigg(\frac{e^{-\frac{8\pi}{\sqrt{U}}}}{U}\bigg),
\end{align}
where the error term is uniform with respect to $\mu$.
\end{lemma}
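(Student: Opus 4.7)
The strategy mirrors that of Lemma \ref{AFsolutionIlemma}, but in the small-$U$ regime. Imposing $d_{0,1}^{\AF}=0$ reduces the AF mean-field equations (\ref{AFeqsd0zero}) to the constraint $\Delta\geq\mu$ together with the gap equation
\begin{equation*}
\frac{2}{U}=\int_{\R}\frac{N_0(\epsilon)}{\sqrt{\Delta^2+\epsilon^2}}\,d\epsilon,
\end{equation*}
whose right-hand side is a continuous strictly decreasing bijection $(0,\infty)\to(0,\infty)$. For every $U>0$ there is thus a unique $\Delta(U)>0$ solving it, depending smoothly on $U$ by the implicit function theorem, and setting $m_{1,1}^{\AF}(U):=2\Delta(U)/U$ produces the unique candidate solution, which is manifestly $\mu$-independent.

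The first substantive task is to establish $\Delta(U)=32\,e^{-2\pi/\sqrt{U}}\bigl(1+O(\sqrt{U})\bigr)$ as $U\downarrow 0$, so that the compatibility condition $\Delta(U)\geq\mu$ is automatic throughout Sector III, where $\mu\leq(32-\delta)e^{-2\pi/\sqrt{U}}$. The plan is to substitute $\epsilon=\Delta\sinh t$ in the gap equation and expand using the logarithmic behaviour $N_0(\epsilon)=\frac{1}{2\pi^2}\ln(16/|\epsilon|)+O(\epsilon^2\ln(1/|\epsilon|))$ near $\epsilon=0$. The identity $\arcsinh(4/\Delta)=\ln(8/\Delta)+O(\Delta^2)$ together with $\int_0^\infty\ln(1-e^{-2t})\,dt=-\pi^2/12$ converts the gap equation into a quadratic equation in $\ell:=\ln(8/\Delta)$ of the form $\ell^2+4\ln(2)\,\ell+\pi^2/6+O(\Delta^2\ln^2(1/\Delta))=4\pi^2/U$, whose root is $\ell=2\pi/\sqrt{U}-2\ln 2+O(\sqrt{U})$, giving the claimed leading behaviour; iterating yields arbitrarily many corrections in powers of $\sqrt{U}$.

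With $\Delta(U)>\mu$ secured, the two Heaviside functions in (\ref{AFHartreeFockFunction}) reduce to $1$ and $0$, and, exactly as in (\ref{GAFd0m1}), the free energy collapses to
\begin{equation*}
\mathcal{G}_{\AF}(0,m_{1,1}^{\AF},U,\mu)=\frac{\Delta(U)^2}{U}-\int_{\R}N_0(\epsilon)\sqrt{\Delta(U)^2+\epsilon^2}\,d\epsilon,
\end{equation*}
independent of $\mu$. The useful algebraic rewriting $\sqrt{\Delta^2+\epsilon^2}=\Delta^2/\sqrt{\Delta^2+\epsilon^2}+\epsilon^2/\sqrt{\Delta^2+\epsilon^2}$, combined with an application of the gap equation to the first piece, gives
\begin{equation*}
\mathcal{G}_{\AF}=-\frac{\Delta^2}{U}-\int_{\R}N_0(\epsilon)\,\frac{\epsilon^2}{\sqrt{\Delta^2+\epsilon^2}}\,d\epsilon.
\end{equation*}
The integral tends to $\int_{\R}N_0(\epsilon)|\epsilon|\,d\epsilon=16/\pi^2$ by (\ref{N0epsilon8pi2}) as $\Delta\to 0$; expanding the $\Delta$-dependent remainder by the same sinh-substitution and inserting the expansion of $\Delta(U)$ from the previous step then produces (\ref{GAFsmallU}).

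The main technical difficulty is the delicate cancellation of the leading $e^{-4\pi/\sqrt{U}}/U$ pieces: both $\Delta^2/U$ and the leading $\Delta^2\ln^2(1/\Delta)$ contribution to the integral are of this size, a factor of order $1/\sqrt{U}$ larger than the first term displayed in (\ref{GAFsmallU}). The gap equation guarantees their exact cancellation, but to reach the stated precision one must carry the expansion of $\Delta(U)$ to several orders in $\sqrt{U}$ and the expansion of $\int_{\R}N_0(\epsilon)\epsilon^2/\sqrt{\Delta^2+\epsilon^2}\,d\epsilon$ to matching order, with careful bookkeeping of the cross-terms---routine but laborious.
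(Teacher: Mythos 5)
The algebraic manipulations are fine (your identity $\mathcal{G}_{\AF}=-\Delta^2/U-\int_{\R}N_0(\epsilon)\epsilon^2/\sqrt{\Delta^2+\epsilon^2}\,d\epsilon$ is equivalent to the paper's $\Delta\varphi'(\Delta)-2\varphi(\Delta)$ with $\varphi(\Delta)=\int_0^4 N_0\sqrt{\Delta^2+\epsilon^2}$), and the reduction to the gap equation plus $\Delta\geq\mu$ is correct. But there is a genuine gap in the asymptotics of $\Delta(U)$, and it matters at the precision of (\ref{GAFsmallU}).

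Your quadratic $\ell^2+4\ln(2)\ell+\pi^2/6+O(\Delta^2\ln^2(1/\Delta))=4\pi^2/U$ with $\ell=\ln(8/\Delta)$ is what one gets by replacing $N_0(\epsilon)$ with the pure logarithm $\ln(16/\epsilon)/(2\pi^2)$ throughout. But the remainder $N_0(\epsilon)-\ln(16/\epsilon)/(2\pi^2)=O(\epsilon^2\ln(1/\epsilon))$ contributes an $O(1)$ constant to the gap equation, namely
\begin{equation*}
a_0:=\int_0^4\Big(N_0(\epsilon)-\tfrac{\ln(16/\epsilon)}{2\pi^2}\Big)\frac{d\epsilon}{\epsilon}=\frac{(\ln 2)^2}{\pi^2}-\frac{1}{24},
\end{equation*}
which shifts the constant from $\pi^2/6$ to $4(\ln 2)^2$. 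With the corrected constant the quadratic is a \emph{perfect square}, $(\ell+2\ln 2)^2=4\pi^2/U+O(\Delta^2\ln^2\Delta)$, so that $\Delta(U)=32\,e^{-2\pi/\sqrt{U}}\bigl(1+O(e^{-4\pi/\sqrt{U}}/\sqrt{U})\bigr)$: the relative error is exponentially small and there are \emph{no} corrections in powers of $\sqrt{U}$. Your statement that ``iterating yields arbitrarily many corrections in powers of $\sqrt{U}$'' is therefore false, and this is not cosmetic: the coefficient $-128/\pi^2$ in (\ref{GAFsmallU}) is sensitive to the $\sqrt{U}$-order term in $\Delta(U)$. If one writes $\Delta=32e^{-2\pi/\sqrt{U}}(1+c_1\sqrt{U}+\cdots)$ and pushes your calculation through, the coefficient of $e^{-4\pi/\sqrt{U}}$ comes out as $-\tfrac{128}{\pi^2}(1+8\pi c_1)$; your quadratic produces a nonzero $c_1$ and hence a wrong constant. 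In the paper this delicate point is outsourced to \cite[Theorem~1.1]{LLa0}, which supplies the exponentially small error bound on $\Delta(U)$ directly, after which one simply plugs into the Taylor expansion of $\varphi(\Delta)$ (paper's eq.~(\ref{varphiDeltaexpansion})). To repair your argument you must carry the $a_0$-type correction through the sinh-substitution (or equivalently track the full error $E(\Delta,x)$ as in Lemma~\ref{intN0oneoversqrtlemma}) and then observe that the power series in $\sqrt{U}$ terminates.
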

\begin{proof}
Assume that $d_0^{\AF} = 0$. Then the AF mean-field equations (\ref{AFeqs}) reduce to the single equation
\begin{align}\label{Delta04def}
\frac{1}{U} = \int_{0}^4 N_0(\epsilon) \frac{1}{\sqrt{\Delta^2 + \epsilon^2}} d\epsilon.
\end{align}
By \cite[Theorem 1.1]{LLa0}, the unique solution $\Delta = \Delta(U)$  of this equation satisfies
\begin{align}\label{DeltasmallU}
	\Delta(U) 
	= 32 e^{-\frac{2\pi}{\sqrt{U}}}\Bigg(1 + O\bigg(\frac{e^{-\frac{4\pi}{\sqrt{U}}}}{\sqrt{U}}\bigg)\Bigg)
	\qquad \text{as $U\downarrow 0$.}
\end{align}
Furthermore, since $d_0^{\AF} = 0$, we find from (\ref{AFHartreeFockFunction}) that
\begin{align}\label{lol3}
\mathcal{G}_{\AF}(0, m_1^{\AF}, U, \mu) = &\; \frac{U}{4} (m_1^{\AF})^2
 - \int_{\R}  N_0(\epsilon) \sqrt{\frac{U^2}{4}(m_1^{\AF})^2 + \epsilon^2} d\epsilon
= \frac{\Delta^2}{U} - 2 \varphi(\Delta),
\end{align}
where
$$\varphi(\Delta) := \int_{0}^4  N_0(\epsilon) \sqrt{\Delta^2 + \epsilon^2} d\epsilon.$$
By \cite[Proof of Proposition 1.2]{LLa0},
\begin{align}\nonumber
\varphi'(\Delta) & = \Delta \int_{0}^4 N_0(\epsilon) \frac{1}{\sqrt{\Delta^2 + \epsilon^2}} d\epsilon
	\\ \nonumber
& = \Delta\bigg(a_0 + \frac{6 (\ln \Delta )^{2} -60 \ln(2) \ln{\Delta} +\pi^2+126 (\ln 2 )^{2}}{24 \pi^2}
+ O(\Delta^2 |\ln\Delta|^2)\bigg) \quad \text{as $\Delta \downarrow 0$},
\end{align}
where (see \cite[Eq. (3.1)]{LLa0})
\begin{align}\label{a0def}
a_0 := \int_{0}^4 \bigg(N_0(\epsilon) - \frac{\ln(\frac{16}{\epsilon})}{2 \pi^2}\bigg)\frac{1}{\epsilon} d\epsilon
= \frac{(\ln 2)^2}{\pi^2}-\frac{1}{24}.
\end{align}
Integration yields
\begin{align}\nonumber
\varphi(\Delta) = 
&\; \varphi(0)
+ \frac{\Delta ^2 \ln ^2(\Delta )}{8 \pi^2}
- \frac{\Delta ^2 (1+10 \ln (2)) \ln (\Delta )}{8 \pi^2}
	\\ \label{varphiDeltaexpansion}
& + \frac{\Delta ^2 \left(1+50 \ln ^2(2)+10 \ln (2)\right)}{16 \pi^2}
+ O(\Delta^4 |\ln\Delta|^2)  \quad \text{as $\Delta \downarrow 0$}.
\end{align}
Hence, using (\ref{DeltasmallU}) and the fact that $\varphi(0) = \int_{0}^4 \epsilon N_0(\epsilon) d\epsilon = \frac{8}{\pi^2}$ by (\ref{N0epsilon8pi2}), the expansion (\ref{GAFsmallU}) follows.
\end{proof}

\subsubsection{The case $d_0^{\AF} \in (0,1)$}
We now turn to the more difficult case of $d_0^{\AF} \in (0,1)$. In this case $b_+ \in (0,4)$, and the AF mean-field equations (\ref{AFeqs}) can be written as
\begin{align}\label{AF1}
1 - d_0^{\AF} = &\; 2\int_{b_+}^4 N_0(\epsilon) d\epsilon,
	\\\label{AF2}
\frac{1}{U} = &\; \int_{b_+}^4 N_0(\epsilon)
 \frac{1}{\sqrt{\Delta^2 + \epsilon^2}} d\epsilon.
\end{align}
Moreover, since $A_+ = [b_+, 4)$ with $b_+ > 0$, the definition (\ref{Xpmdef}) of $X_+$ implies that
\begin{align}\label{bplusmurelation}
\frac{U d_0^{\AF}}{2}  + \sqrt{\Delta^2 + b_+^2} - \mu = 0.
\end{align}

Our first objective is to define a function $\Xi(b_+)$ which is such that solutions of (\ref{AF1})--(\ref{bplusmurelation}) are in one-to-one correspondence with the zeros of $\Xi$. To this end, we define $b_+^{\max} = b_+^{\max}(U) \in (0,4)$ by
\begin{align}\label{bplusmaxdef}
\frac{1}{U} = \int_{b_+^{\max}}^4  \frac{N_0(\epsilon)}{\epsilon} d\epsilon.
 \end{align}
Since the right-hand side of (\ref{bplusmaxdef}) decreases from $+\infty$ to $0$ as $b_+^{\max}$ increases from $0$ to $4$, we see that $b_+^{\max}(U)$ is well-defined for any $U > 0$.
For any fixed $U>0$, (\ref{AF2}) has a solution $\Delta \geq 0$ (which is necessarily unique) if and only if $b_+ \in [0, b_+^{\max}]$. We call this solution $\Delta_1(b_+)$. The map $b_+ \mapsto \Delta_1(b_+)$ is a continuous strictly decreasing function 
\begin{align}\label{def of Delta1}
\Delta_1:[0, b_+^{\max}] \to [0,+\infty),
\end{align}
which is smooth for $b_+ \in (0, b_+^{\max})$ and such that $\Delta_1(b_+^{\max}) = 0$. 

Let $\delta > 0$ be the fixed small number defining Sector III according to (\ref{sectorIIIdef}). 
We define $\Xi(b_+)$ for $b_+ \in [0, b_+^{\max}]$ by
\begin{align}\label{Xidef}
\Xi(b_+) := \Delta_1(b_+) - \Delta_2(b_+),
\end{align}
where the function $\Delta_2:[0, b_+^{\max}] \to [0,+\infty)$ is defined by
\begin{align}\label{Delta2def}
\Delta_2(b_+) := \sqrt{\bigg(\mu - \frac{U}{2} + U \int_{b_+}^4 N_0(\epsilon) d\epsilon\bigg)^2 - b_+^2}.
\end{align}
(For intuition behind the definition of $\Delta_{2}$, we refer the reader to the proof of Lemma \ref{Xicorrespondencelemma} below.)
For conciseness, we have suppressed the $U$-dependence of the function $\Delta_1$, and the $(U, \mu)$-dependence of the functions $\Delta_2$ and $\Xi$.
To ensure that $\Xi$ is well-defined by (\ref{Xidef}), we need to verify that the expression under the square root in (\ref{Delta2def}) is $\geq 0$ for any $b_+ \in [0, b_+^{\max}]$ for $(U, \mu)$ in Sector III. The next four lemmas are devoted to verifying this claim.

\begin{lemma}\label{intN0overepsilonlemma}
As $b_+ \downarrow 0$, it holds that
$$\int_{b_+}^4 \frac{N_0(\epsilon)}{\epsilon} d\epsilon
= \frac{(\ln(\frac{16}{b_+}))^2}{4 \pi^2}-\frac{1}{24}
+ O(b_+^2 \ln(b_+)).$$
\end{lemma}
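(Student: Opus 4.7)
The plan is to isolate the singular logarithmic behavior of $N_0$ near $\epsilon = 0$ and integrate it exactly, then control the remainder using the already-established pointwise bound on $N_0(\epsilon) - \frac{\ln(16/\epsilon)}{2\pi^2}$.

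First I would split the integral according to the expansion of $N_0$ near $0$ used in Lemma \ref{Gnear0lemma}, namely
\begin{align*}
\int_{b_+}^4 \frac{N_0(\epsilon)}{\epsilon} d\epsilon
= \int_{b_+}^4 \frac{\ln(16/\epsilon)}{2\pi^2 \epsilon} d\epsilon
+ \int_{b_+}^4 \bigg(N_0(\epsilon) - \frac{\ln(16/\epsilon)}{2\pi^2}\bigg)\frac{1}{\epsilon} d\epsilon.
\end{align*}
The first integral is elementary: the substitution $u = \ln(16/\epsilon)$ yields the antiderivative $-\frac{1}{4\pi^2}(\ln(16/\epsilon))^2$, so
\begin{align*}
\int_{b_+}^4 \frac{\ln(16/\epsilon)}{2\pi^2 \epsilon} d\epsilon
= \frac{(\ln(16/b_+))^2}{4\pi^2} - \frac{(\ln 4)^2}{4\pi^2}
= \frac{(\ln(16/b_+))^2}{4\pi^2} - \frac{(\ln 2)^2}{\pi^2}.
\end{align*}

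Next I would recognize the second integral in terms of the constant $a_0$ defined in \eqref{a0def}. Writing
\begin{align*}
\int_{b_+}^4 \bigg(N_0(\epsilon) - \frac{\ln(16/\epsilon)}{2\pi^2}\bigg)\frac{1}{\epsilon} d\epsilon
= a_0 - \int_0^{b_+} \bigg(N_0(\epsilon) - \frac{\ln(16/\epsilon)}{2\pi^2}\bigg)\frac{1}{\epsilon} d\epsilon,
\end{align*}
the remaining integral over $[0, b_+]$ is estimated using the bound $N_0(\epsilon) - \frac{\ln(16/\epsilon)}{2\pi^2} = O(\epsilon^2 \ln(16/\epsilon))$ (the next-order term in the expansion of $N_0$ proved in Appendix~\ref{N0app}); dividing by $\epsilon$ and integrating yields a contribution of $O(b_+^2 \ln(1/b_+))$. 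Using $a_0 = \frac{(\ln 2)^2}{\pi^2} - \frac{1}{24}$ from \eqref{a0def}, the $(\ln 2)^2/\pi^2$ pieces cancel, and I would collect the remaining terms to arrive at
\begin{align*}
\int_{b_+}^4 \frac{N_0(\epsilon)}{\epsilon} d\epsilon
= \frac{(\ln(16/b_+))^2}{4\pi^2} - \frac{1}{24} + O(b_+^2 \ln b_+),
\end{align*}
as desired.

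The main obstacle is not analytic but bookkeeping: the result relies on (i) having a sharp enough expansion of $N_0$ near $0$ so that the error after subtracting the leading $\frac{\ln(16/\epsilon)}{2\pi^2}$ is integrable against $1/\epsilon$, and (ii) invoking the explicit evaluation \eqref{a0def} of the constant $a_0$. Both ingredients are already established in the paper, so the lemma reduces to a clean assembly of these pieces.
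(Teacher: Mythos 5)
Your decomposition is exactly the one the paper uses: subtract $\frac{\ln(16/\epsilon)}{2\pi^2}$ from $N_0(\epsilon)$, integrate the logarithmic piece explicitly, identify the correction integral over $[0,4]$ with $a_0$ from \eqref{a0def}, and bound the tail over $[0,b_+]$ via the $O(\epsilon^2\ln(16/\epsilon))$ remainder in \eqref{N0expansionfirstfew}. The argument is correct and matches the paper's proof step for step, including the cancellation of the $(\ln 2)^2/\pi^2$ terms.
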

\begin{proof}
We write
\begin{align}
\int_{b_+}^4 \frac{N_0(\epsilon)}{\epsilon} d\epsilon
= \int_{b_+}^4  \frac{\ln(\frac{16}{\epsilon})}{2 \pi^2} \frac{1}{\epsilon} d\epsilon +a_0 - E_2(b_+),
\end{align}
where $a_0$ is the constant in (\ref{a0def}) and
$$E_2(b_+) := \int_0^{b_+}  \bigg(N_0(\epsilon) - \frac{\ln(\frac{16}{\epsilon})}{2 \pi^2}\bigg)\frac{1}{\epsilon} d\epsilon$$
satisfies
\begin{align}\label{E2bound}
|E_2(b_+)| \leq C \int_0^{b_+}  \epsilon \ln\Big(\frac{16}{\epsilon}\Big) d\epsilon \leq C b_+^2 |\ln b_+|
\end{align}
for all sufficiently small $b_+$.
Since
$$\int_{b_+}^4  \frac{\ln(\frac{16}{\epsilon})}{2 \pi^2} \frac{1}{\epsilon} d\epsilon
= \frac{\ln^2\big(\frac{16}{b_+}\big)-4 (\ln 2)^2}{4 \pi^2}$$
the lemma follows. 
\end{proof}

The next lemma shows that $b_+^{\max}(U) < 16 e^{-\frac{2\pi}{\sqrt{U}}}$ for all sufficiently small $U>0$.

\begin{lemma}\label{bmaxlemma}
As $U \downarrow 0$,
\begin{align}\label{bmaxexpansion}
b_+^{\max} = \bigg(16 - \frac{2\pi}{3}\sqrt{U} + O(U)
%+ \frac{\pi^2}{72}U + \frac{36\pi - \pi^3}{5184} U^{3/2} + \frac{\pi^4-144 \pi^2}{497664} U^2 + O(U^{5/2})
\bigg)e^{-\frac{2\pi}{\sqrt{U}}}.
\end{align}
\end{lemma}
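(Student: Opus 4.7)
The plan is to invert the defining relation (\ref{bplusmaxdef}) using the asymptotic expansion of $\int_{b_+}^4 N_0(\epsilon)/\epsilon\,d\epsilon$ already established in Lemma \ref{intN0overepsilonlemma}. First, I would substitute that expansion into (\ref{bplusmaxdef}) to obtain
\begin{equation*}
\frac{1}{U} = \frac{(\ln(16/b_+^{\max}))^2}{4\pi^2} - \frac{1}{24} + O\bigl((b_+^{\max})^2 \ln b_+^{\max}\bigr),
\end{equation*}
and introduce the shorthand $L := \ln(16/b_+^{\max})$. Since $L \to +\infty$ as $U\downarrow 0$, this can be solved for $L^2$, giving
\begin{equation*}
L^2 = \frac{4\pi^2}{U} + \frac{\pi^2}{6} + O\bigl(U\,(b_+^{\max})^2 |\ln b_+^{\max}|\bigr).
\end{equation*}

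Next, I would take the (positive) square root and expand the resulting $\sqrt{1+\tfrac{U}{24}+\cdots}$ in powers of $U$. This yields
\begin{equation*}
L = \frac{2\pi}{\sqrt{U}} + \frac{\pi \sqrt{U}}{24} + O(U^{3/2}),
\end{equation*}
from which exponentiation of $-L$ and multiplication by $16$ gives
\begin{equation*}
b_+^{\max} = 16\,e^{-\frac{2\pi}{\sqrt{U}}}\Bigl(1 - \frac{\pi\sqrt{U}}{24} + O(U)\Bigr) = \Bigl(16 - \frac{2\pi}{3}\sqrt{U} + O(U)\Bigr) e^{-\frac{2\pi}{\sqrt{U}}},
\end{equation*}
which is (\ref{bmaxexpansion}).

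The only point that requires a small consistency check is the treatment of the error term $O(b_+^2 \ln b_+)$ from Lemma \ref{intN0overepsilonlemma}. One can first establish a crude upper bound $b_+^{\max} = O(e^{-\pi/\sqrt{U}})$, say, by monotonicity arguments applied to (\ref{bplusmaxdef}), which is amply sufficient to show that the error term contributes only $O(\sqrt{U}\, e^{-4\pi/\sqrt{U}})$ at $b_+=b_+^{\max}$; this is negligible compared with the $O(U)$ error appearing in (\ref{bmaxexpansion}). I do not anticipate any real obstacle in the argument: it is essentially an implicit-function/inversion exercise powered by Lemma \ref{intN0overepsilonlemma}, with the only mild technicality being the bootstrap just described that justifies discarding the error term in the expansion of $\int_{b_+}^4 N_0(\epsilon)/\epsilon\,d\epsilon$.
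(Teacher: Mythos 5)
Your proof is correct and complete, but it takes a somewhat different route from the paper's. You directly invert the asymptotic relation of Lemma~\ref{intN0overepsilonlemma}: setting $L = \ln(16/b_+^{\max})$, you solve for $L^2$, take a square root and expand, then exponentiate --- with a preliminary bootstrap (via monotonicity of the integral in \eqref{bplusmaxdef}) establishing that $b_+^{\max}$ is exponentially small, which justifies discarding the $O(b_+^2\ln b_+)$ error term from Lemma~\ref{intN0overepsilonlemma}. The paper instead uses a bracketing argument: it substitutes the two candidate values $b_+ = \{16 - \frac{2\pi}{3}\sqrt{U} + (\frac{\pi^2}{72} + \sigma\delta)U\}e^{-2\pi/\sqrt{U}}$, $\sigma=\pm 1$, into the integral defining $b_+^{\max}$, verifies via Lemma~\ref{intN0overepsilonlemma} that the result is $\frac{1}{U} - \sigma\delta\frac{\sqrt{U}}{16\pi} + O(U^2)$, and then appeals to the strict monotonicity of $b_+\mapsto\int_{b_+}^4 N_0(\epsilon)/\epsilon\,d\epsilon$ to sandwich $b_+^{\max}$ between these two values. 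The paper's approach only requires a forward asymptotic evaluation at two guessed values rather than an inversion, thereby avoiding the bootstrap step; your approach is the more systematic implicit-function-style inversion and makes the origin of each coefficient transparent. One small slip: the error term in your expression for $L^2$ should be $O((b_+^{\max})^2|\ln b_+^{\max}|)$ rather than $O(U\,(b_+^{\max})^2|\ln b_+^{\max}|)$ --- multiplying through by $4\pi^2$ does not introduce a factor of $U$ --- but since both are exponentially small this has no effect on the conclusion.
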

\begin{proof}
Lemma \ref{intN0overepsilonlemma} shows that, if $\sigma = \pm 1$ and
$$b_+ = \bigg\{16 - \frac{2\pi}{3}\sqrt{U} + \bigg(\frac{\pi^2}{72} + \sigma \delta\bigg)U \bigg\}e^{-\frac{2\pi}{\sqrt{U}}}$$ 
then
$$\int_{b_+}^4 \frac{N_0(\epsilon)}{\epsilon} d\epsilon
= \frac{1}{U} - \sigma \delta  \frac{\sqrt{U}}{16 \pi } + O(U^2)\qquad \text{as $U \downarrow 0$}.$$
In view of the definition (\ref{bplusmaxdef}) of $b_+^{\max}$, the estimate (\ref{bmaxexpansion}) follows.
\end{proof}

\begin{lemma}\label{intbplus4N0lemma}
As $b_+ \downarrow 0$, it holds that
$$\int_{b_+}^4  N_0(\epsilon) d\epsilon
= \frac{1}{2}-\frac{b_+ \big(\ln (\frac{16}{b_+})+1\big)}{2 \pi^2}
    + O\bigg(b_+^3 \ln\bigg(\frac{16}{b_+}\bigg)\bigg).$$
\end{lemma}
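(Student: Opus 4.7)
The strategy is essentially the same as for Lemma \ref{intN0overepsilonlemma}: reduce the integral over $[b_+,4]$ to an integral over $[0,b_+]$ and then apply the known small-$\epsilon$ expansion of $N_0$.

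\textbf{Step 1: Reduction to a small-$\epsilon$ integral.} Since $N_0$ is even (it is defined by (\ref{2Ddensityofstates}), which is manifestly invariant under $(k_1,k_2)\mapsto(k_1+\pi,k_2+\pi)$, sending $\epsilon\mapsto -\epsilon$) and satisfies $\int_{\R}N_0(\epsilon)d\epsilon=1$, we have $\int_0^4 N_0(\epsilon)d\epsilon=\tfrac12$. Therefore
\[
\int_{b_+}^4 N_0(\epsilon)\,d\epsilon = \frac12-\int_0^{b_+}N_0(\epsilon)\,d\epsilon,
\]
and the task is to expand the second integral as $b_+\downarrow 0$.

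\textbf{Step 2: Use the asymptotic expansion of $N_0$ near $\epsilon=0$.} The relation (\ref{N0expansionfirstfew}) (already invoked in the proofs of Lemmas \ref{Gnear0lemma} and \ref{intN0overepsilonlemma}) gives
\[
N_0(\epsilon)=\frac{\ln(16/\epsilon)}{2\pi^2}+O\!\Big(\epsilon^{2}\ln(16/\epsilon)\Big)\qquad\text{as }\epsilon\downarrow 0.
\]
Integrating term by term on $[0,b_+]$, the leading contribution is
\[
\int_0^{b_+}\frac{\ln(16/\epsilon)}{2\pi^2}\,d\epsilon
=\frac{1}{2\pi^2}\Big[\epsilon\,\ln(16/\epsilon)+\epsilon\Big]_0^{b_+}
=\frac{b_+\big(\ln(16/b_+)+1\big)}{2\pi^2},
\]
while the error term contributes $O\!\big(b_+^{3}\ln(16/b_+)\big)$. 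Combining with Step 1 yields the claimed formula.

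\textbf{Main obstacle.} There is no real obstacle: the only input beyond elementary calculus is the expansion of $N_0$ near $0$, which is established elsewhere in the paper (Appendix \ref{N0app}). The proof is a one-line computation once that expansion is in hand, entirely parallel to Lemma \ref{intN0overepsilonlemma} but with $N_0(\epsilon)$ in place of $N_0(\epsilon)/\epsilon$.
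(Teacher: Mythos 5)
Your proof is correct and follows the same approach as the paper: split off the tail $\int_0^{b_+}N_0$, use $\int_0^4 N_0 = \tfrac12$, and then integrate the small-$\epsilon$ expansion (\ref{N0expansionfirstfew}) term by term to get the main term and the $O(b_+^3\ln(16/b_+))$ error.
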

\begin{proof}
As $b_+ \downarrow 0$, we obtain with the help of (\ref{N0expansionfirstfew}) that
\begin{align*}
\int_{b_+}^4  N_0(\epsilon) d\epsilon
= & \int_{0}^4 N_0(\epsilon)   d\epsilon
-
\int_0^{b_+}  \frac{\ln(\frac{16}{\epsilon})}{2 \pi^2} d\epsilon
-
\int_0^{b_+} \bigg(N_0(\epsilon) - \frac{\ln(\frac{16}{\epsilon})}{2 \pi^2}\bigg) d\epsilon
	\\
= &\; \frac{1}{2}-\frac{b_+ \big(\ln(\frac{16}{b_+})+1\big)}{2 \pi^2}
 + O\bigg(\int_{0}^{b_+}  \epsilon^2 \ln\bigg(\frac{16}{\epsilon}\bigg)\bigg) d\epsilon,
\end{align*}
from which the lemma follows.
\end{proof}

The next lemma completes our proof that $\Delta_2(b_+)$ is well-defined by (\ref{Delta2def}), and hence also that $\Xi(b_+)$ is well-defined by (\ref{Xidef}), if $(U, \mu) \in \III_{U_0,\delta}$ is sufficiently close to $(0,0)$.
In the rest of this section, we use the notation $\hat{\mu}$ and $\hat{b}$ for the quantities
\begin{align}\label{def of hat quantities}
\hat{\mu} := \mu e^{\frac{2\pi}{\sqrt{U}}}, \qquad \hat{b}_+ := b_+ e^{\frac{2\pi}{\sqrt{U}}}.
\end{align}
By definition of Sector III, we have $\hat{\mu} \in [16 + \delta, 32 - \delta]$ for all $(U, \mu) \in \III_{U_0,\delta}$.

\begin{lemma}\label{Delta2welldefinedlemma}
As $U \downarrow 0$, the expression under the square root in (\ref{Delta2def}) satisfies
\begin{align}
\bigg(\mu - \frac{U}{2} + U \int_{b_+}^4 N_0(\epsilon) d\epsilon\bigg)^2 - b_+^2
= \big(\hat{\mu}^2 - \hat{b}_+^2 + O(\sqrt{U}) \big)e^{-\frac{4\pi}{\sqrt{U}}}
\end{align}
uniformly for $b_+ \in [0, b_+^{\max}]$ and $\hat{\mu} \in [16 + \delta, 32 - \delta]$.
\end{lemma}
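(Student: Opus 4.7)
The plan is to reduce the expression to $(y^2 - b_+^2)$ with
$$y := \mu - \frac{U}{2} + U \int_{b_+}^4 N_0(\epsilon)\, d\epsilon,$$
expand $y$ using Lemma~\ref{intbplus4N0lemma}, and then convert to the ``hatted'' variables $\hat\mu, \hat b_+$ defined in \eqref{def of hat quantities}. First I would substitute Lemma~\ref{intbplus4N0lemma} to get
$$y = \mu - \frac{Ub_+(\ln(16/b_+)+1)}{2\pi^2} + O\bigl(U b_+^3 \ln(16/b_+)\bigr),$$
where the constant $\tfrac{U}{2}$ has cancelled against the leading $U\cdot\tfrac12$ from the integral. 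Then, using $b_+ = \hat b_+ e^{-2\pi/\sqrt U}$ and $\ln(16/b_+) = \ln(16/\hat b_+) + 2\pi/\sqrt U$, the dominant piece of $Ub_+\ln(16/b_+)$ becomes $\hat b_+ \sqrt U\, e^{-2\pi/\sqrt U}\cdot 2/\pi$ (after multiplication by $1/(2\pi^2)$ it gives $\hat b_+ \sqrt U/\pi$), and factoring $e^{-2\pi/\sqrt U}$ out of $\mu$ as well yields
$$y = e^{-2\pi/\sqrt U}\Bigl[\hat\mu - \tfrac{\hat b_+\sqrt U}{\pi} + O(U)\Bigr].$$

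To control the error uniformly in $b_+\in[0,b_+^{\max}]$, I would use Lemma~\ref{bmaxlemma}, which gives $\hat b_+\le 16 + O(\sqrt U)$, together with the elementary fact that $x\mapsto x\ln(16/x)$ is bounded on $[0,16+1]$, so that $U\cdot\hat b_+\ln(16/\hat b_+) = O(U)$ even as $\hat b_+\downarrow 0$; and the remainder $O(Ub_+^3 \ln(16/b_+))$ is easily seen to be $O(\sqrt U\, e^{-6\pi/\sqrt U})$, hence negligible compared to $\sqrt U\, e^{-2\pi/\sqrt U}$. Squaring the displayed formula for $y$ gives
$$y^2 = e^{-4\pi/\sqrt U}\Bigl[\hat\mu^2 - \tfrac{2\hat\mu\hat b_+\sqrt U}{\pi} + O(U)\Bigr],$$
and subtracting $b_+^2 = \hat b_+^2\, e^{-4\pi/\sqrt U}$ yields the claimed identity, since the cross term $-2\hat\mu\hat b_+\sqrt U/\pi$ is absorbed into $O(\sqrt U)$ using $\hat\mu\in[16+\delta,32-\delta]$ and $\hat b_+=O(1)$.

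The only subtle point, and what I would treat as the main obstacle to uniformity, is the behaviour of the error for $b_+$ near $0$ and near $b_+^{\max}$: near $0$ the factor $\ln(16/b_+)$ blows up like $1/\sqrt U$ but is tamed by the $b_+$ prefactor via the boundedness of $x\ln(16/x)$, while near $b_+^{\max}$ one needs Lemma~\ref{bmaxlemma} to guarantee that $\hat b_+$ stays bounded (and in particular does not encroach on the logarithmic singularity of $N_0$ at $\epsilon=0$ in a problematic way). Once these two boundary regimes are handled, the estimate holds uniformly on the entire parameter range.
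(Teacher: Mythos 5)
Your proof is correct and follows essentially the same route as the paper: expand the integral via Lemma~\ref{intbplus4N0lemma}, pass to the hatted variables of \eqref{def of hat quantities} using $\ln(16/b_+) = \ln(16/\hat b_+) + 2\pi/\sqrt U$, invoke Lemma~\ref{bmaxlemma} to keep $\hat b_+$ in a bounded range, and absorb the subleading terms (including the $U\hat b_+\ln(16/\hat b_+)$ piece, tamed by the boundedness of $x\mapsto x\ln(16/x)$) into $O(\sqrt U)$ before squaring. The paper's proof is the same argument written more compactly; your extra remarks about the boundary regimes $b_+\approx 0$ and $b_+\approx b_+^{\max}$ are a useful clarification of the uniformity, not a divergence from the method.
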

\begin{proof}
Suppose $b_+ \in [0, b_+^{\max}]$. By Lemma \ref{bmaxlemma}, this implies that $b_+ = \hat{b}_+ e^{-\frac{2\pi}{\sqrt{U}}}$ with $\hat{b}_+ \in [0, 16]$. In particular, $b_+$ tends to $0$ as $U \downarrow 0$, so we can employ Lemma \ref{intbplus4N0lemma} to infer that
\begin{align}
\mu - \frac{U}{2} + U \int_{b_+}^4 N_0(\epsilon) d\epsilon
& = \mu - U\frac{b_+ \big(\ln (\frac{16}{b_+})+1\big)}{2 \pi^2} + O\bigg(U b_+^3 \ln\bigg(\frac{16}{b_+}\bigg)\bigg) \nonumber
	\\
& = \bigg(\hat{\mu}  - U\frac{\hat{b}_+ \big(\ln (\frac{16}{\hat{b}_+}) + \frac{2\pi}{\sqrt{U}} +1\big)}{2 \pi^2}\bigg)e^{-\frac{2\pi}{\sqrt{U}}} + O\big(U b_+^2 \big) \nonumber
	\\
& = \big(\hat{\mu}  + O(\sqrt{U}) \big)e^{-\frac{2\pi}{\sqrt{U}}}, \label{lol2}
\end{align}
uniformly for $\hat{b}_+ \in [0, 16]$ and $\hat{\mu} \in [16 + \delta, 32 - \delta]$ as $U \downarrow 0$, from which the desired assertion follows.
\end{proof}

Now that we have verified that $\Xi$ is well-defined by (\ref{Xidef}), we show that the zeros of $\Xi$ correspond to solutions $(d_0^{\AF}, m_1^{\AF}) \in (0,1) \times (0, +\infty)$ of the AF mean-field equations.

\begin{lemma}\label{Xicorrespondencelemma}
There is a $U_0 > 0$ such that, if $(U, \mu) \in \III_{U_0,\delta}$, then solutions $(d_0^{\AF}, m_1^{\AF}) \in (0,1) \times (0, +\infty)$ of the AF mean-field equations (\ref{AFmeanfieldeqs}) are in one-to-one correspondence with solutions $b_+ \in (0, b_+^{\max})$ of the equation
\begin{align}\label{Xizero}
  \Xi(b_+) = 0.
\end{align}
\end{lemma}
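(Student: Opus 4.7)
The plan is to reduce the three coupled equations (\ref{AF1})--(\ref{bplusmurelation}) characterizing AF solutions with $d_0^{\AF} \in (0,1)$ to the single scalar equation $\Xi(b_+) = 0$. Concretely, I would use (\ref{AF1}) to solve for $d_0^{\AF}$ in terms of $b_+$, use (\ref{AF2}) to recognize $\Delta := \frac{U}{2} m_1^{\AF}$ as the function $\Delta_1(b_+)$ from (\ref{def of Delta1}), and then reinterpret (\ref{bplusmurelation}) after these substitutions as $\Delta_1(b_+) = \Delta_2(b_+)$, i.e., $\Xi(b_+) = 0$.

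For the forward direction, suppose $(d_0^{\AF}, m_1^{\AF}) \in (0,1) \times (0, +\infty)$ solves (\ref{AFmeanfieldeqs}). The discussion preceding the lemma already shows that this corresponds to some $b_+ \in (0,4)$ for which (\ref{AF1})--(\ref{bplusmurelation}) hold with $\Delta > 0$. The first step is to show $b_+ \in (0, b_+^{\max})$: since $\Delta > 0$, (\ref{AF2}) gives the strict inequality
\begin{align*}
\frac{1}{U} = \int_{b_+}^4 N_0(\epsilon)\frac{d\epsilon}{\sqrt{\Delta^2 + \epsilon^2}} < \int_{b_+}^4 \frac{N_0(\epsilon)}{\epsilon}\,d\epsilon,
\end{align*}
and comparison with (\ref{bplusmaxdef}), together with strict monotonicity in $b_+$ of the right-hand side, yields the claim. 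Then $\Delta = \Delta_1(b_+)$ by uniqueness in the definition of $\Delta_1$. Substituting (\ref{AF1}) into (\ref{bplusmurelation}) gives $\sqrt{\Delta^2 + b_+^2} = \mu - U/2 + U\int_{b_+}^4 N_0(\epsilon)\,d\epsilon$, and squaring identifies $\Delta$ also with $\Delta_2(b_+)$; hence $\Xi(b_+) = 0$.

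For the backward direction, given $b_+ \in (0, b_+^{\max})$ with $\Xi(b_+) = 0$, I would set
\begin{align*}
\Delta := \Delta_1(b_+), \qquad d_0^{\AF} := 1 - 2\int_{b_+}^4 N_0(\epsilon)\,d\epsilon, \qquad m_1^{\AF} := \frac{2\Delta}{U}.
\end{align*}
Strict monotonicity of $\Delta_1$ with $\Delta_1(b_+^{\max}) = 0$ yields $\Delta > 0$, and $b_+ \in (0,4)$ combined with $\int_0^4 N_0\,d\epsilon = \tfrac{1}{2}$ yields $d_0^{\AF} \in (0,1)$. Equations (\ref{AF1}) and (\ref{AF2}) hold by construction; the key obstacle is (\ref{bplusmurelation}), because $\Delta = \Delta_2(b_+)$ combined with (\ref{Delta2def}) only gives $\sqrt{\Delta^2 + b_+^2} = \bigl|\mu - U/2 + U\int_{b_+}^4 N_0(\epsilon)\,d\epsilon\bigr|$, and the sign inside the absolute value must be controlled. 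This is where the smallness hypothesis on $U_0$ enters: the intermediate computation in the proof of Lemma \ref{Delta2welldefinedlemma} shows that $\mu - U/2 + U\int_{b_+}^4 N_0(\epsilon)\,d\epsilon = \bigl(\hat{\mu} + O(\sqrt{U})\bigr) e^{-2\pi/\sqrt{U}}$ uniformly over Sector III, and since $\hat{\mu} \geq 16 + \delta > 0$ in $\III_{U_0,\delta}$, this quantity is strictly positive whenever $U_0$ is sufficiently small. With the correct sign, (\ref{bplusmurelation}) follows, and then the original AF equations (\ref{AFmeanfieldeqs}) hold because the condition $A_- = \emptyset$ is automatic from $\frac{U}{2}d_0^{\AF} - \mu = -\sqrt{\Delta^2 + b_+^2} < 0$. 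The two maps thus constructed are mutual inverses by the explicit formulas, completing the one-to-one correspondence.
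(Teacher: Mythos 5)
Your proof is correct and follows essentially the same reduction as the paper: use (\ref{AF1}) to eliminate $d_0^{\AF}$, identify $\Delta$ with $\Delta_1(b_+)$ via (\ref{AF2}), and recast (\ref{bplusmurelation}) as $\Delta_1(b_+)=\Delta_2(b_+)$. The one place you are slightly more explicit than the paper is the sign check in the backward direction: the paper's converse simply asserts that (\ref{bplusmurelation}) holds, whereas you correctly note that $\Delta=\Delta_2(b_+)$ only gives $\sqrt{\Delta^2+b_+^2}=|\mu-\tfrac{U}{2}+U\int_{b_+}^4 N_0\,d\epsilon|$, and you invoke the uniform positivity of the expression inside the absolute value (established via the asymptotic \eqref{lol2} in the proof of Lemma \ref{Delta2welldefinedlemma}) to remove the absolute value, which is exactly what is implicitly relied upon in the paper.
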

\begin{proof}
Suppose $(d_0^{\AF}, m_1^{\AF}) \in (0,1) \times (0, +\infty)$ satisfies (\ref{AFmeanfieldeqs}).
Since $d_0^{\AF} \in (0,1)$, we may define $b_+ \in (0,4)$ as the unique solution of (\ref{AF1}), and then $\Delta := \frac{U}{2}m_1^{\AF}$ satisfies (\ref{AF2}) and (\ref{bplusmurelation}). Since (\ref{bplusmurelation}) has the solution $\Delta > 0$, 
it follows that $b_+ < b_+^{\max}$, and that $\Delta_1(b_+) = \Delta$.
On the other hand, using (\ref{AF1}) to eliminate $d_0^{\AF}$ from (\ref{bplusmurelation}), we obtain
\begin{align}\label{Delta2eq}
\frac{U}{2} - U\int_{b_+}^4 N_0(\epsilon) d\epsilon + \sqrt{\Delta^2 + b_+^2} - \mu = 0,
\end{align}
and, solving for $\Delta$, we obtain
$$\Delta^2 = \bigg(\mu - \frac{U}{2} + U \int_{b_+}^4 N_0(\epsilon) d\epsilon\bigg)^2 - b_+^2,$$
which shows that $\Delta_2(b_+) = \Delta$. Thus, $b_+$ lies in $(0, b_+^{\max})$ and solves (\ref{Xizero}).

For the converse, assume that $b_+ \in (0, b_+^{\max})$ satisfies $\Xi(b_+) = 0$.
Then $\Delta := \Delta_1(b_+) = \Delta_2(b_+)$ is a strictly positive solution of (\ref{AF2}).
If we define $d_0^{\AF} := 1 - 2\int_{b_+}^4 N_0(\epsilon) d\epsilon$, then $d_0^{\AF} \in (0,1)$ and the equations (\ref{AF1}), (\ref{AF2}), and (\ref{bplusmurelation}) are fulfilled.
Consequently, setting $m_1^{\AF} = \frac{2}{U}\Delta$, we conclude that $(d_0^{\AF}, m_1^{\AF})$ is an AF mean-field solution in $(0,1) \times (0, +\infty)$.
\end{proof}

Our next objective is to show that the function $\Xi$ has exactly one zero if $(U, \mu) \in \III_{U_0,\delta}$ is small enough. 
To show this, we first need to derive some properties of $\Delta_1$ and $\Delta_2$. We begin with $\Delta_2$.

\begin{lemma}\label{Delta2lemma}
The function $\Delta_2$ defined in (\ref{Delta2def}) is a strictly decreasing function of $[0, b_+^{\max}]$.
Furthermore, as $U \downarrow 0$, 
\begin{align}\label{Delta2expansion}
\Delta_2(b_+)& = \big(\sqrt{\hat{\mu}^2 - \hat{b}_+^2} + O(\sqrt{U})\big)e^{-\frac{2\pi}{\sqrt{U}}}
	\\ \label{Delta2primeexpansion}
\Delta_2'(b_+) & = \frac{- \hat{b}_+}{\sqrt{\hat{\mu}^2 - \hat{b}_+^2}} + O(\sqrt{U}) + O\bigg(U \ln\bigg(\frac{16}{\hat{b}_+}\bigg) \bigg)
\end{align}
uniformly for $b_+ \in (0, b_+^{\max}]$ and $\hat{\mu} \in [16 + \delta, 32 - \delta]$.
\end{lemma}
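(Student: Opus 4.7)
The plan is to reduce both assertions to elementary asymptotic manipulations of the identity
$$\Delta_{2}(b_{+})^{2} = \alpha(b_{+})^{2} - b_{+}^{2}, \qquad \alpha(b_{+}) := \mu - \tfrac{U}{2} + U\int_{b_{+}}^{4} N_{0}(\epsilon)\, d\epsilon,$$
using that Lemma \ref{Delta2welldefinedlemma} (specifically the computation performed in its proof) already supplies $\alpha(b_{+}) = (\hat{\mu} + O(\sqrt{U}))e^{-2\pi/\sqrt{U}}$, uniformly in $\hat{b}_{+} \in [0,\hat{b}_{+}^{\max}]$ and $\hat{\mu} \in [16+\delta, 32-\delta]$.

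To prove \eqref{Delta2expansion}, I would substitute $b_{+} = \hat{b}_{+} e^{-2\pi/\sqrt{U}}$ together with the expansion of $\alpha$ into $\Delta_{2}^{2}$ to obtain $\Delta_{2}^{2} = (\hat{\mu}^{2} - \hat{b}_{+}^{2} + O(\sqrt{U}))e^{-4\pi/\sqrt{U}}$. Lemma \ref{bmaxlemma} gives $\hat{b}_{+} \leq \hat{b}_{+}^{\max} = 16 + O(\sqrt{U})$, while $\hat{\mu} \geq 16 + \delta$, so the leading coefficient $\hat{\mu}^{2} - \hat{b}_{+}^{2}$ is bounded below by a strictly positive constant for all sufficiently small $U$. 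Taking the square root then yields \eqref{Delta2expansion}.

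Next, to handle the derivative, I would differentiate $\Delta_{2}^{2} = \alpha^{2} - b_{+}^{2}$ using $\alpha'(b_{+}) = -U N_{0}(b_{+})$ to obtain
$$\Delta_{2}'(b_{+}) = \frac{-\alpha(b_{+})\, U N_{0}(b_{+}) - b_{+}}{\Delta_{2}(b_{+})}.$$
Strict monotonicity on $[0, b_{+}^{\max}]$ follows at once: for $b_{+} \in (0, b_{+}^{\max}]$ the numerator is strictly negative (since $\alpha > 0$, $N_{0} \geq 0$, $b_{+} > 0$) and the denominator is strictly positive (by the previous step), while the endpoint $b_{+} = 0$ is handled by continuity together with the observation that $\alpha$ is strictly decreasing and $b_{+}^{2}$ strictly increasing, so $\Delta_{2}(b_{+})^{2} < \Delta_{2}(0)^{2}$ for $b_{+} > 0$. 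For the asymptotics of $\Delta_{2}'$, I would insert $N_{0}(b_{+}) = \frac{1}{2\pi^{2}}\ln(16/b_{+}) + O(b_{+}^{2}\ln(1/b_{+}))$ from \eqref{N0expansionfirstfew} and split $\ln(16/b_{+}) = 2\pi/\sqrt{U} + \ln(16/\hat{b}_{+})$, obtaining $UN_{0}(b_{+}) = \sqrt{U}/\pi + U\ln(16/\hat{b}_{+})/(2\pi^{2}) + O(U^{2})$. Multiplying by $\alpha$ and subtracting $b_{+}$ produces a numerator of $(-\hat{b}_{+} - \hat{\mu}\sqrt{U}/\pi + O(U\ln(16/\hat{b}_{+})))\,e^{-2\pi/\sqrt{U}}$; dividing by $\Delta_{2} = (\sqrt{\hat{\mu}^{2} - \hat{b}_{+}^{2}} + O(\sqrt{U}))\,e^{-2\pi/\sqrt{U}}$ and expanding the reciprocal yields \eqref{Delta2primeexpansion}.

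The only delicate point I anticipate is the uniformity in $\hat{b}_{+}$, specifically as $\hat{b}_{+} \to 0$ where $\ln(16/\hat{b}_{+})$ diverges. However, since this divergent factor is already exhibited in the claimed error $O(U\ln(16/\hat{b}_{+}))$, and since $\sqrt{\hat{\mu}^{2} - \hat{b}_{+}^{2}}$ remains bounded below by a positive constant throughout $\hat{b}_{+} \in [0, 16]$ and $\hat{\mu} \in [16+\delta, 32-\delta]$, the expansion of the reciprocal is uniform and no genuine obstacle arises beyond careful accounting.
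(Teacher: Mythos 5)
Your proposal is correct and takes essentially the same approach as the paper: you derive the same implicit expression for $\Delta_2'(b_+)$ (noting $\sqrt{\Delta_2(b_+)^2 + b_+^2} = \alpha(b_+)$, exactly the identity the paper uses), read off monotonicity from its sign, and obtain the asymptotics by substituting the expansion of $N_0$ near $0$ and the already-established expansion of $\alpha$. The only slight extension in your write-up is spelling out the inclusion of the left endpoint $b_+ = 0$ in the strict-monotonicity claim, which the paper leaves implicit.
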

\begin{proof}
The expansion (\ref{Delta2expansion}) is an immediate consequence of Lemma \ref{Delta2welldefinedlemma}. The asymptotic formula \eqref{lol2} implies that 
\begin{align*}
\mu - \frac{U}{2} + U \int_{b_+}^4 N_0(\epsilon) d\epsilon > 0
\end{align*}
holds for all $(U, \mu) \in \III_{U_0,\delta}$, provided that $U_{0}$ is chosen sufficiently small. Hence,
\begin{align*}
\mu - \frac{U}{2} + U \int_{b_+}^4 N_0(\epsilon) d\epsilon = \sqrt{\Delta_2(b_+)^2 + b_+^2},
\end{align*}
and differentiation of (\ref{Delta2def}) yields
$$\Delta_2'(b_+) = - \frac{U N_0(b_+) \sqrt{\Delta_2(b_+)^2 + b_+^2} + b_+}{\Delta_2(b_+)}.$$
Utilizing (\ref{Delta2expansion}) and (\ref{N0expansionfirstfew}) in this expression, we obtain
\begin{align*}
\Delta_2'(b_+) & = - \frac{U \Big(\frac{\ln(\frac{16}{b_+})}{2 \pi^2} + O(b_+^2 \ln(\frac{16}{b_+}))\Big) \big(\hat{\mu} + O(\sqrt{U})\big)e^{-\frac{2\pi}{\sqrt{U}}} + \hat{b}_+ e^{-\frac{2\pi}{\sqrt{U}}}}{\big(\sqrt{\hat{\mu}^2 - \hat{b}_+^2} + O(\sqrt{U})\big)e^{-\frac{2\pi}{\sqrt{U}}}}.
	\\
& = - \frac{U \Big(\frac{\ln(\frac{16}{\hat{b}_+}) + \frac{2\pi}{\sqrt{U}} }{2 \pi^2} + O(b_+)\Big) \big(\hat{\mu} + O(\sqrt{U})\big) + \hat{b}_+ }{\sqrt{\hat{\mu}^2 - \hat{b}_+^2} + O(\sqrt{U})}
= - \frac{\hat{b}_+ + O(\sqrt{U}) + O(U \ln(\frac{16}{\hat{b}_+}) )}{\sqrt{\hat{\mu}^2 - \hat{b}_+^2} + O(\sqrt{U})}
\end{align*}
uniformly for $b_+ \in [0, b_+^{\max}]$ and $\hat{\mu} \in [16 + \delta, 32 - \delta]$ as $U \downarrow 0$, from which (\ref{Delta2primeexpansion}) follows.
\end{proof}

We now turn to the asymptotic behavior of $\Delta_1$. We will need the following lemma.

\begin{lemma}\label{intN0oneoversqrtlemma}
As $\Delta \downarrow 0$, it holds that
\begin{align}\nonumber
\int_{x \Delta}^4  N_0(\epsilon) \frac{1}{\sqrt{\Delta^2 + \epsilon^2}} d\epsilon
=&\; \frac{(\ln \Delta)^2}{4 \pi^2}
+ \frac{\arcsinh(x)-\ln (32)}{2 \pi^2} \ln \Delta
	\\ \label{intN0oneoversqrt}
& 
+ O\big(1 + (\ln(1 + x))^2\big)
\end{align}
uniformly for $x$ in the sector $0 \leq x \leq 4/\Delta$.
\end{lemma}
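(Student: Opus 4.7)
The natural move is the hyperbolic substitution $\epsilon = \Delta \sinh t$, which makes the singular weight $d\epsilon/\sqrt{\Delta^2+\epsilon^2}$ collapse to $dt$. Setting
$$a := \arcsinh(x), \qquad b := \arcsinh(4/\Delta),$$
the integral becomes $\int_a^b N_0(\Delta \sinh t)\, dt$. Note that $b = -\ln\Delta + \ln 8 + O(\Delta^2)$ as $\Delta \downarrow 0$, while $a = O(\ln(1+x))$ for $x \geq 0$. The subsequent strategy is to separate the logarithmic singularity of $N_0$ at the origin from a well-behaved remainder, then evaluate the resulting elementary integrals.

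Using the expansion \eqref{N0expansionfirstfew}, I would write
$$N_0(\epsilon) = \frac{1}{2\pi^2}\ln\Bigl(\frac{16}{\epsilon}\Bigr) + R(\epsilon),$$
where $R(\epsilon) = O(\epsilon^2 \ln(16/\epsilon))$ as $\epsilon\downarrow 0$ and $R$ is bounded on compact subsets of $(0,4)$. The remainder contribution $\int_{x\Delta}^4 R(\epsilon)/\sqrt{\Delta^2+\epsilon^2}\, d\epsilon$ is $O(1)$ uniformly in $x$: split at $\epsilon=\Delta$ and bound $(\Delta^2+\epsilon^2)^{-1/2} \leq 1/\Delta$ on $[x\Delta,\Delta]$ (yielding $O(\Delta^2\ln(1/\Delta))$) and $(\Delta^2+\epsilon^2)^{-1/2} \leq 1/\epsilon$ on $[\Delta,4]$ (yielding $O(\int_0^4 |R(\epsilon)|/\epsilon\, d\epsilon) = O(1)$).

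For the principal part, substitute $\ln(16/(\Delta\sinh t)) = -\ln\Delta + \ln 16 - \ln\sinh t$ and decompose $\ln\sinh t = (t - \ln 2) + \ln(1 - e^{-2t})$. The second piece is absolutely integrable on $[0,\infty)$, behaving like $\ln(2t)$ near $t=0$ and like $O(e^{-2t})$ at infinity, so $\int_a^b \ln(1-e^{-2t})\,dt = O(1)$ uniformly for $0 \leq a \leq b$. The remaining integral is elementary,
$$\frac{1}{2\pi^2}\int_a^b\bigl[(\ln 32 - \ln\Delta) - t\bigr]\,dt = \frac{1}{2\pi^2}\Bigl[(\ln 32 - \ln\Delta)(b-a) - \tfrac{1}{2}(b^2 - a^2)\Bigr].$$
Inserting $b = -\ln\Delta + \ln 8 + O(\Delta^2)$ and collecting powers of $\ln\Delta$ and $a$, one obtains
$$\frac{(\ln\Delta)^2}{4\pi^2} + \frac{\arcsinh(x) - \ln 32}{2\pi^2}\,\ln\Delta + O\bigl(1 + a^2\bigr),$$
with the $a^2$ remainder coming from the $-a^2/2$ contribution inside $\tfrac{1}{2}(b^2-a^2)$. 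Since $a = \arcsinh(x) = O(\ln(1+x))$ for $x \geq 0$, this matches the stated error $O(1 + (\ln(1+x))^2)$.

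The only delicate point is uniformity over the full range $0 \leq a \leq b \approx -\ln\Delta$: one must ensure that no factor of $\ln\Delta$ leaks into the remainder when tracking the $a$-dependence. This is the main obstacle, and it is handled by isolating the $a$-dependent terms explicitly at each step (in $(b-a)$, $(b^2-a^2)$, and the $\ln(1-e^{-2t})$ integral) and verifying directly that the coefficients of $(\ln\Delta)^2$ and $\ln\Delta$ in the final expression agree exactly with the claimed formula, so that only $a^2$- and $O(1)$-errors remain.
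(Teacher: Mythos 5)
Your proof is correct, and it takes a genuinely different route from the paper's. The paper also begins by peeling off the $O(1)$ remainder $E(\Delta,x) = \int_{x\Delta}^4 \bigl(N_0(\epsilon)-\tfrac{1}{2\pi^2}\ln(16/\epsilon)\bigr)(\Delta^2+\epsilon^2)^{-1/2}\,d\epsilon$ exactly as you do (in fact with a slightly simpler bound: $(\Delta^2+\epsilon^2)^{-1/2}\le 1/\epsilon$ throughout, so your extra split at $\epsilon=\Delta$ is unnecessary, though harmless). The two proofs then diverge on the principal part: the paper exhibits a closed-form antiderivative of $\tfrac{1}{2\pi^2}\ln(16/\epsilon)\,(\Delta^2+\epsilon^2)^{-1/2}$ — a lengthy expression involving $\mathrm{Li}_2$, squared logarithms, and $\arctanh$ — evaluates it at the endpoints, and then Taylor-expands the result, whereas you substitute $\epsilon=\Delta\sinh t$ to flatten the weight, decompose $\ln\sinh t = t - \ln 2 + \ln(1-e^{-2t})$, and exploit the absolute integrability of $\ln(1-e^{-2t})$ on $[0,\infty)$ to reduce to an integral of an affine function of $t$ over $[a,b]$. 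Your route is more elementary and avoids the dilogarithm bookkeeping entirely; the paper's route has the advantage that the same primitive is reusable (e.g.\ in Lemma~\ref{intN0sqrtlemma}), but for this statement alone yours is cleaner. The bookkeeping at the end — isolating $b=-\ln\Delta+\ln 8+O(\Delta^2)$, expanding $(\ln 32-\ln\Delta)(b-a)-\tfrac12(b^2-a^2)$, and absorbing $-a\ln 32 + \tfrac12 a^2$ into $O(1+a^2)$ — is exactly right, and the final observation $a=\arcsinh(x)\le\ln 2+\ln(1+x)$ closes the gap with the stated error term.
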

\begin{proof}
We write the integral in (\ref{intN0oneoversqrt}) as
\begin{align}\label{intsplit}
\int_{x \Delta}^4 N_0(\epsilon)\frac{1}{\sqrt{\Delta^2 + \epsilon^2}} d\epsilon
= \int_{x \Delta}^4 \frac{\ln(\frac{16}{\epsilon})}{2 \pi^2} \frac{1}{\sqrt{\Delta^2 + \epsilon^2}} d\epsilon
+ E(\Delta, x),
\end{align}
where 
$$E(\Delta, x) := \int_{x \Delta}^4  \bigg(N_0(\epsilon) - \frac{\ln(\frac{16}{\epsilon})}{2 \pi^2}\bigg)\frac{1}{\sqrt{\Delta^2 + \epsilon^2}} d\epsilon$$
in view of (\ref{N0expansionfirstfew}) satisfies
\begin{align}\label{Ebound}
|E(\Delta, x)| \leq C \int_0^4 \epsilon^2 \ln(\frac{16}{\epsilon})\frac{1}{\epsilon} d\epsilon \leq C
\end{align}
for all $\Delta \geq 0$ and all $x \geq 0$ such that $x \Delta \leq 4$.
For $\epsilon > 0$ and $\Delta > 0$, we have the primitive function
\begin{align*}
& \frac{1}{16 \pi^2} \frac{d}{d\epsilon} \bigg\{
-2 \bigg(\ln \frac{1 + \sqrt{\frac{\Delta^2}{\epsilon^2}+1}}{2}\bigg)^2
-\bigg(\ln \frac{\Delta^2}{\epsilon^2}\bigg)^2
+4 \ln \bigg(\frac{1 + \sqrt{\frac{\Delta^2}{\epsilon^2}+1}}{2}\bigg) \ln \bigg(\frac{\Delta
^2}{\epsilon^2}\bigg)
	\\
& +4 \ln \bigg(\frac{256}{\Delta^2}\bigg) \arctanh\bigg(\frac{1}{\sqrt{\frac{\Delta^2}{\epsilon^2}+1}}\bigg)
+4  \Li_2\bigg(\frac{1 - \sqrt{\frac{\Delta^2}{\epsilon^2}+1}}{2}\bigg)\bigg\}
= \frac{\ln(\frac{16}{\epsilon})}{2 \pi^2} 
\frac{1}{\sqrt{\Delta^2 + \epsilon^2}},
\end{align*}
which leads to the following expression:
$$\int_{x \Delta}^4 \frac{\ln(\frac{16}{\epsilon})}{2 \pi^2} \frac{1}{\sqrt{\Delta^2 + \epsilon^2}} d\epsilon
= I_1(x) + I_2(\Delta)
-\frac{\ln \left(\frac{16}{\Delta}\right) \arcsinh(x)}{2 \pi^2},
$$
where
\begin{align*}
I_1(x) :=&\; \frac{4 Y_1(x) \ln (x)+2 \ln ^2(x)+Y_1(x)^2-2 Y_2(x)}{8 \pi^2},
	\\
I_2(\Delta) := &\; \frac{1}{16 \pi^2}\bigg\{-\ln ^2\left(\frac{16}{\Delta^2}\right)+2 (\ln (8)-Y_3(\Delta)) (-2 \ln (\Delta^2)+Y_3(\Delta)+\ln (32))
	\\
& +2 \ln \left(\frac{256}{\Delta^2}\right) (Y_3(\Delta)-Y_4(\Delta))+4 Y_5(\Delta)\bigg\},
   	\\
Y_1(x) := &\; \ln \left(\frac{1}{2} \left(1 + \sqrt{\frac{1}{x^2}+1}\right)\right), 
\qquad Y_2(x) := \text{Li}_2\left(\frac{1}{2} \left(1-\sqrt{\frac{1}{x^2}+1}\right)\right),
   	\\
Y_3(\Delta) := &\; \ln \left(\sqrt{\Delta^2+16}+4\right),
\qquad Y_4(\Delta) := \ln \left(\sqrt{\Delta^2+16}-4\right),
   	\\
Y_5(\Delta) := &\; \text{Li}_2\left(\frac{1}{2}-\frac{\sqrt{\Delta^2+16}}{8}\right).
   \end{align*}
As $\Delta \downarrow 0$,
   \begin{align*}
Y_3(\Delta) := &\; \ln(8) + \frac{\Delta^2}{64} + O(\Delta^4),
\qquad Y_4(\Delta) := \ln \left(\frac{\Delta^2}{8}\right)-\frac{\Delta^2}{64} + O(\Delta^4),
   	\\
Y_5(\Delta) := &\; -\frac{\Delta^2}{64} + O(\Delta^4).
   \end{align*}
We obtain
\begin{align*}
\int_{x \Delta}^4 \frac{\ln(\frac{16}{\epsilon})}{2 \pi^2} \frac{1}{\sqrt{\Delta^2 + \epsilon^2}} d\epsilon
= &\; \frac{(\ln \Delta)^2}{4 \pi^2}
+ \frac{\left(\arcsinh(x)-\ln (32)\right)}{2 \pi^2} \ln \Delta
	\\
& + \frac{\ln (2) \left(\ln (32)-2 \arcsinh(x)\right)}{\pi^2}
+ I_1(x) + O(\Delta^2 \ln\Delta),
\end{align*}
as $\Delta \downarrow 0$ uniformly for $x \geq 0$ such that $x \Delta \leq 4$.
Substituting this expansion into (\ref{intsplit}) and using that
$$\frac{\ln (2) \left(\ln (32)-2 \arcsinh(x)\right)}{\pi^2}
+ I_1(x) = \begin{cases} O(1) & \text{as $x \downarrow 0$}, \\
O((\ln x)^2) & \text{as $x \to +\infty$},
\end{cases}$$
we arrive at (\ref{intN0oneoversqrt}) after recalling (\ref{Ebound}).
\end{proof}

We will only need uniform asymptotics of $\Delta_1$ for $\hat{b}_+$ in compact subsets of $(0,16)$, so we will restrict ourselves to this case.

\begin{lemma}\label{Delta1lemma}
As $U \downarrow 0$, the function $\Delta_1$ satisfies
\begin{align}\label{Delta1estimate}
  \Delta_1(b_+) = \Big(8 \sqrt{16 - \hat{b}_+} + O(\sqrt{U})\Big) e^{-\frac{2\pi}{\sqrt{U}}}
\end{align}
uniformly for $\hat{b}_+$ in compact subsets of $(0, 16)$.
\end{lemma}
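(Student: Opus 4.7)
Plan: My starting point is the defining relation for $\Delta_1(b_+)$, which by \eqref{def of Delta1} and \eqref{AF2} reads
\begin{align*}
\frac{1}{U} = \int_{b_+}^4 N_0(\epsilon)\,\frac{d\epsilon}{\sqrt{\Delta_1^2+\epsilon^2}}.
\end{align*}
I would substitute $x = b_+/\Delta_1$ on the right and apply Lemma \ref{intN0oneoversqrtlemma} to rewrite this as
\begin{align*}
\frac{1}{U} = \frac{(\ln\Delta_1)^2}{4\pi^2} + \frac{\arcsinh(b_+/\Delta_1)-\ln 32}{2\pi^2}\,\ln\Delta_1 + O\!\big(1+(\ln(1+b_+/\Delta_1))^2\big).
\end{align*}
Next I would insert the ansatz $\Delta_1 = \alpha\, e^{-2\pi/\sqrt{U}}$ with $\alpha>0$ bounded, so that $\ln\Delta_1 = -2\pi/\sqrt{U}+\ln\alpha$ and $b_+/\Delta_1 = \hat{b}_+/\alpha$. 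A short computation shows that the $\tfrac{1}{U}$-contributions on the two sides cancel and one is left with
\begin{align*}
\Phi(\alpha;\hat{b}_+) := \ln\alpha + \arcsinh\!\big(\hat{b}_+/\alpha\big) - \ln 32 = O(\sqrt{U}).
\end{align*}

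Using $\arcsinh(y) = \ln(y+\sqrt{y^2+1})$, the limit equation $\Phi(\alpha_0;\hat{b}_+)=0$ is equivalent to $\hat{b}_+ + \sqrt{\hat{b}_+^2+\alpha_0^2} = 32$, which for $\hat{b}_+\in(0,16)$ has the unique positive solution $\alpha_0 = 8\sqrt{16-\hat{b}_+}$. To upgrade this to the uniform bound $\alpha = \alpha_0 + O(\sqrt{U})$, I would invoke a quantitative implicit function argument. A direct calculation gives
\begin{align*}
\partial_\alpha\Phi(\alpha;\hat{b}_+) = \frac{1}{\alpha}\cdot\frac{\sqrt{\hat{b}_+^2+\alpha^2}-\hat{b}_+}{\sqrt{\hat{b}_+^2+\alpha^2}},
\end{align*}
which at $\alpha=\alpha_0$ evaluates to $(32-2\hat{b}_+)/\bigl[\alpha_0(32-\hat{b}_+)\bigr]$. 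Since both $\alpha_0$ and $32-2\hat{b}_+$ are bounded below by positive constants as $\hat{b}_+$ ranges over a compact subset $K\subset(0,16)$, the derivative $\partial_\alpha\Phi(\alpha_0;\hat{b}_+)$ is uniformly bounded away from zero on $K$. Converting back via $\Delta_1 = \alpha\,e^{-2\pi/\sqrt{U}}$ then yields \eqref{Delta1estimate}.

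The main technical obstacle is justifying \emph{a priori} that $\Delta_1$ already lies in the regime $\Delta_1 \asymp e^{-2\pi/\sqrt{U}}$ in which Lemma \ref{intN0oneoversqrtlemma} produces a uniform $O(1)$ error, since this is what validates the ansatz above. I would establish this by a monotonicity sandwich: the map $\Delta\mapsto \int_{b_+}^4 N_0(\epsilon)(\Delta^2+\epsilon^2)^{-1/2}\,d\epsilon$ is strictly decreasing in $\Delta$, and evaluating it at $\Delta = (\alpha_0\pm\eta)e^{-2\pi/\sqrt{U}}$ via the expansion above shows that for any fixed $\eta>0$ the value is strictly greater than $1/U$ at the lower endpoint and strictly less than $1/U$ at the upper endpoint, uniformly in $\hat{b}_+\in K$, once $U$ is small enough. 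This traps $\Delta_1$ between the two endpoints and thus forces $\alpha \in (\alpha_0-\eta,\alpha_0+\eta)$; taking $\eta = C\sqrt{U}$ with $C$ sufficiently large yields the claimed sharp $O(\sqrt{U})$ bound and closes the argument.
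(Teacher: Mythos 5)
Your proof is correct and uses essentially the same mechanism as the paper: a monotonicity sandwich in which the strictly decreasing map $\Delta\mapsto\int_{b_+}^4 N_0(\epsilon)(\Delta^2+\epsilon^2)^{-1/2}d\epsilon$ is evaluated at the two test points $\Delta_\pm = (8\sqrt{16-\hat b_+}\pm C\sqrt{U})e^{-2\pi/\sqrt{U}}$ via Lemma~\ref{intN0oneoversqrtlemma}, after which choosing $C$ large enough traps $\Delta_1(b_+)$ between them. In the paper this computation is packaged as the identity $\int = \tfrac1U + Q(b_+,\Delta_+) + O(\sqrt{U})$ with the sign of $Q$ controlled by $C_1$; in your write-up it is packaged as the limit equation $\Phi(\alpha;\hat b_+) = \ln\alpha + \arcsinh(\hat b_+/\alpha) - \ln 32 = O(\sqrt{U})$, whose unique zero $\alpha_0 = 8\sqrt{16-\hat b_+}$ and bounded-below derivative make the coefficient $\sqrt{16-\hat b_+}/(4(32-\hat b_+))$ appearing in the paper's $Q$ conceptually transparent. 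These are the same estimate reorganized; the "quantitative implicit function argument" in the middle of your proposal is technically redundant once the sandwich is taken with $\eta = C\sqrt{U}$, but it adds helpful intuition and does not introduce any gap. The one thing to keep explicit in a final write-up is the observation you already make at the end: that $x = b_+/\Delta_\pm$ stays in a compact subset of $(0,\infty)$ on compacts $\hat b_+\in K\subset(0,16)$, which is what makes the $O(1+(\ln(1+x))^2)$ error of Lemma~\ref{intN0oneoversqrtlemma} uniformly bounded.
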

\begin{proof}
Recall that $\Delta_1(b_+)$ is defined for any $b_+ \in [0, b_+^{\max}]$ as the unique positive solution of
\begin{align}\label{AFequation2Delta1}
\frac{1}{U} = \int_{b_+}^4 N_0(\epsilon)
 \frac{1}{\sqrt{\Delta_1(b_+)^2 + \epsilon^2}} d\epsilon.
\end{align}
Let $K$ be a compact subset of $(0,16)$. 
According to \eqref{def of Delta1} and Lemma \ref{bmaxlemma}, $\Delta_1(b_+)$ is defined for all $b_+ = \hat{b}_+ e^{-\frac{2\pi}{\sqrt{U}}}$ with $\hat{b}_+ \in K$ whenever $U$ is small enough. 
 
Let $C_1 > 0$ be a constant and define
\begin{align}\label{Deltapmdef}
\Delta_\pm = \Delta_\pm(U, \hat{b}_+) := \Big(8\sqrt{16 - \hat{b}_+}  \pm C_1 \sqrt{U}\Big) e^{-\frac{2\pi}{\sqrt{U}}}.
\end{align}
It is enough to show that if $C_1$ is large enough, then 
\begin{align}\label{Deltapminequalities}
\int_{b_+}^4 N_0(\epsilon)
 \frac{1}{\sqrt{\Delta_+^2 + \epsilon^2}} d\epsilon
 < \frac{1}{U} < \int_{b_+}^4 N_0(\epsilon)
 \frac{1}{\sqrt{\Delta_-^2 + \epsilon^2}} d\epsilon
 \end{align}
for all small enough $U > 0$ and all $\hat{b}_+ \in K$. Indeed, if this is true, then it follows from (\ref{AFequation2Delta1}) that $\Delta_- < \Delta_1(b_+) < \Delta_+$, so that (\ref{Delta1estimate}) follows.

Letting $x = b_+/\Delta_+$, we can write 
\begin{align}\label{1UlnDelta2}
\int_{b_+}^4 N_0(\epsilon)
 \frac{1}{\sqrt{\Delta_+^2 + \epsilon^2}} d\epsilon
 = \frac{(\ln \Delta_+)^2}{4 \pi^2}
+ \frac{\arcsinh(x)-\ln (32)}{2 \pi^2} \ln \Delta_+
+ E_3(\Delta_+, x)
 \end{align}
where the function $E_3(\Delta_+, x)$ according to Lemma \ref{intN0oneoversqrtlemma} obeys the estimate
$$|E_3(\Delta_+, x)| \leq C\big(1 + (\ln(1 + x))^2\big)$$
for all $\Delta_+ > 0$ and all $0 \leq x \leq 4/\Delta_+$. 
As a result of (\ref{Deltapmdef}) and the assumption $\hat{b}_+ \in K$, $x = b_+/\Delta_+$ belongs to a compact subset of $(0, +\infty)$ for all small enough $U > 0$; in particular, $E_3(\Delta_+, x)$ remains uniformly bounded for all $\hat{b}_+ \in K$ and all small enough $U > 0$.
Moreover, (\ref{1UlnDelta2}) yields after a long calculation that
 \begin{align}\label{1Uln}
\int_{b_+}^4 N_0(\epsilon)
 \frac{1}{\sqrt{\Delta_+^2 + \epsilon^2}} d\epsilon
 =  \frac{1}{U} + Q(b_+, \Delta_+)
 + O(\sqrt{U}) \qquad \text{as $U \downarrow 0$}
 \end{align}
uniformly for $\hat{b}_+ \in K$, where
$$Q(b_+, \Delta_+) := -\frac{\sqrt{16-\hat{b}_+}}{4 \pi  (32 - \hat{b}_+)} C_1
-\frac{\ln^2(64 (16 - \hat{b}_+))}{16 \pi^2}+E_3(\Delta_+, x).$$
Since the error term $E_3(\Delta_+, x)$ is uniformly bounded, it follows that the first inequality in (\ref{Deltapminequalities}) holds if we choose $C_1$ large enough; the proof of the second inequality in (\ref{Deltapminequalities}) is analogous. 
\end{proof}

Our next lemma, whose proof is given in Appendix \ref{Delta1primeapp}, shows that the asymptotic formula (\ref{Delta1lemma}) can be differentiated without increasing the error term, as long as $\hat{b}_+$ stays in a compact subset of $(0, 16)$. 

\begin{lemma}\label{Delta1primelemma}
As $U \downarrow 0$, the derivative of the function $\Delta_1$ satisfies
\begin{align}
  \Delta_1'(b_+) = - \frac{4}{\sqrt{16 -\hat{b}_+}} + O(\sqrt{U})
  \end{align}
uniformly for $\hat{b}_+$ in compact subsets of $(0, 16)$.
\end{lemma}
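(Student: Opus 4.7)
The plan is to differentiate the defining relation (\ref{AFequation2Delta1}) for $\Delta_1(b_+)$ implicitly with respect to $b_+$, which gives
\begin{align*}
\Delta_1'(b_+) = -\frac{N_0(b_+)}{\Delta_1(b_+)\,\sqrt{\Delta_1(b_+)^2 + b_+^2}\; J(U,b_+)}, \qquad J(U,b_+) := \int_{b_+}^4 \frac{N_0(\epsilon)}{(\Delta_1^2 + \epsilon^2)^{3/2}}\, d\epsilon.
\end{align*}
The task then reduces to obtaining leading uniform asymptotics of the three factors $N_0(b_+)$, $\sqrt{\Delta_1^2+b_+^2}$, and $J(U,b_+)$ for $\hat{b}_+$ in compact subsets of $(0,16)$. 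The first two are immediate: from (\ref{N0expansionfirstfew}) together with $b_+ = \hat{b}_+ e^{-2\pi/\sqrt{U}}$ one obtains $N_0(b_+) = \frac{1}{\pi\sqrt{U}}(1+O(\sqrt{U}))$, and from Lemma \ref{Delta1lemma} combined with the algebraic identity $64(16-\hat{b}_+) + \hat{b}_+^2 = (32-\hat{b}_+)^2$ one obtains $\sqrt{\Delta_1^2 + b_+^2} = (32-\hat{b}_+ + O(\sqrt{U}))\,e^{-2\pi/\sqrt{U}}$.

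The main step is the uniform analysis of $J(U,b_+)$, for which I will rescale by $\epsilon = \Delta_1 t$ to write
\begin{align*}
J(U,b_+) = \frac{1}{\Delta_1^2}\int_{b_+/\Delta_1}^{4/\Delta_1} \frac{N_0(\Delta_1 t)}{(1+t^2)^{3/2}}\, dt.
\end{align*}
Lemma \ref{Delta1lemma} yields $b_+/\Delta_1 \to \alpha_0 := \hat{b}_+/(8\sqrt{16-\hat{b}_+})$ while $4/\Delta_1 \to \infty$. Invoking (\ref{N0expansionfirstfew}) once more, $N_0(\Delta_1 t) = \frac{1}{\pi\sqrt{U}}(1+O(\sqrt{U}))$ uniformly for $t$ in any fixed compact subset of $(0,\infty)$; combined with the elementary antiderivative $\int \frac{dt}{(1+t^2)^{3/2}} = \frac{t}{\sqrt{1+t^2}}$ and the computation
\begin{align*}
1 - \frac{\alpha_0}{\sqrt{1+\alpha_0^2}} = 1 - \frac{\hat{b}_+}{32-\hat{b}_+} = \frac{2(16-\hat{b}_+)}{32-\hat{b}_+},
\end{align*}
this gives $J(U,b_+) = \frac{2(16-\hat{b}_+)}{\pi\sqrt{U}\,\Delta_1^2\,(32-\hat{b}_+)}(1+O(\sqrt{U}))$. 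Substituting these three asymptotics together with $\Delta_1 = 8\sqrt{16-\hat{b}_+}\, e^{-2\pi/\sqrt{U}}(1+O(\sqrt{U}))$ into the formula above, the factors of $e^{-2\pi/\sqrt{U}}$ and $1/\sqrt{U}$ cancel and the combinatorial factor simplifies to $-4/\sqrt{16-\hat{b}_+}$, yielding the desired expansion.

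The main obstacle is ensuring the $O(\sqrt{U})$ error in the asymptotics of $J$ is genuinely uniform in $\hat{b}_+$ on a compact subset of $(0,16)$. The subtlety is that, from (\ref{N0expansionfirstfew}), $N_0(\Delta_1 t) = \frac{1}{2\pi^2}\bigl(\frac{2\pi}{\sqrt{U}} - \ln(\Delta_1 t/16)\bigr) + O(\cdot)$ hides a logarithmic factor $\ln t$ which is unbounded both at $t \to 0^+$ (where the domain of integration is cut off at $\sim \alpha_0$, a positive constant) and at $t \to \infty$ (where it is dominated by $(1+t^2)^{-3/2}$). I will handle this by splitting the integral into a bulk piece $t \in [\alpha_0/2, A]$ for a large fixed $A$ and a tail $t > A$, bounding the tail crudely using the integrability of $|\ln t|/(1+t^2)^{3/2}$ and treating the bulk with the uniform expansion $N_0(\Delta_1 t) = \frac{1}{\pi\sqrt{U}} + O(1)$; the $O(1)$ correction then contributes only a relative error of order $\sqrt{U}$, which is exactly what is required.
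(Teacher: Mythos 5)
Your proposal is correct and follows essentially the same route as the paper's proof: differentiate the defining identity \eqref{AFequation2Delta1} implicitly to write $\Delta_1'(b_+)$ as a ratio involving $N_0(b_+)$, $\sqrt{\Delta_1^2+b_+^2}$, and the integral $J(U,b_+)$, then feed in Lemma~\ref{Delta1lemma} and the expansion \eqref{N0expansionfirstfew}. The paper packages the hard estimate of $J$ into a freestanding Lemma~\ref{intN0oneoverthreehalveslemma} stated in terms of a general $\Delta\downarrow 0$ and the combination $\frac{\ln\Delta}{2\pi^2\Delta^2}\left(\frac{x}{\sqrt{1+x^2}}-1\right)$, whereas you substitute $\Delta=\Delta_1(b_+)$ and the leading behavior $-\frac{\ln\Delta_1}{2\pi^2}=\frac{1}{\pi\sqrt{U}}(1+O(\sqrt{U}))$ up front and work directly with the rescaled integral; the content is the same.

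One phrasing in your last paragraph deserves a caution, because a literal reading would not deliver the claimed uniform $O(\sqrt{U})$ error. You propose splitting $\int_{b_+/\Delta_1}^{4/\Delta_1}\frac{N_0(\Delta_1 t)}{(1+t^2)^{3/2}}\,dt$ into a bulk $t\in[\alpha_0/2,A]$ and a tail $t>A$, and then ``bounding the tail crudely.'' If you discard the entire tail integrand as an $O(1)$ error, you also discard the piece $\frac{1}{\pi\sqrt{U}}\int_A^{4/\Delta_1}\frac{dt}{(1+t^2)^{3/2}}$, which for any fixed $A$ is of order $\frac{1}{A^2\sqrt{U}}$ --- the same order as the main term --- so the relative error would only be $O(1/A^2)$, not $O(\sqrt{U})$. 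The correct implementation (which is what your middle paragraph implicitly does when you evaluate $1-\frac{\alpha_0}{\sqrt{1+\alpha_0^2}}$ from the full antiderivative) is to first subtract the leading constant: write $N_0(\Delta_1 t)=\frac{1}{\pi\sqrt{U}}+R(t)$, evaluate $\frac{1}{\pi\sqrt{U}}\int_{b_+/\Delta_1}^{4/\Delta_1}\frac{dt}{(1+t^2)^{3/2}}$ exactly on the full range (the upper limit contributes an exponentially small correction), and then bound $\int\frac{R(t)}{(1+t^2)^{3/2}}\,dt=O(1)$ using $|R(t)|\leq C(1+|\ln t|)$ uniformly (which holds both where $\Delta_1 t$ is small, via \eqref{N0expansionfirstfew}, and where $\Delta_1 t$ is of order one, because there $\ln t\gtrsim U^{-1/2}$ dominates $\frac{1}{\pi\sqrt{U}}$). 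With that bookkeeping made explicit, the argument closes cleanly and matches the paper's.
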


We are now ready to show that $\Xi = 0$ has exactly one solution in $(0, b_+^{\max})$.

\begin{lemma}\label{Xionesolutionlemma}
There is a $U_0 > 0$ such that, if $(U, \mu) \in \III_{U_0,\delta}$, then the equation $\Xi(b_+) = 0$ has exactly one solution in $(0, b_+^{\max})$.
\end{lemma}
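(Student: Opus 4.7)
The plan is to localize the zero of $\Xi$ using leading-order asymptotics, then apply the intermediate value theorem on a compact interior interval, after ruling out zeros in two boundary strips where the asymptotics of Lemma \ref{Delta1lemma} break down. Setting $\hat{b}_+ := b_+ e^{2\pi/\sqrt{U}}$ and combining Lemmas \ref{Delta1lemma} and \ref{Delta2lemma}, I expect
\begin{align*}
\Xi(b_+) = \bigl(g(\hat{b}_+) + O(\sqrt{U})\bigr)\, e^{-\frac{2\pi}{\sqrt{U}}}, \qquad g(\hat{b}_+) := 8\sqrt{16-\hat{b}_+} - \sqrt{\hat{\mu}^2 - \hat{b}_+^2},
\end{align*}
uniformly for $\hat{b}_+$ in compact subsets of $(0,16)$. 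A direct computation shows $g'(\hat{b}_+) < 0$ on $(0,16)$ iff the cubic $h(\hat{b}_+) := \hat{b}_+^3 - 32\hat{b}_+^2 + 16\hat{\mu}^2$ is positive on $[0,16]$; since $h'(\hat{b}_+) = \hat{b}_+(3\hat{b}_+ - 64) \leq 0$ on $[0,16]$ and $h(16) = 16(\hat{\mu}^2 - 256) \geq 16(32\delta + \delta^2) > 0$ for $\hat{\mu} \geq 16+\delta$, monotonicity of $g$ on $[0,16]$ follows, and $g$ has a unique zero at $\hat{b}_+^* = 32 - \hat{\mu} \in [\delta, 16-\delta]$.

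Next I would fix $\eta \in (0, \delta/4)$ and treat three regimes separately. In the interior regime $\hat{b}_+ \in [\eta, 16-\eta]$, apply Lemmas \ref{Delta1primelemma} and \ref{Delta2lemma} to get $\Xi'(b_+) = g'(\hat{b}_+) + O(\sqrt{U})$ uniformly (the $O(U\ln(16/\hat{b}_+))$ piece from Lemma \ref{Delta2lemma} is $O(U)$ on this compact interval); since $|g'|$ has a uniform positive lower bound there, $\Xi$ is strictly decreasing on this piece for $U$ small. In the left regime $\hat{b}_+ \in [0, \eta]$, using that $\Delta_1$ and $\Delta_2$ are both strictly decreasing,
\begin{align*}
\Xi(b_+) \geq \Delta_1(\eta\, e^{-\frac{2\pi}{\sqrt{U}}}) - \Delta_2(0) = \bigl(8\sqrt{16-\eta} - \hat{\mu} + O(\sqrt{U})\bigr)\, e^{-\frac{2\pi}{\sqrt{U}}} > 0
\end{align*}
for $U$ small enough, because $\hat{\mu} \leq 32-\delta$ and the choice $\eta < \delta/4$ ensures $8\sqrt{16-\eta} > 32 - \delta$. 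In the right regime $\hat{b}_+ \in [16-\eta, \hat{b}_+^{\max}]$, which is nonempty for $U$ small since $\hat{b}_+^{\max} = 16 + O(\sqrt{U})$ by Lemma \ref{bmaxlemma}, an analogous bound using Lemma \ref{Delta2lemma} at $b_+^{\max}$ gives
\begin{align*}
\Xi(b_+) \leq \Delta_1((16-\eta)\, e^{-\frac{2\pi}{\sqrt{U}}}) - \Delta_2(b_+^{\max}) = \bigl(8\sqrt{\eta} - \sqrt{\hat{\mu}^2 - 256} + O(\sqrt{U})\bigr)\, e^{-\frac{2\pi}{\sqrt{U}}} < 0
\end{align*}
for $U$ small, using $\hat{\mu}^2 - 256 \geq 32\delta + \delta^2$ together with $\eta < \delta/4 < \delta/2$.

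Combining the three regimes, $\Xi$ has exactly one zero on $(0, b_+^{\max})$: none in $[0, \eta e^{-2\pi/\sqrt{U}}]$ by the left bound, exactly one in the interior piece (where $\Xi$ is strictly decreasing and changes sign between the two endpoint estimates), and none in $[(16-\eta) e^{-2\pi/\sqrt{U}}, b_+^{\max}]$ by the right bound. The main technical point is the interior regime: it rests on Lemma \ref{Delta1primelemma}, which provides the derivative asymptotics of $\Delta_1$ only on compact subsets of $(0,16)$, and the margins $\hat{\mu} - 16 \geq \delta$ and $32 - \hat{\mu} \geq \delta$ built into Sector III are essential because monotonicity of $g$ degenerates as $\hat{\mu} \downarrow 16$ (and $\hat{b}_+^* \uparrow 16$), while the left-regime sign margin degenerates as $\hat{\mu} \uparrow 32$ (and $\hat{b}_+^* \downarrow 0$).
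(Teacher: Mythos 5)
Your proposal is correct and follows essentially the same three-regime decomposition as the paper's proof: a strip near $\hat{b}_+ = 0$ where $\Xi > 0$, a strip near $\hat{b}_+ = 16$ where $\Xi < 0$, and a compact middle region where $\Xi' < 0$, with the uniqueness then following from the intermediate value theorem and monotonicity. The only minor variation is your argument that $g' < 0$ on $(0,16)$: you reduce it to positivity of the cubic $\hat{b}_+^3 - 32\hat{b}_+^2 + 16\hat{\mu}^2$, while the paper instead bounds the second term of $g'$ by replacing $\hat{\mu}^2$ with $256$, yielding the cleaner $\hat{\mu}$-independent estimate $g'(\hat{b}_+) < -\frac{1}{\sqrt{16-\hat{b}_+}}\bigl(4 - \frac{\hat{b}_+}{\sqrt{16+\hat{b}_+}}\bigr)$ together with $\frac{\hat{b}_+}{\sqrt{16+\hat{b}_+}} \le 2\sqrt{2}$.
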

\begin{proof}
Let $\delta_1 > 0$ be such that $\delta_1 < \frac{\delta  (\delta +32)}{64}$. We will show the following three claims from which the lemma immediately follows: for all small enough $(U, \mu) \in \III_{U_0,\delta}$, it holds that
\begin{align}\label{Xipositive}
 & \Xi(b_+) > 0 \quad \text{for all $b_+ \in [0, \delta_1e^{-\frac{2\pi}{\sqrt{U}}}]$},
  	\\ \label{Xinegative}
 &  \Xi(b_+) < 0 \quad \text{for all $b_+ \in [(16-\delta_1)e^{-\frac{2\pi}{\sqrt{U}}}, b_+^{\max} ]$},
   	\\ \label{Xiprimenegative}
 &  \Xi'(b_+) < 0 \quad \text{for all $b_+ \in [\delta_1e^{-\frac{2\pi}{\sqrt{U}}}, (16-\delta_1)e^{-\frac{2\pi}{\sqrt{U}}}]$}.
 \end{align} 

By Lemma \ref{Delta2lemma}, $\Delta_2$ is a strictly decreasing function of $b_{+}\in [0, b_+^{\max}]$, and the same is true of $\Delta_1$ by Lemma \ref{Delta1primelemma}. It results that, if $b_+ \in [0, \delta_1e^{-\frac{2\pi}{\sqrt{U}}}]$, then
\begin{align*}
  \Xi(b_+) = \Delta_1(b_+) - \Delta_2(b_+) \geq \Delta_1(\delta_1 e^{-\frac{2\pi}{\sqrt{U}}}) - \Delta_2(0).
\end{align*}  
Using that $\Delta_2(0) = \mu$ and the asymptotic formula for $\Delta_1$ of Lemma \ref{Delta1lemma}, we deduce that there is a $C>0$ such that
\begin{align*}
  \Xi(b_+) \geq \Big(8 \sqrt{16 - \delta_1} - \hat{\mu} - C\sqrt{U}\Big) e^{-\frac{2\pi}{\sqrt{U}}} 
\end{align*}  
for all small enough $(U, \mu) \in \III_{U_0,\delta}$ and all $b_+ \in [0, \delta_1e^{-\frac{2\pi}{\sqrt{U}}}]$. This proves (\ref{Xipositive}).

Similarly, if $b_+ \in [(16-\delta_1)e^{-\frac{2\pi}{\sqrt{U}}}, b_+^{\max} ]$, then
\begin{align*}
  \Xi(b_+) = \Delta_1(b_+) - \Delta_2(b_+) \leq \Delta_1((16-\delta_1) e^{-\frac{2\pi}{\sqrt{U}}}) - \Delta_2(b_+^{\max}).
\end{align*}  
By Lemma \ref{Delta2lemma} and Lemma \ref{bmaxlemma}, we have
\begin{align*}
\Delta_2(b_+^{\max}) 
& = \bigg(\sqrt{\hat{\mu}^2 - \Big(16 - \frac{2\pi}{3}\sqrt{U} + O(U)\Big)^2} + O(\sqrt{U})\bigg)e^{-\frac{2\pi}{\sqrt{U}}}
	\\
& = \big(\sqrt{\hat{\mu}^2 - 256} + O(\sqrt{U})\big)e^{-\frac{2\pi}{\sqrt{U}}}.
\end{align*}
Consequently, using also the asymptotic formula for $\Delta_1$ of Lemma \ref{Delta1lemma}, we conclude that there is a $C>0$ such that
\begin{align*}
  \Xi(b_+) \leq \Big(8 \sqrt{\delta_1} - \sqrt{\hat{\mu}^2 - 256} + C\sqrt{U}\Big) e^{-\frac{2\pi}{\sqrt{U}}} 
\end{align*}  
for all small enough $(U, \mu) \in \III_{U_0,\delta}$ and all $b_+ \in [(16-\delta_1)e^{-\frac{2\pi}{\sqrt{U}}}, b_+^{\max} ]$. Since the assumptions $\hat{\mu} \geq 16 + \delta$ and $\delta_1 < \frac{\delta  (\delta +32)}{64}$ ensure that $8 \sqrt{\delta_1} - \sqrt{\hat{\mu}^2 - 256} < 0$, this proves (\ref{Xinegative}).

By Lemma \ref{Delta2lemma} and Lemma \ref{Delta1primelemma}, we have
\begin{align}
  \Xi'(b_+) = - \frac{4}{\sqrt{16 -\hat{b}_+}} + \frac{\hat{b}_+}{\sqrt{\hat{\mu}^2 - \hat{b}_+^2}} + O(\sqrt{U}) \qquad \text{as $U \downarrow 0$}
  \end{align}
uniformly for $b_+ \in [\delta_1e^{-\frac{2\pi}{\sqrt{U}}}, (16-\delta_1)e^{-\frac{2\pi}{\sqrt{U}}}]$ and $\hat{\mu} \in [16 + \delta, 32 - \delta]$. Since $\hat{\mu} > 16$, we have
$$- \frac{4}{\sqrt{16 -\hat{b}_+}} + \frac{\hat{b}_+}{\sqrt{\hat{\mu}^2 - \hat{b}_+^2}} 
< - \frac{1}{\sqrt{16 -\hat{b}_+}}\bigg(4 - \frac{\hat{b}_+}{\sqrt{16 + \hat{b}_+}}\bigg).$$
Since $\frac{\hat{b}_+}{\sqrt{16 + \hat{b}_+}} \leq 2\sqrt{2}$ for $b_+ \in [0,16]$, (\ref{Xiprimenegative}) follows.
\end{proof}

Lemma \ref{Xicorrespondencelemma} and Lemma \ref{Xionesolutionlemma} imply that the AF mean-field equations (\ref{AFmeanfieldeqs}) have a unique solution $(d_{0,2}^{\AF}, m_{1,2}^{\AF})$ in $(0,1) \times (0, +\infty)$, and that this solution satisfies (\ref{AF1})--(\ref{bplusmurelation}) with $\Delta := \frac{U}{2} m_{1,2}^{\AF}$ and with $b_+$ given by $b_+^{\AF}$, where $b_+^{\AF}$ is the unique solution of the equation $\Xi(b_+) = 0$.
Furthermore, since $A_- = \emptyset$ and $A_+ = (b_+^{\AF}, 4)$, (\ref{AFHartreeFockFunction}) gives
\begin{align*}
\mathcal{G}_{\AF}(d_{0,2}^{\AF}, m_{1,2}^{\AF}, U, \mu) = &\; \frac{U}{4}  \big((m_{1,2}^{\AF})^2 - (d_{0,2}^{\AF})^2\big)
+ \frac{U}{2} d_{0,2}^{\AF} - \mu 
	\\
& - 2 \int_{b_+^{\AF}}^4 N_0(\epsilon) \bigg(\frac{U}{2} d_{0,2}^{\AF} - \mu + \sqrt{\frac{U^2}{4}(m_{1,2}^{\AF})^2 + \epsilon^2}\bigg) d\epsilon. 
\end{align*}
Using (\ref{AF1}), this can be rewritten in terms of $\Delta$ as
\begin{align}\label{GAFIII}
\mathcal{G}_{\AF}(d_{0,2}^{\AF}, m_{1,2}^{\AF}, U, \mu) = &\; \frac{\Delta^2}{U} 
+ \frac{U}{4}  (d_{0,2}^{\AF})^2
- \mu d_{0,2}^{\AF}
  - 2\int_{b_+^{\AF}}^4 N_0(\epsilon) \sqrt{\Delta^2 + \epsilon^2} d\epsilon.
\end{align}
The integral in (\ref{GAFIII}) can be estimated with the help of the following lemma.

\begin{lemma}\label{intN0sqrtlemma}
As $\Delta \downarrow 0$, it holds that
\begin{align}\nonumber
\int_{x \Delta}^4 N_0(\epsilon) \sqrt{\Delta^2 + \epsilon^2} d\epsilon
= &\; \frac{8}{\pi^2}
+ \frac{\Delta^2 (\ln \Delta)^2}{8 \pi^2}
	\\\nonumber
& + \frac{2x\sqrt{1 + x^2} + 2 \arcsinh(x)  - 1 - 10 \ln{2}}{8 \pi^2} \Delta^2 \ln \Delta
 	\\ \label{intN0sqrt}
& + O\big(\Delta^2(1 + x^2 \ln(1 + x)\big)
\end{align}
uniformly for $x$ in the sector $0 \leq x \leq 4/\Delta$.
\end{lemma}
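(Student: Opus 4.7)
The plan is to prove the expansion by a differentiate-and-integrate strategy that reduces the claim to Lemma \ref{intN0oneoversqrtlemma}. Treating $x \geq 0$ as a fixed parameter subject to $x\Delta \leq 4$, introduce
\[ F(\Delta) := \int_{x\Delta}^4 N_0(\epsilon)\sqrt{\Delta^2+\epsilon^2}\, d\epsilon. \]
At $\Delta = 0$, equation (\ref{N0epsilon8pi2}) gives $F(0) = \int_0^4 N_0(\epsilon)\,\epsilon\, d\epsilon = 8/\pi^2$, which accounts for the leading constant in (\ref{intN0sqrt}). For $\Delta > 0$, Leibniz's rule yields
\[ F'(\Delta) = -x\Delta\sqrt{1+x^2}\,N_0(x\Delta) + \Delta \int_{x\Delta}^4 \frac{N_0(\epsilon)}{\sqrt{\Delta^2+\epsilon^2}}\, d\epsilon. \]

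The second term is precisely $\Delta$ times the integral estimated in Lemma \ref{intN0oneoversqrtlemma}, contributing
\[ \frac{\Delta(\ln\Delta)^2}{4\pi^2} + \frac{[\arcsinh(x) - \ln 32]\,\Delta\ln\Delta}{2\pi^2} + O\!\left(\Delta\bigl(1 + (\ln(1+x))^2\bigr)\right) \]
as $\Delta \downarrow 0$. For the first term, I would invoke the expansion $N_0(y) = \frac{\ln(16/y)}{2\pi^2} + O\bigl(y^2 \ln(1/y)\bigr)$ from Appendix \ref{N0app} when $x\Delta$ is small, which extracts the contribution $\frac{x\sqrt{1+x^2}}{2\pi^2}\Delta\ln\Delta$ to the main asymptotics, and use the direct boundedness $N_0(x\Delta) = O(1)$ when $x\Delta \in (1, 4]$. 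The resulting asymptotic expansion of $F'(\Delta)$ takes the form
\[ F'(\Delta) = \frac{\Delta(\ln\Delta)^2}{4\pi^2} + \frac{[x\sqrt{1+x^2} + \arcsinh(x) - \ln 32]\,\Delta\ln\Delta}{2\pi^2} + R(\Delta, x), \]
with $|R(\Delta, x)| \leq C\Delta\bigl(1 + x^2 \ln(1+x)\bigr)$ uniformly over $x \in [0, 4/\Delta]$.

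Integrating from $0$ to $\Delta$ and applying the elementary primitives
\[ \int_0^\Delta \Delta'(\ln\Delta')^2\, d\Delta' = \tfrac{1}{2}\Delta^2(\ln\Delta)^2 - \tfrac{1}{2}\Delta^2\ln\Delta + \tfrac{1}{4}\Delta^2, \qquad \int_0^\Delta \Delta'\ln\Delta'\, d\Delta' = \tfrac{1}{2}\Delta^2\ln\Delta - \tfrac{1}{4}\Delta^2, \]
combining the coefficients, and using $2\ln 32 = 10 \ln 2$, one recovers exactly the stated expansion; the $\Delta^2$ contributions that arise from these primitives are absorbed into the claimed error term. The remainders combine via the elementary inequality $(\ln(1+x))^2 = O(1 + x^2 \ln(1+x))$ for $x \geq 0$.

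The main obstacle is establishing the uniform bound $|R(\Delta, x)| \leq C\Delta\bigl(1 + x^2 \ln(1+x)\bigr)$ across the full range $x \in [0, 4/\Delta]$. This requires a case split at $x\Delta = 1$: for $x\Delta \leq 1$ the expansion of $N_0$ at the origin controls the first term, with subleading contribution $\frac{x\sqrt{1+x^2}(\ln x - \ln 16)\Delta}{2\pi^2}$ that is $O(\Delta(1 + x^2\ln(1+x)))$ (exploiting $x|\ln x| \to 0$ as $x \downarrow 0$ and $x\sqrt{1+x^2}|\ln x| = O(x^2\ln(1+x))$ for $x \geq 1$); for $x\Delta \in (1, 4]$ the bound $N_0(x\Delta) = O(1)$ instead yields $-x\Delta\sqrt{1+x^2}N_0(x\Delta) = O(x^2\Delta)$, which also fits the claimed estimate.
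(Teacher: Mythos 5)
Your proposal is correct and uses essentially the same differentiate-then-integrate strategy as the paper: differentiate $F(\Delta)$ with respect to $\Delta$, substitute the expansion from Lemma \ref{intN0oneoversqrtlemma} for the resulting integral, use (\ref{N0expansionfirstfew}) for the boundary term $-x\Delta\sqrt{1+x^2}\,N_0(x\Delta)$, integrate back in $\Delta$, and pin down the constant of integration via $F(0)=\int_0^4 N_0(\epsilon)\epsilon\,d\epsilon=8/\pi^2$ using (\ref{N0epsilon8pi2}). You are slightly more explicit than the paper about the case split at $x\Delta=1$ and about why the various remainders are absorbed into $O\big(\Delta^2(1+x^2\ln(1+x))\big)$, but the underlying argument and the key inputs are the same.
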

\begin{proof}
We will derive the expansion in (\ref{intN0sqrt}) by integrating the expansion of Lemma \ref{intN0oneoversqrtlemma}. For $\Delta > 0$, we have
$$\frac{d}{d\Delta}\int_{x \Delta}^4  N_0(\epsilon) \sqrt{\Delta^2 + \epsilon^2} d\epsilon
= \int_{x \Delta}^4  N_0(\epsilon) \frac{\Delta}{\sqrt{\Delta^2 + \epsilon^2}} d\epsilon
- x N_0(x \Delta) \Delta \sqrt{1 + x^2}.$$
Substituting in (\ref{intN0oneoversqrt}), using the expansion (\ref{N0expansionfirstfew}), 
and integrating with respect to $\Delta$, we obtain, as $\Delta \downarrow 0$,
\begin{align*}
\int_{x \Delta}^4  N_0(\epsilon) \sqrt{\Delta^2 + \epsilon^2} d\epsilon
= &\; c_0
+ \frac{\Delta^2 (\ln \Delta)^2}{8 \pi^2}
+ \frac{2x\sqrt{1 + x^2} + 2 \arcsinh(x)  - 1 - 10 \ln{2}}{8 \pi^2} \Delta^2 \ln \Delta
 	\\
& + O\big(\Delta^2(1 + x^2 \ln(1 + x) \big)
\end{align*}
uniformly for $x$ in the sector $0 \leq x \leq 4/\Delta$, where $c_0$ is a constant. We determine $c_0$ by taking the limit $\Delta \downarrow 0$, which yields, using (\ref{N0epsilon8pi2}),
$$c_0 = \int_0^4  N_0(\epsilon) \epsilon d\epsilon = \frac{8}{\pi^2},$$
and then (\ref{intN0sqrt}) follows.
\end{proof}

We are now ready to compute the expansion of $\mathcal{G}_{\AF}(d_{0,2}^{\AF}, m_{1,2}^{\AF}, U, \mu)$.

\begin{lemma}\label{AFIIIlocalmaxlemma}
As $(U, \mu) \in \III_{U_0, \delta}$ tends to $(0,0)$,
\begin{align}\label{GAFsmallU2}
\mathcal{G}_{\AF}(d_{0,2}^{\AF}, m_{1,2}^{\AF}, U, \mu)
=&-\frac{16}{\pi^2} 
- \frac{512 e^{-\frac{4\pi}{\sqrt{U}}}}{\pi \sqrt{U}} 
\bigg(1 - \frac{(32 - \hat{\mu})^2}{512}\bigg)
+ O\big(e^{-\frac{4\pi}{\sqrt{U}}}\big),
\end{align}
where the error term is uniform with respect to $\mu$.
\end{lemma}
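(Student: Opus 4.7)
The plan is to substitute the leading asymptotics derived in Lemmas \ref{Delta1lemma}, \ref{Delta2lemma}, \ref{Delta1primelemma}, \ref{intbplus4N0lemma}, and \ref{intN0sqrtlemma} into the formula \eqref{GAFIII}, and then to carefully collect the surviving terms after a key cancellation.

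\textbf{Step 1: Locate $b_+^{\AF}$.} By Lemma \ref{Xionesolutionlemma}, there is a unique $b_+^{\AF} \in (0, b_+^{\max})$ with $\Xi(b_+^{\AF}) = 0$, and by the argument in the proof of Lemma \ref{Xionesolutionlemma} one in fact has $\hat b_+^{\AF} := b_+^{\AF} e^{2\pi/\sqrt U}$ bounded away from $0$ and $16$. Equating the leading terms of Lemmas \ref{Delta1lemma} and \ref{Delta2lemma} gives
\begin{align*}
8\sqrt{16-\hat b_+^{\AF}} \;=\; \sqrt{\hat\mu^2 - (\hat b_+^{\AF})^2} + O(\sqrt U),
\end{align*}
whose admissible root is $\hat b_+^{\AF} = 32 - \hat\mu + O(\sqrt U)$. (This solution lies in $(\delta, 16-\delta)$ thanks to $\hat\mu\in[16+\delta,32-\delta]$, and the implicit function theorem applied with the nonvanishing derivative $\Xi'$ from the proof of Lemma \ref{Xionesolutionlemma} justifies the remainder.)

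\textbf{Step 2: Asymptotics of the auxiliary quantities.} From Step 1 and Lemma \ref{Delta1lemma},
$$\Delta \;=\; \Delta_1(b_+^{\AF}) \;=\; 8\sqrt{\hat\mu-16}\; e^{-\tfrac{2\pi}{\sqrt U}} + O\big(\sqrt U\, e^{-\tfrac{2\pi}{\sqrt U}}\big),$$
so $\ln\Delta = -\tfrac{2\pi}{\sqrt U} + L + O(\sqrt U)$ with $L := \ln 8 + \tfrac12\ln(\hat\mu-16)$. Also, using \eqref{AF1} and Lemma \ref{intbplus4N0lemma},
$$d_{0,2}^{\AF} \;=\; \frac{b_+^{\AF}\!\left(\ln\!\tfrac{16}{b_+^{\AF}}+1\right)}{\pi^2} + O(b_+^3 |\ln b_+|) \;=\; \frac{2(32-\hat\mu)}{\pi\sqrt U}\, e^{-\tfrac{2\pi}{\sqrt U}} + O\big(e^{-\tfrac{2\pi}{\sqrt U}}\big).$$

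\textbf{Step 3: Expand the integral.} Since $x := b_+^{\AF}/\Delta = \tfrac{32-\hat\mu}{8\sqrt{\hat\mu-16}} + O(\sqrt U)$ is bounded away from $0$ and $\infty$, Lemma \ref{intN0sqrtlemma} applies and yields
$$\int_{b_+^{\AF}}^{4} N_0(\epsilon)\sqrt{\Delta^2+\epsilon^2}\,d\epsilon \;=\; \frac{8}{\pi^2} + \frac{\Delta^2(\ln\Delta)^2}{8\pi^2} + C(x)\,\Delta^2\ln\Delta + O(\Delta^2),$$
where $C(x) = \tfrac{2x\sqrt{1+x^2}+2\arcsinh(x)-1-10\ln 2}{8\pi^2}$. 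The leading $-\tfrac{16}{\pi^2}$ in \eqref{GAFsmallU2} comes directly from $-2\cdot\tfrac{8}{\pi^2}$.

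\textbf{Step 4: Cancellation and collection.} Inserting these expansions into \eqref{GAFIII}, the two terms of order $\tfrac{1}{U}e^{-4\pi/\sqrt U}$ arising from $\tfrac{\Delta^2}{U}$ and from $-\tfrac{1}{4\pi^2}\Delta^2(\ln\Delta)^2$ cancel exactly, because the leading term of $(\ln\Delta)^2$ is $\tfrac{4\pi^2}{U}$ and $\tfrac{4\pi^2}{U}\cdot\tfrac{1}{4\pi^2}=\tfrac{1}{U}$ matches. This cancellation is the heart of the argument and explains why only $e^{-4\pi/\sqrt U}/\sqrt U$ survives. The remaining $O\big(e^{-4\pi/\sqrt U}/\sqrt U\big)$ contributions come from: the cross term $-\tfrac{1}{4\pi^2}\cdot 2\Delta^2\ln\Delta\cdot L\cdot\tfrac{2\pi}{\sqrt U}$ in $-\tfrac{1}{4\pi^2}\Delta^2(\ln\Delta)^2$; the piece $-2C(x)\Delta^2\ln\Delta$, whose dominant part is $+\tfrac{4\pi}{\sqrt U}C(x)\Delta^2$; the contribution $-\mu d_{0,2}^{\AF} = -\tfrac{2\hat\mu(32-\hat\mu)}{\pi\sqrt U} e^{-4\pi/\sqrt U} + O(e^{-4\pi/\sqrt U})$; and the $\sqrt U$-correction to $\Delta^2$ in $\tfrac{\Delta^2}{U}$. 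Finally, the quadratic term $\tfrac{U}{4}(d_{0,2}^{\AF})^2$ is only $O(e^{-4\pi/\sqrt U})$ and enters the error. Collecting these pieces and simplifying the algebra (the identity $\hat\mu(32-\hat\mu) + 32(\hat\mu-16)L$-type combinations reduce) produces the claimed coefficient $-\tfrac{512}{\pi\sqrt U}\big(1-(32-\hat\mu)^2/512\big)$.

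\textbf{Main obstacle.} The delicate point is the exact cancellation of the $\tfrac{1}{U}e^{-4\pi/\sqrt U}$ contributions and, once this cancellation is used, extracting the precise coefficient of $\tfrac{1}{\sqrt U}e^{-4\pi/\sqrt U}$. For the latter one needs to track the $\sqrt U$-correction to $\Delta$; the most efficient way is to exploit the defining identity $\Delta_1(b_+^{\AF}) = \Delta_2(b_+^{\AF})$ together with Lemma \ref{Delta2lemma}, which gives the $\sqrt U$-correction to $\Delta$ and to $\hat b_+^{\AF}$ simultaneously, rather than extending Lemmas \ref{Delta1lemma} and \ref{Delta1primelemma} to higher order directly. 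All other bounds then reduce to routine substitution into Lemmas \ref{intbplus4N0lemma} and \ref{intN0sqrtlemma}, and the error $O(e^{-4\pi/\sqrt U})$ absorbs everything not explicitly displayed.
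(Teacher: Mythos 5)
Your overall plan is correct and is essentially the one the paper follows: write $\mathcal{G}_{\AF}(d_{0,2}^{\AF},m_{1,2}^{\AF},U,\mu)$ via \eqref{GAFIII}, use $\Delta_1(b_+^{\AF})=\Delta_2(b_+^{\AF})$ with Lemmas \ref{Delta1lemma} and \ref{Delta2lemma} to locate $\hat b_+^{\AF}=32-\hat\mu+O(\sqrt U)$ and $\hat\Delta=8\sqrt{16-\hat b_+^{\AF}}+O(\sqrt U)$, use Lemma \ref{intbplus4N0lemma} for $d_{0,2}^{\AF}$ and Lemma \ref{intN0sqrtlemma} for the integral, observe the cancellation of the $e^{-4\pi/\sqrt U}/U$ terms, and collect the $e^{-4\pi/\sqrt U}/\sqrt U$ contributions. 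Steps~1--3 are fine.

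The one point I would push back on is the claim in Step~4 and in your ``Main obstacle'' paragraph that one must track the $\sqrt U$-correction to $\Delta$ (equivalently to $\hat\Delta$, $\hat b_+^{\AF}$) in order to extract the coefficient of $e^{-4\pi/\sqrt U}/\sqrt U$. This is a misconception, and following it would lead to pointless extra work (or, worse, an error if not done with care), since Lemmas~\ref{Delta1lemma} and \ref{Delta1primelemma} as stated only give the leading order and would genuinely have to be extended. The key observation is that the cancellation
\[
\frac{\Delta^2}{U}-\frac{\Delta^2(\ln\Delta)^2}{4\pi^2}=-\frac{\Delta^2}{4\pi^2}\Big[(\ln\Delta)^2-\frac{4\pi^2}{U}\Big]
=\frac{\Delta^2\ln\hat\Delta}{\pi\sqrt U}-\frac{\Delta^2(\ln\hat\Delta)^2}{4\pi^2}
\]
is an \emph{exact algebraic identity} in $\hat\Delta$, obtained simply by writing $\ln\Delta=\ln\hat\Delta-\tfrac{2\pi}{\sqrt U}$. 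Consequently, not only does the $1/U$ term disappear, but any $O(\sqrt U)$ error in $\hat\Delta^2$ contributes only $O(\sqrt U)\cdot e^{-4\pi/\sqrt U}/\sqrt U=O(e^{-4\pi/\sqrt U})$ to the surviving $1/\sqrt U$ term $\tfrac{\hat\Delta^2\ln\hat\Delta}{\pi\sqrt U}e^{-4\pi/\sqrt U}$, which is exactly the size of the stated error. The same remark applies to the other contributions $-\hat\mu\hat d_{0,2}^{\AF}$ and $\tfrac{(2x\sqrt{1+x^2}+2\arcsinh x-1-10\ln 2)}{2\pi\sqrt U}\hat\Delta^2$: they are already of size $1/\sqrt U$, so $O(\sqrt U)$ errors in $\hat b_+^{\AF}$, $\hat\Delta$, $x$, $\hat d_{0,2}^{\AF}$ are absorbed in $O(e^{-4\pi/\sqrt U})$. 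Thus the leading-order asymptotics from Lemmas \ref{Delta1lemma}, \ref{Delta2lemma}, and \ref{intbplus4N0lemma} are sufficient, and neither a higher-order version of those lemmas nor a separate computation of the $\sqrt U$-correction to $\Delta$ via the identity $\Delta_1=\Delta_2$ is required. The paper's proof uses only these leading-order formulas, and indeed the $\ln s$ and $\ln 2$ pieces cancel internally so that only the elementary combination $-(256+32s-s^2)/(\pi\sqrt U)$ with $s=\hat\mu-16$ remains, which is the claimed $-\tfrac{512}{\pi\sqrt U}\big(1-\tfrac{(32-\hat\mu)^2}{512}\big)$.
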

\begin{proof}
Utilizing Lemma \ref{intN0sqrtlemma} in (\ref{GAFIII}), we find
\begin{align}\nonumber
\mathcal{G}_{\AF}(d_{0,2}^{\AF}, m_{1,2}^{\AF}, U, \mu)
=&\; \frac{\Delta^2}{U} + \frac{U (d_{0,2}^{\AF})^2}{4} - \mu d_{0,2}^{\AF}
 - \frac{16}{\pi^2} 
- \frac{\Delta^2 (\ln \Delta)^2}{4 \pi^2}
 	\\   \nonumber
&- \frac{2x\sqrt{1 + x^2} + 2 \arcsinh(x)  - 1 - 10 \ln{2}}{4 \pi^2} \Delta^2 \ln \Delta
 	\\ \label{GAF2a}
& + O\big(\Delta^2(1 + x^2 \ln(1 + x)\big),
\end{align}
where $x := b_+^{\AF}/\Delta$.
From Lemma \ref{Delta2lemma} and Lemma \ref{Delta1lemma}, together with \eqref{Xidef} and Lemma \ref{Xionesolutionlemma}, we infer that $b_+^{\AF} = \hat{b}_+^{\AF} e^{-\frac{2\pi}{\sqrt{U}}}$ obeys the relation
$$0 = \Delta_1(b_+^{\AF})^2 - \Delta_2(b_+^{\AF})^2
= \Big(64 \big(16 - \hat{b}_+^{\AF}\big) - \big(\hat{\mu}^2 - (\hat{b}_+^{\AF})^2\big)  + O(\sqrt{U})\Big) e^{-\frac{4\pi}{\sqrt{U}}},$$
which implies that
\begin{align}\label{bplusAFexpansion}
\hat{b}_+^{\AF} = 32 - \hat{\mu} + O(\sqrt{U})
\end{align}
uniformly as $(U, \mu) \in \III_{U_0, \delta}$ tends to $(0,0)$.
On the other hand, by Lemma \ref{Delta2lemma} and Lemma \ref{Delta1lemma}, $\Delta = \hat{\Delta} e^{-\frac{2\pi}{\sqrt{U}}}$ where 
\begin{align}\label{DeltaAFexpansion}
\hat{\Delta} = \sqrt{\hat{\mu}^2 - (\hat{b}_+^{\AF})^2} + O(\sqrt{U})
= 8 \sqrt{16 - \hat{b}_+^{\AF}} + O(\sqrt{U})
\end{align}
uniformly as $(U, \mu) \in \III_{U_0, \delta}$ tends to $(0,0)$.
It follows from (\ref{bplusAFexpansion}) and (\ref{DeltaAFexpansion}) that 
\begin{align}\label{xAFexpansion}
x = \frac{\hat{b}_+^{\AF}}{\hat{\Delta}} = \frac{32 - \hat{\mu}}{8 \sqrt{16 - \hat{b}_+^{\AF}}} + O(\sqrt{U}).
\end{align}
In particular, the error term in (\ref{GAF2a}) can be replaced by $O\big(e^{-\frac{4\pi}{\sqrt{U}}}\big)$.
Moreover, by Lemma \ref{intbplus4N0lemma}, \eqref{AF1}, and (\ref{bplusAFexpansion}),
\begin{align}\label{d02AFexpansion}
d_{0,2}^{\AF} = 1- 2\int_{b_+^{\AF}}^{4}  N_0(\epsilon) d\epsilon
%= 2 \frac{b_+^{\AF} \left(\ln(\frac{16}{b_+^{\AF}})+1\right)}{2 \pi^2} + O((b_+^{\AF})^3 \ln(\frac{16}{\hat{b}_+^{\AF}}))
= \bigg(\frac{2\hat{b}_+^{\AF}}{\pi \sqrt{U}} + O(1)\bigg) e^{-\frac{2\pi}{\sqrt{U}}}
\end{align}
uniformly as $(U, \mu) \in \III_{U_0, \delta}$ tends to $(0,0)$.

Observing that the terms of order $e^{-\frac{4\pi}{\sqrt{U}}}/U$ cancel, we can write (\ref{GAF2a}) as 
\begin{align*}
\mathcal{G}_{\AF}(d_{0,2}^{\AF}&, m_{1,2}^{\AF}, U, \mu)
=  - \frac{16}{\pi^2} 
+ e^{-\frac{4\pi}{\sqrt{U}}}\bigg(\frac{U (\hat{d}_{0,2}^{\AF})^2}{4} - \hat{\mu} \hat{d}_{0,2}^{\AF}
- \frac{\hat{\Delta}^2 (\ln \hat{\Delta} )^2}{4 \pi^2}
+ \frac{\hat{\Delta}^2 \ln(\hat{\Delta}) \frac{2\pi}{\sqrt{U}}}{2 \pi^2}
 	\\  
&- \frac{2x\sqrt{1 + x^2} + 2 \arcsinh(x)  - 1 - 10 \ln{2}}{4 \pi^2} \hat{\Delta}^2 \bigg(\ln(\hat{\Delta}) - \frac{2\pi}{\sqrt{U}}\bigg)\bigg)
 + O\big(e^{-\frac{4\pi}{\sqrt{U}}}\big),
\end{align*}
where $d_{0,2}^{\AF} = \hat{d}_{0,2}^{\AF} e^{-\frac{2\pi}{\sqrt{U}}}$.
Substitution of (\ref{bplusAFexpansion}), (\ref{DeltaAFexpansion}), (\ref{xAFexpansion}), and (\ref{d02AFexpansion}) into the above formula gives (\ref{GAFsmallU2}).
\end{proof}

\begin{lemma}[Asymptotics of $\mathcal{F}_{\AF}$ in Sector III]\label{FAFSectorIIIlemma}
If $U_0 > 0$ is small enough, then the AF free energy $\mathcal{F}_{\AF}(U, \mu) = \mathcal{F}_{\AF}(U)$ is independent of $\mu$ for $(U, \mu) \in \III_{U_0, \delta}$, and $\mathcal{F}_{\AF}(U)$ depends smoothly on $U \in (0,U_0]$.
Furthermore, $\mathcal{F}_{\AF}(U)$ enjoys the following asymptotics as $(U, \mu) \in \III_{U_0, \delta}$ tends to $(0,0)$:
 \begin{align}\label{FAFexpansionSectorIII}
\mathcal{F}_{\AF}(U)
=&-\frac{16}{\pi^2}
-\frac{512 e^{-\frac{4 \pi }{\sqrt{U}}}}{\pi \sqrt{U}}  -\frac{128 e^{-\frac{4 \pi }{\sqrt{U}}}}{\pi^2} 
+ O\bigg(\frac{e^{-\frac{8\pi}{\sqrt{U}}}}{U}\bigg),
\end{align}
where the error term is uniform with respect to $\mu$.
\end{lemma}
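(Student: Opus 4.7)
The plan is to combine the two candidate free energies identified earlier in the section: the symmetric solution $(d_{0,1}^{\AF}, m_{1,1}^{\AF}) = (0, m_{1,1}^{\AF}(U))$ furnished by Lemma \ref{AFsolutionIIIlemma}, and the ``doped'' solution $(d_{0,2}^{\AF}, m_{1,2}^{\AF}) \in (0,1) \times (0,+\infty)$ guaranteed by Lemma \ref{Xicorrespondencelemma} together with the uniqueness result of Lemma \ref{Xionesolutionlemma}. These are the only two AF mean-field solutions with $m_1^{\AF} > 0$ in $\III_{U_0,\delta}$ when $U_0$ is small enough, so by the definition (\ref{freeenergiesAF}) of $\mathcal{F}_{\AF}$, one has
\begin{align*}
\mathcal{F}_{\AF}(U,\mu) = \min\bigl\{\mathcal{G}_{\AF}(0, m_{1,1}^{\AF}, U, \mu),\; \mathcal{G}_{\AF}(d_{0,2}^{\AF}, m_{1,2}^{\AF}, U, \mu)\bigr\}.
\end{align*}
Thus the task reduces to showing that the symmetric candidate wins uniformly in the sector.

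The key step is to compare the two expansions already at hand. Subtracting (\ref{GAFsmallU}) from (\ref{GAFsmallU2}) gives
\begin{align*}
\mathcal{G}_{\AF}(d_{0,2}^{\AF}, m_{1,2}^{\AF}, U, \mu) - \mathcal{G}_{\AF}(0, m_{1,1}^{\AF}, U, \mu)
= \frac{(32 - \hat{\mu})^2}{\pi \sqrt{U}}\, e^{-\frac{4\pi}{\sqrt{U}}} + O\bigl(e^{-\frac{4\pi}{\sqrt{U}}}\bigr)
\end{align*}
as $(U,\mu)\in \III_{U_0,\delta}$ tends to $(0,0)$. Since $\hat\mu = \mu e^{2\pi/\sqrt{U}} \leq 32 - \delta$ by the definition (\ref{sectorIIIdef}) of Sector III, the leading term is bounded below by $\delta^2 e^{-4\pi/\sqrt{U}}/(\pi\sqrt{U})$, which strictly dominates the $O(e^{-4\pi/\sqrt{U}})$ error as $U\downarrow 0$. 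Shrinking $U_0$ if necessary, the above difference is therefore strictly positive throughout $\III_{U_0,\delta}$, so the symmetric solution realizes the minimum and
\begin{align*}
\mathcal{F}_{\AF}(U,\mu) = \mathcal{G}_{\AF}(0, m_{1,1}^{\AF}(U), U, \mu).
\end{align*}

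The $\mu$-independence and the asymptotic expansion (\ref{FAFexpansionSectorIII}) then follow directly from Lemma \ref{AFsolutionIIIlemma}, which already established that $m_{1,1}^{\AF}$ depends only on $U$ and that $\mathcal{G}_{\AF}(0,m_{1,1}^{\AF}(U),U,\mu)$ has precisely the expansion written in (\ref{FAFexpansionSectorIII}). Smoothness of $U \mapsto \mathcal{F}_{\AF}(U)$ on $(0,U_0]$ follows from smoothness of $U \mapsto m_{1,1}^{\AF}(U)$, which in turn is obtained by applying the implicit function theorem to the defining equation (\ref{Delta04def}) for $\Delta(U) = \tfrac{U}{2}m_{1,1}^{\AF}(U)$; the right-hand side of (\ref{Delta04def}) is a strictly decreasing smooth function of $\Delta > 0$, so the derivative with respect to $\Delta$ is nonzero.

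I expect no serious obstacle: the entire argument is a bookkeeping comparison of two known expansions, and the required strict domination of one error by another is guaranteed by the fixed positive gap $\delta$ built into the definition of the sector. The one point that warrants a moment of care is ensuring that the implicit upper bound $\hat\mu \leq 32-\delta$ is used (rather than $\hat\mu < 32$) to produce a sector-uniform $\delta^2$, since otherwise the leading gap could collapse at the right endpoint of the sector and the comparison would fail.
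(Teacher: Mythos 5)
Your proposal is correct and follows the same route as the paper's proof: compare the expansions (\ref{GAFsmallU}) and (\ref{GAFsmallU2}), use the fact that there are exactly two AF mean-field solutions in the sector, conclude the symmetric solution realizes the minimum, and invoke Lemma \ref{AFsolutionIIIlemma} and smoothness of $\Delta(U)$ for the final assertions. The paper dispatches the comparison step with a single sentence (``Comparing (\ref{GAFsmallU}) and (\ref{GAFsmallU2}), we see that\dots''), whereas you usefully make it explicit, including the observation that the uniform gap $\hat\mu \leq 32-\delta$ built into the sector definition is exactly what guarantees the $(32-\hat\mu)^2/(\pi\sqrt{U})$ term dominates the $O(e^{-4\pi/\sqrt{U}})$ error term uniformly.
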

\begin{proof}
Comparing (\ref{GAFsmallU}) and (\ref{GAFsmallU2}), we see that, decreasing $U_0  > 0$ if necessary, the solution $(d_{0,1}^{\AF}, m_{1,1}^{\AF})$ has lower free energy than the solution $(d_{0,2}^{\AF}, m_{1,2}^{\AF})$ for all $(U, \mu) \in \III_{U_0, \delta}$, i.e.,
$$\mathcal{G}_{\AF}(d_{0,1}^{\AF}, m_{1,1}^{\AF}, U, \mu) < \mathcal{G}_{\AF}(d_{0,2}^{\AF}, m_{1,2}^{\AF}, U, \mu).$$
Since we have shown that there are no other AF mean-field solutions in Sector III, (\ref{freeenergiesAF}) gives $\mathcal{F}_{\AF}(U, \mu) = \mathcal{G}_{\AF}(d_{0,1}^{\AF}, m_{1,1}^{\AF}, U, \mu)$, and the desired assertion follows from (\ref{GAFsmallU}). 
Since $d_{0,1}^{\AF} = 0$ and since $m_{1,1}^{\AF}(U)$ depends smoothly on $U \in (0,U_0]$ as a consequence of (\ref{Delta04def}), smoothness of $\mathcal{F}_{\AF}(U)$ follows from \eqref{lol3}.
\end{proof}

\subsection{Final steps}
By Proposition \ref{FsolutionIIIprop}, the F mean-field equations (\ref{Fmeanfieldeqs}) have no solution in $\R \times (0, +\infty)$ in Sector III, so we only have to consider the P and AF free energies. According to Lemma \ref{FAFSectorIIIlemma}, $\mathcal{F}_{\AF}(U)$ is independent of $\mu$ in Sector III. Moreover, $\mathcal{F}_{\mathrm{P}}(U, \mu)$ is a strictly decreasing function of $\mu > 0$ by (\ref{dFPdmuequalsd0P}) (because, shrinking $U_{0}>0$ if necessary, $d_{0}^{\mathrm{P}}>0$ everywhere in Sector III by Lemma \ref{FPsectorIIIlemma}). 
On the other hand, the asymptotic formulas (\ref{FPexpansionSectorIII}) and (\ref{FAFexpansionSectorIII}) for $\mathcal{F}_{\mathrm{P}}(U, \mu)$ and $\mathcal{F}_{\AF}(U)$ imply that
$$\mathcal{F}_{\mathrm{P}}(U, \mu) - \mathcal{F}_{\AF}(U) = (512 - \hat{\mu}^2) \frac{e^{-\frac{4 \pi }{\sqrt{U}}}}{\pi} \bigg(\frac{1}{\sqrt{U}} + O(1)\bigg)$$
as $(U, \mu) \in \III_{U_0, \delta}$ tends to $(0,0)$. 
We conclude that if $\delta \in (0,16\sqrt{2}-16)$, then for any small enough $U > 0$, the function $\mu \mapsto \mathcal{F}_{\mathrm{P}}(U, \mu) - \mathcal{F}_{\AF}(U)$ is strictly positive for $\mu = (16 + \delta)e^{-\frac{2 \pi }{\sqrt{U}}}$, strictly negative for $\mu = (32 - \delta)e^{-\frac{2 \pi }{\sqrt{U}}}$, and strictly decreasing between these two points. 
Consequently, decreasing $U_0 > 0$ if necessary, there is a unique function $\mu_{\III}(U)$ such that for $(U, \mu) \in \III_{U_0, \delta}$ it holds that
$$\begin{cases}
\mathcal{F}_{\AF}(U, \mu) < \mathcal{F}_{\mathrm{P}}(U, \mu) & \text{if $\mu < \mu_{\III}(U)$},	\\ 
\mathcal{F}_{\AF}(U, \mu) = \mathcal{F}_{\mathrm{P}}(U, \mu) & \text{if $\mu = \mu_{\III}(U)$}, \\ 
\mathcal{F}_{\AF}(U, \mu) > \mathcal{F}_{\mathrm{P}}(U, \mu) & \text{if $\mu > \mu_{\III}(U)$}.
\end{cases}
$$
Smoothness of $U \mapsto \mu_{\III}(U)$ is a consequence of the implicit function theorem and the smoothness of
$\mathcal{F}_{\mathrm{P}}$ and $\mathcal{F}_{\AF}$ established in Lemma \ref{Plemma} and Lemma \ref{FAFSectorIIIlemma}.
Equations (\ref{FPexpansionSectorIII}) and (\ref{FAFexpansionSectorIII}) imply that $\hat{\mu}_{\III} = \hat{\mu}_{\III}(U) := \mu_{\III}(U)e^{\frac{2\pi}{\sqrt{U}}}$ obeys
\begin{align*}
 \hat{\mu}_{III}^2 \bigg\{ 
&-\frac{1}{\pi  \sqrt{U}}
 + \frac{1-2 \ln (\frac{16}{\hat{\mu}_{III}})}{4 \pi^2}
+ \frac{\ln (\frac{16}{\hat{\mu}_{III}})}{\pi^3} \sqrt{U}
+ \frac{(\ln (\frac{16}{\hat{\mu}_{III}})-4) \ln
   (\frac{16}{\hat{\mu}_{III}})+1}{4 \pi^4} U
	\\ 
& + \frac{3 (4-3 \ln (\frac{16}{\hat{\mu}_{III}})) \ln
   (\frac{16}{\hat{\mu}_{III}})-5}{12 \pi^5} U^{3/2}
   + O(U^2)
   \bigg\}
   = 
-\frac{512 }{\pi \sqrt{U}}  -\frac{128 }{\pi^2} \qquad \text{as $U \downarrow 0$},
\end{align*}
from which the expansion (\ref{muIIIexpansion}) follows.
Furthermore, substituting (\ref{muIIIexpansion}) into a higher-order version of (\ref{d0PexpansionSectorIII}), we obtain the formula (\ref{d0PatmuIIIexpansion}) for $d_0^{\mathrm{P}}$.
Equation (\ref{freeenergiesatmuIII}) follows immediately from (\ref{FAFexpansionSectorIII}).
This completes the proof of Theorem \ref{sectorIIIth}.

\section{Conclusions}\label{conclusionssec}
We have introduced a method that allows Hartree--Fock phase diagrams for various Hubbard-like models to be constructed analytically. As an explicit example, we have considered the 2D Hubbard model at zero temperature. We have shown that the structure of the phase diagram of this model in Hartree--Fock theory restricted to P, F, and AF states can be obtained analytically using mathematically rigorous techniques. Our results are illustrated in Figure \ref{phasediagramfig} where the analytically computed phase boundaries are superimposed on the numerically evaluated phase diagram. Our formulas do not only reproduce, but also improve on the numerical phase diagram. For example, the free energy difference between the F and P phases is tiny for $\nu$ close to $\pm 1$, which makes it very difficult to determine the exact structure of the phase diagram numerically in this regime, see e.g. \cite[Fig. 1]{LW2007}. Our analytic formulas show that there is a critical value of the coupling parameter $U$ given by $4\pi$, such that in the limit $\nu \to \pm 1$ the system is in the F state for $U > 4\pi$ and in the P state for $U < 4\pi$.

%We finally mention that the Hartree--Fock equations for other bipartite lattices have the same form as the ones studied in this paper, except that the density of states, $N_0(\epsilon)$, is different for different lattices. In on-going work, we try to generalize results in this paper to other lattices; this is non-trivial: some qualitative features of the phase diagram are sensitive to analytic details of the function $N_0(\epsilon)$ \cite{LLM2025}. 

\appendix

\section{Background on the Hubbard model}\label{modelapp}
We define the $n$-dimensional Hubbard model on the cubic lattice $\Lambda_L = \Z^n \cap [-L/2, L/2)^n$ containing $L^n$ sites, where $L \geq 0$ is an even integer and $n\in\{1,2,3,\ldots\}$. 
We also explain how the Hartree--Fock equations for the Hubbard model on $\Lambda_L $ are derived; the Hartree--Fock equations for the Hubbard model on $\Z^n$ are obtained in the limit $L\to\infty$.

\subsection{The fermion Fock space}
The $n$-dimensional Hubbard model is defined by a Hamiltonian operator $H$ acting on the {\it fermion Fock space} $\mathcal{F}$ defined by 
$$\mathcal{F} = \bigwedge \mathcal{H} = \bigwedge\nolimits^{0} \mathcal{H} \oplus \bigwedge\nolimits^1 \mathcal{H} \oplus \cdots \oplus \bigwedge\nolimits^{\dim \mathcal{H}} \mathcal{H},$$
where $\mathcal{H}$ is the {\it single-particle Hilbert space}. Since a single electron has to be at one of the $L^{n}$ lattice sites and can have either spin up or spin down, we can identify $\mathcal{H}$ with the set of functions from $\Lambda_L \times \{\uparrow, \downarrow\}$ to $\C$ endowed with the inner product 
$$\langle \psi_1, \psi_2 \rangle = \sum_{\mathbf{x}\in \Lambda_L} \sum_{\sigma = \uparrow, \downarrow} \overline{\psi_1(\mathbf{x}, \sigma)}\psi_2(\mathbf{x}, \sigma).$$
A natural orthonormal basis for $\mathcal{H} \cong \C^{2L^n}$ is
$\{\psi_{\mathbf{x} \sigma}\}_{\mathbf{x} \in \Lambda_L, \sigma \in \{\uparrow, \downarrow\}}$, where $\psi_{\mathbf{x} \sigma}$ is the function that assigns the value $1$ to $(\mathbf{x}, \sigma)$ and the value $0$ to all other points in $\Lambda_L \times \{\uparrow, \downarrow\}$. In particular, $\dim \mathcal{H} = 2L^n$.
%The spaces $\mathcal{H}$ and $\mathcal{F}$ are isomorphic to $\C^{2L^n}$ and $\C^{4^{L^n}}$, respectively. 
For each $k = 0,1,\dots, 2L^n$, the space $\bigwedge\nolimits^k \mathcal{H}$ 
can be identified with the space of alternating linear maps from $(\mathcal{H}^*)^k \cong \mathcal{H}^k$ to $\C$. It
is spanned by vectors of the form
$$\psi_{e_1} \wedge \cdots \wedge \psi_{e_k}
= \sum_{P \in S_k} \sgn(P) \psi_{e_{P(1)}} \otimes \cdots \otimes \psi_{e_{P(k)}},$$
where $S_k$ is the set of permutations of $k$ elements, $e_j$ is short for $\mathbf{x}_j \sigma_j$, and 
$$(\psi_1 \otimes \cdots \otimes \psi_k)(X_1, \dots, X_k) := \prod_{j=1}^k \langle \psi_j, X_j \rangle_{\mathcal{H} \times \mathcal{H}^*}
\qquad \text{if $(X_1, \dots, X_k) \in (\mathcal{H}^*)^k$}.$$
The inner product on $\bigotimes \nolimits^k \mathcal{H}$ is defined on simple vectors of the form $\psi_{e_1} \otimes \cdots \otimes \psi_{e_k}$ by 
$$\langle \psi_{e_1} \otimes \cdots \otimes \psi_{e_k}, \psi_{f_1} \otimes \cdots \otimes \psi_{f_k} \rangle
= \prod_{j=1}^k \langle \psi_{e_j}, \psi_{f_j} \rangle = \prod_{j=1}^k \delta_{e_j, f_j}$$
and extended by bilinearity.
This induces an inner product on $\bigwedge\nolimits^k \mathcal{H}$ by restriction:
\begin{align*}
\langle \psi_{e_1} \wedge \cdots \wedge \psi_{e_k}, \psi_{f_1} \wedge \cdots \wedge \psi_{f_k} \rangle
& = 
\sum_{P, Q \in S_k} \sgn(PQ) \prod_{j=1}^k  \langle \psi_{e_{P(j)}}, \psi_{f_{Q(j)}} \rangle
	\\
& = 
k! \sum_{P\in S_k} \sgn(P) \prod_{j=1}^k \langle \psi_{e_{P(j)}}, \psi_{f_{j}} \rangle.
\end{align*}
In particular, if $i_1 < i_2 < \cdots < i_k$ with respect to some ordering of the index set labeling the points in $\Lambda_L \times \{\uparrow, \downarrow\}$, then
$$\langle \psi_{e_{i_1}} \wedge \cdots \wedge \psi_{e_{i_k}}, \psi_{e_{i_1}} \wedge \cdots \wedge \psi_{e_{i_k}} \rangle
= 
k! \sum_{P\in S_k} \sgn(P) \prod_{j=1}^k \delta_{P(i_j), i_j}
= k!.
$$
We conclude that the set
$$\bigg\{\frac{1}{\sqrt{k!}} \psi_{e_{i_1}} \wedge \cdots \wedge \psi_{e_{i_k}} \bigg| 1 \leq i_1 < i_2 < \cdots < i_k \leq \dim \mathcal{H}\bigg\}$$
is an orthonormal basis for $\bigwedge\nolimits^k \mathcal{H}$.
In particular, $\bigwedge\nolimits^k \mathcal{H}$ has dimension $\begin{pmatrix} \dim \mathcal{H} \\ k \end{pmatrix}$ and the fermion Fock space $\mathcal{F}$ is isomorphic to $\C^{4^{L^n}}$. 

\subsection{Creation and annihilation operators}
We define creation and annihilation operators on $\mathcal{F}$ as follows.
For any $e_0 \in \Lambda_L \times \{\uparrow, \downarrow\}$, the {\it creation operator} $c_{e_0}^\dagger:\mathcal{F} \to \mathcal{F}$ acts on the basis vector $\frac{1}{\sqrt{k!}}\psi_{e_1} \wedge \cdots \wedge \psi_{e_k}$ by
\begin{align}\label{creationoperatordef}
c_{e_0}^\dagger\bigg(\frac{1}{\sqrt{k!}}\psi_{e_1} \wedge \cdots \wedge \psi_{e_k}\bigg) = \frac{1}{\sqrt{(k+1)!}}\psi_{e_0} \wedge \psi_{e_1} \wedge \cdots \wedge \psi_{e_k}
\end{align}
and is extended to all of $\mathcal{F}$ by linearity.
Similarly, the {\it annihilation operator} $c_{e_0}:\mathcal{F} \to \mathcal{F}$ acts on $\frac{1}{\sqrt{k!}} \psi_{e_1} \wedge \cdots \wedge \psi_{e_k}$ by
\begin{align}\label{annihilationoperatordef}
c_{e_0} \bigg(\frac{1}{\sqrt{k!}} \psi_{e_1} \wedge \cdots \wedge \psi_{e_k}\bigg) 
= \sum_{j=1}^k (-1)^{j-1} \langle \psi_{e_0}, \psi_{e_j} \rangle 
\frac{1}{\sqrt{(k-1)!}} \psi_{e_1} \wedge \cdots \wedge \widehat{\psi_{e_j}} \wedge \cdots \wedge \psi_{e_k}
\end{align}
and is extended to all of $\mathcal{F}$ by linearity. In \eqref{annihilationoperatordef}, a hat indicates that the corresponding vector is omitted.
If $e_0 = \mathbf{x}_0 \sigma_0$, then we think of $c_{e_0}^\dagger$ as the operator that creates an electron of spin $\sigma_0 \in \{\uparrow, \downarrow\}$ at the lattice site $\mathbf{x}_0 \in \Lambda_L$, and of $c_{\mathbf{x} \sigma}$ as the operator that annihilates an electron of spin $\sigma_0$ at $\mathbf{x}_0$.
We observe that $c_{e_0}^\dagger$ is the adjoint of $c_{e_0}$:
\begin{align*}
\langle c_{e_0}^\dagger (\psi_{e_1} \wedge \cdots \wedge \psi_{e_k}), \psi_{f_0} \wedge \psi_{f_1} \wedge \cdots \wedge \psi_{f_k} \rangle
= \langle \psi_{e_1} \wedge \cdots \wedge \psi_{e_k}, c_{e_0}(\psi_{f_0} \wedge \psi_{f_1} \wedge \cdots \wedge \psi_{f_k}) \rangle.
\end{align*}
Indeed, on the one hand
\begin{align*}
\bigg\langle c_{e_0}^\dagger \bigg(\frac{1}{\sqrt{k!}} \psi_{e_1} \wedge \cdots \wedge \psi_{e_k}\bigg)&, \frac{1}{\sqrt{(k+1)!}}\psi_{f_0} \wedge \psi_{f_1} \wedge \cdots \wedge \psi_{f_k} \bigg\rangle
	\\
& =  \frac{1}{(k+1)!} \langle \psi_{e_0} \wedge \psi_{e_1} \wedge \cdots \wedge \psi_{e_k}, \psi_{f_0} \wedge \psi_{f_1} \wedge \cdots \wedge \psi_{f_k} \rangle
	\\
& = \sum_{P\in S_{k+1}} \sgn(P) \prod_{j=0}^{k} \langle \psi_{e_{P(j)}}, \psi_{f_{j}} \rangle,
\end{align*}
while on the other hand, if $T_j:\{1, \dots, k\} \to \{1, \dots, j-1, j+1, \dots, k\}$ is the map 
$$T_j(i) = \begin{cases} i, & i \in \{1, \dots, j-1\}, \\
i+1, & i \in \{j, \dots, k\},
\end{cases}$$
then
\begin{align*}
\bigg\langle & \frac{1}{\sqrt{k!}} \psi_{e_1} \wedge \cdots \wedge \psi_{e_k}, c_{e_0}\bigg(\frac{1}{\sqrt{(k+1)!}}\psi_{f_0} \wedge \psi_{f_1} \wedge \cdots \wedge \psi_{f_k}\bigg) \bigg\rangle
	\\
& = \frac{1}{k!} \bigg\langle \psi_{e_1} \wedge \cdots \wedge \psi_{e_k}, \sum_{j=0}^k (-1)^{j} \langle \psi_{e_0}, \psi_{f_j}\rangle  \psi_{f_0} \wedge \cdots \wedge \widehat{\psi_{f_j}} \wedge \cdots \wedge \psi_{f_k} \bigg\rangle
	\\
& = \sum_{j=0}^k (-1)^{j} \langle \psi_{e_0}, \psi_{f_j}\rangle 
\sum_{Q\in S_k} \sgn(Q) \prod_{i=1}^k \langle \psi_{e_{i}}, \psi_{f_{T_j(Q(i))}} \rangle
	\\
& = \sum_{P\in S_{k+1}} 
\sgn(P) \prod_{i=0}^k \langle \psi_{e_{i}}, \psi_{f_{P(i)}} \rangle.
\end{align*}

\subsection{Commutation relations}
The creation and annihilation operators obey the anticommutation relations
$$\{c_{e_0}^\dagger, c_{f_0}^\dagger\} = 0, \qquad \{c_{e_0}, c_{f_0}\} = 0, \qquad \{c_{e_0}, c_{f_0}^\dagger\} =  \langle \psi_{e_0}, \psi_{f_0} \rangle I$$
for any $e_0, f_0 \in \Lambda_L \times \{\uparrow, \downarrow\}$, where $\{A, B\} = AB + BA$. Indeed, the first two relations follow easily from the definitions, and the third can be verified as follows:
\begin{align*}
  \{c_{e_0}, c_{f_0}^\dagger\} & (\psi_{f_1} \wedge \cdots \wedge \psi_{f_k})
  = c_{e_0} c_{f_0}^\dagger(\psi_{f_1} \wedge \cdots \wedge \psi_{f_k})
  + c_{f_0}^\dagger c_{e_0} (\psi_{f_1} \wedge \cdots \wedge \psi_{f_k})
  	\\
 = &\; c_{e_0} \frac{\sqrt{k!}}{\sqrt{(k+1)!}}\psi_{f_0} \wedge \psi_{f_1} \wedge \cdots \wedge \psi_{f_k}
 	\\
&  + c_{f_0}^\dagger \sum_{j=1}^k (-1)^{j-1} \langle \psi_{e_0}, \psi_{f_j} \rangle 
\frac{\sqrt{k!}}{\sqrt{(k-1)!}} \psi_{f_1} \wedge \cdots \wedge \widehat{\psi_{f_j}} \wedge \cdots \wedge \psi_{f_k}
 	\\
 = &\; \frac{\sqrt{k!}}{\sqrt{(k+1)!}}
\sum_{j=0}^k (-1)^{j} \langle \psi_{e_0}, \psi_{f_j} \rangle 
\frac{\sqrt{(k+1)!}}{\sqrt{k!}} \psi_{f_0} \wedge \cdots \wedge \widehat{\psi_{f_j}} \wedge \cdots \wedge \psi_{f_k}
	\\
&  + \sum_{j=1}^k (-1)^{j-1} \langle \psi_{e_0}, \psi_{f_j} \rangle 
\frac{\sqrt{k!}}{\sqrt{(k-1)!}} 
\frac{\sqrt{(k-1)!}}{\sqrt{k!}}\psi_{f_0} \wedge \psi_{f_1} \wedge \cdots \wedge \widehat{\psi_{f_j}} \wedge \cdots \wedge \psi_{f_k}
 	\\
 = &\; 
\bigg(\sum_{j=0}^k (-1)^{j} + \sum_{j=1}^k (-1)^{j-1} \bigg)\langle \psi_{e_0}, \psi_{f_j} \rangle 
\psi_{f_0} \wedge \cdots \wedge \widehat{\psi_{f_j}} \wedge \cdots \wedge \psi_{f_k}
  	\\
 = &\;  \langle \psi_{e_0}, \psi_{f_0} \rangle 
\psi_{f_1} \wedge \cdots \wedge \psi_{f_k}.
\end{align*}

\subsection{The Hubbard Hamiltonian}
The Hamiltonian $H:\mathcal{F} \to \mathcal{F}$ for the $n$-dimensional Hubbard model depends on the hopping parameter $t>0$, the coupling parameter $U > 0$ that determines the strength of the on-site repulsion,
as well as the chemical potential $\mu \in \R$. It is defined by
\begin{align*}
H & := H_0 + U \sum_{\mathbf{x} \in \Lambda_L} (n_{\mathbf{x}\uparrow} -\tfrac12)(n_{\mathbf{x} \downarrow}-\tfrac{1}{2}),
	\\
H_0 & := \sum_{\mathbf{x}, \mathbf{y} \in \Lambda_L} \sum_{\sigma = \uparrow, \downarrow} t_{\mathbf{x}\mathbf{y}} c_{\mathbf{x} \sigma}^\dagger c_{\mathbf{y} \sigma} - \mu \sum_{\mathbf{x} \in \Lambda_L} (n_{\mathbf{x} \uparrow} + n_{\mathbf{x} \downarrow} - 1),
\end{align*}
where $c_{\mathbf{x} \sigma}^\dagger$ and $c_{\mathbf{x} \sigma}$ are the creation and annihilation operators defined in (\ref{creationoperatordef})--(\ref{annihilationoperatordef}) with $e_0=\mathbf{x} \sigma$, $n_{\mathbf{x} \sigma} := c_{\mathbf{x} \sigma}^\dagger c_{\mathbf{x} \sigma}$ is the {\it density operator}, and the {\it hopping matrix elements} $t_{\mathbf{x}\mathbf{y}}$ are given by
\begin{align}\label{txy2D}
t_{\mathbf{x}\mathbf{y}} 
= \begin{cases} -t & \text{if $\mathbf{x}$ and $\mathbf{y}$ are nearest neighbors},
	\\
0 & \text{otherwise}.
\end{cases}
\end{align}
As is customary, we use periodic boundary conditions when determining whether $\mathbf{x}$ and $\mathbf{y}$ are nearest neighbors, i.e., we identity $\Lambda_L$ with the torus $(\Z/L\Z)^n$.
If $H(t, U, \mu)$ denotes the Hamiltonian corresponding to the parameters $t, U, \mu$, then $H(t, tU, t\mu)=tH(1, U, \mu)$. It follows that the hopping parameter $t>0$ can be scaled out of the problem; in the main text, we have therefore set $t=1$.

\subsection{Equilibrium states}
A {\em state} $\rho$ (also called a {\em density matrix}) is a self-adjoint operator on $\mathcal{F}$ with nonnegative eigenvalues such that $\Tr_{\mathcal{F}}(\rho)=1$, where $\Tr_{\mathcal{F}}$ is the trace on $\mathcal{F}$. We are interested in the equilibrium state of the model as a function of the coupling strength $U$ and the chemical potential $\mu \in \R$; the equilibrium state is the most probable state of the system for given values of $U$ and $\mu$, and it is given by the state that minimizes the {\it grand canonical potential} (also known as the {\it Landau free energy}) $\Omega(\rho; U, \mu)$ defined by
$$\Omega(\rho; U, \mu) = \Tr_{\mathcal{F}}[\rho H] + \frac{1}{\beta} \Tr_{\mathcal{F}}[\rho \ln \rho],$$
where $\beta  =1/T \in (0, +\infty]$ is the inverse temperature.  
At temperature $1/\beta > 0$, the equilibrium state of the system is given by {\it the Gibbs state}
\begin{equation}\label{rhoG} 
	\rho_{\mathrm{G}} = \frac{e^{-\beta H}}{\Tr_{\mathcal{F}}(e^{-\beta H})}.
	\end{equation} 
Indeed, for a fixed temperature $1/\beta>0$, $\rho_{\mathrm{G}}$ is the unique absolute minimum of $\Omega$ in the set of all states $\rho$, see e.g. \cite{W1978}. We refer to $\Omega(\rho; U, \mu)/L^n$ as the {\em free energy density} of a state $\rho$.

\subsection{Doping}
Instead of using the chemical potential $\mu$ as a parameter, Figure \ref{phasediagramfig} uses the {\it doping} $\nu$ as a parameter along the horizontal axis. The doping $\nu$ of a state $\rho$ is defined as the average number of electrons per site in the system minus 1, so that $-1\leq \nu\leq 1$ and half-filling corresponds to $\nu=0$:
\begin{equation}\label{rho} 
\nu= \frac{1}{L^n} \sum_{\mathbf{x} \in \Lambda_L} \Tr_{\mathcal{F}}(\rho n_{\mathbf{x}}) - 1.
\end{equation}
It is usually more relevant to display the phase diagram in terms of the doping instead of the chemical potential because the doping is a quantity that can be physically measured.

\subsection{Hartree--Fock theory}
The Hartree--Fock approximation consists of minimizing $\Omega$ only over a subset of the set of all states. 
In unrestricted Hartree--Fock theory, the minimization is over all so-called quasi-free states, where a state $\rho$ is quasi-free if expectation values defined as $\langle A \rangle_{\rho} := \Tr_{\mathcal{F}}(\rho A)$ satisfy Wick's theorem, see \cite[p. 11]{BLS1994} for the precise definition. In particular, a quasi-free state $\rho$ satisfies
$$\langle c_{e_1}^\dagger c_{e_2}^\dagger c_{e_3} c_{e_4} \rangle_\rho
= \langle c_{e_1}^\dagger c_{e_2}^\dagger  \rangle_\rho \langle c_{e_3} c_{e_4} \rangle_\rho
- \langle c_{e_1}^\dagger c_{e_3} \rangle_\rho \langle  c_{e_2}^\dagger c_{e_4} \rangle_\rho
+ \langle c_{e_1}^\dagger c_{e_4} \rangle_\rho \langle c_{e_2}^\dagger c_{e_3} \rangle_\rho.$$

Define the spin operator $\vec{s}_{\mathbf{x}}:\mathcal{F} \to \mathcal{F}$ for $\mathbf{x} \in \Lambda_L$ by
\begin{align} 
\vec{s}_{\mathbf{x}} = (s^x_{\mathbf{x}},s^y_{\mathbf{x}},s^z_{\mathbf{x}})
= \sum_{\lambda, \tau \in \{\uparrow, \downarrow\}} c_{\mathbf{x}\lambda}^\dagger \vec{\sigma}_{\lambda \tau} c_{\mathbf{x}\tau},
\end{align} 
where $\vec{\sigma}=(\sigma^x,\sigma^y,\sigma^z)$ are the usual Pauli sigma matrices, i.e.,
\begin{align*} 
	s_{\mathbf{x}}^x = c_{\mathbf{x},\uparrow}^\dagger c_{\mathbf{x},\downarrow} + c_{\mathbf{x},\downarrow}^\dagger c_{\mathbf{x},\uparrow},\qquad
	s_{\mathbf{x}}^y = -i c_{\mathbf{x},\uparrow}^\dagger c_{\mathbf{x},\downarrow} + i c_{\mathbf{x},\downarrow}^\dagger c_{\mathbf{x},\uparrow},\qquad
	s_{\mathbf{x}}^z =  n_{\mathbf{x},\uparrow}-n_{\mathbf{x},\downarrow}. 
\end{align*} 	
Let $K$ and $B$ denote the sets of functions $d:\Lambda_L \to \R$ and $\vec{m}:\Lambda_L \to \R^3$, respectively (see also Remark \ref{BKremark}).
For $d \in K$ and $\vec{m} \in B$, define the Hartree--Fock Hamiltonian $H_{\mathrm{HF}}(d, \vec{m}): \mathcal{F} \to \mathcal{F}$ by
\begin{align*}
H_{\mathrm{HF}}(d, \vec{m}) := H_0 + \frac{U}{2}\sum_{\mathbf{x}}\big( d(\mathbf{x})(n_{\mathbf{x}}-1)-\vec{m}(\mathbf{x})\cdot\vec{s}_{\mathbf{x}} \big).
\end{align*} 
Define the {\it Hartree--Fock function} $\mathcal{G}: K \times B \times (0,+\infty) \times \R \to \R$ by
\begin{equation}\label{FHF}  
\mathcal{G}(d,\vec{m}, U, \mu) := \frac{1}{L^n}\bigg(\frac{U}{4}\sum_{\mathbf{x} \in \Lambda_L}\big( |\vec{m}(\mathbf{x})|^2-d(\mathbf{x})^2\big) 
- \Tr_{\mathcal{H}}\big(\Ln_\beta(h)\big)  \bigg),
\end{equation}  
where $\Tr_{\mathcal{H}}$ is the trace on $\mathcal{H}$, $|\vec{m}(\mathbf{x})|^2=\vec{m}(\mathbf{x})\cdot  \vec{m}(\mathbf{x})$, 
$$\Ln_\beta(E) := \begin{cases}
\frac{1}{\beta} \ln\big( 2 \cosh\big(\frac{\beta E}{2}\big)\big), & \beta \in (0, +\infty),
	\\
\frac{|E|}{2}, & \beta = +\infty,
\end{cases}$$	
and $h:\mathcal{H}\to \mathcal{H}$ is the linear operator on $\mathcal{H}$ whose matrix elements with respect to the basis $\{\psi_{\mathbf{x} \sigma}\}_{\mathbf{x} \in \Lambda_L, \sigma \in \{\uparrow, \downarrow\}}$ are given by
\begin{align*}
	h_{\mathbf{x}\sigma;\mathbf{y}\sigma'}
	& := \big(t_{\mathbf{x}\mathbf{y}}-\mu\delta_{\mathbf{x},\mathbf{y}}\big)\delta_{\sigma,\sigma'}
+ \frac {U}{2}\delta_{\mathbf{x},\mathbf{y}}\big(d(\mathbf{x})\delta_{\sigma,\sigma'} - \vec{m}(\mathbf{x})\cdot\vec{\sigma}_{\sigma\sigma'}\big).
\end{align*}
Our notation is such that, if $\vec{m} = (m^x, m^y, m^z)$, then $\vec{m}\cdot\vec{\sigma}_{\sigma\sigma'} = m^x\sigma^x_{\sigma\sigma'} + m^y\sigma^y_{\sigma\sigma'} + m^z\sigma^z_{\sigma\sigma'}$. Moreover, $\sigma^x_{\uparrow \uparrow}$ indicates the $(11)$-entry of $\sigma^x$, $\sigma^x_{\uparrow \downarrow}$ indicates the $(12)$-entry of $\sigma^x$, etc.
Since $h$ is hermitian, $\Ln_{\beta}(h)$ is well-defined.
Indeed, since $h$ is hermitian, there exists an orthonormal basis $\{\psi_j\}_1^{2L^n}$ for $\mathcal{H}$ such that $h\psi_j = E_j\psi_j$ for each $j$, where $\{E_j\}_1^{2L^n} \subset \R$ are the eigenvalues of $h$. The linear map $\Ln_\beta(h):\mathcal{H} \to \mathcal{H}$ and the trace $\Tr_{\mathcal{H}}\big(\Ln_\beta(h)\big)$ are given by
\begin{align}\label{Lnbetahexplicit}
\Ln_\beta(h)(\psi) = \sum_{j=1}^{2L^n} \psi_j \Ln_\beta(E_j) \langle \psi_j, \psi\rangle \quad \text{and} \quad \Tr_{\mathcal{H}}\big(\Ln_\beta(h)\big) = \sum_{j=1}^{2L^n} \Ln_\beta(E_j).
\end{align}

Let us first consider the case of strictly positive temperature $T = 1/\beta > 0$. 
In this case, if $(d^*, \vec{m}^*)$ is an extremizer of $\mathcal{G}$ in the sense that
\begin{align}\label{Gextremizer}
\mathcal{G}(d^*,\vec{m}^*, U, \mu) = \max_{d \in K} \mathcal{G}(d,\vec{m}^*, U, \mu)
= \min_{\vec{m} \in B} \max_{d \in K} \mathcal{G}(d,\vec{m}, U, \mu),
\end{align}
then the state $\rho^{*}$ defined by
\begin{align}\label{rhostar}
\rho^* := \frac{e^{-\beta H_{\mathrm{HF}}(d^*, \vec{m}^*)}}{\Tr_{\mathcal{F}} (e^{-\beta H_{\mathrm{HF}}(d^*, \vec{m}^*)})}
\end{align}
is a minimizer of the grand canonical potential over all quasi-free states (note that $\rho^*$ is the Gibbs state associated with the quadratic Hamiltonian $H_{\mathrm{HF}}(d^*, \vec{m}^*)$), and the value of the Hartree--Fock function at $(d^*,\vec{m}^*)$ is the free energy density of $\rho^*$, i.e.,
\begin{align}\nonumber
 \mathcal{F}(U, \mu) := &\, \min \bigg\{\frac{\Omega(\rho; U, \mu)}{L^n}\, \bigg| \, \text{$\rho$ is a quasi-free state}\bigg\} = \frac{\Omega(\rho^*; U, \mu) }{L^n}
 	\\ \label{FOmega}
 =&\; \mathcal{G}(d^*,\vec{m}^*, U, \mu).
\end{align}
Moreover, $d^*(\mathbf{x})$ and $\vec{m}^*(\mathbf{x})$ are the expected doping and the expected spin at $\mathbf{x}$ in the state $\rho^*$, respectively, i.e.,
\begin{align}\label{dstarmstar}
& d^*(\mathbf{x}) = \langle n_{\mathbf{x}}\rangle_{\rho^*} - 1, 
\qquad
 \vec{m}^*(\mathbf{x}) = \langle \vec{s}_{\mathbf{x}} \rangle_{\rho^*}.
\end{align}
We refer to \cite{BP1996} (see also \cite{LLphysrevlong}) for proofs of the above facts.

Similar statements hold in the case of zero temperature $T = 0$, if $\rho^*$ and $(d^*, \vec{m}^*)$ are chosen as limits of the corresponding quantities for $T > 0$ as $T \downarrow 0$, see \cite[Theorem 2]{BP1996}. 

\subsection{Hartree--Fock equations}
The {\it Hartree--Fock equations} (also known as the {\it mean-field equations}) for unrestricted Hartree--Fock theory are
\begin{align}\label{unrestrictedmeanfieldeqs}
\begin{cases}
\frac{\partial \mathcal{G}}{\partial d(\mathbf{x})}(d,\vec{m}, U, \mu) = 0, \\
 \frac{\partial \mathcal{G}}{\partial \vec{m}(\mathbf{x})}(d,\vec{m}, U, \mu) = \vec{0},
 \end{cases} \quad \text{for all $\mathbf{x} \in \Lambda_L$}.
 \end{align}
More explicitly, we see from (\ref{FHF}) that the Hartree--Fock equations are
\begin{align}\label{unrestrictedmeanfieldeqs2}
\begin{cases}
d(\mathbf{x}) = - \Tr_{\mathcal{H}}\big(\Ln_\beta'(h) (\mathbbm{1}_{\mathbf{x}} \otimes I)\big),
	 \\
\vec{m}(\mathbf{x}) = - \Tr_{\mathcal{H}}\big(\Ln_\beta'(h) (\mathbbm{1}_{\mathbf{x}} \otimes \vec{\sigma})\big),
 \end{cases} \quad \text{for all $\mathbf{x} \in \Lambda_L$}.
 \end{align}
where the second line is a short way of writing the three equations $m^A(\mathbf{x}) = - \Tr_{\mathcal{H}}\big(\Ln_\beta'(h) (\mathbbm{1}_{\mathbf{x}} \otimes \sigma^A)\big)$, $A = x,y,z$, and the operators $\mathbbm{1}_{\mathbf{x}} \otimes I$ and $\mathbbm{1}_{\mathbf{x}} \otimes \vec{\sigma}$ act on $\psi \in \mathcal{H}$ by
 $$(\mathbbm{1}_{\mathbf{x}} \otimes I)\psi = \sum_{\sigma \in \{\uparrow, \downarrow\}} \psi_{\mathbf{x}\sigma} \langle \psi_{\mathbf{x}\sigma}, \psi \rangle, \qquad
 (\mathbbm{1}_{\mathbf{x}} \otimes \vec{\sigma})\psi = \sum_{\sigma, \sigma' \in \{\uparrow, \downarrow\}} \psi_{\mathbf{x}\sigma}  \vec{\sigma}_{\sigma \sigma'} \langle \psi_{\mathbf{x}\sigma'}, \psi \rangle.$$
For every $\vec{m} \in B$, there is a unique $d^*_{\vec{m}} \in K$ such that $\mathcal{G}(d^*_{\vec{m}},\vec{m}, U, \mu) = \max_{d \in K} \mathcal{G}(d,\vec{m}, U, \mu)$, because $\mathcal{G}$ is strictly concave in $d$, see \cite[p. 8]{BP1996}.
At $d^*_{\vec{m}} \in K$, we have $\frac{\partial \mathcal{G}}{\partial d(\mathbf{x})}(d^*_{\vec{m}},\vec{m}, U, \mu) = 0$ for every $\mathbf{x}$. 
By considering the local extrema of the function $\vec{m} \mapsto \mathcal{G}(d^*_{\vec{m}},\vec{m}, U, \mu)$, 
we therefore conclude from (\ref{Gextremizer}) and (\ref{FOmega}) that
\begin{align}\label{freeenergyunrestricted}
\mathcal{F}(U, \mu) = \min\{\mathcal{G}(d, \vec{m}, U, \mu) \, |\, \text{$(d, \vec{m}) \in K \times B$ solves (\ref{unrestrictedmeanfieldeqs})}\}.
\end{align}
Indeed, by (\ref{unrestrictedmeanfieldeqs2}),
$d^*_{\vec{m}}(\mathbf{x})
= - \Tr_{\mathcal{H}}\big(\Ln_\beta'(h) (\mathbbm{1}_{\mathbf{x}} \otimes I)\big)
= - \Tr_{\mathcal{H}}\big(A_{\mathbf{x}}\big)$, where the operator $A_{\mathbf{x}}:\mathcal{H} \to \mathcal{H}$ is defined by 
\begin{align}\label{Axdef}
A_{\mathbf{x}} := (\mathbbm{1}_{\mathbf{x}} \otimes I) \Ln_\beta'(h) (\mathbbm{1}_{\mathbf{x}} \otimes I).
\end{align}
Hence $d^*_{\vec{m}}(\mathbf{x}) = - a_{\mathbf{x}} - b_{\mathbf{x}}$, where $a_{\mathbf{x}},b_{\mathbf{x}}$ are the eigenvalues of $A_{\mathbf{x}}$ restricted to the two-dimensional subspace $(\mathbbm{1}_{\mathbf{x}} \otimes I) \mathcal{H}$. Since $|\Ln_{\beta}'(E)| \leq 1/2$ for all $E\in \R$, we have $a_{\mathbf{x}},b_{\mathbf{x}} \in [-1/2,1/2]$. It follows that
\begin{align}\label{dstarbound}
d^*_{\vec{m}}(\mathbf{x}) \in [-1, 1] \quad \text{for every $\mathbf{x} \in \Lambda_L$}.
\end{align}
Thus, by \eqref{FHF}, $\vec{m} \to \mathcal{G}(d^*_{\vec{m}},\vec{m}, U, \mu)$ grows like $|\vec{m}(\mathbf{x})|^2$ as $|\vec{m}(\mathbf{x})| \to +\infty$, so the minimum in (\ref{Gextremizer}) over $\vec{m} \in B$ is attained at a finite local extremum, which gives (\ref{freeenergyunrestricted}).

\begin{remark}\label{BKremark}\upshape
For easy comparison with \cite{BP1996}, we point out that the sets $K$ and $B$ in (\ref{Gextremizer}) and (\ref{freeenergyunrestricted}) may be replaced by the smaller sets
\begin{align*}
& K' = \{d:\Lambda_L \to \R \,|\; \text{$|d(\mathbf{x})| \leq 1$ for all $\mathbf{x} \in \Lambda_L$}\},
	\\
& B' = \{\vec{m}:\Lambda_L \to \R^3 \,|\; \text{$|\vec{m}(\mathbf{x})| \leq 1$ for all $\mathbf{x} \in \Lambda_L$}\},
\end{align*}
without changing the results. Indeed, suppose $(d, \vec{m}) \in K \times B$ solves (\ref{unrestrictedmeanfieldeqs}). 
By (\ref{dstarbound}), we have $d \in K'$.
Furthermore, we find from (\ref{Lnbetahexplicit}) and (\ref{unrestrictedmeanfieldeqs2}) that
\begin{align*}
\vec{m}(\mathbf{x}) & =
- \Tr_{\mathcal{H}}\big(\Ln_\beta'(h) (\mathbbm{1}_{\mathbf{x}} \otimes \vec{\sigma})\big)
 =
-\sum_{j,k=1}^{2L^n} \langle \psi_k, \psi_j \rangle \Ln_\beta'(E_j) \langle \psi_j, (\mathbbm{1}_{\mathbf{x}} \otimes \vec{\sigma})\psi_k \rangle
	\\
& =
-\sum_{j,k=1}^{2L^n} \sum_{\sigma, \sigma' \in \{\uparrow, \downarrow\}}  \vec{\sigma}_{\sigma \sigma'} \langle \psi_{\mathbf{x}\sigma'}, \psi_k \rangle
\langle \psi_k, \psi_j \rangle \Ln_\beta'(E_j) \langle \psi_j, \psi_{\mathbf{x}\sigma} \rangle
	\\
& = - \sum_{j =1}^{2L^n} \sum_{\sigma, \sigma' \in \{\uparrow, \downarrow\}} \Ln_\beta'(E_j) \langle \psi_j, \psi_{\mathbf{x}\sigma} \rangle 
\vec{\sigma}_{\sigma \sigma'} \langle \psi_{\mathbf{x}\sigma'}, \psi_j\rangle.
\end{align*}
Using the identity $\vec{\sigma}_{\sigma \sigma'} \cdot \vec{\sigma}_{\tau \tau'} = 2\delta_{\sigma, \tau'}\delta_{\sigma', \tau} - \delta_{\sigma, \sigma'} \delta_{\tau, \tau'}$, we obtain after some simplifications
\begin{align}
|\vec{m}(\mathbf{x})|^2 = 2\Tr_{\mathcal{H}}(A_{\mathbf{x}}^2) - \Tr_{\mathcal{H}}(A_{\mathbf{x}})^2,
\end{align}
where $A_{\mathbf{x}}:\mathcal{H} \to \mathcal{H}$ is the operator in (\ref{Axdef}). If $a_{\mathbf{x}},b_{\mathbf{x}} \in [-1/2,1/2]$ are the eigenvalues of $A_{\mathbf{x}}$ restricted to the two-dimensional subspace $(\mathbbm{1}_{\mathbf{x}} \otimes I) \mathcal{H}$, then 
$$|\vec{m}(\mathbf{x})|^2 = 2(a_{\mathbf{x}}^2 + b_{\mathbf{x}}^2) - (a_{\mathbf{x}} + b_{\mathbf{x}})^2 = (a_{\mathbf{x}}-b_{\mathbf{x}})^2 \leq 1.$$
This shows that any solution $(d, \vec{m}) \in K \times B$ of (\ref{unrestrictedmeanfieldeqs}) in fact lies in $K' \times B'$, and completes the proof of the claim.
\end{remark}

\subsection{Hartree--Fock theory restricted to P, F, and AF states}
The state $\rho^{*}$ defined in (\ref{rhostar}) is the best approximation of the true Gibbs state among all states of the form
\begin{align}\label{HFGibbs}
\frac{e^{-\beta H_{\mathrm{HF}}(d, \vec{m})}}{\Tr_{\mathcal{F}} (e^{-\beta H_{\mathrm{HF}}(d, \vec{m})})}
\end{align}
where $d \in K$ and $\vec{m} \in B$. In Hartree--Fock theory restricted to P, F, and AF states, we consider the best approximation of the true Gibbs state among all states of the form (\ref{HFGibbs}) with $d$ and $\vec{m}$ of the following form:
\begin{subequations}\label{PFAF}
\begin{align}
 \text{P:} & \quad \text{$d(\mathbf{x}) = d_0$ and $\vec{m}(\mathbf{x}) = 0$ for all $\mathbf{x} \in \Lambda_L$},
	\\
 \text{F:} & \quad \text{$d(\mathbf{x}) = d_0$ and $\vec{m}(\mathbf{x}) = m_0 \vec{e}$ for all $\mathbf{x} \in \Lambda_L$},
	\\
 \text{AF:} & \quad \text{$d(\mathbf{x}) = d_0$ and $\vec{m}(\mathbf{x}) = (-1)^{\mathbf{x}}m_1 \vec{e}$ for all $\mathbf{x} \in \Lambda_L$},
\end{align}
\end{subequations}
where $d_0 \in \R$, $m_0 > 0$, $m_1 > 0$, and $\vec{e} \in \R^3$ with $\|\vec{e}\|=1$, are independent of $\mathbf{x}$, and $(-1)^{\mathbf{x}} := (-1)^{x_1 + \dots + x_n}$.

The expressions (\ref{PHartreeFockFunction})--(\ref{AFHartreeFockFunction}) for the P, F, and AF Hartree--Fock functions are obtained by evaluating (\ref{FHF}) with $d$ and $\vec{m}$ of the form (\ref{PFAF}) and taking the thermodynamic limit $L \to \infty$.
Indeed, if $d(\mathbf{x}) = d_0$ and $\vec{m}(\mathbf{x}) = (m_0 + (-1)^{\mathbf{x}}m_1) \vec{e}$, then (see e.g. \cite[Eq. (24)]{LLphysrevlong})
\begin{align*} 
\lim_{L \to +\infty} \mathcal{G}(d,\vec{m}, U, \mu) = \frac{U}{4}(m_0^2+m_1^2-d_0^2)
 - \frac{1}{2} \int_{[-\pi,\pi]^n}  \sum_{r,r'=\pm 1} \Ln_\beta(E_{r,r'}(\varepsilon(\mathbf{k}))) \frac{d^n \mathbf{k}}{(2\pi)^n},
\end{align*} 	
where the so-called effective band relations $E_{r,r'}(\varepsilon(\mathbf{k}))$ are expressed in terms of the functions $\varepsilon(\mathbf{k}) := -2t\sum_{i=1}^n\cos(k_i) $ and
\begin{align}\label{Errp}
	E_{r,r'}(\epsilon) :=  \frac{Ud_0}{2}-r\frac{Um_0}{2} -\mu 
	+ r'\sqrt{\epsilon^2 + \Big(\frac{Um_1}{2}\Big)^2}.
\end{align} 
Utilizing the density of states $N_0(\epsilon)= \int_{[-\pi,\pi]^n}\delta(\epsilon-\varepsilon(\mathbf{k})) \frac{d^n\mathbf{k}}{(2\pi)^n}$, we can write this as
\begin{align*} 
\lim_{L \to +\infty} \mathcal{G}(d,\vec{m}, U, \mu) = \frac{U}{4}(m_0^2+m_1^2-d_0^2)
 - \frac{1}{2} \int_{\R}  N_0(\epsilon) \sum_{r,r'=\pm 1} \Ln_\beta(E_{r,r'}(\epsilon)) d\epsilon,
\end{align*} 
which after simplification yields (\ref{PHartreeFockFunction})--(\ref{AFHartreeFockFunction}) in the special case of $t = 1$, $n = 2$,  and $\beta = +\infty$ (note that, for later convenience, we employ $\Ln_{+\infty}(E)=\frac12|E|=E\theta(E)-\frac{E}{2}$, $\int_{\R} N_0(\epsilon) d\epsilon =1$, and $N_{0}(\epsilon) = N_{0}(-\epsilon)$ to obtain the formulas in  (\ref{PHartreeFockFunction})--(\ref{AFHartreeFockFunction})).
The definitions (\ref{freeenergies}) of the P, F, and AF free energy densities are similarly a consequence of specializing (\ref{freeenergyunrestricted}) using (\ref{PFAF}).

\section{Properties of $N_0(\epsilon)$}\label{N0app}
We derive and recall some properties of the function $N_{0}(\epsilon)$ defined in (\ref{2Ddensityofstates}) that are needed in the main text. 

\subsection{Integrals involving $N_0(\epsilon)$}
Our first lemma computes several moments of $N_{0}(\epsilon)$.

\begin{lemma}\label{N0momentslemma}
The function $N_{0}(\epsilon)$ satisfies
\begin{align}
\int_{\R} \epsilon^{j} N_{0}(\epsilon) d\epsilon = \begin{cases}
0 & \mbox{if } j = 1, 3, 5, \dots, \\
1 & \mbox{if } j=0, \\
4 & \mbox{if } j=2, \\
36 & \mbox{if } j=4, \\
400 & \mbox{if } j=6.
\end{cases}
\end{align}
\end{lemma}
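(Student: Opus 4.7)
The plan is to prove Lemma~\ref{N0momentslemma} by a direct computation: substituting the definition~(\ref{2Ddensityofstates}) of $N_0$ and interchanging the order of integration reduces each moment to an elementary trigonometric integral on the torus $[-\pi,\pi]^2$. Concretely, I would apply Fubini's theorem and the defining property of the Dirac delta to write
\begin{align*}
\int_\R \epsilon^j N_0(\epsilon)\, d\epsilon
&= \int_{[-\pi,\pi]^2} \int_\R \epsilon^j\, \delta\big(2(\cos k_1+\cos k_2)+\epsilon\big)\, d\epsilon\, \frac{dk_1\,dk_2}{(2\pi)^2} \\
&= (-2)^j \int_{[-\pi,\pi]^2} (\cos k_1+\cos k_2)^j\, \frac{dk_1\,dk_2}{(2\pi)^2}.
\end{align*}
The $j=0$ case then follows immediately since the integrand is $1$.

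For odd $j$, I would invoke the change of variables $k_i\mapsto k_i+\pi$, under which $\cos k_i\mapsto -\cos k_i$, so the integrand changes sign while the measure is preserved; this forces the integral to vanish.

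For even $j\in\{2,4,6\}$, I would expand $(\cos k_1+\cos k_2)^j$ by the binomial theorem, note that cross terms with at least one odd power of a single cosine integrate to zero by the preceding symmetry argument, and then use the standard Wallis moments
$$
\frac{1}{2\pi}\int_{-\pi}^{\pi}\cos^{2m}(k)\,dk=\frac{1}{2^{2m}}\binom{2m}{m},
$$
which give $1,\tfrac12,\tfrac38,\tfrac{5}{16}$ for $m=0,1,2,3$, to evaluate each surviving term. Combining with the prefactor $(-2)^j$ produces $4$, $36$, and $400$ for $j=2,4,6$ respectively.

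There is no real obstacle here; the only minor care is justifying the interchange of integrals, which is immediate because the support of the delta distribution in $\epsilon$ is a single point for each fixed $(k_1,k_2)$ and the resulting integrand $(\cos k_1+\cos k_2)^j$ is continuous and bounded on the compact torus. The remainder is an exercise in arithmetic with binomial coefficients, which I would carry out explicitly for each of the three even values of $j$ in the statement.
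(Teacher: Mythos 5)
Your proof is correct, and it takes a genuinely different (and more self-contained) route than the paper. The paper does not compute the moments directly from the torus integral; instead it invokes the closed-form identity (cited as Lemma~3.1 of \cite{LLa0})
$$
\int_0^4 \epsilon^s N_0(\epsilon)\,d\epsilon = \frac{4^s}{2\pi}\left(\frac{\Gamma(\tfrac12+\tfrac{s}{2})}{\Gamma(1+\tfrac{s}{2})}\right)^2,\qquad s>-1,
$$
specializes to $s=2n$, and simplifies via the Legendre duplication identities for the Gamma function to obtain $\int_{\R}\epsilon^{2n}N_0(\epsilon)\,d\epsilon=\binom{2n}{n}^2$, which gives $1,4,36,400$ for $n=0,1,2,3$. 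For the odd moments the paper simply notes that $N_0$ is even. Your approach buys self-containedness (no external Gamma-function identity, and your $k_i\mapsto k_i+\pi$ substitution simultaneously proves the evenness of $N_0$ that the paper merely asserts), at the cost of being tied to the specific even integers needed; the paper's approach buys a uniform formula for all real $s>-1$, and reveals the $\binom{2n}{n}^2$ structure, which the case-by-case binomial computation obscures. Your Wallis-integral bookkeeping checks out: for $j=4$ the surviving contributions give $16(\tfrac38+\tfrac32+\tfrac38)=36$, and for $j=6$ they give $64\cdot\tfrac{100}{16}=400$.
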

\begin{proof} 
Since the function $N_0$ is even, the claim for odd integers $j \geq 1$ is immediate.

For even integers $j\geq 0$, we can obtain the result from the identity \cite[Lemma~3.1]{LLa0}
\begin{align}\label{generalintegral} 
\int_0^4 \epsilon^s N_0(\epsilon)d\epsilon = \frac{4^s}{2\pi}\left( \frac{\Gamma(\tfrac{1}{2}+\tfrac{s}{2})}{\Gamma(1+\tfrac{s}{2})}\right)^2 \quad \text{for $s>-1$}. 
\end{align} 
Indeed, specializing (\ref{generalintegral}) to $s=2n=0,2,4,\ldots$ and using that $N_0(\epsilon)$ is even and non-zero only for $|\epsilon|\leq 4$, we get 
$$
\int_{\R} \epsilon^{2n} N_0(\epsilon)d\epsilon = 2\int_0^4 \epsilon^{2n} N_0(\epsilon)d\epsilon =  \frac{4^{2n}}{\pi}\left( \frac{\Gamma(\tfrac{1}{2}+n)}{\Gamma(1+n)}\right)^2= \binom{2n}{n}^2
\quad (n=0,1,2,3,\ldots) , 
$$
using $\Gamma(1+n)=n!$ and $\Gamma(\tfrac12+n)=(2n)!\sqrt{\pi}/(4^n n!)$; in particular, this gives the result for $j=2n=0,2,4,6$.
\end{proof}

The next lemma computes the integral $\int_0^4 N_0(\epsilon) \epsilon d\epsilon$. This integral (multiplied by $-2$) shows up in multiple places in the paper: it is the P free energy at $\mu = 0$ (see Lemma \ref{Plemma}), it is the leading term in the expansion of $\mathcal{F}_P$ in Sector III (see Lemma \ref{FPsectorIIIlemma}), and it is the leading term in the expansion of the AF Hartree--Fock function at each of the two AF mean-field solutions in Sector III (see Lemmas \ref{AFsolutionIIIlemma} and \ref{AFIIIlocalmaxlemma}).

\begin{lemma}
The function $N_{0}(\epsilon)$ satisfies
\begin{align}\label{N0epsilon8pi2}
\int_0^4  \epsilon N_0(\epsilon)  d\epsilon = \frac{8}{\pi^2}.
\end{align}
\end{lemma}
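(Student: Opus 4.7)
The statement $\int_0^4 \epsilon N_0(\epsilon)\,d\epsilon = \frac{8}{\pi^2}$ is an immediate specialization of the general moment formula (\ref{generalintegral}), which was invoked in the proof of Lemma \ref{N0momentslemma} and is available to us by assumption. My plan is simply to set $s=1$ in
\begin{equation*}
\int_0^4 \epsilon^s N_0(\epsilon)\,d\epsilon = \frac{4^s}{2\pi}\left( \frac{\Gamma(\tfrac{1}{2}+\tfrac{s}{2})}{\Gamma(1+\tfrac{s}{2})}\right)^2, \qquad s > -1,
\end{equation*}
and evaluate the resulting Gamma function quotient. Since $\Gamma(1)=1$ and $\Gamma(3/2)=\tfrac{\sqrt{\pi}}{2}$, the right-hand side becomes $\frac{4}{2\pi}\cdot\left(\frac{1}{\sqrt{\pi}/2}\right)^2 = \frac{2}{\pi}\cdot\frac{4}{\pi}=\frac{8}{\pi^2}$, as claimed.

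There is essentially no obstacle here provided we may cite (\ref{generalintegral}); the entire proof is a two-line Gamma function computation. If for some reason a self-contained derivation were preferred (for example, to avoid appealing to the half-integer $s=1$ case of \cite[Lemma~3.1]{LLa0}), one could instead compute the integral directly from the defining double integral
\begin{equation*}
\int_0^4 \epsilon N_0(\epsilon)\,d\epsilon = \int_{[-\pi,\pi]^2} \bigl|2(\cos k_1+\cos k_2)\bigr|\,\mathbbm{1}_{\{\cos k_1+\cos k_2<0\}}\,\frac{dk_1\,dk_2}{(2\pi)^2}\cdot 2,
\end{equation*}
using the symmetry $\mathbf{k}\mapsto\mathbf{k}+(\pi,\pi)$ of the Brillouin zone to reduce to $\int_{[-\pi,\pi]^2} 2|\cos k_1+\cos k_2|\,\frac{dk_1 dk_2}{(2\pi)^2}$, and then evaluating this via a standard change of variables (e.g.\ introducing $u=(k_1+k_2)/2$, $v=(k_1-k_2)/2$, so $\cos k_1+\cos k_2 = 2\cos u\cos v$) and the known elliptic-type integral $\int_0^{\pi/2}\cos u\,du \int_0^{\pi/2}\cos v\,dv = 1$. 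Either route yields the result $\frac{8}{\pi^2}$, but in the present context the one-line appeal to (\ref{generalintegral}) is clearly the path of least resistance and is the approach I would write up.
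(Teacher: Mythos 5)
Your main argument — specialize $s=1$ in (\ref{generalintegral}) and evaluate $\Gamma(1)=1$, $\Gamma(3/2)=\sqrt{\pi}/2$ — is exactly the paper's proof of this lemma, and it is correct.

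One caution on the optional direct derivation you sketch: converting $\int_0^4 \epsilon N_0(\epsilon)\,d\epsilon$ to a Brillouin-zone integral does \emph{not} bring in an extra factor of $2$. Inserting (\ref{2Ddensityofstates}) and integrating out the delta function gives
\begin{equation*}
\int_0^4\epsilon N_0(\epsilon)\,d\epsilon = \int_{[-\pi,\pi]^2} \big|2(\cos k_1+\cos k_2)\big|\,\mathbbm{1}_{\{\cos k_1+\cos k_2<0\}}\,\frac{dk_1\,dk_2}{(2\pi)^2},
\end{equation*}
with no additional prefactor, and the shift $\mathbf{k}\mapsto\mathbf{k}+(\pi,\pi)$ then halves the unrestricted integral, so the object to evaluate is $\int_{[-\pi,\pi]^2}|\cos k_1+\cos k_2|\,\frac{dk_1\,dk_2}{(2\pi)^2}$ (not twice that). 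Under $u=(k_1+k_2)/2$, $v=(k_1-k_2)/2$ the domain becomes the rotated square $|u|+|v|\leq\pi$, not a rectangle, so the integral is not simply a product of one-dimensional cosine integrals; carrying it through carefully does yield $8/\pi^2$, but as written your alternative would produce $16/\pi^2$. Since your actual proof is the Gamma-function evaluation, which matches the paper, this side remark does not affect the correctness of what you would write up.
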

\begin{proof} 
This is the special case $s=1$ of \eqref{generalintegral},
$$
\int_0^4 \epsilon N_0(\epsilon)d\epsilon = \frac{4}{2\pi}\left( \frac{\Gamma(1)}{\Gamma(\tfrac{3}{2})}\right)^2 = \frac{2}{\pi}\left( \frac{1}{\frac12\sqrt{\pi}}  \right)^2 
= \frac{8}{\pi^2}.
$$
\end{proof}

\subsection{Expansion of $N_0(\epsilon)$ as $\epsilon \uparrow 4$}
By \cite[Eq. (4.8)]{LL2025}, we have 
$$N_0(\epsilon) = \frac{1}{\pi^2(4-\epsilon)}\int_0^1 \frac{dv}{\sqrt{v}\sqrt{v - w_1} \sqrt{w_2 - v} \sqrt{1-v}} \qquad \text{for $\epsilon \in (0,4)$},$$
where principal branches are used for the square roots and
$$w_1 = -\frac{\epsilon}{4-\epsilon}, \qquad w_2 = \frac{4}{4-\epsilon}.$$
Taylor expanding the integrand as $\epsilon \uparrow 4$ and performing the resulting integrals, we find 
\begin{align}\label{N0near4}
N_0(\epsilon) = \sum_{j=0}^6 N_0^{(j)}(4 - \epsilon)^j + O((4 - \epsilon)^7) \qquad \text{as $\epsilon \uparrow 4$},
\end{align}
where
\begin{align*}
& N_0^{(0)} = \frac{1}{4\pi^2} \int_0^1 \frac{dv}{\sqrt{v}\sqrt{1-v}} = \frac{1}{4\pi},
\quad
N_0^{(1)} = \frac{1}{32\pi^2} \int_0^1 \frac{dv}{\sqrt{v}\sqrt{1-v}} = \frac{1}{32\pi},
	\\
& N_0^{(2)} = \frac{1}{512 \pi^2}\int_0^1 \frac{4 v^2-4 v+3}{ \sqrt{1-v} \sqrt{v}} dv = \frac{5}{1024 \pi },
\quad
N_0^{(3)} = \frac{7}{8192 \pi }, \quad N_0^{(4)} = \frac{169}{1048576 \pi }, 
	\\
&N_0^{(5)} = \frac{269}{8388608 \pi },
 \quad N_0^{(6)} = \frac{1781}{268435456 \pi }.
\end{align*}

\subsection{Expansion of $N_0(\epsilon)$ as $\epsilon \downarrow 0$}
By \cite[Theorem 2.1]{LL2025}, the following asymptotic formula is valid as $\epsilon \downarrow 0$:
\begin{align}  \nonumber
N_0(\epsilon) 
= &\; \frac{\ln(\frac{16}{\epsilon})}{2 \pi^2} + \frac{\epsilon^2 (\ln(\frac{16}{\epsilon})-1)}{128 \pi^2}
+ \frac{3 \epsilon^4 (6 \ln(\frac{16}{\epsilon})-7)}{2^{16} \pi^2}
 +\frac{5}{3}\frac{\epsilon^6 (30 \ln(\frac{16}{\epsilon})-37)}{2^{22} \pi^2}
 	\\ \label{N0expansionfirstfew}
& + \frac{35}{3} \frac{\epsilon^8 (420 \ln(\frac{16}{\epsilon})-533)}{2^{33} \pi^2}
+ \frac{63}{5} \frac{\epsilon^{10} (1260\ln(\frac{16}{\epsilon})-1627)}{2^{39} \pi^2} + O\Big(\epsilon^{12} \ln{\frac{1}{\epsilon}}\Big).
\end{align}

\section{Proof of Lemma \ref{Delta1primelemma}}\label{Delta1primeapp}
Differentiating (\ref{AFequation2Delta1}) with respect to $b_+$ and solving the resulting equation for $\Delta_1'(b_+)$, we obtain
\begin{align}\label{Delta1primePQ}
\Delta_1'(b_+) = \frac{P(b_+)}{Q(b_+)},
\end{align}
where the functions $P(b_+)$ and $Q(b_+)$ are defined by
$$P(b_+) := - \frac{N_0(b_+) \Delta_1(b_+)}{ \sqrt{\Delta_1(b_+)^2 + b_+^2}}, \qquad
Q(b_+) := \Delta_1(b_+)^2\int_{b_+}^4  
 \frac{N_0(\epsilon)}{(\Delta_1(b_+)^2 + \epsilon^2)^{3/2}} d\epsilon.$$
Lemma \ref{Delta1lemma} and the expansion (\ref{N0expansionfirstfew}) of $N_0$ imply that
%\begin{align*}
%\sqrt{\Delta_1(b_+)^2 + b_+^2} 
%& =\sqrt{\Big(8 \sqrt{16 - \hat{b}_+} + O(\sqrt{U})\Big)^2 e^{-\frac{4\pi}{\sqrt{U}}} + b_+^2} 
%	\\
%& =\sqrt{64 (16 - \hat{b}_+) + O(\sqrt{U}) e^{-\frac{4\pi}{\sqrt{U}}} + \hat{b}_+^2 e^{-\frac{4\pi}{\sqrt{U}}}} 
%= \Big(32 - \hat{b} + O(\sqrt{U})\Big)e^{-\frac{2\pi}{\sqrt{U}}} 
%\end{align*}
\begin{align}\nonumber
P(b_+)
&= 
-  \frac{\Big(\frac{\ln(\frac{16}{b_+})}{2 \pi^2} + O(b_+^2 \ln(\frac{16}{b_+}))\Big) \Big(8 \sqrt{16 - \hat{b}_+} + O(\sqrt{U})\Big) e^{-\frac{2\pi}{\sqrt{U}}}}{\Big(32 - \hat{b} + O(\sqrt{U})\Big)e^{-\frac{2\pi}{\sqrt{U}}}  }
%	\\
%& = 
%-  \frac{\Big(\frac{\ln(\frac{16}{\hat{b}_+})}{2 \pi^2} + \frac{1}{2 \pi^2}\frac{2\pi}{\sqrt{U}} + O(\frac{e^{-\frac{4\pi}{\sqrt{U}}}}{\sqrt{U}})\Big) \Big(8 \sqrt{16 - \hat{b}_+} + O(\sqrt{U})\Big)}{\Big(32 - \hat{b} + O(\sqrt{U})\Big)  }
	\\ \label{Pbplusexpansion}
& = 
-  \frac{8 \sqrt{16 - \hat{b}_+}}{(32 - \hat{b})\pi \sqrt{U}}(1 + O(\sqrt{U})) \qquad \text{as $U \downarrow 0$},
\end{align}
uniformly for $\hat{b}_+$ in compact subsets of $(0, 16)$.

To compute the expansion of $Q(b_+)$ as $U \downarrow 0$, we need the following lemma.

\begin{lemma}\label{intN0oneoverthreehalveslemma}
As $\Delta \downarrow 0$, it holds that
\begin{align}\nonumber
\int_{x \Delta}^4 \frac{N_0(\epsilon)}{(\Delta^2 + \epsilon^2)^{3/2}} d\epsilon
=&\; \frac{\ln(\Delta) \left(\frac{x}{\sqrt{x^2+1}}-1\right)}{2 \pi^2 \Delta ^2}+ O\bigg(\frac{1}{\Delta^2}\bigg)
\end{align}
uniformly for $x$ in compact subsets of $(0, +\infty)$.
\end{lemma}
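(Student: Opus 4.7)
The plan is to follow the same blueprint used for Lemma \ref{intN0oneoversqrtlemma}. Using the expansion (\ref{N0expansionfirstfew}), I would write $N_0(\epsilon) = \frac{\ln(16/\epsilon)}{2\pi^2} + E(\epsilon)$, where the remainder $E(\epsilon)$ satisfies $|E(\epsilon)| \leq C \epsilon^2 \ln(16/\epsilon)$ on $(0,4]$. The integral splits as
\[
\int_{x\Delta}^4 \frac{N_0(\epsilon)}{(\Delta^2+\epsilon^2)^{3/2}}\,d\epsilon = \frac{1}{2\pi^2}\int_{x\Delta}^4 \frac{\ln(16/\epsilon)}{(\Delta^2+\epsilon^2)^{3/2}}\,d\epsilon + \int_{x\Delta}^4 \frac{E(\epsilon)}{(\Delta^2+\epsilon^2)^{3/2}}\,d\epsilon.
\]

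For the remainder integral, I would estimate by splitting the range of integration at $\epsilon = \Delta$: on $[0,\Delta]$ the bound $(\Delta^2+\epsilon^2)^{3/2} \geq \Delta^3$ yields a contribution of size $O(|\ln\Delta|)$, and on $[\Delta,4]$ the bound $(\Delta^2+\epsilon^2)^{3/2} \geq \epsilon^3$ gives a contribution of size $O((\ln\Delta)^2)$ after integrating $\frac{\ln(16/\epsilon)}{\epsilon}$. Thus the remainder integral is $O((\ln\Delta)^2) = O(1/\Delta^2)$, uniformly in $x \geq 0$.

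For the leading integral, substitute $u = \epsilon/\Delta$ to obtain
\[
\frac{1}{2\pi^2\Delta^2}\int_x^{4/\Delta} \frac{\ln(16/\Delta) - \ln u}{(1+u^2)^{3/2}}\,du.
\]
Then I would apply the elementary primitives
\[
\int \frac{du}{(1+u^2)^{3/2}} = \frac{u}{\sqrt{1+u^2}}, \qquad \int \frac{\ln u}{(1+u^2)^{3/2}}\,du = \frac{u\ln u}{\sqrt{1+u^2}} - \arcsinh(u),
\]
together with the expansions $\frac{4/\Delta}{\sqrt{1+16/\Delta^2}} = 1 + O(\Delta^2)$ and $\arcsinh(4/\Delta) = -\ln\Delta + \ln 8 + O(\Delta^2)$ as $\Delta \downarrow 0$. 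Collecting the coefficient of $\ln\Delta$ yields precisely $-\bigl(1 - \tfrac{x}{\sqrt{1+x^2}}\bigr)$, which reproduces the stated leading term, while the remaining $\Delta$-independent contributions involve only $\ln 16$, $\tfrac{x\ln x}{\sqrt{1+x^2}}$, and $\arcsinh(x)$, each bounded uniformly for $x$ in compact subsets of $(0,+\infty)$, giving an error of $O(1/\Delta^2)$ after dividing by $\Delta^2$.

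The main obstacle is purely bookkeeping; no new ideas are needed beyond those in Lemma \ref{intN0oneoversqrtlemma}. The only minor subtlety is to verify that the error terms are uniform on compact subsets of $(0,+\infty)$, which is immediate since $\tfrac{x\ln x}{\sqrt{1+x^2}}$ and $\arcsinh(x)$ are continuous on $(0,+\infty)$ and hence bounded on any compact subset.
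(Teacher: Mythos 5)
Your proposal is correct and follows essentially the same strategy as the paper: isolate the logarithmic part of $N_0$, bound the remainder, substitute $u=\epsilon/\Delta$, and evaluate the leading integral via the explicit primitive, extracting the coefficient of $\ln\Delta$. The only minor difference is that the paper subtracts two terms of the expansion (\ref{N0expansionfirstfew}), so that the remainder $E(\Delta,x)$ is bounded by a constant while the $O(\epsilon^2)$ piece is estimated separately as $O((\ln\Delta)^2)$, whereas you subtract only the logarithm and split the remainder integral at $\epsilon=\Delta$ to get $O((\ln\Delta)^2)$ directly; both are subsumed in $O(1/\Delta^2)$, so the difference is cosmetic.
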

\begin{proof}
%It is possible to proceed as in the proof of Lemma \ref{intN0oneoversqrtlemma}. However, to get the leading term in the statement, it is sufficient to use the following simpler argument.
We write the integral in the statement as
\begin{align}\label{intsplitA}
\int_{x \Delta}^4  \frac{N_0(\epsilon)}{(\Delta^2 + \epsilon^2)^{3/2}} d\epsilon
= A_1(\Delta, x) + A_2(\Delta, x)
+ E(\Delta, x),
\end{align}
where 
\begin{align*}
  A_1(\Delta, x) & := \int_{x \Delta}^4 \frac{\ln(\frac{16}{\epsilon})}{2 \pi^2} \frac{1}{(\Delta^2 + \epsilon^2)^{3/2}} d\epsilon,
\quad
  A_2(\Delta, x) := \int_{x \Delta}^4 \frac{\epsilon^2 (\ln(\frac{16}{\epsilon})-1)}{128 \pi^2} \frac{1}{(\Delta^2 + \epsilon^2)^{3/2}} d\epsilon,
\end{align*}
and
$$E(\Delta, x) := \int_{x \Delta}^4  \bigg(N_0(\epsilon) - \frac{\ln(\frac{16}{\epsilon})}{2 \pi^2} - \frac{\epsilon^2 (\ln(\frac{16}{\epsilon})-1)}{128 \pi^2}\bigg)\frac{1}{(\Delta^2 + \epsilon^2)^{3/2}} d\epsilon.$$
In view of (\ref{N0expansionfirstfew}), we have
\begin{align*}
|E(\Delta, x)| \leq C \int_0^4  \epsilon^4 \ln(\frac{16}{\epsilon})\frac{1}{\epsilon^3} d\epsilon \leq C
\end{align*}
for all $\Delta \geq 0$ and all $x \geq 0$ such that $x \Delta \leq 4$.
Also, letting $y = \epsilon/\Delta$, we can write
\begin{align*} 
  A_1(\Delta, x) & = \frac{1}{\Delta^2} \int_{x}^{4/\Delta} \frac{\ln(\frac{16}{y}) - \ln(\Delta)}{2 \pi^2} \frac{1}{(1 + y^2)^{3/2}} dy,
	\\
  A_2(\Delta, x) & = \int_{x}^{4/\Delta} \frac{y^2 (\ln(\frac{16}{y}) - \ln(\Delta) -1)}{128 \pi^2}\frac{1}{(1 + y^2)^{3/2}} dy.
\end{align*}
Using the primitive function
$$\frac{d}{dy} \frac{\frac{y (\ln(\frac{16}{y})-\ln{\Delta})}{\sqrt{y^2+1}}+\arcsinh(y)}{2 \pi ^2}
   = \frac{\ln(\frac{16}{y}) - \ln(\Delta)}{2 \pi^2} \frac{1}{(1 + y^2)^{3/2}},$$
we obtain, as $\Delta \downarrow 0$,
$$A_1(\Delta, x) = \frac{\ln(\Delta) \left(\frac{x}{\sqrt{x^2+1}}-1\right)}{2 \pi^2 \Delta ^2}+ O\bigg(\frac{1}{\Delta^2}\bigg)$$
uniformly for $x$ in compact subsets of $(0, +\infty)$.
Since $A_2(\Delta, x)$ satisfies
\begin{align*}
  A_2(\Delta, x) = O\bigg(\int_{x}^{4/\Delta} \frac{|\ln(\frac{16}{y})| + |\ln(\Delta)|}{128 \pi^2} \frac{1}{y} dy\bigg)
  = O(|\ln(\Delta)|^2)
\end{align*}
as $\Delta \downarrow 0$ uniformly for $x$ in compact subsets of $(0, +\infty)$, the desired conclusion follows.
\end{proof}

By Lemma \ref{Delta1lemma}, $x := b_+/\Delta_1(b_+)$ remains in a compact subset of  $(0,+\infty)$ as $U \downarrow 0$ if $\hat{b}_+$ lies in a compact subset of $(0, 16)$.
Hence Lemma \ref{intN0oneoverthreehalveslemma} in combination with Lemma \ref{Delta1lemma} yields
\begin{align}\label{Qbplusexpansion}
Q(b_+) & = \frac{\ln(\Delta_1(b_+)) \big(\frac{x}{\sqrt{x^2+1}}-1\big)}{2 \pi^2} + O(1)
%& = -\frac{2\pi}{\sqrt{U}}(1+ O(\sqrt{U})) \frac{\frac{x}{\sqrt{x^2+1}}-1}{2 \pi^2} + O(1)
%	\\
 =
\frac{2 (16 - \hat{b}_+)}{\pi (32 - \hat{b}_+) \sqrt{U}}(1 + O(\sqrt{U})).
\end{align}
Substituting (\ref{Pbplusexpansion}) and (\ref{Qbplusexpansion}) into (\ref{Delta1primePQ}), we obtain the expansion of $\Delta_1'(b_+)$ stated in Lemma \ref{Delta1primelemma}. This completes the proof.

\medskip\noindent {\bf Acknowledgements.} 
We thank J. Henheik, A. B.  Lauritsen, and V. Melin for helpful discussions and collaborations on related projects. C.C. is a Research Associate of the Fonds de la Recherche Scientifique - FNRS. C.C. also acknowledges support from the European Research Council (ERC), Grant Agreement No. 101115687. E.L. acknowledges support from the Swedish Research Council, Grant No.\ 2023-04726. J.L. acknowledges support from the Swedish Research Council, Grant No.\ 2021-03877.

\bibliographystyle{plain}
%\bibliography{is}

\end{document}